\documentclass[onefignum,onetabnum]{siamonline220329}
\usepackage{algorithmic}
\usepackage{amssymb}
\usepackage{amsfonts}
\usepackage{amsmath}
\usepackage{amsopn,appendix}
\usepackage{array}
\usepackage{bm}
\usepackage{booktabs}
\usepackage[sort]{cite} 
\usepackage{color}
\usepackage{hyperref}
\usepackage{cleveref}
\usepackage{mathrsfs}
\usepackage{enumitem}
\usepackage{epsfig}
\usepackage{epstopdf}
\usepackage{flafter}
\usepackage{float}
\usepackage[T1]{fontenc}
\usepackage{graphicx}
\usepackage{ifthen}
\usepackage{lipsum}
\usepackage{makecell}
\usepackage{mdframed}
\usepackage{multirow}
\usepackage{pifont}
\usepackage{times}
\usepackage{threeparttable}
\usepackage{setspace}
\usepackage{subfigure}
\usepackage{url}

\usepackage{titlesec}
\usepackage{titletoc}
\usepackage{listings}
\usepackage{verbatim}
\usepackage{bbm}
\usepackage{mathtools}
\usepackage{xcolor, tikz}

\ifpdf
  \DeclareGraphicsExtensions{.eps,.pdf,.png,.jpg}
\else
  \DeclareGraphicsExtensions{.eps}
\fi

\setlist[enumerate]{leftmargin=.5in}
\setlist[itemize]{leftmargin=.5in}

% Add a serial/Oxford comma by default.

% Used for creating new theorem and remark environments
\newsiamremark{remark}{Remark}
\newsiamremark{hypothesis}{Hypothesis}
\crefname{hypothesis}{Hypothesis}{Hypotheses}
\newsiamthm{claim}{Claim}

%\newtheorem{Lemma}{Lemma}
%\newtheorem{Prop}{Proposition}
%\newtheorem{Theorem}{Theorem}
%\newtheorem{Corollary}{Corollary}

%\theoremstyle{definition}
%\newtheorem{Def}{Definition}
%\theorembodyfont{\rmfamily}

%\newtheorem{Remark}{Remark}

%--- define color
\definecolor{orange}{RGB}{255,107,0}

%%%%%%%%%%%%%%%%%%%%%%%%%%%%%%%%%%%%%%%%%%%%%%%%%%%%%%%%%%%%%%%%%%%%%%%%%%%%%%%
% From Nikos: reminder command:

%\newcommand{\reminder}[1]{} % do not show reminders
%%%%%%%%%%%%%%%%%%%%%%%%%%%%%%%%%%%%%%%%%%%%%%%%%%%%%%%%%%%%%%%%%%%%%%%%%%%%%%%

\newcommand{\C}{\boldsymbol{C}}

\newcommand{\A}{\boldsymbol{A}}

\newcommand{\B}{\boldsymbol{B}}

\renewcommand{\c}{\boldsymbol{c}}

\newcommand{\T}{{\!\top\!}}

\newcommand{\IH}{I_{\rm H}}
\newcommand{\JH}{J_{\rm H}}
\newcommand{\KH}{K_{\rm H}}
\newcommand{\IM}{I_{\rm M}}
\newcommand{\JM}{J_{\rm M}}
\newcommand{\KM}{K_{\rm M}}
\newcommand{\rH}{{\rm H}}
\newcommand{\rM}{{\rm M}}
\newcommand{\rS}{{\rm S}}
\newcommand{\tX}{\underline{\bm X}}

\newcommand{\tY}{\underline{\bm Y}}

\newcommand{\tT}{\underline{\bm T}}
\newcommand{\tD}{\underline{\bm D}}

\DeclareMathOperator*{\minimize}{\textrm{minimize}}

\makeatletter
\renewcommand{\maketag@@@}[1]{\hbox{\m@th\normalsize\normalfont#1}}%
\makeatother

\headers{Rethinking CTD for Hyperspectral Super-Resolution}
{M. Ding and X. Fu}

\title{Rethinking Coupled Tensor Analysis for Hyperspectral Super-Resolution: Recoverable Modeling Under Endmember Variability
\thanks{Submitted to the editors on Sep 9, 2025; revision submitted on Nov 2, 2025; accepted on Dec. 16.
\funding{The work of M. Ding was supported in part by the National Natural Science Foundation of China (12201522) and Natural Science Foundation of Sichuan Province of China (2024NSFSC1389). The work of X. Fu was supported in part by the National Science Foundation (NSF) under Project ECCS 2024058. 
}}
}

\author{Meng Ding\thanks{School of Mathematics, Southwest Jiaotong University, Chengdu, China
  (\email{dingmeng56@swjtu.edu.cn}).}
  \and Xiao Fu\thanks{Corresponding author. School of Electrical Engineering and Computer Science, Oregon State University, Corvallis, OR 97331, United States
  (\email{xiao.fu@oregonstate.edu}).}
}

\begin{document}

\maketitle
% REQUIRED
\begin{abstract}
This work revisits the hyperspectral super-resolution (HSR) problem, i.e., fusing a pair of spatially co-registered hyperspectral (HSI) and multispectral (MSI) images to recover a super-resolution image (SRI) that enhances the spatial resolution of the HSI. Coupled tensor decomposition (CTD)-based methods have gained traction in this domain, offering recoverability guarantees under various assumptions. Existing models such as canonical polyadic decomposition (CPD) and Tucker decomposition provide strong expressive power but lack physical interpretability. The block-term decomposition model with rank-$(L_r,L_r,1)$ terms (the LL1 model) yields interpretable factors under the linear mixture model (LMM) of spectral images, but LMM assumptions are often violated in practice---primarily due to nonlinear effects such as \textit{endmember variability} (EV).
To address this issue, we propose representing spectral images using a more flexible block-term tensor model with rank-$(L_r,M_r,N_r)$ terms (the LMN model). This modeling choice retains interpretability, subsumes CPD, Tucker, and LL1 as special cases, and robustly accounts for non-ideal effects such as EV, offering a balanced tradeoff between expressiveness and interpretability for HSR. Importantly, under the LMN model for HSI and MSI, recoverability of the SRI can still be established under proper conditions---providing strong theoretical support. Extensive experiments on synthetic and real datasets demonstrate the effectiveness and robustness of the proposed method.
\end{abstract}

% REQUIRED
\begin{keywords}
hyperspectral super-resolution,
endmember variability,
block-term tensor decomposition,
recoverability.
\end{keywords}

% REQUIRED
%\begin{MSCcodes}
%68T10, 62H30, 15A04, 15A69
%\end{MSCcodes}

\section{Introduction}
The tradeoff between spatial and spectral resolution in spectral imaging is well known, primarily due to limitations in the image acquisition process \cite{Yokoya2017HSRoverview}. Hyperspectral images (HSIs) typically offer high spectral resolution but suffer from low spatial resolution, while multispectral images (MSIs) exhibit the opposite characteristics. To recover a super-resolution image (SRI) with high resolution in both spatial and spectral domains, hyperspectral super-resolution (HSR) has been widely studied. The goal of HSR is to integrate a pair of spatially co-registered HSI and MSI so that the resulting SRI combines the spectral richness of the HSI with the spatial detail of the MSI; see \cite{Yokoya2017HSRoverview, Prevost2022Coupled}.

Early approaches to HSR relied on heuristic methods such as component substitution \cite{Aiazzi2007Component} and multiresolution analysis \cite{Aiazzi2022Context}. Later on, more advanced algebraic and optimization techniques---such as low-rank and nonnegative matrix factorization \cite{Lin2018Maximum,Lee1999NMF,Teboulle2020Novel}---were introduced to better address the fusion challenge; see, e.g., \cite{Yokoya2012HSR,Wu2019Hyperspectral,Wei2015HSRSylvester}. These methods typically leverage the \textit{linear mixture model} (LMM) for spectral images \cite{Gillis2014Successive,Ma2014HUoverview,Gillis2015Enhancing}, along with carefully designed degradation models that reflect the data acquisition process, to cast the HSR task as various coupled matrix factorization (or low-rank matrix recovery) problems---forming a class of structured inverse problems.
The LMM describes the spectral image pixels as nonnegative combinations of endmembers (spectral signature of materials within the pixels), and is widely used in spectral image analysis (see the survey \cite{Ma2014HUoverview}).
Interestingly, as noted in \cite{Li2018HSRmatrix,liu2019there}, the recoverability of the SRI using low-rank matrix factorization can be theoretically guaranteed under certain conditions---for example, when each spectral pixel consists of only a small number of constituent materials.

More recently, a series of promising tensor factorization-based approaches have been proposed for HSR \cite{Kanatsoulis2018HSR,Prevost2020HSR,Ding2021HSR}. In contrast to matrix factorization methods, these techniques model spectral images as third-order tensors, thereby preserving the inherent multi-dimensional structure of the SRI. The \textit{coupled tensor decomposition} (CTD) framework assumes that the spectral images follow a tensor factorization model, such as the canonical polyadic (CP) decomposition model \cite{Hitchcock1927CPD}, the Tucker model \cite{Tucker1966Tucker}, or the block-term decomposition with multilinear rank-$(L_r,L_r,1)$ terms (the LL1 model) \cite{Lathauwer2008BTD2}.
By leveraging the algebraic properties of the CP decomposition, the work \cite{Kanatsoulis2018HSR} established theoretical guarantees for recoverability of the SRI under mild conditions. Similar recoverability results have also been demonstrated under the Tucker and LL1-based models, based on their respective decomposition structures; see \cite{Prevost2020HSR,Ding2021HSR}.
Other structural tensor models, such as tensor ring decomposition \cite{Zhao2016TR}, tensor singular value decomposition \cite{Kilmer2011Factorization,Zheng2024Scale}, and related methods \cite{He2022HSR,Xu2022HSR,Li2018CSTF}, have also been explored for HSR, but these efforts primarily focus on algorithmic design and empirical performance, without providing formal recoverability analysis, to our best knowledge.

\subsection{Challenges in Recoverability-Guaranteed HSR via CTD} While CPD-, Tucker-, and LL1-based CTD formulations all offer recoverability guarantees, they involve important tradeoffs. The LL1 model, grounded in the LMM, assigns clear physical meaning to its latent factors---enabling effective incorporation of prior information for structural regularization, which is especially beneficial under noise. However, the LMM assumption has its own limitations, as it fails to capture effects like \textit{endmember variability} (EV) \cite{zare2013endmember,Kervazo2021Provably,Borsoi2021Variability}. 
When EV is present, the endmembers vary across pixels and this violates the LMM grossly, making models relying on the LMM suffer from substantial performance degradation.
In contrast, CPD and Tucker models \cite{Kanatsoulis2018HSR,Prevost2020HSR} serve as universal tensor approximators \cite{sidiropoulos2017tensor} that can flexibly model scenarios containing EV and other nonlinear effects (via adjusting the tensor rank). Yet, their latent factors lack physical interpretability, making it difficult to apply meaningful regularization in practice. This oftentimes leads to performance deterioration under noisy and complex scenarios \cite{Ding2021HSR}.

In recent years, efforts have been made to address EV issues in spectral image modeling and CTD-based HSR, with particular emphasis on the recoverability of the SRI. In \cite{Borsoi2021Coupled}, a CTD approach based on the Tucker decomposition was proposed to account for EV, with exact SRI recovery established under certain conditions. Subsequently, \cite{Prevost2022Coupled} modeled the HSR problem with EV using a coupled LL1 formulation and likewise provided recoverability guarantees. However, both works \cite{Prevost2022Coupled,Borsoi2021Coupled} adopted an inter-image EV model, assuming that variability only occurs across HSI and MSI, whereas EV across spectral pixels \cite{Somers2011Variability} was not considered.

\subsection{Contributions}
This work revisits CTD-based HSR methods that offer recoverability guarantees, and proposes a new CTD model that strikes a balance among robustness to across-pixel EV, physical interpretability, and SRI recoverability. We should mention that while EV was repeatedly considered in various prior works for HSR (see, e.g., \cite{Borsoi2020Variability,Ye2022Bayesian,Camacho2022variability}), these approaches do not rely on tensor models and lack recoverability guarantees; hence, they are beyond the scope of this study. Our detailed contributions are as follows:

\begin{itemize}
    \item {\bf Using LMN Model to Incorporate EV.}
Our key idea is to adopt the tensor block-term decomposition (BTD) in rank-$(L_r, M_r, N_r)$ terms---referred to as the ``LMN'' model~\cite{Lathauwer2008BTD2}---to represent spectral images under EV. The LMN model extends beyond the classical LMM, while retaining strong physical interpretability in the presence of EV, thereby facilitating the incorporation of prior information to improve HSR performance.

\item {\bf Recoverability Supports.}
Building on this model, we provide a theoretical analysis showing that the CTD formulation with LMN-structured spectral images guarantees recoverability of the SRI under reasonable conditions. Our proof begins with a new essential uniqueness condition for the LMN decomposition, which could be of broader interest beyond HSR. Leveraging this result, we generalize existing analyses of CTD approaches based on CPD, Tucker, and LL1 models~\cite{Kanatsoulis2018HSR,Prevost2020HSR,Ding2021HSR} to establish recoverability under LMN. 
Unlike \cite{Prevost2022Coupled,Borsoi2021Coupled} which focuses on inter-image EV across HSI and MSI, our result demonstrates recoverability in scenarios where unknown EV is present across spectral pixels.

\item {\bf Regularization and Algorithm Design.}
Moreover, we leverage the physical interpretation of the LMN latent factors to design model-based constraints and regularization. We propose a first-order inexact block coordinate descent algorithm to solve the resulting constrained CTD problem. Extensive simulations and real-data experiments validate the effectiveness of the proposed model and algorithm.

\end{itemize}

\medskip
\noindent
{\bf Notation.} The scalars, vectors, matrices, and tensors are represented by the lowercase ($x$) or uppercase ($X$), boldface lowercase ($\bm{x}$), boldface uppercase ($\bm{X}$), and underlined boldface uppercase ($\underline{\bm{X}}$), respectively.
The symbols $[\bm{x}]_i$, $[\bm{X}]_{i,j}$, and $[\underline{\bm{X}}]_{i,j,k}$ respectively denote the $i$-th, $(i,j)$-th, and $(i,j,k)$-th element of $\bm{x}\in \mathbb{R}^{I}$, $\bm{X}\in \mathbb{R}^{I\times J}$, and $\underline{\bm{X}}\in \mathbb{R}^{I\times J\times K}$. 
$\|\bm{X}\|_{F}=\sqrt{\sum_{i,j}[\bm{X}]_{i,j}^{2}}$ and $\|\underline{\bm{X}}\|_{F}
=\sqrt{\sum_{i,j,k}[\underline{\bm{X}}]_{i,j,k}^{2}}$ respectively denote the Frobenius norms of the matrix $\bm{X}\in \mathbb{R}^{I\times J}$ and the tensor $\underline{\bm{X}}\in \mathbb{R}^{I\times J\times K}$. 
The mode-$n$ product of a tensor $\underline{\bm{X}}\in \mathbb{R}^{I_1\times I_2\times \cdots \times I_N}$ and a matrix $\bm{Y}\in \mathbb{R}^{I_n\times J}$ yields a tensor $\underline{\bm{X}}\times_n \bm{Y}\in \mathbb{R}^{I_1\times \cdots I_{n-1}\times J \times I_{n+1} \cdots \times I_N}$, which each element is given by
$[\underline{\bm{X}}_{\times n}\bm{Y}]_{i_1\ldots i_{n-1} j i_{n+1} \ldots i_N}=\sum_{i_n=1}^{I_n}[\underline{\bm{X}}]_{i_1,\ldots,i_N}[\bm{U}]_{j,i_n}$.

\section{Background}
\label{sec:Background}
\subsection{Problem Setting of HSR}
We consider the classical setting \cite{Kanatsoulis2018HSR,Ding2021HSR,Wu2019Hyperspectral,Yokoya2012HSR,Wei2016Fusion,Borsoi2021Coupled,Borsoi2021Variability,Prevost2020HSR,Prevost2022Coupled} where a pair of spatially co-registered HSI and MSI are available.
The term ``co-registration'' means that the HSI and MSI cover the exactly identical spatial region with a shared underlying continuous spatial coordinate system.
The HSI and MSI images are denoted as $$\underline{\bm{Y}}_{\rm H}\in \mathbb{R}^{I_{\rm H}\times J_{\rm H}\times K_{\rm H}}\quad\&\quad \underline{\bm{Y}}_{\rm M}\in \mathbb{R}^{I_{\rm M}\times J_{\rm M}\times K_{\rm M}},$$
respectively,
where $I_{\rm H}$ and $J_{\rm H}$ are the vertical and horizontal spatial dimensions, respectively, $\KH$ denotes the number of wavelengths, and $\IM, \JM$ and $\KM$ are defined in the same way for the MSI.
Note that under the HSR setting, the following inequalities hold \cite{Kanatsoulis2018HSR,Ding2021HSR,Prevost2020HSR}: $$\KH\gg \KM,\quad \IM\JM\gg \IH\JH;$$ i.e., the HSI has a higher spectral resolution yet the MSI's spatial resolution is much finer.
The goal of HSR is to fuse $\tY_{\rH}$ and $\tY_{\rM}$ to produce an SRI, denoted as $\underline{\bm{Y}}_{\rm S}\in \mathbb{R}^{\IM\times \JM\times \KH}$---which takes the spatial resolution of the MSI and the spectral resolution of the HSI.  

\begin{figure}[!t]
\centering
\includegraphics[width=0.8\linewidth]{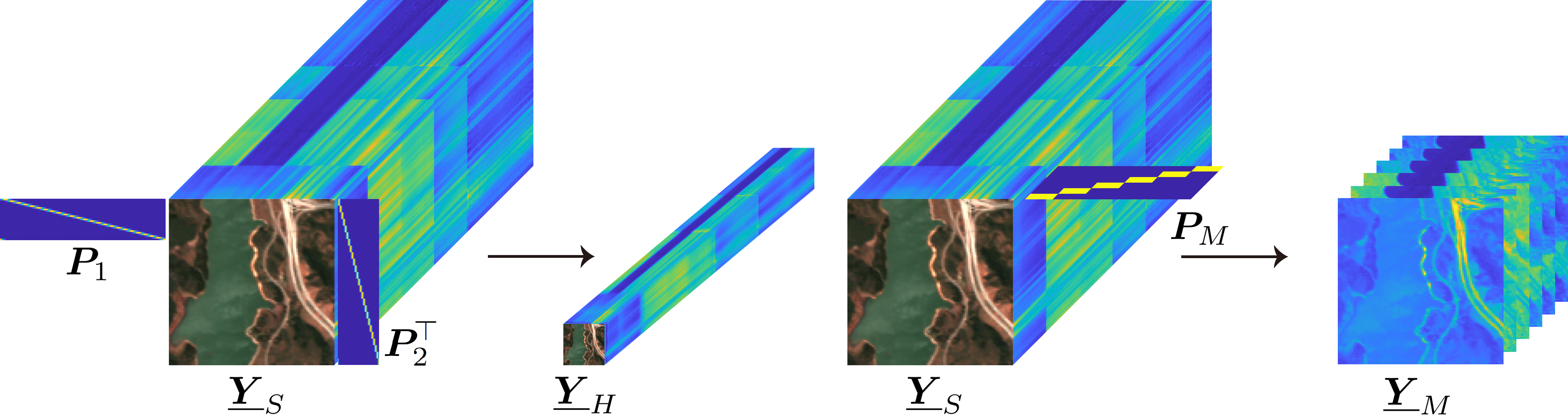}
\caption{Illustration of spatial and spectral degradations from SRI ($\underline{\bm{Y}}_{\rm S}$) to HSI ($\underline{\bm{Y}}_{\rm H}$) and MSI ($\underline{\bm{Y}}_{\rm M}$), respectively; figure adapted from \cite{Kanatsoulis2018HSR,Ding2021HSR}.}
  \label{fig:SRI_degradation}
\end{figure}

In the literature, the HSI and MSI are assumed to be downsampled and degraded from $\underline{\bm{Y}}_{\rm S}$; see, e.g., \cite{Kanatsoulis2018HSR,Ding2021HSR,Prevost2022Coupled,Borsoi2021Coupled}. To be specific, the spatial degradation model to obtain the HSI is as follows \cite{Kanatsoulis2018HSR,Ding2021HSR,Prevost2022Coupled,Borsoi2021Coupled}:
\begin{align}\label{eq:SRI2HSI}
  \underline{\bm{Y}}_{\rm H}(:,:,k) = \bm P_1 \underline{\bm{Y}}_{\rm S}(:,:,k) \bm P_2^\top,~ k=1,\ldots,K_{\rm H},
\end{align}
where $\underline{\bm{Y}}_{\rm S}(:,:,k)$ denotes the $k$th band of $\underline{\bm{Y}}_{\rm S}$, $\bm P_1\in\mathbb{R}^{I_{\rm H}\times I_{\rm M}}$ and $\bm P_2\in\mathbb{R}^{J_{\rm H}\times J_{\rm M}}$ denote two spatial degradation matrices applied to the vertical and horizontal dimensions. 
These operators are assumed to consists of blurring kernels and downsampling operations \cite{Kanatsoulis2018HSR,Ding2021HSR,Prevost2020HSR}.
The operations in \eqref{eq:SRI2HSI} can be expressed using the tensor mode-product:
\begin{align}\label{eq:modeproduct12}
\underline{\bm{Y}}_{\rm H} =\underline{\bm{Y}}_{\rm S}\times_1 \bm{P}_1\times_2 \bm{P}_2,
\end{align}
where $\times_n$ denotes the mode-$n$ product.
Similarly, the MSI is obtained via spectral degradation from the SRI:
\begin{align*}%\label{eq:SRI2MSI}
  \underline{\bm{Y}}_{\rm H}(i,j,:) = \bm P_{\rm M} \underline{\bm{Y}}_{\rm S}(i,j,:),~\forall (i,j)\in [I_{\rm M}]\times [J_{\rm M}],
\end{align*}
where $\underline{\bm{Y}}_{\rm S}(i,j,:)$ denotes the $(i,j)$th spectral pixel of $\underline{\bm{Y}}_{\rm S}$, and $\bm P_{\rm M}\in\mathbb{R}^{K_{\rm M}\times K_{\rm H}}$ is a spectral degradation matrix \cite{Kanatsoulis2018HSR,Ding2021HSR,Prevost2022Coupled,Borsoi2021Coupled}, which is equivalently expressed as follows:
\begin{align}\label{eq:modeproduct3}
    \underline{\bm{Y}}_{\rm M} =\underline{\bm{Y}}_{\rm S} \times_3 \bm P_{\rm M}.
\end{align}
The spectral degradation is often modeled as relatively simple operations such as weighting and aggregation \cite{Kanatsoulis2018HSR,Ding2021HSR,Prevost2020HSR}.
The degradation model in Eqs~\eqref{eq:modeproduct12}-\eqref{eq:modeproduct3} is illustrated in Fig. \ref{fig:SRI_degradation}.
Note that all the matrices $\bm P_1$, $\bm P_2$, and $\bm P_{\rm M}$ are ``compressing'' matrices (i.e., fat matrices). Therefore, recovering $\tY_{\rm S}$ from the compressed measurements $\tY_{\rm H}$ and $\tY_{\rm M}$ is an ill-posed inverse problem. That is, even if  $\bm P_1$, $\bm P_2$, and $\bm P_{\rm M}$ are known, there might be an infinite number of solutions of $\tY_{\rm S}$ that satisfy Eqs~\eqref{eq:modeproduct12}-\eqref{eq:modeproduct3}, making establishing recoverability of SRI nontrivial.

\subsection{CTD-based HSR via Various Tensor Models} \label{sec:ctd}
Starting from \cite{Kanatsoulis2018HSR},
a series of CTD based HSR approaches, e.g., \cite{Kanatsoulis2018HSR,Ding2021HSR,Prevost2020HSR,Prevost2022Coupled,Borsoi2021Coupled}, showed promising capability in establishing recoverability of $\tY_{\rm S}$ under reasonable assumptions. The CTD formulations can be summarized as follows:
\begin{align}\label{eq:coupledctd}
\min_{\bm{\theta}}~ &\left\|\underline{\bm{Y}}_{\rm H} -\underline{\bm{Y}}_{\rm S}(\bm \theta)\times_1 \bm{P}_1\times_2 \bm{P}_2 \right\|^2_{F}  + \left\|\underline{\bm{Y}}_{\rm M} -\underline{\bm{Y}}_{\rm S}(\bm \theta) \times_3 \bm{P}_{\rm M} \right\|^2_{F},  \nonumber
\end{align}
where $\underline{\bm{Y}}_{\rm S}(\bm \theta)$ represents a tensor model parameterized by $\bm \theta$. 
In the above, the two tensors $\tY_{\rm H}$ and $\tY_{\rm M}$ are decomposed simultaneously with their latent factors shared across $\bm \theta$---making the two tensors ``coupled'' through the latent factors and leading to the terminology CTD. 
As shown in \cite{Kanatsoulis2018HSR,Ding2021HSR,Prevost2020HSR,Prevost2022Coupled,Borsoi2021Coupled}, $\underline{\bm{Y}}_{\rm S}(\bm \theta)$, by picking a proper tensor model $\underline{\bm{Y}}_{\rm S}(\bm \theta)$, it is possible to show that any optimal solution of \eqref{eq:coupledctd} recovers the ground truth SRI, i.e., $\tY_{\rm S}(\bm \theta^\star) =\tY_{\rm S}$.
The work in \cite{Kanatsoulis2018HSR} also first established that, under certain conditions, such recoverability holds even without knowing $\bm P_1$ and $\bm P_2$ if the CPD model is used to represent $\tY_{\rm S}(\bm \theta)$. This property was then shown to hold for other tensor models, e.g., the LL1 model \cite{Ding2021HSR}.
A number of representative models for $\tY_{\rm S}(\bm \theta)$ that admit provable recoverability are discussed in the Sec.~\ref{sec:cpd}-Sec. \ref{sec:ll1}.

\subsubsection{CPD-based CTD for HSR}\label{sec:cpd}
In \cite{Kanatsoulis2018HSR}, the idea is to model $\underline{\bm{Y}}_{\rm S}$ using a CPD model, i.e.,
\begin{align}\label{eq:CP}
\underline{\bm{Y}}_{\rm S}(\bm \theta) =\sum_{f=1}^F \bm{A}(:,f)\circ \bm{B}(:,f)\circ \bm{C}(:,f),
\end{align}
where $\bm \theta=(\bm A,\bm B,\bm C)$,
$\bm{A}\in \mathbb{R}^{I_{\rm M}\times F}$, $\bm{B}\in \mathbb{R}^{J_{\rm M}\times F}$, and $\bm{C}\in \mathbb{R}^{K_{\rm H}\times F}$ are the latent factors, 
and $F$ is the tensor rank.  Note that any tensor can be expressed as a CPD model, with a sufficiently large $F$ \cite{sidiropoulos2017tensor}.
More interestingly, the work \cite{Kanatsoulis2018HSR} empirically showed that real spectral images can be approximated by \eqref{eq:CP} with a relatively low rank $F$, which is a result of the fact that spectral images exhibit correlations among all three modes.
The work \cite{Kanatsoulis2018HSR} showed that the optimal solution of \eqref{eq:coupledctd} ensures recovering the SRI.
To be specific, assume that $\tY_{\rm S}= \sum_{f=1}^F \bm{A}^\natural(:,f)\circ \bm{B}^\natural(:,f)\circ \bm{C}^\natural(:,f)$ is the ground truth SRI.
Denote the optimal solution of \eqref{eq:coupledctd} as $\bm A^\star$, $\bm B^\star$, and $\bm C^\star$ under the parameterization in \eqref{eq:CP}. Then, it was shown in \cite{Kanatsoulis2018HSR} that the following holds under reasonable conditions (e.g., $F$ is sufficiently low in \eqref{eq:CP}):
\begin{align}\label{eq:recover}
        \tY_{\rm S} = \sum_{f=1}^F \bm{A}^\star(:,f)\circ \bm{B}^\star(:,f)\circ \bm{C}^\star(:,f)=\sum_{f=1}^F \bm{A}^\natural(:,f)\circ \bm{B}^\natural(:,f)\circ \bm{C}^\natural(:,f);
\end{align}
that is, any optimal solution of the CTD formulation under the CPD model provably recovers the SRI.
The CTD perspective provided in \cite{Kanatsoulis2018HSR} established recoverability of the SRI for the first time.

\subsubsection{Tucker-based CTD for HSR}\label{sec:tucker}
Later on, the work \cite{Prevost2020HSR} replaced the CPD model in \eqref{eq:CP} with a Tucker model, i.e.,
\begin{align}\label{eq:Tucker}
\underline{\bm{Y}}_{\rm S}(\bm \theta) ={\underline{\bm{D}}}\times_1 {\bm{A}}\times_2 {\bm{B}}\times_3  {\bm{C}},
\end{align}
where $\bm \theta=(\underline{\bm{D}},\bm A,\bm B,\bm C)$,
${\bm{A}}\in \mathbb{R}^{I_{\rm M}\times L}$, ${\bm{B}}\in \mathbb{R}^{J_{\rm M}\times M}$, and ${\bm{C}}\in \mathbb{R}^{K_{\rm H}\times N}$ are factor matrices, and $\underline{\bm{D}}\in \mathbb{R}^{L\times M\times N}$ is the core tensor, and the tuple $(L,M,N)$ is often referred to the multilinear Tucker rank.
Note that $L\leq I_{\rm M}$, $M\leq J_{\rm M}$, and $N\leq K_{\rm H}$ and that every tensor admits a Tucker representation with sufficiently large $L$, $M$, and $N$.
Using this coupled Tucker model, the work \cite{Prevost2020HSR} showed that recoverability of the SRI can be retained if $(L,M,N)$ are sufficiently small.
They also proposed a fast HSR algorithm leveraging the HOSVD algorithm for the Tucker decomposition.

\begin{figure}[!t]
\centering
\includegraphics[width=0.69\linewidth]{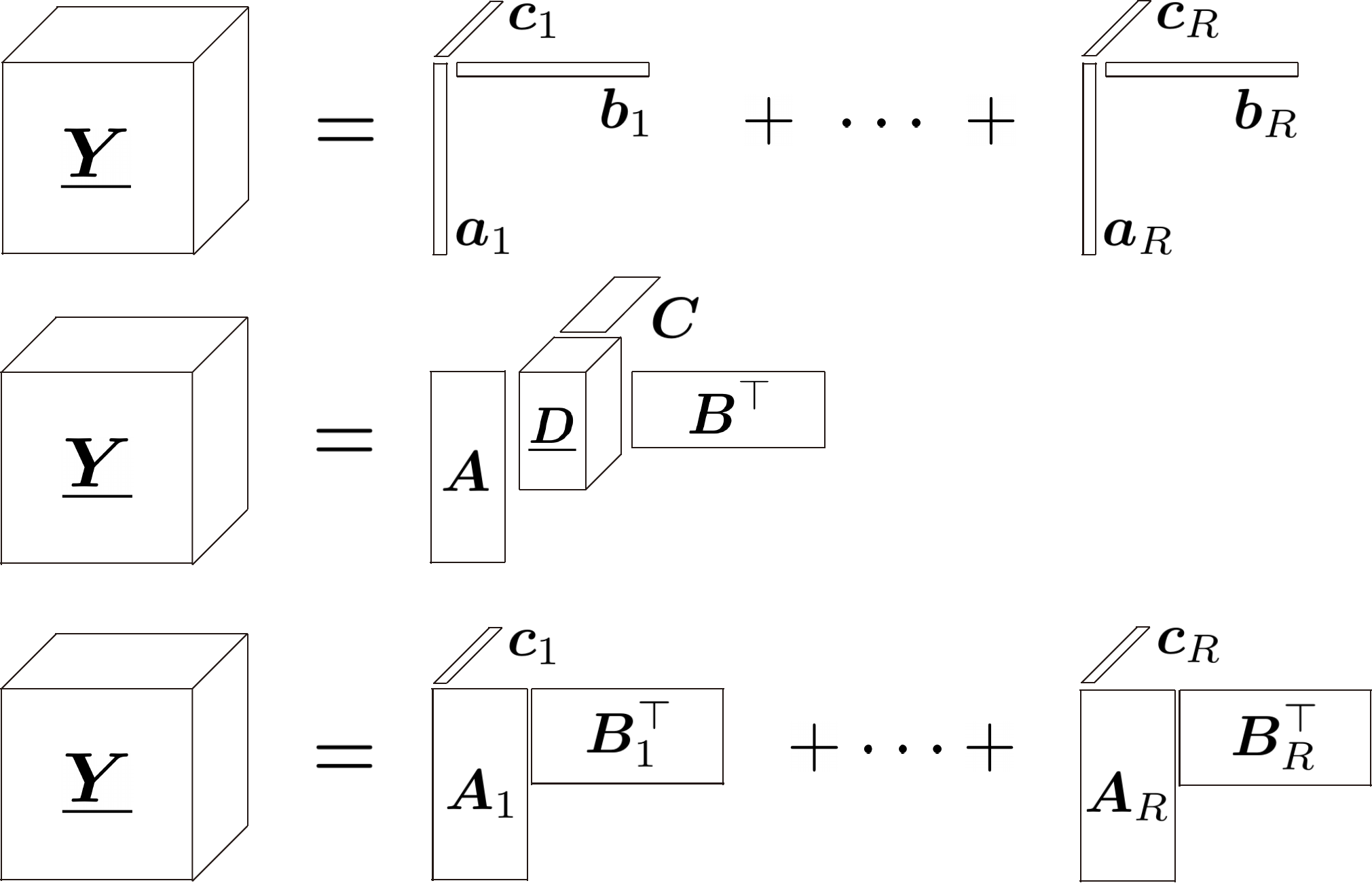}
\caption{Illustration of three tensor decompositions. Top to bottom: CPD, Tucker, and LL1, respectively; illustration adapted from \cite{Ding2021HSR}. }
  \label{fig:tensor_comparison}
\end{figure}

\subsubsection{LL1-based CTD for HSR}\label{sec:ll1}
Following the idea in \cite{Qian2017MVNTF}, the work \cite{Ding2021HSR} proposed to use the LL1 model for the SRI, i.e.,
\begin{align}\label{eq:LL1}
\underline{\bm{Y}}_{\rm S} =\sum_{r=1}^R  \left(\bm{A}_r \bm{B}_r^\top\right)\circ \bm{C}(:,r),
\end{align}
where $\bm{A}_r\in \mathbb{R}^{I_{\rm M}\times L_r}$, $\bm{B}_r\in \mathbb{R}^{J_{\rm M}\times L_r}$, and $\bm{C}\in \mathbb{R}^{K_{\rm H}\times R}$ are factor matrices; see Fig.~\ref{fig:tensor_comparison}.
There is an interesting link between the LL1 model and the LMM of spectral images.
Denote $\bm y_{\rm S}^{(i,j)}=\tY_{\rm S}(i,j,:)$ 
as the pixel located at $(i,j)$.
Assume that there are $R$ materials contained in the SRI, whose spectral signatures are $\bm c_1,\ldots,\bm c_R$. Then, under the LMM, the pixel can be expressed as
$  \bm y_{\rm S}^{(i,j)} = \sum_{r=1}^R \bm S_r(i,j)\bm c_r$;
i.e., the pixels $\bm y_{\rm S}^{(i,j)}$ is a weighted sum of $R$ spectral signatures (or endmembers) represented by $\bm c_1,\ldots,\bm c_R$. The term $\bm S_r(i,j)$ represents the ``abundance'' of material $r$ at the location $(i,j)$, and $\bm S_r\in\mathbb{R}^{I_{\rm M}\times J_{\rm M}}$ is the {\it abundance map} indicating the amount of material $r$ over the region.
Note that if one puts all pixels $\bm y_{\rm S}^{(i,j)}$ together, we have
\begin{align}
    \tY_{\rm S} =\sum_{r=1}^R \bm S_r \circ {\bm C}(:,r),
\end{align}
where $\bm C=[\bm c_1,\ldots,\bm c_R]$. If ${\rm rank}(\bm S_r) \leq L_r$, one can re-express it as
$\bm S_r =\A_r\B_r^\T$, which recovers the LL1 model in \eqref{eq:LL1}.
It was noticed that assuming the abundance maps to be low rank matrices is reasonable, as the spread of a material in the spatial domain often changes gracefully \cite{Ding2023Fast,Qian2017MVNTF,Ding2021HSR}.

\subsection{The Expressiveness and Interpretability Tradeoff}
Fig.~\ref{fig:tensor_comparison} illustrates the tensor models that ensure recoverability of $\tY_{\rm S}$ under the CTD formulation in~\eqref{eq:coupledctd}. Several key observations are outlined below:

First, a salient feature of the CTD framework in~\eqref{eq:coupledctd} is its ability to characterize the recoverability of the SRI in relation to the complexity of the underlying tensor model---often determined by notions of tensor rank; see~\cite{Prevost2020HSR,Ding2021HSR,Kanatsoulis2018HSR}. Recoverability guarantees are crucial in inverse problems, where infinitely many spurious solutions may exist. They help ensure that the recovered solution is meaningful.

\begin{figure}[!t]
\centering
\includegraphics[width=0.5\linewidth]{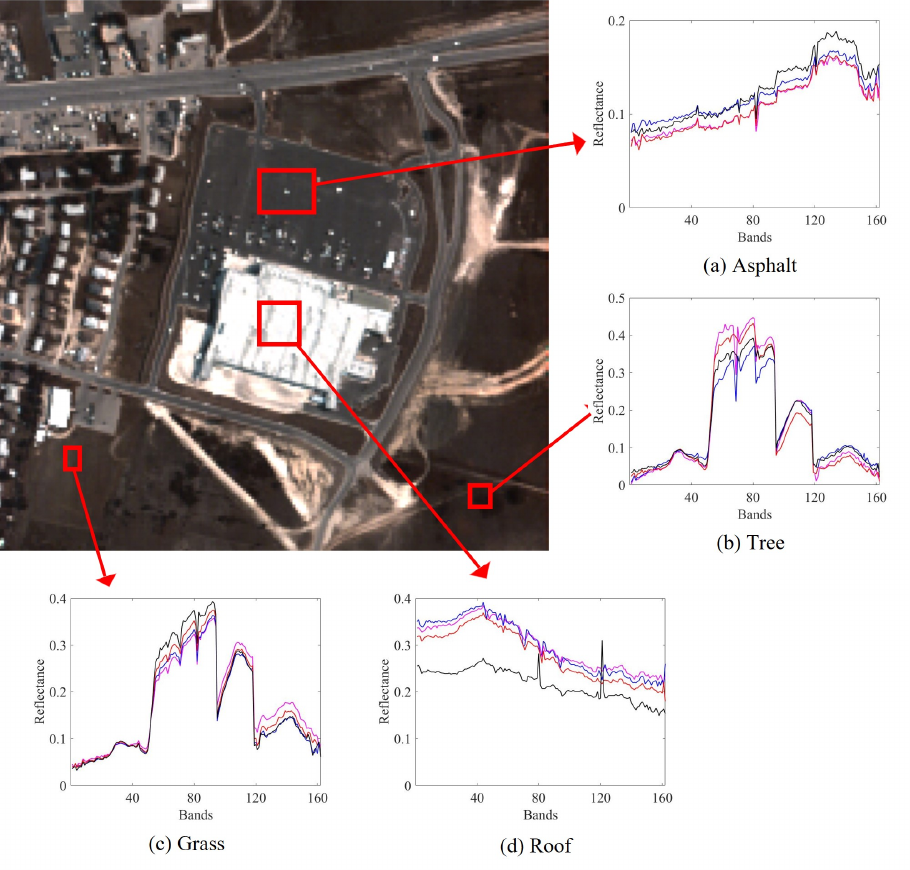}
%\vspace{-2mm}
\caption{Illustration of endmember variability of the Urban dataset, i.e., the spectrum of a single material (e.g., asphalt, tree, grass, and roof) can vary from pixel to pixel. (The source image is a subimage of the Urban data, which is collected by the HYDICE sensor over an urban area at Copperas Cove, TX, U.S.)}
  \label{fig:variability1}
\end{figure}

Second, the CPD and Tucker models are known for their expressive power: with sufficiently high rank (CPD) or multilinear rank (Tucker), they can approximate {\it any} spectral image. However, a limitation of these models is that their latent factors generally lack physical interpretability in the context of HSR. This makes it difficult to impose physical meaning-based regularization or constraints on the latent factors---which are often essential for improving robustness in challenging or noisy conditions.

Third, the LL1 model enjoys a meaningful connection to the classical LMM (cf. Sec. \ref{sec:ll1}). This connection enables the application of structured constraints on the latent factors (e.g., nonnegativity or spatial smoothness on $\bm S_r$), providing interpretability and control. Nonetheless, the LMM---and by extension, LL1---struggles to capture more complex, realistic phenomena such as EV. While CPD and Tucker can, in principle, model such complexities by increasing model rank, doing so often results in higher computational cost and numerical instability. Moreover, without structural regularization, increased rank may not yield actual performance gains.

%\bigskip

Beyond the three aforementioned tensor models, various other tensor decomposition approaches have also been explored for representing $\tY_{\rm S}$ within the CTD framework---such as tensor ring~\cite{He2022HSR} and tensor singular value decomposition~\cite{Xu2022HSR}. 
However, these models are not the focus of this work, as recoverability guarantees were not primary considerations under these formulations. Moreover, they face similar challenges as previously discussed: namely, the trade-off between model expressiveness and physical interpretability, particularly in complex scenarios.

\section{Rethinking CTD: Balancing Interpretability and Expressiveness}

\subsection{Modeling Spectral Images with EV}
The LL1 model-based HSR in \cite{Ding2021HSR} showed state-of-the-art performance when the spectral images closely follow the LMM (in their experiments, the images were generated following the LMM with small noise perturbations). This demonstrates the importance of using prior information and physical meaning in HSR. However, as we mentioned, the LMM is not always accurate.

\begin{figure}[!t]
\centering
\includegraphics[width=0.6\linewidth]{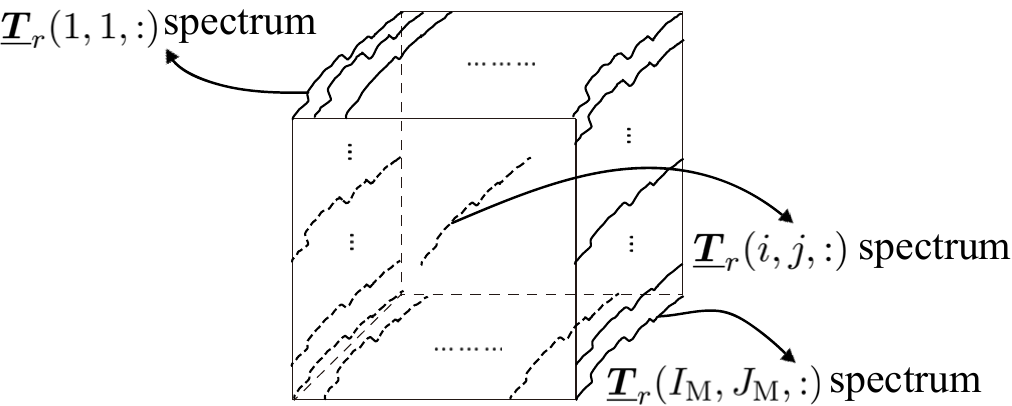}
\caption{Illustration of the mode-3 fibers of $\underline{\bm{T}}_r$, which correspond to the spectral signatures of the same materials across the $I_M\times J_M$ space. The key postulate is that, as all of the spectra are variations of the same endmember, all the fibers are similar to each other. Therefore, each $\underline{\bm{T}}_r$ has a low multi-linear rank.}
  \label{fig:LMN_Tr}
\end{figure}

Fig.~\ref{fig:variability1} shows a case where the LMM is clearly violated. Under the LMM, the endmembers $\bm c_1,\ldots,\bm c_R$ do not change across all pixels. However, one can see that the spectral signatures of \texttt{asphalt}, \texttt{tree}, \texttt{grass}, and \texttt{roof} all change, sometimes drastically---showing visible EV effects.
In the presence of EV, relying on the LMM may not always obtain satisfactory results for spectral imaging tasks; see discussions in \cite{Borsoi2021Coupled,Borsoi2021Variability,Kervazo2021Provably,Prevost2022Coupled,zare2013endmember,Somers2011Variability}.

\begin{figure}[!t]
\centering
\includegraphics[width=0.75\linewidth]{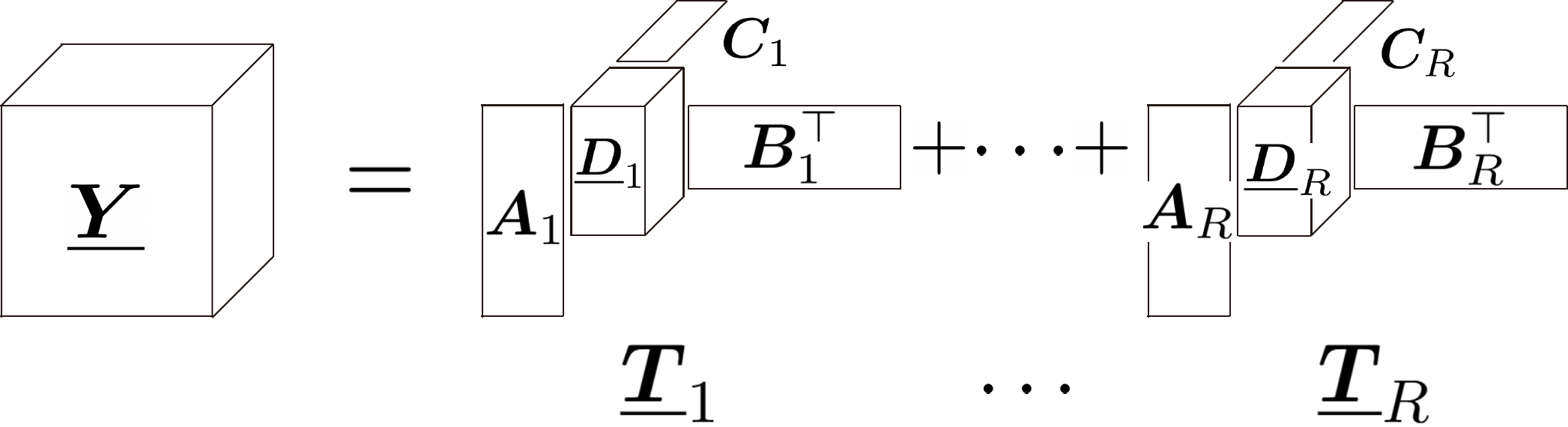}
\caption{Illustration of the LMN model defined in \cite{Lathauwer2008BTD1,Lathauwer2008BTD2}. The LMN model subsumes the CPD, Tucker, and LL1 models in Fig.~\ref{fig:tensor_comparison} as its special cases.}
  \label{fig:LMN}
\end{figure}

\begin{table*}[!t]
\renewcommand\arraystretch{1}
\setlength{\tabcolsep}{5pt}
\renewcommand\arraystretch{1.1}
  \centering
  \caption{The fitting errors of using CPD/Tucker/LL1/LMN models to approximate the Urban data of size $200\times 200\times 162$ (The numbers in the lower left subscript indicate the corresponding rank settings).}
  %\vspace{-0.2cm}
  \resizebox{\linewidth}{!}{
  \begin{tabular}{l|l|l|l|l|l|l|l}\hline

    \hline
    num. of paras. & \multicolumn{1}{c|}{$\sim$30K} 
    & \multicolumn{1}{c|}{$\sim$45K}  & \multicolumn{1}{c|}{$\sim$60K} & \multicolumn{1}{c|}{$\sim$75K}  & \multicolumn{1}{c|}{$\sim$90K} & \multicolumn{1}{c|}{$\sim$105K} & \multicolumn{1}{c}{$\sim$120K}\\ \hline
    $\textrm{CPD}_{(F)}$    
    & $0.1394_{(52)}$ & $0.0992_{(80)}$ & $0.0765_{(106)}$ 
    & $0.0605_{(133)}$ & $0.0495_{(160)}$ & $0.0415_{(186)}$ & $0.0355_{(212)}$\\
    $\textrm{Tucker}_{(L,M,N)}$ 
    & $0.1336_{(51,54,3)}$ & $0.1073_{(66,66,4)}$ & $0.0869_{(90,88,3)}$ 
    & $0.0753_{(96,96,4)}$ & $0.0670_{(108,104,4)}$ & 0.$0591_{(120,118,4)}$ & $0.0557_{(120,118,5)}$\\
    $\textrm{LL1}_{(L)}$    
    & $0.1470_{(18)}$ & $0.1122_{(27)}$ & $0.0892_{(37)}$ 
    & $0.0772_{(46)}$ & $0.0725_{(50)}$ & $0.0724_{(50)}$ & $0.0724_{(50)}$\\
    $\textrm{LMN}_{(L,M,N)}$  
    & $\textbf{0.1273}_{(14,14,5)}$ & $\textbf{0.0884}_{(22,22,3)}$ 
    & $\textbf{0.0671}_{(28,28,4)}$ & $\textbf{0.0563}_{(32,32,5)}$ 
    & $\textbf{0.0456}_{(39,39,4)}$ & $\textbf{0.0392}_{(44,44,4)}$ 
    & $\textbf{0.0341}_{(49,49,4)}$ \\
    \hline
    \end{tabular}}%
  \label{table:error_Urban}%
\end{table*}%

\begin{table*}[!t]
\renewcommand\arraystretch{1}
\setlength{\tabcolsep}{5pt}%
\renewcommand\arraystretch{1.1}%
  \centering
  \caption{The fitting errors of using CPD/Tucker/LL1/LMN models to approximate the Pavia University data of size $200\times 200\times 203$ (The numbers in the lower left subscript indicate the corresponding rank settings).}
  %\vspace{-0.2cm}
  \resizebox{\linewidth}{!}{
  \begin{tabular}{l|l|l|l|l|l|l|l}\hline

    \hline
    num. of paras. & \multicolumn{1}{c|}{$\sim$20K} 
    & \multicolumn{1}{c|}{$\sim$30K} & \multicolumn{1}{c|}{$\sim$40K} & \multicolumn{1}{c|}{$\sim$50K} & \multicolumn{1}{c|}{$\sim$60K}  & \multicolumn{1}{c|}{$\sim$70K} & \multicolumn{1}{c}{$\sim$80K}\\ \hline
    $\textrm{CPD}_{(F)}$
    & $0.1707_{(39)}$ & $0.1320_{(59)}$ & $0.1059_{(79)}$ 
    & $0.0877_{(99)}$ & $0.0746_{(119)}$ & $0.0646_{(139)}$ & $0.0582_{(155)}$\\
    $\textrm{Tucker}_{(L,M,N)}$ 
    & $0.1640_{(39,37,3)}$ & $0.1523_{(42,42,7)}$ & $0.1165_{(62,60,4)}$ 
    & $0.0964_{(78,78,3)}$ & $0.0875_{(80,98,3)}$ & 0.$0799_{(91,91,4)}$ & $0.0728_{(98,101,4)}$\\
    $\textrm{LL1}_{(L)}$    
    & $0.1909_{(12)}$ & $0.1546_{(18)}$ & $0.1291_{(24)}$ 
    & $0.1107_{(30)}$ & $0.0949_{(37)}$ & $0.0853_{(43)}$ & $0.0777_{(49)}$\\
    $\textrm{LMN}_{(L,M,N)}$  
    & $\textbf{0.1636}_{(10,4,4)}$ & $\textbf{0.1199}_{(16,3,4)}$ 
    & $\textbf{0.0999}_{(20,3,4)}$ & $\textbf{0.0819}_{(25,3,4)}$ 
    & $\textbf{0.0710}_{(29,3,4)}$ & $\textbf{0.0609}_{(34,3,4)}$ 
    & $\textbf{0.0548}_{(38,3,4)}$ \\
    \hline
    \end{tabular}}%
  \label{table:error_Pavia}%
\end{table*}%

\subsubsection{Using LMN Model to Incorporate EV}
Under the LMM, the spectral pixel of the SRI at location $(i,j)$ is expressed as follows 
\begin{align}\label{eq:lmm_pixel}
\bm y_{\rm S}^{(i,j)}=\sum_{r=1}^R \bm S_r(i,j)\bm c_r,~\forall (i,j).
\end{align}
However, when EV exists, the spectral signature $\bm c_r$ changes across $(i,j)$. Denote $\bm c_r^{(i,j)}$ as the spectral signature variation of material $r$ at the pixel location $(i,j)$. Then, the following pixel model is more plausible:
\begin{align}\label{eq:ev_lmm}
    \bm y_{\rm S}^{(i,j)} = \sum_{r=1}^R \bm S_r(i,j)\bm c_r^{(i,j)},~\forall (i,j).
\end{align}
Following \eqref{eq:ev_lmm}, the SRI can be expressed as
\begin{align}\label{eq:ev_T}
    \underline{\bm{Y}}_{\rm S} = \sum_{r=1}^R \underline{\bm{T}}_r,\quad
    \underline{\bm{T}}_r(i,j,:)  =  \bm S_r(i,j)\bm c_r^{(i,j)}.
\end{align}
Note that $\underline{\bm{T}}_r(i,j,:)$ is often called a mode-3 fiber of a third-order tensor $\underline{\bm{T}}_r$, which represents the spectrum of material $r$ in pixel $(i,j)$ in our context.

In \eqref{eq:ev_T}, the EV effect exists across all pixels, i.e., the $\bm c_r^{(i,j)}$'s for all $(i,j)$ are different. However, as endmember variation is graceful over space, the $\bm c_r^{(i,j)}$'s in the entire image are still non-trivially correlated. Therefore, it is reasonable to model $\underline{\bm{T}}_r$ using a low multi-linear rank Tucker model---this is the key of our approach (also see Fig.~\ref{fig:LMN_Tr} for illustration of $\underline{\bm T}_r$).
Hence, each $\underline{\bm{T}}_r$ in \eqref{eq:ev_T} can be written as
\begin{align}\label{eq:tTr}
        \underline{\bm{T}}_r = \underline{\bm{D}}_r \times_1 \bm{A}_r \times_2 \bm{B}_r \times_3 \bm{C}_r,~\forall r,
\end{align}
where $\underline{\bm{D}}_r\in\mathbb{R}^{L_r\times M_r\times N_r}$, $\bm{A}_r\in \mathbb{R}^{I_{\rm M}\times L_r}$, $\bm{B}_r\in \mathbb{R}^{J_{\rm M}\times M_r}$, and $\bm{C}_r\in \mathbb{R}^{K_{\rm H}\times N_r}$---with reasonable $L_r,M_r$ and $N_r$.
Under \eqref{eq:tTr}, the SRI can be expressed as
\begin{align}\label{eq:BTD}
    \underline{\bm{Y}}_{\rm S} = \sum_{r=1}^R \underline{\bm{D}}_r \times_1 \bm{A}_r \times_2 \bm{B}_r \times_3 \bm{C}_r,~\forall r.
\end{align}
This way, the SRI is modeled as a block-term decomposition with multi-linear rank-$(L_r,M_r,N_r)$ terms (LMN) \cite{Lathauwer2008BTD2}; see the illustration in Fig.~\ref{fig:LMN}. To further support this hypothesis, we extract $\underline{\bm T}_r$ from the Jasper Ridge and Washington DC datasets and plot the singular values mode-1, 2, and 3 unfoldings (see Figs.~\ref{fig:Ridge_LR}-\ref{fig:WDC_LR} in Appendix \ref{app:low-rankness}). One can see that all matrix unfoldings of $\underline{\bm T}_r$ are low rank matrices. This attest to the effectiveness of modeling $\underline{\bm T}_r$ using a low multilinear rank Tucker tensor. Hence, by employing the LMN model to represent the spectral images naturally accounts for endmember variability.

Note that unlike CPD or Tucker whose latent factors do not have physical interpretations, the terms in \eqref{eq:ev_T} still exhibits strong physical connections. That is, each $\tT_r$ is contributed by a specific material. 
More over, the tensor rank $F$ in CPD does not have physical meaning, yet the $R$ in \eqref{eq:ev_lmm} still corresponds to the number of materials contained in the image.
Hence, the spatial and spectral characteristics of a single material can still be used to serve as structural regularization/constraints---we will leverage this for algorithm design.

\subsubsection{Numerical Validation for LMN-based Modeling}
To validate our postulate, we fit real-world hyperspectral images using various tensor models.
We use the normalized reconstruction error (\texttt{NRE}): 
$$\texttt{NRE}=\|\tY(\bm \theta)-\underline{\bm{Y}}\|_{ F}/\|\underline{\bm{Y}}\|_{ F}$$ (where $\tY(\bm \theta)$ is the approximated spectral image by the used tensor model, and $\underline{\bm{Y}}$ is the spectral image being fitted to) under different number of tensor model parameters.
We use real hyperspectral images, namely, the Urban data \cite{Pan2023Separable} and the Pavia University data \cite{Kunkel1988ROSIS}, as $\tY$. Note that these images are well-known to exhibit endmember variability \cite{Zhou2016EV,Borsoi2021Variability}.

Tables~\ref{table:error_Urban}-\ref{table:error_Pavia} show the results when $\tY(\bm \theta)$ is modeled using CPD, Tucker, LL1, and LMN, respectively, where all models use approximately the same amount of parameters.
For two datasets, we set the number of materials $R=4$ and choose ranks of four tensor models so that the corresponding number of parameters is approximately equal to a given value. 
One can see that when using similar amounts of parameters, the LMN model outputs the smallest fitting errors relative to the other models. In particular, the LMN model often obtains an NRE that is 50\% lower than that of the LL1 model. This attests to the effectiveness of relaxing the model of $\tT_r$ from an LL1 structure (i.e., a matrix outer-product with a vector) to an LMN Tucker structure in the context of HSI approximation.
In addition, note that both the CPD and Tucker models can also attain competitive \texttt{NRE}s. As mentioned, both CPD and Tucker are universal tensor approximators, with sufficiently high rank. Nonetheless, the lose interpretability of their latent components.

\subsection{CTD Formulation under LMN Model}
Under the LMN model in \eqref{eq:ev_T} and \eqref{eq:ev_lmm}, and
according to the degradation model in \eqref{eq:modeproduct3} and \eqref{eq:modeproduct12}, we have
\begin{align}\label{eq:HSI_degradation}
  \underline{\bm{Y}}_{\rH} =\sum_{r=1}^R\underline{\bm{D}}_r\times_1 (\bm{P}_1\bm{A}_r)\times_2 (\bm{P}_2\bm{B}_r)\times_3  \bm{C}_r,
\end{align}
and
\begin{align}\label{eq:MSI_degradation}
\underline{\bm{Y}}_{\rM}=\sum_{r=1}^R\underline{\bm{D}}_r\times_1 \bm{A}_r\times_2 \bm{B}_r\times_3  (\bm{P}_{\rm M}\bm{C}_r).
\end{align}
Using the observation models in \eqref{eq:HSI_degradation} and \eqref{eq:MSI_degradation}, we rewrite the CTD formulation in \eqref{eq:coupledctd} as follows:
\begin{align}\label{Non_blind_BTD_model}
    &\min_{\bm \theta}~\frac{1}{2}\left\| \tY_{\rm H} - \sum_{r=1}^R\underline{\bm{D}}_r\times_1 (\bm{P}_1\bm{A}_r)\times_2 (\bm{P}_2\bm{B}_r)\times_3  \bm{C}_r \right\|^2_{F} \nonumber \\
    & +  \frac{1}{2}\left\| \tY_{\rm M} - \sum_{r=1}^R\underline{\bm{D}}_r\times_1 \bm{A}_r\times_2 \bm{B}_r\times_3  (\bm{P}_{\rm M}\bm{C}_r)\right\|^2_{F},
\end{align}
where $\bm \theta =(\{\underline{\bm{D}}_r, \bm{A}_{r},\bm{B}_{r}, \bm{C}_r \}_{r=1}^{R} )$.

The works in \cite{Kanatsoulis2018HSR,Prevost2020HSR,Ding2021HSR} also considered cases where the spatial degradation operator $\bm P_1$ and $\bm P_2$ are unknown. This setup is meaningful, as the spatial operators are nontrivial to estimate (yet the spectral operator is relatively easy to acquire, e.g., by inspecting sensor specifications \cite{Kanatsoulis2018HSR,Ding2021HSR}). Our formulation can also be amended to cover this case:
\begin{align}\label{blind_BTD_model}
    &\min_{\bm \theta'}~\frac{1}{2}\left\| \tY_{\rm H} - \sum_{r=1}^R\underline{\bm{D}}_r\times_1 \widetilde{\bm{A}}_r\times_2 \widetilde{\bm{B}}_r\times_3  \bm{C}_r \right\|^2_{F} \nonumber \\
    & +  \frac{1}{2}\left\| \tY_{\rm M} - \sum_{r=1}^R\underline{\bm{D}}_r\times_1 \bm{A}_r\times_2 \bm{B}_r\times_3  (\bm{P}_{\rm M}\bm{C}_r)\right\|^2_{F},
\end{align}
where $\bm \theta' =(\{\underline{\bm{D}}_r, \widetilde{\bm{A}}_{r},\widetilde{\bm{B}}_{r}, \bm{A}_{r},\bm{B}_{r},\bm{C}_r \}_{r=1}^R)$, and  $\widetilde{\bm{A}}_{r}\in \mathbb{R}^{I_{H} \times L_{r}}$ and $\widetilde{\bm{B}}_{r}\in \mathbb{R}^{J_{H} \times M_{r}}$ are used to model $\bm P_1\bm A_r$ and $\bm P_2\bm B_r$, respectively.

\subsection{Recoverability Guarantees}
One of the most appealing aspects of the CTD framework proposed by \cite{Kanatsoulis2018HSR} is that it provided a systematic way of establishing recoverability of $\tY_{\rm S}$. This allows a suite of CTD framework to have recoverability guarantees of $\tY_{\rS}$; see the CPD-based \cite{Kanatsoulis2018HSR}, the LL1-based \cite{Ding2021HSR}, and the Tucker-based \cite{Prevost2020HSR} versions, respectively.

In this work, we show that the LMN model is not an exception---i.e., using the LMN model under the CTD model still retains similar recoverability guarantees.
The proof framework resembles those in \cite{Kanatsoulis2018HSR,Ding2021HSR}.
Nonetheless, as we will leverage the essential uniqueness of the LMN decomposition to approach the recoverability problem, careful custom analyses are provided to carry out the proof.

\begin{definition}[Essential Uniqueness]
The tensor $\tY=\sum_{r=1}^R \tD_r\times_1 \A_r\times_2 \B_r \times_3 \C_r$ has an essnetially unique LMN decomposition if the following is met: for any representation $\underline{\bm{Y}} = \sum_{r=1}^R\underline{\bar{\bm{D}}}_r\times_1 \bar{\bm{A}}_r\times_2 \bar{\bm{B}}_r\times_3 \bar{\bm{C}}_r$, we have 
\begin{align}
&\bar{\bm{A}}_{\pi(r)} = \bm{A}_r\bm \Theta_{a,r},~
     \bar{\bm{B}}_{\pi(r)} = \bm{B}_r \bm \Theta_{b,r},~
     \bar{\bm{C}}_{\pi(r)} = \bm{C}_r \bm \Theta_{c,r},\\
&    \underline{\bar{\bm{D}}}_{\pi(r)}\times_1 \bm \Theta_{a,r}\times_2 \bm \Theta_{b,r}\times_3 \bm \Theta_{c,r}=\underline{\bm{D}}_r, ~ \forall r,    
\end{align}
     where $\pi$ is a permutation of $\{1,\ldots,R\}$, and $\bm \Theta_{a,r}$, $\bm \Theta_{b,r}$, and $\bm \Theta_{c,r}$ are nonsingular matrices.  
\end{definition}

We note that existing LMN decomposition conditions cover a range of cases in terms of the relationship between $L,M,N$ and $I,J,K$. However, the these conditions are not realistic in the context of HSR. For example, the seminal paper \cite{Lathauwer2008BTD2} proved that, under reasonable conditions, LMN uniqueness holds if
\begin{align}
    &L=M, I\geq LR, J\geq MR, N\geq 3, \bm{C}_r~ \textrm{is full column rank}, 1\leq r\leq R;\label{eq:L=M}\\
\text{or}~&N> L+M -2, \min \left(\left\lfloor\frac{I}{L} \right\rfloor, R \right)+
\min \left(\left\lfloor\frac{J}{M} \right\rfloor, R \right)+
\min \left(\left\lfloor\frac{K}{N} \right\rfloor, R \right)\geq 2R+2;\label{eq:Nlarger}
\end{align}
see \cite[Theorem 5.1, Theorem 5.5]{Lathauwer2008BTD2}, respectively.
We note that the $L=M$ part in Eq.~\eqref{eq:L=M} is too special to meet by spectral image tensors. 
In addition, the first inequality in \eqref{eq:Nlarger} essentially means that the spectral mode of the spectral images have less correlations relative to the spatial dimensions, which is also hard to justify in the context of HSR.

To overcome these unrealistic conditions for spectral images, we first extend the results in \cite{Lathauwer2008BTD2} and
show a new identifiability result of LMN decomposition:

\begin{theorem}[Identifiability of LMN]
\label{the:identifiability_LMN}
Let $(\{\bm{A}_{r}\in \mathbb{R}^{I \times L_{r}}, \bm{B}_{r}\in \mathbb{R}^{J \times M_{r}}, \bm{C}_r\in \mathbb{R}^{K \times N_r}, \underline{\bm{D}}_r\in \mathbb{R}^{L_{r}\times M_{r} \times N_r}\}_{r=1}^{R})$ be the latent factors of the LMN tensor $\underline{\bm{Y}} = \sum_{r=1}^R\underline{\bm{D}}_r\times_1 \bm{A}_r\times_2 \bm{B}_r\times_3 \bm{C}_r$ where $L_r=L$, $M_r=M$, and $N_r=N$. Assume that $\bm{A}_{r}$, $\bm{B}_{r}$, $\bm{C}_r$, and $\underline{\bm{D}}_r$ are drawn from any absolutely continuous distributions. Then, the LMN decomposition of $\underline{\bm Y}$ is essentially unique almost surely, if 
\begin{align}
 I\geq LR, ~~J\geq MR, ~~LM\geq N \geq \max\left\{\left\lceil \frac{L}{M} \right\rceil + \left\lceil \frac{M}{L} \right\rceil, 3\right\},~ 1\leq r\leq R. \nonumber
\end{align}
\end{theorem}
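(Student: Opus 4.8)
The plan is to extend the essential-uniqueness argument of \cite{Lathauwer2008BTD2} (specifically its $L=M$ version, Theorem~5.1) to the rectangular case $L\neq M$, while exploiting the absolutely continuous distributions to turn every nondegeneracy requirement into an almost-sure statement. Throughout I would write $\bm{A}=[\bm{A}_1,\ldots,\bm{A}_R]\in\mathbb{R}^{I\times LR}$ and $\bm{B}=[\bm{B}_1,\ldots,\bm{B}_R]\in\mathbb{R}^{J\times MR}$, and work with the mode-3 slices $\bm{Y}_k:=\underline{\bm{Y}}(:,:,k)=\sum_{r=1}^R \bm{A}_r\bm{G}_r^{(k)}\bm{B}_r^\top$, where $\bm{G}_r^{(k)}=\underline{\bm{D}}_r\times_3\bm{C}_r(k,:)\in\mathbb{R}^{L\times M}$. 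The guiding principle is that every nondegeneracy condition I invoke is the nonvanishing of a fixed polynomial in the entries of the factors; since such a polynomial either vanishes identically or almost nowhere, it suffices to exhibit a single instance making it nonzero, and the conclusion then holds almost surely.

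First I would record the generic rank facts forced by the dimension bounds. Because $I\geq LR$ and $J\geq MR$, the concatenations $\bm{A}$ and $\bm{B}$ have full column rank $LR$ and $MR$ almost surely. The inequalities $N\geq\lceil L/M\rceil$ and $N\geq\lceil M/L\rceil$ give $MN\geq L$ and $LN\geq M$, while $LM\geq N$ is the remaining mode-3 inequality; together these force the three unfoldings of a generic core $\underline{\bm{D}}_r$ to have full rank $L$, $M$, and $N$, so each block term has multilinear rank exactly $(L,M,N)$ almost surely, and each $\bm{C}_r$ has full column rank $N$. These facts let me pass freely between the tensor and its slice representation.

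Next comes the reduction step. Since $\bm{A},\bm{B}$ are full column rank, the column space of the mode-1 unfolding of $\underline{\bm{Y}}$ equals $\mathrm{range}(\bm{A})$ and that of the mode-2 unfolding equals $\mathrm{range}(\bm{B})$; any competing decomposition $\sum_r\underline{\bar{\bm{D}}}_r\times_1\bar{\bm{A}}_r\times_2\bar{\bm{B}}_r\times_3\bar{\bm{C}}_r$ must reproduce these same spaces, so its concatenated factors $\bar{\bm{A}},\bar{\bm{B}}$ are also full column rank and share the ranges of $\bm{A},\bm{B}$. Applying the pseudo-inverses to each slice gives $\bm{A}^\dagger\bm{Y}_k(\bm{B}^\dagger)^\top=\mathrm{blkdiag}(\bm{G}_1^{(k)},\ldots,\bm{G}_R^{(k)})$, so the problem is recast as the uniqueness of the simultaneous block-diagonalization of the compressed slice family $\{\bm{A}^\dagger\bm{Y}_k(\bm{B}^\dagger)^\top\}_{k=1}^K$. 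Essential uniqueness of the LMN model then reduces to showing that the partition into $R$ blocks of sizes $L\times M$, together with the within-block bases $\mathrm{range}(\bm{A}_r),\mathrm{range}(\bm{B}_r)$, is determined up to permutation and per-block change of basis.

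The heart of the argument, and the main obstacle, is separating the blocks when $L\neq M$. In the square case $L=M$ of \cite{Lathauwer2008BTD2} one forms a matrix pencil from two slices, inverts one square block, and reads the block structure from generalized eigenvalues; rectangularity of the blocks $\bm{G}_r^{(k)}$ blocks this route, and this is precisely where $N\geq\lceil L/M\rceil+\lceil M/L\rceil$ enters. Since each $\bm{C}_r$ is full column rank, all $N$ core slices $\{\underline{\bm{D}}_r(:,:,n)\}_{n=1}^N$ are accessible through linear combinations of the compressed slices, and I would partition them into $\lceil M/L\rceil$ slices stacked vertically into a generically full-column-rank $(\lceil M/L\rceil\, L)\times M$ matrix and $\lceil L/M\rceil$ slices stacked horizontally into a generically full-row-rank $L\times(\lceil L/M\rceil\, M)$ matrix. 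Combining a tall and a wide block in this way yields square, generically invertible per-block operators, restoring a well-posed pencil whose generalized eigenvalues are, generically, distinct across the $R$ blocks. Establishing this genericity---that the determinantal and eigenvalue-separation polynomials do not vanish identically, exhibited via an explicit structured instance---is the technically delicate step; once it holds almost surely, the pencil recovers each $\mathrm{range}(\bm{A}_r),\mathrm{range}(\bm{B}_r)$ up to permutation, forcing $\bar{\bm{A}}_{\pi(r)}=\bm{A}_r\bm{\Theta}_{a,r}$ and $\bar{\bm{B}}_{\pi(r)}=\bm{B}_r\bm{\Theta}_{b,r}$ with nonsingular $\bm{\Theta}_{a,r},\bm{\Theta}_{b,r}$, after which matching the remaining factors on each block---using that the mode-3 rank is exactly $N$---pins down $\bm{\Theta}_{c,r}$ and the core relation, giving essential uniqueness.
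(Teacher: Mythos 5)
Your overall architecture is sound and matches the paper's at the top level: reduce the LMN problem to a rank-$(L,M,\cdot)$ (slice) problem, establish essential uniqueness of the spatial factors $\bm{A}_r,\bm{B}_r$ there, and then recover $\bm{C}_r$ and the core via a mode-3 unfolding and rank argument (your closing step, using that the mode-3 rank is exactly $N$ because $N\le LM$, is essentially identical to the paper's). The decisive difference is how the hard middle step is handled. The paper does \emph{not} prove the rank-$(L,M,\cdot)$ uniqueness itself: it imports it wholesale as Lemma~\ref{lemma:identifiability_LM} (Corollary~2.6 of \cite{Domanov2024BTD}), whose hypotheses are exactly $I\ge LR$, $J\ge MR$, $K\ge\max\{\lceil L/M\rceil+\lceil M/L\rceil,3\}$, and then only carries out the (comparatively easy) lifting from $(L,M,\cdot)$-uniqueness to full LMN uniqueness following the pipeline of \cite[Theorem~5.1]{Lathauwer2008BTD2}. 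You instead attempt to re-derive that imported result from scratch by a matrix-pencil argument, and this is where the proposal has a genuine gap.

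Concretely, the block-separation step is only gestured at, and the gesture does not go through as stated. First, even in the square case $L=M$, the generalized eigenvalues of a two-slice quotient $\bm{G}_r^{(1)}(\bm{G}_r^{(2)})^{-1}$ are generically \emph{all distinct}, within blocks as well as across blocks, so eigenvalue separation alone cannot reveal the partition into $R$ blocks; this is precisely why \cite[Theorem~5.1]{Lathauwer2008BTD2} needs $N\ge 3$ and a more delicate argument, and your sketch never uses or explains the $\max\{\cdot,3\}$ part of the bound on $N$. Second, your rectangular fix---stacking $\lceil M/L\rceil$ core slices vertically and $\lceil L/M\rceil$ horizontally---produces matrices of sizes $(\lceil M/L\rceil L)\times M$ and $L\times(\lceil L/M\rceil M)$, which are square only when $L\mid M$ and $M\mid L$; for general $L,M$ you do not obtain a well-defined square pencil per block, and you give no invariant of the resulting (rectangular or padded) pencil that is constant within a block but generically distinct across blocks. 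Exhibiting such an invariant, and a structured instance certifying that the relevant polynomials are not identically zero, is exactly the content of the result the paper cites from \cite{Domanov2024BTD}; without it, your argument establishes the reduction and the endgame but not the theorem. The shortest repair is to do what the paper does: invoke Corollary~2.6 of \cite{Domanov2024BTD} for the $(L,M,\cdot)$ step and keep your final mode-3 argument unchanged.
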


\begin{proof}
 The proof is a combination of the new development in \cite{Domanov2024BTD} and the proof pipeline in  \cite[Theorem 5.1]{Lathauwer2008BTD2}. The details are in Appendix \ref{app:identifiability_LMN}.
\end{proof}

Note that under the above new theorem, the hard-to-justify conditions in the context of spectral imaging, i.e., $L=M$ and $N> L+M -2$, no longer exist. The conditions in Theorem~\ref{the:identifiability_LMN} only requires that $R$ and $L,M,N$ are sufficiently small. The physical meaning is clear: when the number of endmembers in the spectral image is not too large and the correlations of the underlying component associated with each endmember (i.e.,
$\tT_r$) across three modes are sufficiently high, the LMN decomposition is essentially unique. Based on this uniqueness theorem, we show the following:

\begin{theorem} \label{the:recoverability_nonblind}
Assume that $\underline{\bm{Y}}_{\rS}$ follows the LMN model in \eqref{eq:ev_lmm}-\eqref{eq:ev_T} with $L_r=L, M_r=M, N_r=N$ for $r=1,\ldots,R$, and that $\underline{\bm{Y}}_{\rH}$ and $\underline{\bm{Y}}_{\rM}$ follow the degradation models in \eqref{eq:HSI_degradation} and \eqref{eq:MSI_degradation}, respectively. Also assume that each of $\underline{\bm{D}}_r\in \mathbb{R}^{L\times M\times N}$, $\bm{A}_{r}\in\mathbb{R}^{I_{\rm M}\times L}$, $\bm{B}_{r}\in\mathbb{R}^{J_{\rm M}\times M}$, and $\bm{C}_r\in\mathbb{R}^{K_{\rm H}\times N}$ is drawn from any absolutely continuous distribution, and that $\bm{P}_{1}$, $\bm{P}_{2}$, and $\bm{P}_{M}$ have full row rank. Let $\{\underline{\bm{D}}_r^{\star},\bm{A}_{r}^{\star},\bm{B}_{r}^{\star},\bm{C}_r^{\star}\}_{r=1}^{R}$ represent any feasible solution of Problem \eqref{Non_blind_BTD_model}. Then, the ground-truth $\underline{\bm{Y}}_{\rS}$ is uniquely recovered almost surely by
\[
\underline{\bm{Y}}_{\rS} = \sum_{r=1}^R\underline{\bm{D}}_r^{\star}\times_1 \bm{A}_r^{\star}\times_2 \bm{B}_r^{\star}\times_3  \bm{C}_r^{\star},
\]
if the following conditions are met: 
\begin{subequations}
\begin{align}
 &I_{\rm H}J_{\rm H}\geq LMR, \\
 &I_{\rm M}\geq LR, ~J_{\rm M}\geq MR, \\
 &LM\geq N \geq \max\{\lceil L/M \rceil + \lceil M/L \rceil, 3\}
\end{align}
\end{subequations}
for all $r\in [R]$. 
\end{theorem}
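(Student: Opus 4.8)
The plan is to first reduce optimality to exact equations, then divide the work between the two observations according to which mode is at full resolution. Since the ground-truth factors attain objective value $0$ in \eqref{Non_blind_BTD_model} and the objective is nonnegative, any optimal (zero-cost) solution $\{\underline{\bm{D}}_r^{\star},\bm{A}_r^{\star},\bm{B}_r^{\star},\bm{C}_r^{\star}\}$ must reproduce $\tY_{\rH}$ through \eqref{eq:HSI_degradation} and $\tY_{\rM}$ through \eqref{eq:MSI_degradation} exactly. Equating the feasible and ground-truth (marked $\natural$) representations yields two tensor identities. My strategy is: (i) use the MSI identity, where the spatial modes are at full resolution $I_{\rM},J_{\rM}$, to pin down the spatial factors, the cores, and the \emph{compressed} spectral factors block-wise via Theorem~\ref{the:identifiability_LMN}; and (ii) use the HSI identity, where the spectral mode is at full resolution $K_{\rH}$, to upgrade the compressed spectral relation to the full spectral factors. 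The two together will force the feasible reconstruction to equal $\tY_{\rS}$.

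For step (i), I view $\tY_{\rM}=\sum_r\underline{\bm{D}}_r\times_1\bm{A}_r\times_2\bm{B}_r\times_3(\bm{P}_{\rm M}\bm{C}_r)$ as an LMN tensor with spatial factors $\bm{A}_r,\bm{B}_r$ and spectral factor $\bm{P}_{\rm M}\bm{C}_r$. Because $\bm{P}_{\rm M}$ has full row rank and $\bm{C}_r^\natural$ is absolutely continuous, $\bm{P}_{\rm M}\bm{C}_r^\natural$ is again absolutely continuous; combined with $I_{\rm M}\geq LR$, $J_{\rm M}\geq MR$, and $LM\geq N\geq\max\{\lceil L/M\rceil+\lceil M/L\rceil,3\}$, Theorem~\ref{the:identifiability_LMN} applies and the LMN decomposition of $\tY_{\rM}$ is essentially unique almost surely. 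Hence there exist a permutation $\pi$ and nonsingular $\bm{\Theta}_{a,r},\bm{\Theta}_{b,r},\bm{\Theta}_{c,r}$ with
\begin{align*}
&\bm{A}_{\pi(r)}^{\star}=\bm{A}_r^\natural\bm{\Theta}_{a,r},\quad \bm{B}_{\pi(r)}^{\star}=\bm{B}_r^\natural\bm{\Theta}_{b,r},\quad \bm{P}_{\rm M}\bm{C}_{\pi(r)}^{\star}=(\bm{P}_{\rm M}\bm{C}_r^\natural)\bm{\Theta}_{c,r},\\
&\underline{\bm{D}}_{\pi(r)}^{\star}=\underline{\bm{D}}_r^\natural\times_1\bm{\Theta}_{a,r}^{-1}\times_2\bm{\Theta}_{b,r}^{-1}\times_3\bm{\Theta}_{c,r}^{-1},\qquad\forall r.
\end{align*}
This determines the spectral factors only up to the fat map $\bm{P}_{\rm M}$, so $\bm{C}_{\pi(r)}^{\star}$ is not yet identified.

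For step (ii), I substitute these relations into the HSI identity $\sum_r\underline{\bm{D}}_r^{\star}\times_1(\bm{P}_1\bm{A}_r^{\star})\times_2(\bm{P}_2\bm{B}_r^{\star})\times_3\bm{C}_r^{\star}=\sum_r\underline{\bm{D}}_r^\natural\times_1(\bm{P}_1\bm{A}_r^\natural)\times_2(\bm{P}_2\bm{B}_r^\natural)\times_3\bm{C}_r^\natural$. Reindexing the left sum by $\pi$, each $\bm{\Theta}_{a,r},\bm{\Theta}_{b,r}$ cancels against the inverse sitting inside $\underline{\bm{D}}_{\pi(r)}^{\star}$ (e.g. mode-$1$: $(\,\cdot\times_1\bm{\Theta}_{a,r}^{-1})\times_1(\bm{P}_1\bm{A}_r^\natural\bm{\Theta}_{a,r})=\,\cdot\times_1(\bm{P}_1\bm{A}_r^\natural)$), collapsing the block-$\pi(r)$ term to $\underline{\bm{D}}_r^\natural\times_1(\bm{P}_1\bm{A}_r^\natural)\times_2(\bm{P}_2\bm{B}_r^\natural)\times_3(\bm{C}_{\pi(r)}^{\star}\bm{\Theta}_{c,r}^{-1})$. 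Subtracting the right-hand side gives the homogeneous identity
\begin{align*}
\sum_{r=1}^R \underline{\bm{D}}_r^\natural\times_1(\bm{P}_1\bm{A}_r^\natural)\times_2(\bm{P}_2\bm{B}_r^\natural)\times_3\bm{G}_r=\bm{0},\qquad \bm{G}_r:=\bm{C}_{\pi(r)}^{\star}\bm{\Theta}_{c,r}^{-1}-\bm{C}_r^\natural .
\end{align*}
Taking the mode-$3$ matricization turns this into $[\bm{G}_1,\dots,\bm{G}_R]\,\bm{\Phi}=\bm{0}$, where the $r$-th block-row of $\bm{\Phi}\in\mathbb{R}^{NR\times I_{\rm H}J_{\rm H}}$ is $[\underline{\bm{D}}_r^\natural]_{(3)}\big((\bm{P}_2\bm{B}_r^\natural)\otimes(\bm{P}_1\bm{A}_r^\natural)\big)^{\top}$. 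If $\bm{\Phi}$ has full row rank $NR$, then $\bm{G}_r=\bm{0}$, i.e. $\bm{C}_{\pi(r)}^{\star}=\bm{C}_r^\natural\bm{\Theta}_{c,r}$. Combined with step (i), the feasible factors then equal the ground-truth factors up to exactly the LMN ambiguities, and a direct mode-product computation gives $\sum_r\underline{\bm{D}}_r^{\star}\times_1\bm{A}_r^{\star}\times_2\bm{B}_r^{\star}\times_3\bm{C}_r^{\star}=\tY_{\rS}$.

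The main obstacle is the full-row-rank claim for $\bm{\Phi}$, which is precisely where $I_{\rm H}J_{\rm H}\geq LMR$ enters: it is the HSI-resolution analogue of the block-wise independence that Theorem~\ref{the:identifiability_LMN} supplies for the MSI. I would establish it by the standard genericity route---showing that the set of $(\{\bm{A}_r^\natural,\bm{B}_r^\natural,\underline{\bm{D}}_r^\natural\})$ with $\rank\bm{\Phi}<NR$ lies in the zero set of a nontrivial polynomial, hence is Lebesgue-null. A clean sufficient structure is that the stacked Kronecker spatial operator $\big[(\bm{P}_2\bm{B}_1^\natural)\otimes(\bm{P}_1\bm{A}_1^\natural),\dots,(\bm{P}_2\bm{B}_R^\natural)\otimes(\bm{P}_1\bm{A}_R^\natural)\big]\in\mathbb{R}^{I_{\rm H}J_{\rm H}\times LMR}$ attains full column rank $LMR$---possible exactly because $I_{\rm H}J_{\rm H}\geq LMR$---after which composing on the right with the generically full-column-rank core matricizations $[\underline{\bm{D}}_r^\natural]_{(3)}^{\top}$ (using $LM\geq N$) keeps the $R$ length-$N$ column blocks of $\bm{\Phi}^{\top}$ independent and yields $\rank\bm{\Phi}=NR$. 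The delicate point, and the technical heart extending the CPD/LL1 analyses of \cite{Kanatsoulis2018HSR,Ding2021HSR}, is verifying that the \emph{degraded} per-block operators $(\bm{P}_2\bm{B}_r^\natural)\otimes(\bm{P}_1\bm{A}_r^\natural)$ retain enough rank after compression by the fat $\bm{P}_1,\bm{P}_2$; here the absolute continuity of $\bm{A}_r^\natural,\bm{B}_r^\natural,\underline{\bm{D}}_r^\natural$ and the full row rank of $\bm{P}_1,\bm{P}_2$ are used to place each failure event on a measure-zero set, and exhibiting one explicit instance where $\rank\bm{\Phi}=NR$ certifies the governing polynomial is nontrivial. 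All invoked rank events then hold simultaneously almost surely, giving the stated a.s. recovery.
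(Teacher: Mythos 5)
Your proposal is correct and follows essentially the same route as the paper's proof: first invoke the new LMN identifiability theorem on the MSI (whose third factor $\bm P_{\rm M}\bm C_r$ remains absolutely continuous by the full row rank of $\bm P_{\rm M}$) to fix $\bm A_r,\bm B_r$ and the cores up to the block ambiguities, then use the mode-3 unfolding of the HSI together with the full column rank of $[\bm P_2\bm B_1\otimes\bm P_1\bm A_1,\ldots,\bm P_2\bm B_R\otimes\bm P_1\bm A_R]$ (which is exactly where $I_{\rm H}J_{\rm H}\ge LMR$ and the cited Kronecker-rank lemma enter) to resolve the spectral part. The only difference is organizational: you carry the MSI core relation forward to identify $\bm C^\star_{\pi(r)}=\bm C_r\bm\Theta_{c,r}$ explicitly via the homogeneous system in $\bm G_r$, whereas the paper keeps $\bm C^\star_{\pi(r)}$ and $\bm D^{\star(3)}_{\pi(r)}$ bundled and shows the combined per-term products coincide after right-multiplying by the pseudo-inverse of the stacked Kronecker matrix; both yield the same conclusion under the same conditions.
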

The theorem asserts that, as long as the spatial region of interest does not have an excessive amount of endmembers and each endmember's variations exhibit sufficient correlations across the region, recoverability of the SRI is guaranteed. Note that the result is naturally robust to the cases where EV is present---as discussed, even in the presence of EV, $\tT_r(i,j,:)$ for different $(i,j)$'s (i.e., endmember $r$ appearing at different locations) are still correlated as shown in Fig.~\ref{fig:LMN_Tr}. The result also does not rely on the somewhat stringent conditions in existing EV-aware CTD frameworks (e.g., that EV only occurs across HSI and MSI \cite{Prevost2022Coupled,Borsoi2021Coupled}).

Similarly, we show recoverability of the semi-blind recovery formulation \eqref{blind_BTD_model} where the spatial degradation operators are unknown:
\begin{theorem}\label{the:recoverability_blind}
Under the same assumptions as in Theorem~\ref{the:recoverability_nonblind}, assume that 
$\{\widetilde{\A}_{r}^{\star},\widetilde{\B}_{r}^{\star},\underline{\bm{D}}_r^\star,$ $\bm{A}_{r}^{\star},$ $\bm{B}_{r}^{\star},\bm{C}_{r}^{\star}\}_{r=1}^{R}$ is any feasible solution of Problem \eqref{blind_BTD_model}. Then, 
if $K_M \geq 2N$, $I_{\rm H}\geq LR$, $J_{\rm H}\geq MR$, and $LM \geq N\geq \max\{\lceil L/M \rceil + \lceil M/L \rceil, 3\}$, $1\leq r\leq R$,
the ground-truth $\underline{\bm{Y}}_{S}$ is uniquely recovered by
$
\underline{\bm{Y}}_{\rS} = \sum_{r=1}^R\underline{\bm{D}}_r^\star\times_1 \bm{A}_r^\star\times_2 \bm{B}_r^\star\times_3  \bm{C}_r^\star
$ with probability one.
\end{theorem}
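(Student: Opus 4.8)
The plan is to invoke the LMN identifiability result of Theorem~\ref{the:identifiability_LMN} \emph{twice}---once on the HSI tensor and once on the MSI tensor---and then reconcile the two decompositions through the factors they share, namely $\{\tD_r^\star,\C_r^\star\}$. Since the ground-truth factors provide a feasible point with objective value zero, any feasible $\{\widetilde{\A}_r^\star,\widetilde{\B}_r^\star,\tD_r^\star,\A_r^\star,\B_r^\star,\C_r^\star\}_{r=1}^R$ of \eqref{blind_BTD_model} must also attain objective zero, hence reproduce both observations exactly: $\tY_{\rm H}=\sum_r \tD_r^\star\times_1\widetilde{\A}_r^\star\times_2\widetilde{\B}_r^\star\times_3\C_r^\star$ and $\tY_{\rm M}=\sum_r \tD_r^\star\times_1\A_r^\star\times_2\B_r^\star\times_3(\bm{P}_{\rm M}\C_r^\star)$. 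A preliminary fact I would record is that, because $\bm{P}_1,\bm{P}_2,\bm{P}_{\rm M}$ have full row rank, the pushed-forward factors $\bm{P}_1\A_r$, $\bm{P}_2\B_r$, and $\bm{P}_{\rm M}\C_r$ inherit absolute continuity on their respective spaces (surjective linear images of absolutely continuous laws), which is what lets me apply Theorem~\ref{the:identifiability_LMN} to the degraded tensors with probability one.

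First I would apply Theorem~\ref{the:identifiability_LMN} to $\tY_{\rm H}$, regarded as an LMN tensor of size $I_{\rm H}\times J_{\rm H}\times K_{\rm H}$ with ground-truth factors $(\bm{P}_1\A_r,\bm{P}_2\B_r,\C_r,\tD_r)$. The hypotheses $I_{\rm H}\geq LR$, $J_{\rm H}\geq MR$, and $LM\geq N\geq\max\{\lceil L/M\rceil+\lceil M/L\rceil,3\}$ are precisely those of the identifiability theorem, so the LMN decomposition of $\tY_{\rm H}$ is essentially unique almost surely. Matching the feasible representation against the ground truth and absorbing the permutation by relabeling the feasible blocks, I obtain nonsingular $\bm\Theta_{a,r},\bm\Theta_{b,r},\bm\Theta_{c,r}$ such that $\C_r^\star=\C_r\bm\Theta_{c,r}$, $\widetilde{\A}_r^\star=(\bm{P}_1\A_r)\bm\Theta_{a,r}$, $\widetilde{\B}_r^\star=(\bm{P}_2\B_r)\bm\Theta_{b,r}$, together with the core relation $\tD_r^\star\times_1\bm\Theta_{a,r}\times_2\bm\Theta_{b,r}\times_3\bm\Theta_{c,r}=\tD_r$. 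The essential output of this step is control of the full-resolution spectral factors, $\C_r^\star=\C_r\bm\Theta_{c,r}$, and of the cores.

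Next I would apply Theorem~\ref{the:identifiability_LMN} to $\tY_{\rm M}$, whose ground-truth factors are $(\A_r,\B_r,\bm{P}_{\rm M}\C_r,\tD_r)$. Since $\bm{P}_1,\bm{P}_2$ are compressing (fat), the HSR regime gives $I_{\rm M}\geq I_{\rm H}\geq LR$ and $J_{\rm M}\geq J_{\rm H}\geq MR$, so the spatial conditions hold automatically; moreover $K_{\rm M}\geq 2N\geq N$ guarantees that $\bm{P}_{\rm M}\C_r$ is a full-column-rank, absolutely continuous third-mode factor, and the rank condition $LM\geq N\geq\max\{\lceil L/M\rceil+\lceil M/L\rceil,3\}$ again places us inside the theorem. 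Essential uniqueness of the MSI decomposition then yields, up to a further permutation $\sigma$ of the (already relabeled) blocks, nonsingular $\bm\Xi_{a,r},\bm\Xi_{b,r},\bm\Xi_{c,r}$ with $\A_{\sigma(r)}^\star=\A_r\bm\Xi_{a,r}$, $\B_{\sigma(r)}^\star=\B_r\bm\Xi_{b,r}$, $\bm{P}_{\rm M}\C_{\sigma(r)}^\star=(\bm{P}_{\rm M}\C_r)\bm\Xi_{c,r}$, and the companion core relation $\tD_{\sigma(r)}^\star\times_1\bm\Xi_{a,r}\times_2\bm\Xi_{b,r}\times_3\bm\Xi_{c,r}=\tD_r$. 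This pins down the full-resolution spatial factors $\A_r^\star,\B_r^\star$.

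The crux---and the step I expect to be hardest---is reconciling the two invocations, i.e., showing that $\sigma$ is the identity; this is exactly where $K_{\rm M}\geq 2N$ enters. The HSI step forces $\bm{P}_{\rm M}\C_r^\star=(\bm{P}_{\rm M}\C_r)\bm\Theta_{c,r}$, so the $r$-th feasible block carries compressed spectral subspace $\mathrm{range}(\bm{P}_{\rm M}\C_r)$, whereas the MSI step identifies that same block's subspace with $\mathrm{range}(\bm{P}_{\rm M}\C_{\sigma^{-1}(r)})$. Because $K_{\rm M}\geq 2N$, for any $s\neq s'$ the stacked matrix $[\bm{P}_{\rm M}\C_s,\bm{P}_{\rm M}\C_{s'}]$ has full column rank $2N$ almost surely, so the $N$-dimensional subspaces $\{\mathrm{range}(\bm{P}_{\rm M}\C_s)\}_s$ intersect trivially in pairs and are therefore pairwise distinct; this forces $\sigma$ to be the identity (this mirrors the CPD case $N=1$, $K_{\rm M}\geq 2$). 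With the labelings aligned, $(\bm{P}_{\rm M}\C_r)\bm\Theta_{c,r}=(\bm{P}_{\rm M}\C_r)\bm\Xi_{c,r}$ and the full column rank of $\bm{P}_{\rm M}\C_r$ give $\bm\Theta_{c,r}=\bm\Xi_{c,r}$. Substituting $\A_r^\star=\A_r\bm\Xi_{a,r}$, $\B_r^\star=\B_r\bm\Xi_{b,r}$, and $\C_r^\star=\C_r\bm\Xi_{c,r}$ into $\tD_r^\star\times_1\A_r^\star\times_2\B_r^\star\times_3\C_r^\star$ and folding the transforms back through the MSI core relation collapses each reconstructed block to $\tD_r\times_1\A_r\times_2\B_r\times_3\C_r$; summing over $r$ then recovers $\tY_{\rm S}$. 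Two technical points I would treat carefully are (i) ruling out feasible terms of multilinear rank strictly below $(L,M,N)$, which would invalidate the ``nonsingular $\bm\Theta$'' conclusion, by showing such configurations are incompatible with the exact-fit equations off a measure-zero set, and (ii) verifying that absolute continuity is genuinely preserved under the fixed compressions $\bm{P}_1,\bm{P}_2,\bm{P}_{\rm M}$ so that the two ``almost surely'' clauses hold simultaneously.
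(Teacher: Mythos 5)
Your proposal is correct and follows essentially the same route as the paper's proof: invoke Theorem~\ref{the:identifiability_LMN} on both the HSI and the MSI fits, then use $K_{\rm M}\geq 2N$ (any $2N$ columns of $\bm{P}_{\rm M}[\bm{C}_1,\ldots,\bm{C}_R]$ being generically linearly independent) to force the two permutations to coincide and the spectral transformation matrices to agree, after which the MSI core relation collapses each block to the ground truth. The only cosmetic difference is that you argue the permutation alignment via pairwise trivial intersection of the subspaces ${\rm range}(\bm{P}_{\rm M}\bm{C}_r)$, whereas the paper argues column-by-column that $\bm Z_1(:,q)-\bm Z_2(:,q)$ has at most $2N$ nonzero entries and hence must vanish---two phrasings of the same generic-position fact.
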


\begin{remark}
Theorems~\ref{the:recoverability_nonblind}--\ref{the:recoverability_blind} confirm that adopting the LMN model in the CTD framework retains similar recoverability guarantees previously established for other tensor models. To some extent, this outcome is expected, since the LMN model subsumes CPD, LL1, and Tucker as special cases. Nevertheless, its applicability under EV and the resulting practical implications for HSR are significant. Moreover, while our proof strategy shares similarities with those in \cite{Kanatsoulis2018HSR,Ding2021HSR} for CPD and LL1, the technical underpinning is far from straightforward. Two aspects deserve particular emphasis: (i) unlike \cite{Kanatsoulis2018HSR,Ding2021HSR}, which directly invoke existing uniqueness conditions, we derive new conditions for LMN uniqueness tailored to the HSR setting (see Theorem~\ref{the:identifiability_LMN}); and (ii) because the definition of essential uniqueness for LMN differs from those of CPD and LL1, the proof requires careful handling of the associated mathematical nuances, making the overall recoverability analysis nontrivial to complete.
\end{remark}

\section{Regularization Design and Algorithm}
In this section, we design practical algorithms to tackle \eqref{Non_blind_BTD_model} and \eqref{blind_BTD_model}. In particular, we will design regularization terms that incorporate physical meaning of $\tT_r$ for performance enhancement.

\subsection{Regularization Design for Problems~\eqref{Non_blind_BTD_model} and \eqref{blind_BTD_model}}
One of the characteristics of the endmembers is that they spread smoothly over space \cite{Luo2025NeurTV,Iordache2012TV,Ma2025Multispectral}.
In the spectral domain, the endmember's variations are also smooth.
This means that the mode-1, mode-2 and mode-3 ``fibers'' of $\tT_{r}$ (i.e., $\tT_{r}(:,j,k)$, $\tT_{r}(i,:,k)$, and $\tT_{r}(i,j,:)$) all exhibit a certain degree of smoothness; see \cite{Kolda2009Tensor} for illustration of fibers.
Note that the mode-1, mode-2 and mode-3 fibers of $\tT_{r}$ reside in ${\rm range}(\A_r)$, ${\rm range}(\B_r)$ and ${\rm range}(\C_r)$, respectively.
Therefore, to regularize the smoothness of these fibers, we propose to regularize the bases of these range spaces:
\begin{align}\label{eq:TV}
\phi_r(\bm{A}_r,\bm{B}_r,\bm{C}_r)
= \phi_{p,\varepsilon}(\bm{H}_1\bm{A}_r)+\phi_{p,\varepsilon}(\bm{H}_2\bm{B}_r)+\phi_{\rm Tik}(\bm{H}_3\bm{C}_r),
\end{align}
where $\phi_{p,\varepsilon}(\bm{X})=\sum_i\sum_j(x_{i,j}^2+\varepsilon)^{\frac{p}{2}}$ with $\varepsilon>0$ \cite{Ding2021HSR,Fu2015Joint}.
 We employ this nonconvex total variation regularization $\phi_{p,\varepsilon}(\bm{H}_1\bm{A}_r)+\phi_{p,\varepsilon}(\bm{H}_2\bm{B}_r)$ to promote spatial smoothness. Here, $\bm{H}_1 \in \mathbb{R}^{(I_{\rM}-1)\times I_{\rM}}$ with $\bm{H}_1(i,i)=1, \bm{H}_1(i,i+1)=-1,i=1,\ldots, I_{\rM}-1$, and $\bm{H}_2 \in \mathbb{R}^{(J_{\rM}-1)\times J_{\rM}}$ is defined in the same way. When one takes $0<p\leq 1$, the regularization promotes the small total variation; see \cite{Ding2021HSR}.
We use the Tikhonov regularization, i.e. $\phi_{\rm Tik}(\bm{H}_3\bm{C}_r)=\|\bm H_3\bm C_r\|_{F}^2$ in \eqref{eq:TV}, to represent the spectral smoothness, where the Tikhonov matrix is expressed as follows \cite{Boyd2004Convex,Song2019Online}:
\[
\bm{H}_3 =
\left[
\begin{array}{ccccccc}
1 & -2 & 1  & \cdots   & 0 & 0 & 0\\
0 & 1  & -2 & \cdots  & 0 & 0 & 0\\
\vdots & \vdots & \vdots  &  \vdots   & \vdots & \vdots & \vdots\\
0 & 0 & 0 &  \cdots & -2 & 1 & 0\\
0 & 0 & 0 & \cdots & 1 & -2 & 1\\
\end{array}
\right]
\in \mathbb{R}^{(K_{H}-2) \times K_{\rm H}}.
\]
This regularization promotes second order smoothness---reflecting the characteristics of the spectral signatures (see Fig.~\ref{fig:variability1}).
For the tensor $\underline{\bm{D}}_r$, we apply the regularization $\varphi_r(\tD_r)=\frac{1}{2}\|\tD_r\|^2_{F}$ to combat the scaling/counter-scaling effect in factorization models \cite{Fu2015Joint}.

\begin{figure*}[!t]
\scriptsize\setlength{\tabcolsep}{0.3pt}
\begin{center}
\begin{tabular}{ccccc}
\includegraphics[width=0.189\textwidth]{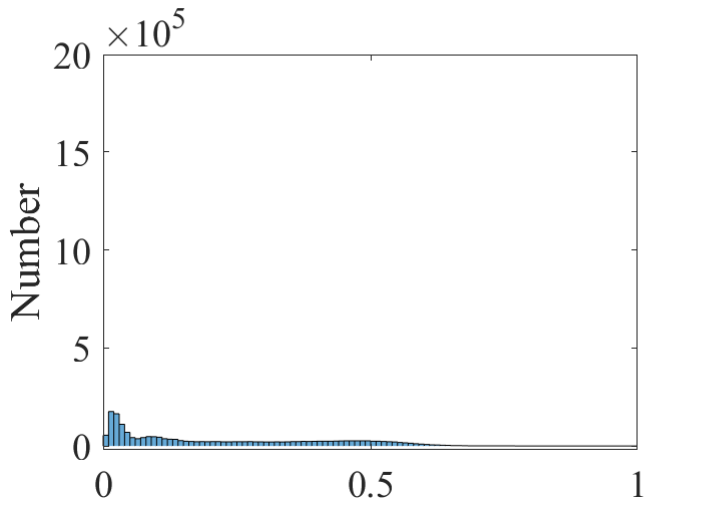}&
\includegraphics[width=0.189\textwidth]{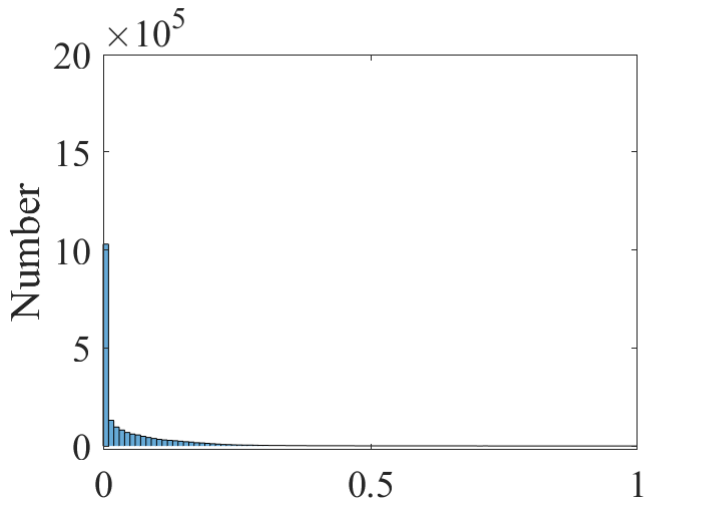}&
\includegraphics[width=0.189\textwidth]{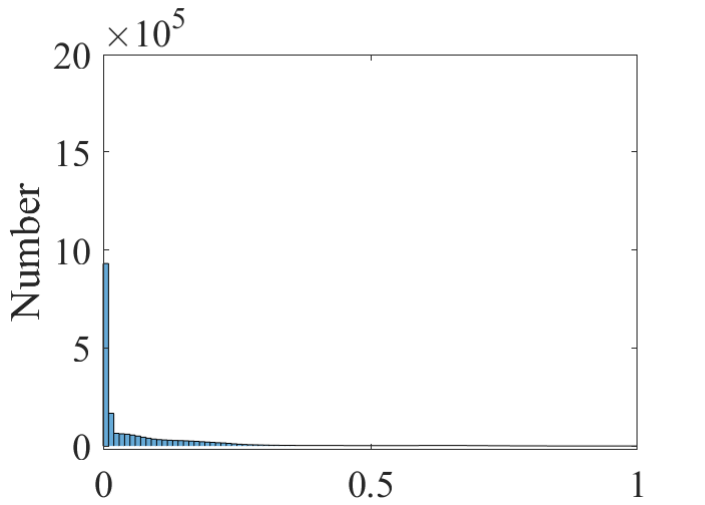}&
\includegraphics[width=0.189\textwidth]{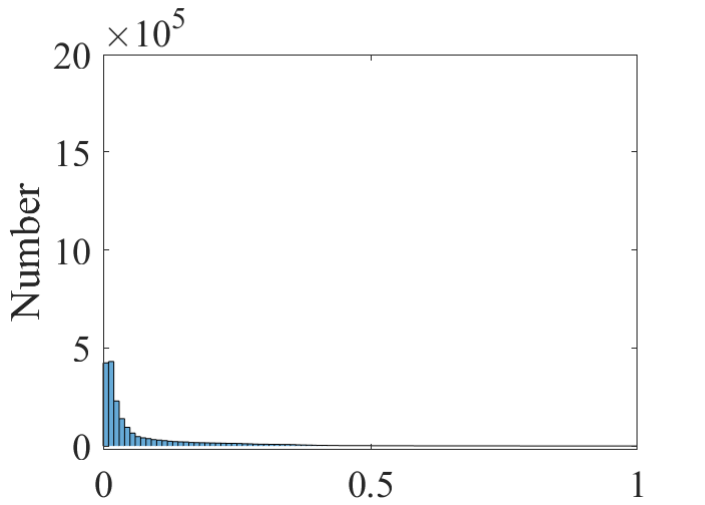}&
\includegraphics[width=0.189\textwidth]{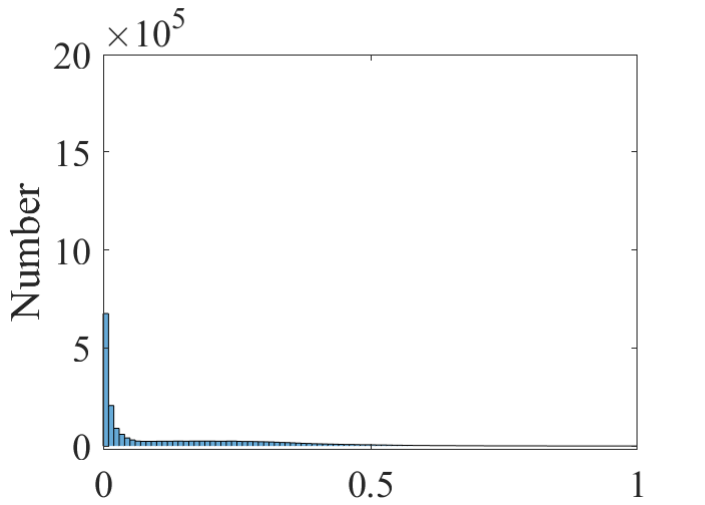}\\
\includegraphics[width=0.189\textwidth]{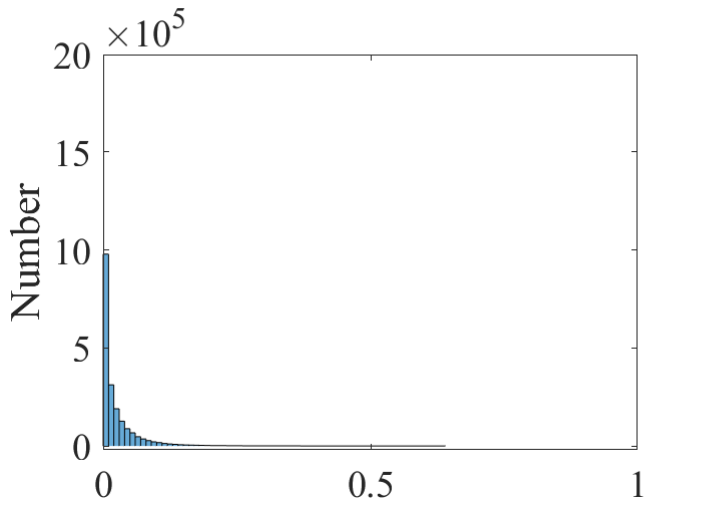}&
\includegraphics[width=0.189\textwidth]{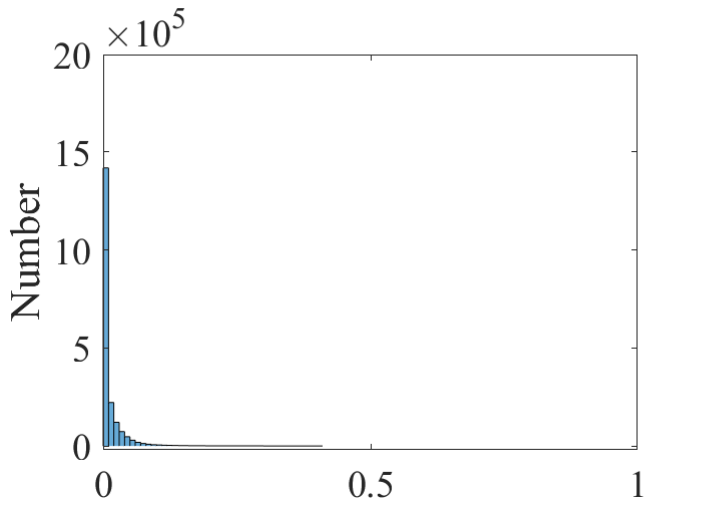}&
\includegraphics[width=0.189\textwidth]{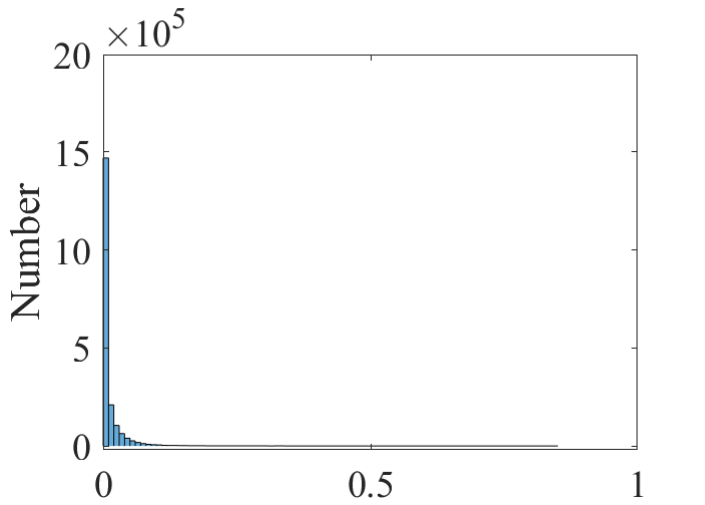}&
\includegraphics[width=0.189\textwidth]{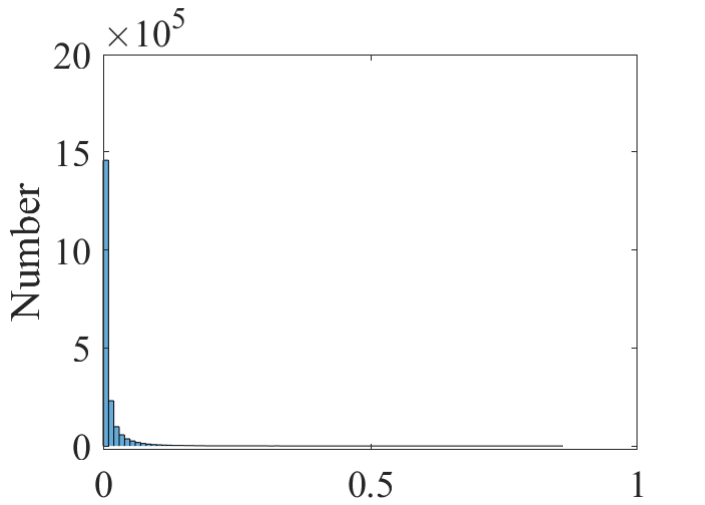}&
\includegraphics[width=0.189\textwidth]{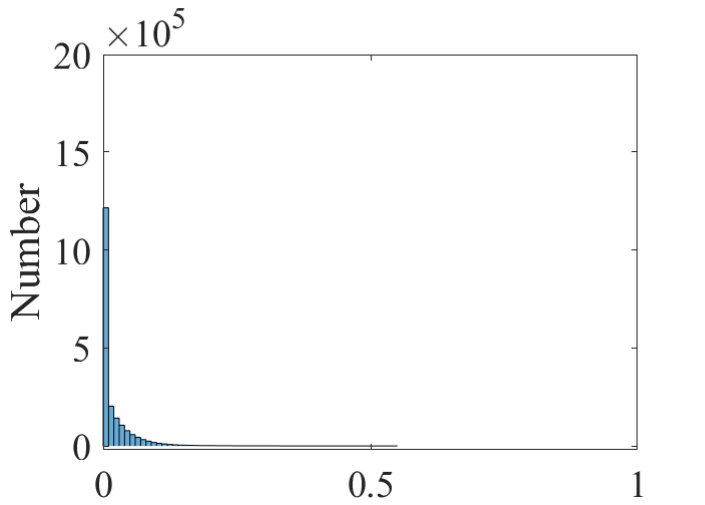}\\
\includegraphics[width=0.189\textwidth]{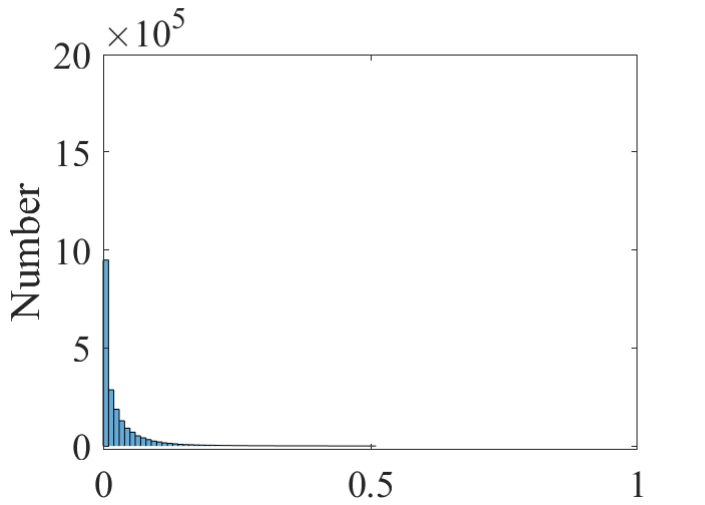}&
\includegraphics[width=0.189\textwidth]{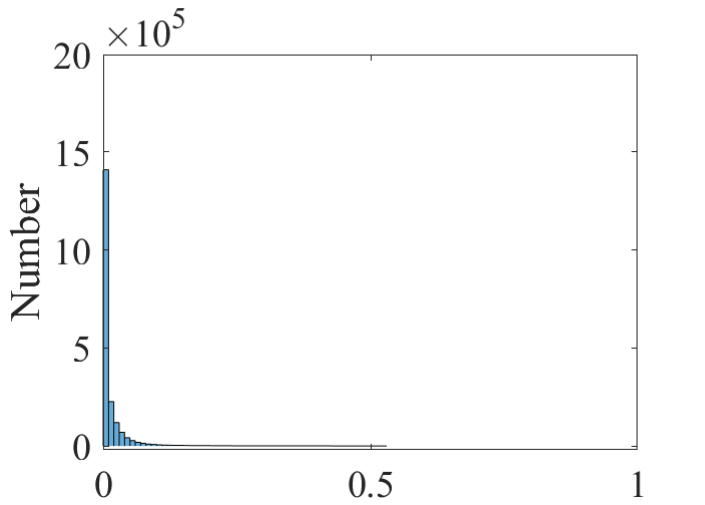}&
\includegraphics[width=0.189\textwidth]{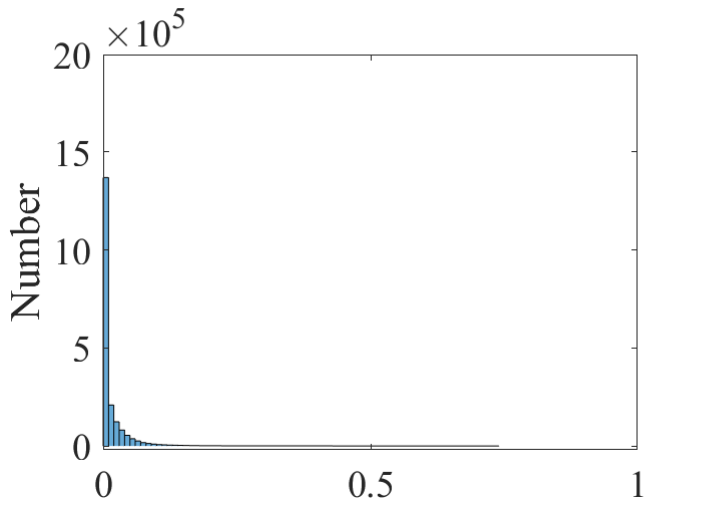}&
\includegraphics[width=0.189\textwidth]{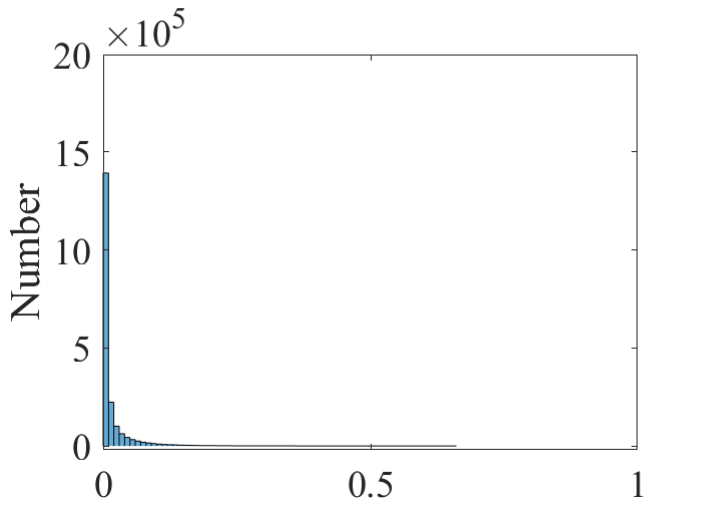}&
\includegraphics[width=0.189\textwidth]{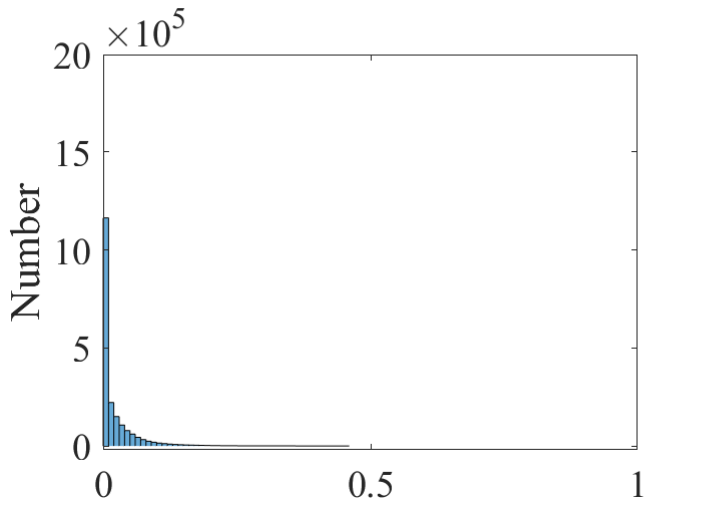}\\
\includegraphics[width=0.189\textwidth]{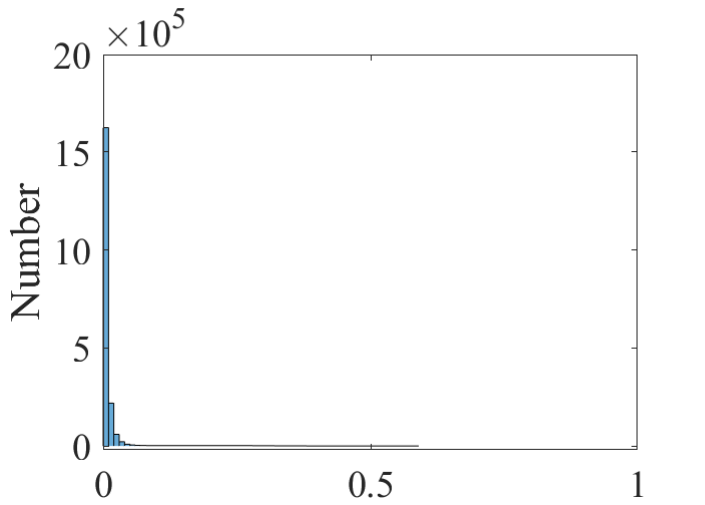}&
\includegraphics[width=0.189\textwidth]{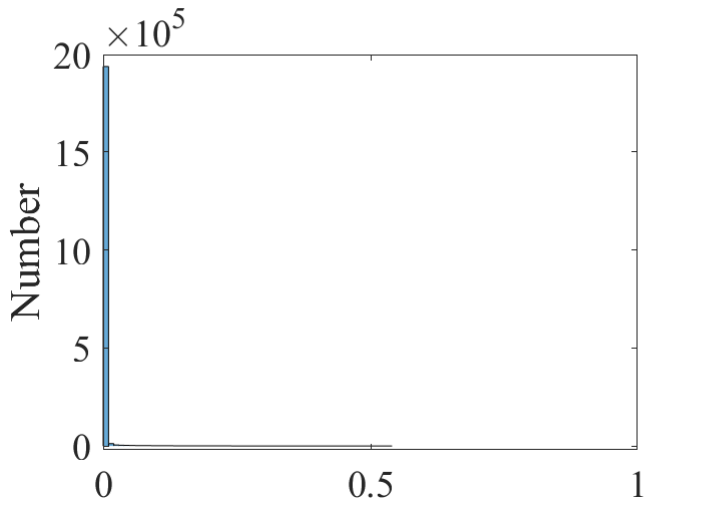}&
\includegraphics[width=0.189\textwidth]{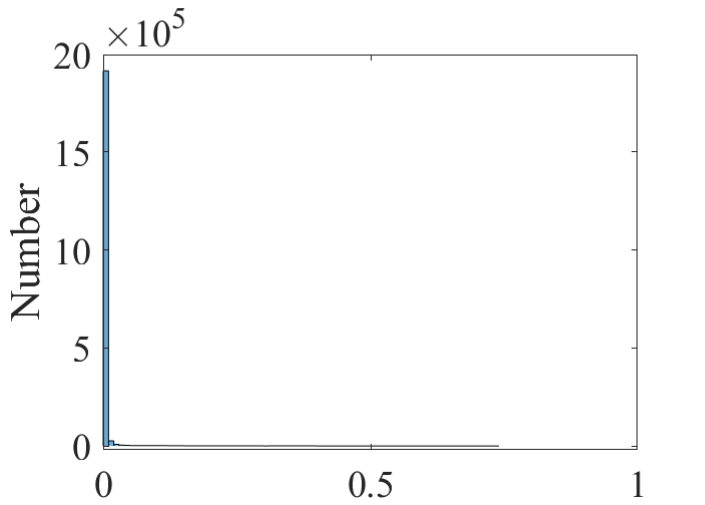}&
\includegraphics[width=0.189\textwidth]{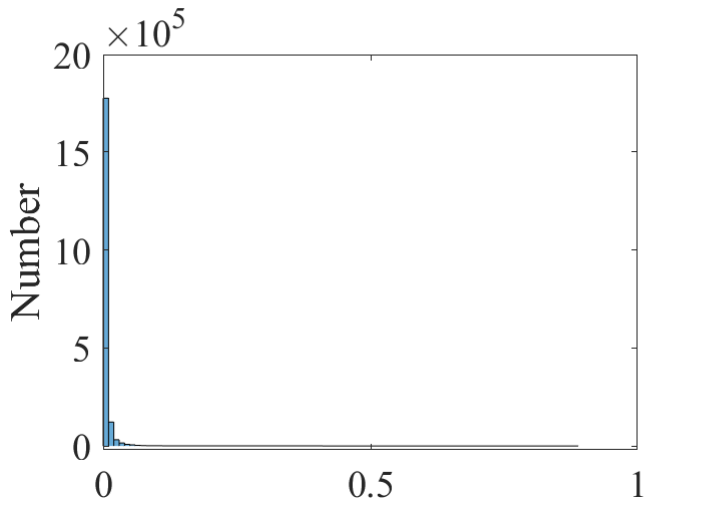}&
\includegraphics[width=0.189\textwidth]{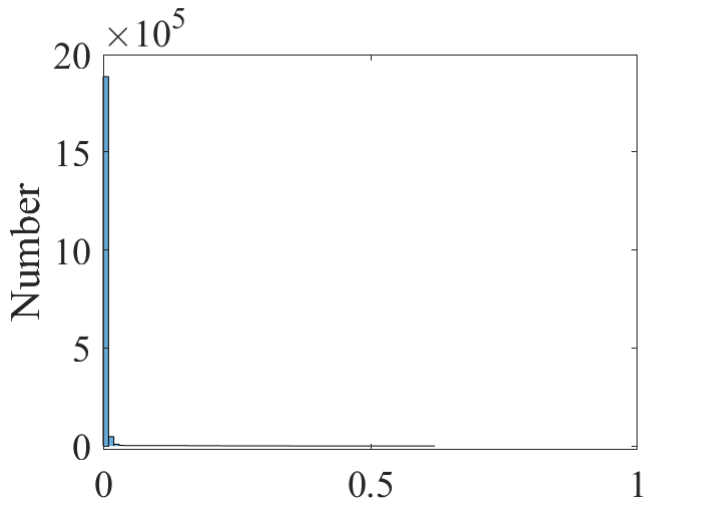}\\
(a)~~$\tY_{\rm S}$ &  (b)~~$\tT_1$ &  (c)~~$\tT_2$ & (d)~~$\tT_3$ &  (e)~~$\tT_4$\\
\end{tabular}
\caption{Comparison of sparsity (first row) and spatial smoothness (second row to fourth row: mode-1, mode-2, and mode-3, respectively) exhibited by original $\tY_{\rm S}$ and each $\tT_r$ for Jasper Ridge dataset with four materials. } 
  \label{fig:Ridge_smoothness}
  \vspace{-2mm}
\end{center}
\end{figure*}

Using the above, our optimization objective when spatial degradation is known amounts to the following:
\begin{align}\label{eq:model-non-blind}
    \minimize_{\bm \theta}~{\cal L} + \lambda \sum_{r=1}^R \phi_r(\A_r,\B_r,\C_r) + \eta \sum_{r=1}^R \varphi_r(\tD_r),
\end{align}
where $\bm \theta$ collects all the optimization variables, ${\cal L}$ represents the loss in \eqref{Non_blind_BTD_model}, and $\lambda$ and $\eta$ are two non-negative regularization parameters. We should mention that the formulation is continuously differentiable, which allows us to design efficient first-order algorithms for handling the problem.

Similarly, \eqref{blind_BTD_model} is recast as
\begin{align}\label{eq:model-blind}
    \minimize_{\bm \theta'}~\widetilde{{\cal L}} + \lambda \sum_{r=1}^R \phi_r(\A_r,\B_r,\C_r) + \eta \sum_{r=1}^R \varphi_r(\tD_r),
\end{align}
where $\widetilde{\cal L}$ represents the loss in \eqref{blind_BTD_model}.

\begin{figure*}[!ht]
\scriptsize\setlength{\tabcolsep}{0.3pt}
\begin{center}
\begin{tabular}{ccccc}
\includegraphics[width=0.189\textwidth]{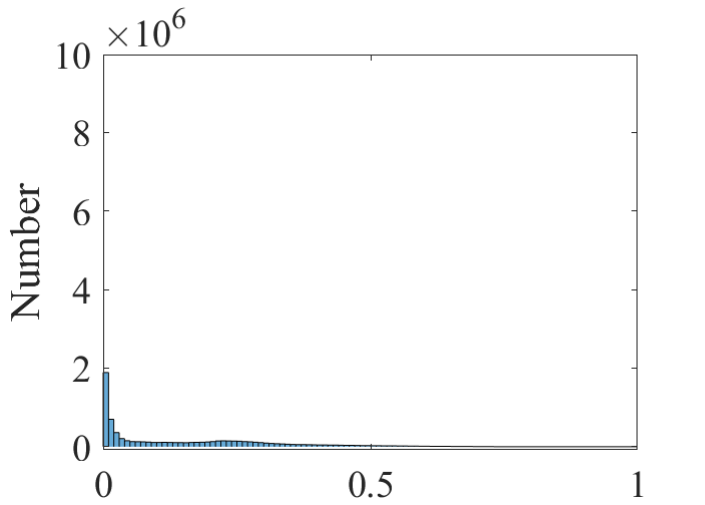}&
\includegraphics[width=0.189\textwidth]{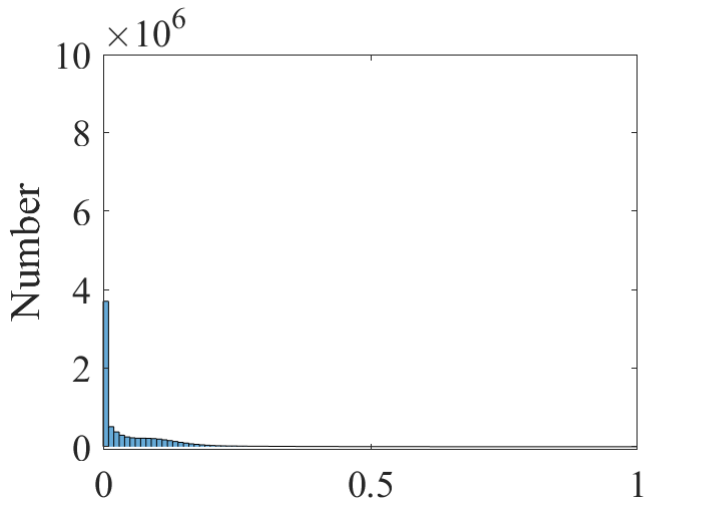}&
\includegraphics[width=0.189\textwidth]{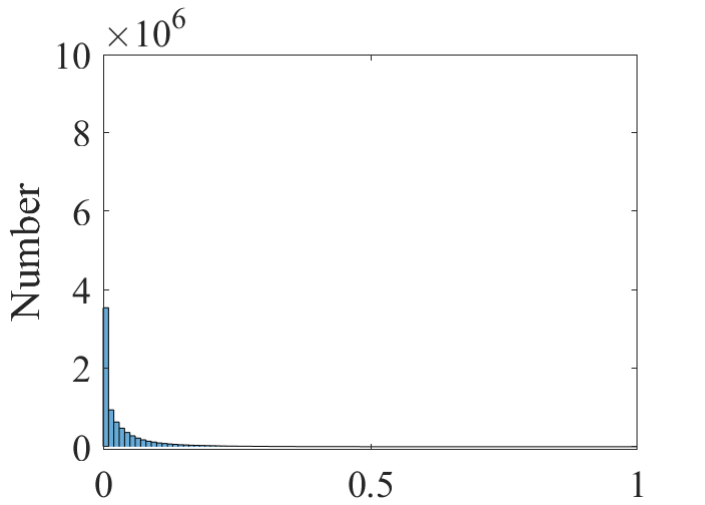}&
\includegraphics[width=0.189\textwidth]{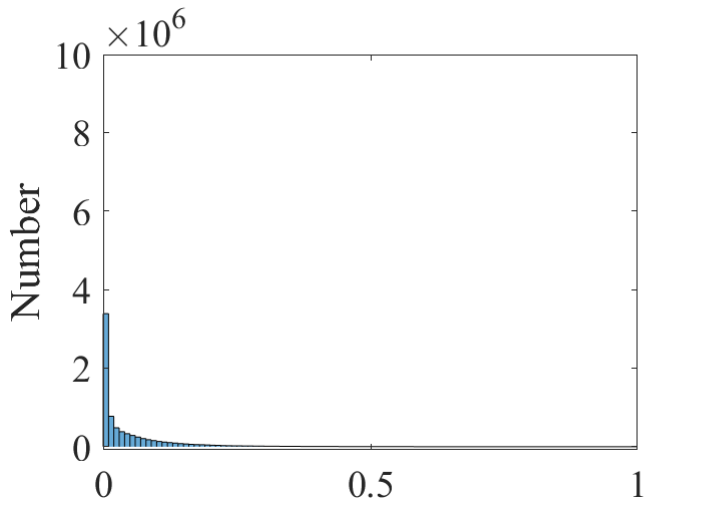}&
\includegraphics[width=0.189\textwidth]{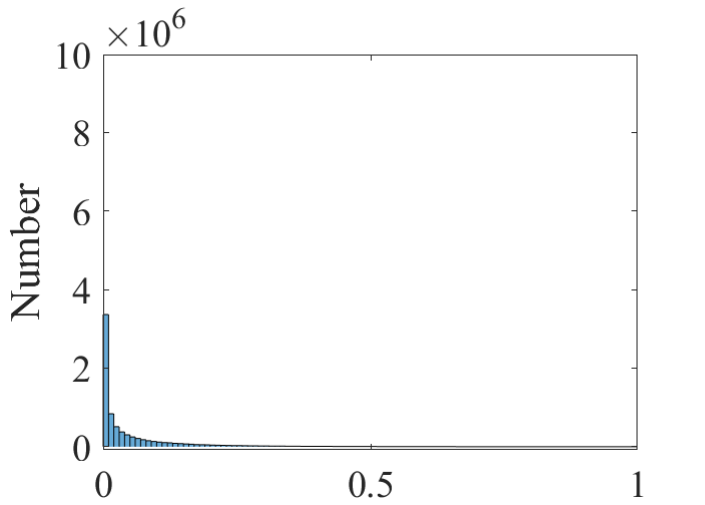}\\
\includegraphics[width=0.189\textwidth]{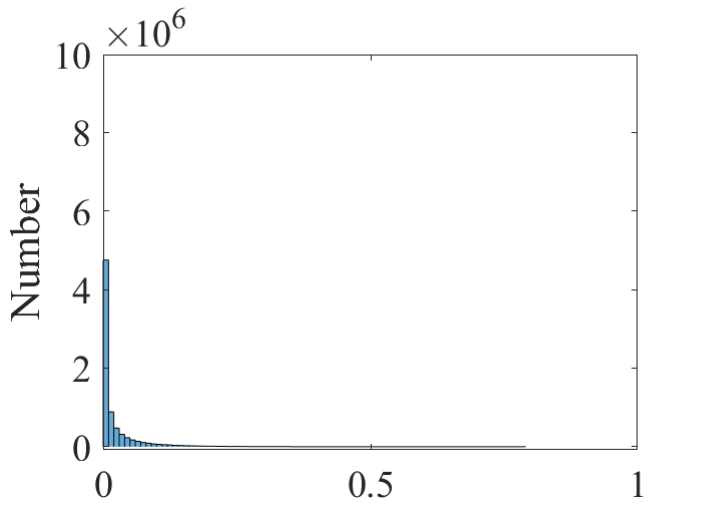}&
\includegraphics[width=0.189\textwidth]{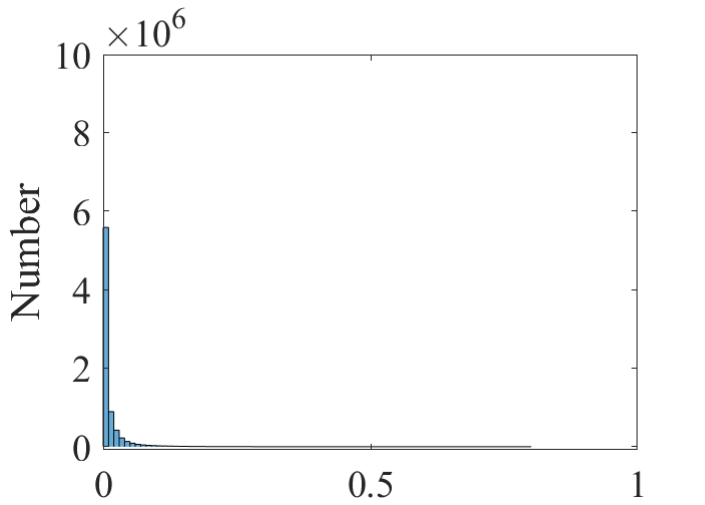}&
\includegraphics[width=0.189\textwidth]{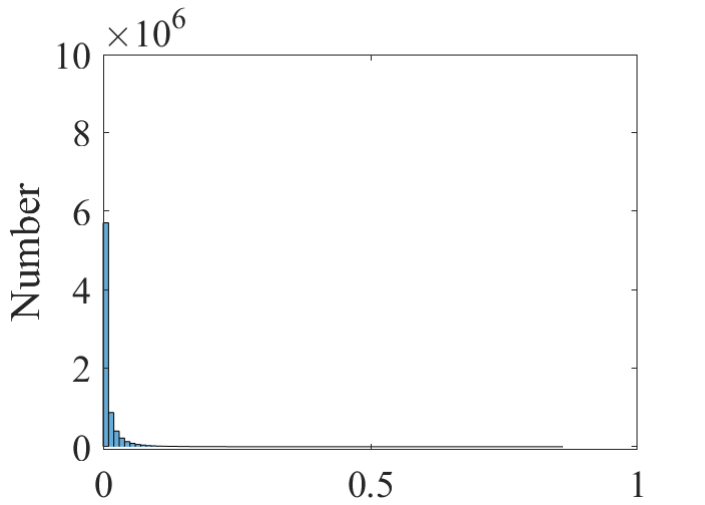}&
\includegraphics[width=0.189\textwidth]{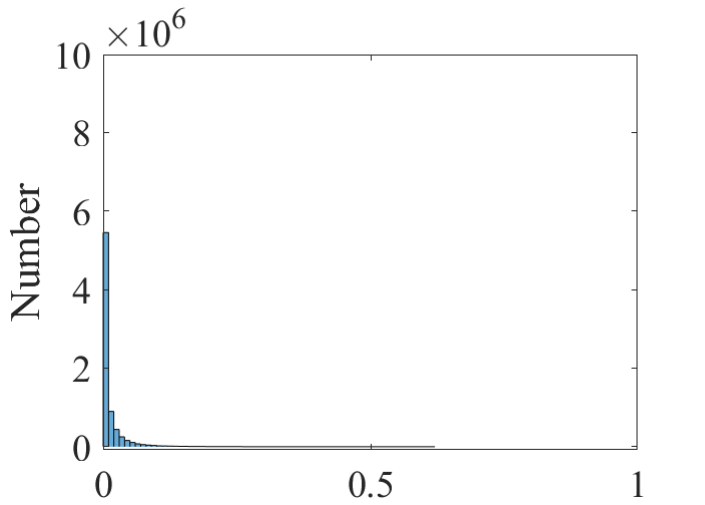}&
\includegraphics[width=0.189\textwidth]{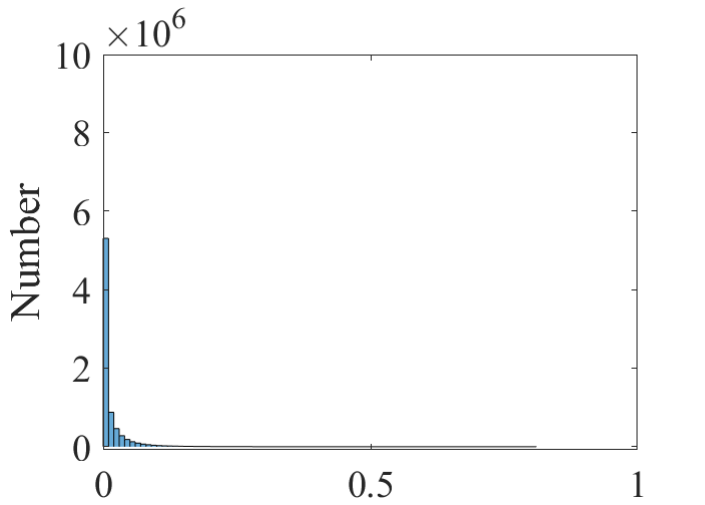}\\
\includegraphics[width=0.189\textwidth]{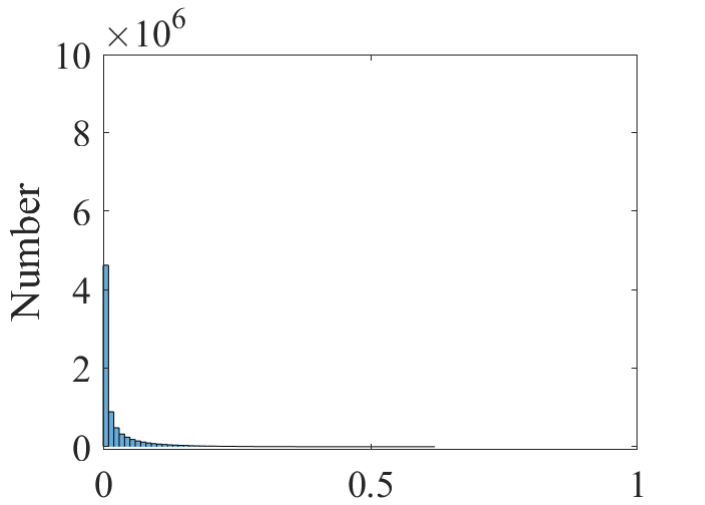}&
\includegraphics[width=0.189\textwidth]{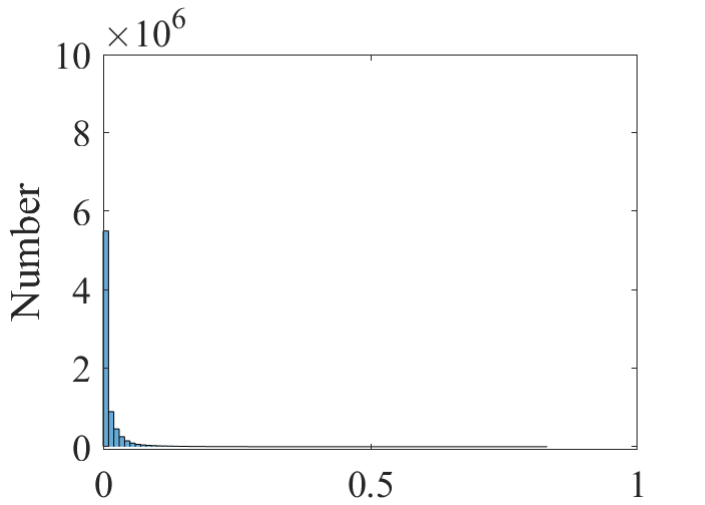}&
\includegraphics[width=0.189\textwidth]{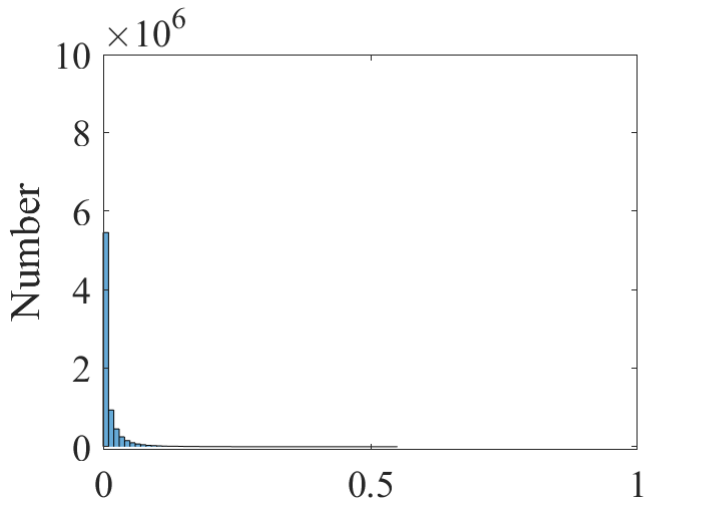}&
\includegraphics[width=0.189\textwidth]{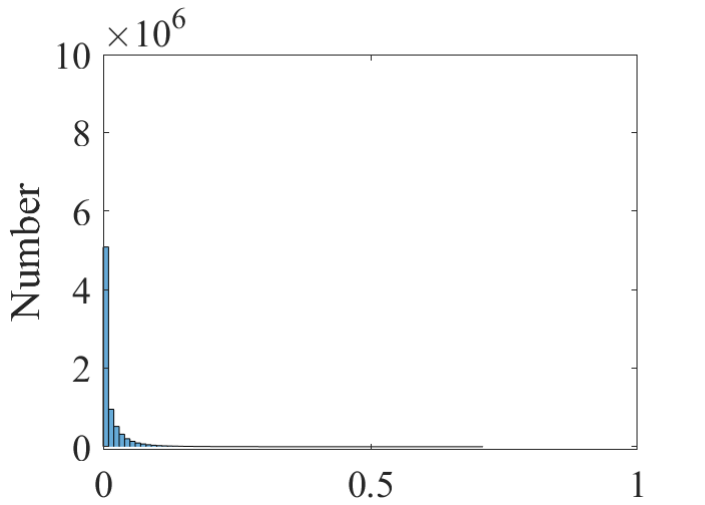}&
\includegraphics[width=0.189\textwidth]{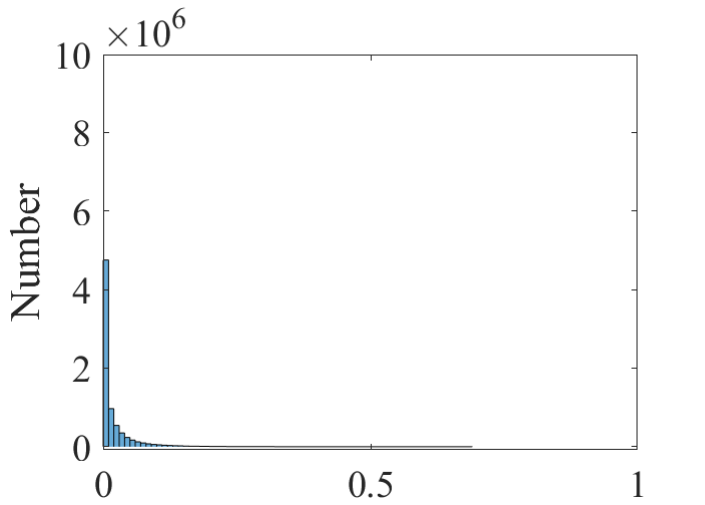}\\
\includegraphics[width=0.189\textwidth]{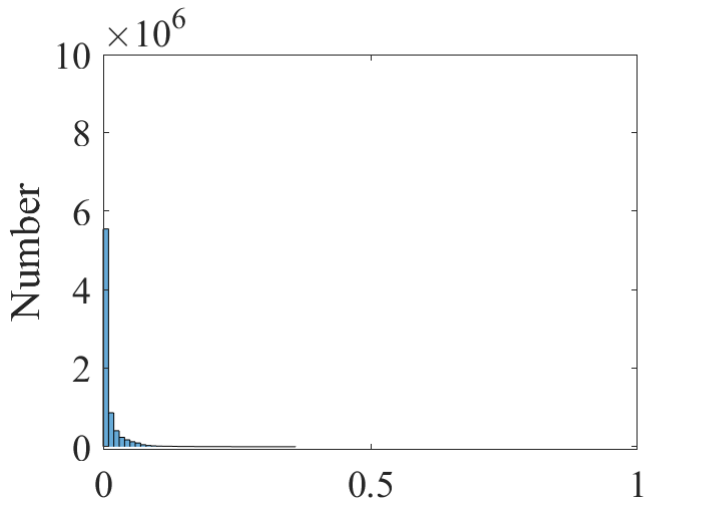}&
\includegraphics[width=0.189\textwidth]{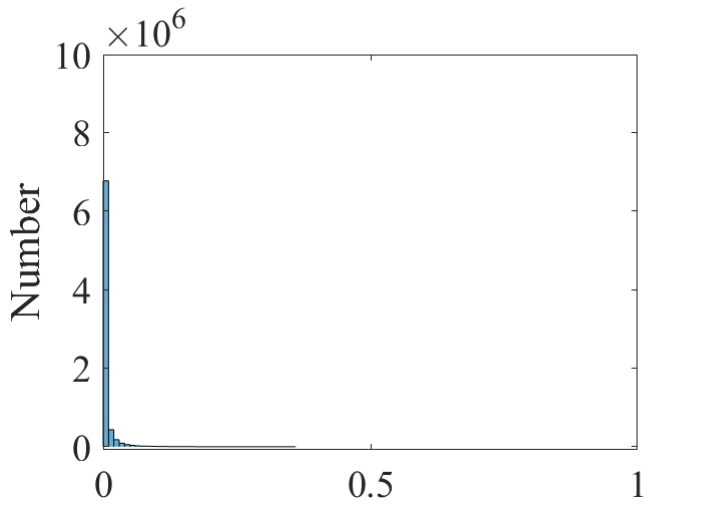}&
\includegraphics[width=0.189\textwidth]{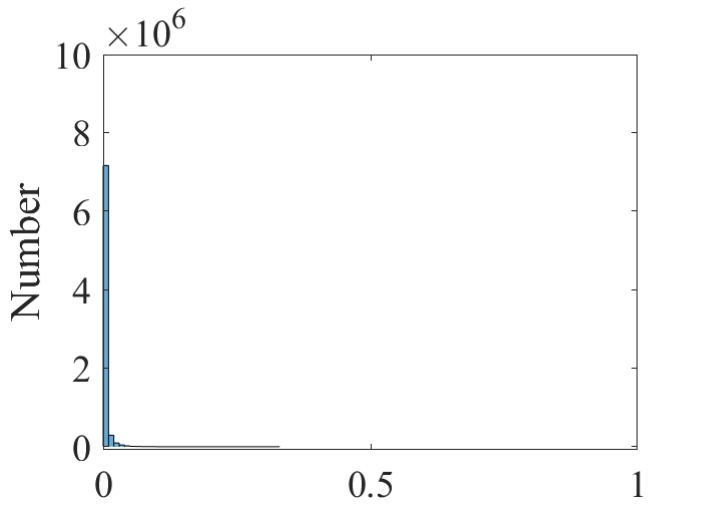}&
\includegraphics[width=0.189\textwidth]{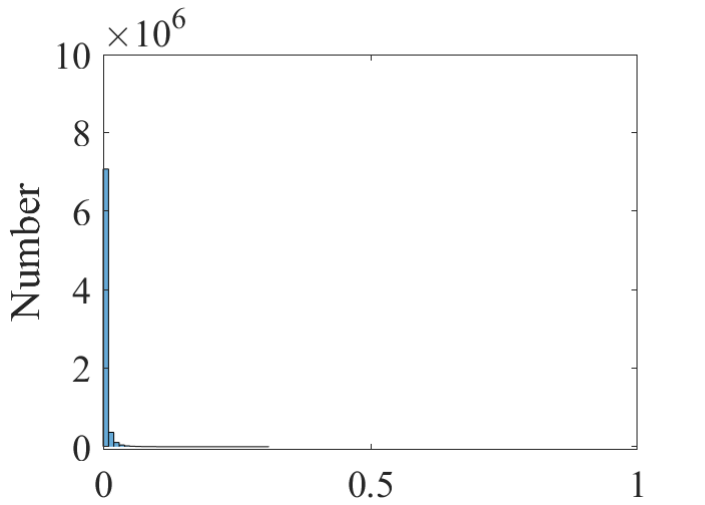}&
\includegraphics[width=0.189\textwidth]{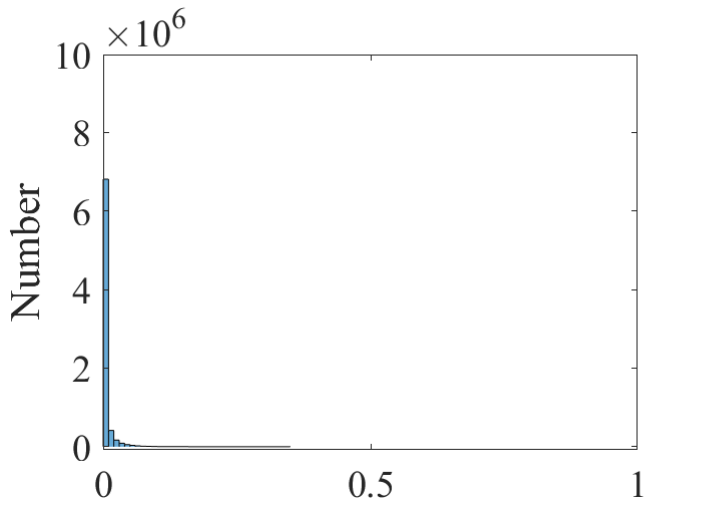}\\
(a)~~$\tY_{\rm S}$ &  (b)~~$\tT_1$ &  (c)~~$\tT_2$ & (d)~~$\tT_3$ &  (e)~~$\tT_4$\\
\end{tabular}
\caption{Comparison of sparsity (first row) and spatial smoothness (second row to fourth row: mode-1, mode-2, and mode-3, respectively) exhibited by original $\tY_{\rm S}$ and each $\tT_r$ for Washington DC dataset with four materials. } 
  \label{fig:WDC_smoothness}
  %\vspace{-3mm}
\end{center}
\end{figure*}

\begin{remark}
    Besides the above-mentioned smoothness regularization, one could incorporate a variety of regularization terms or constraints that reflect the prior information of $\tT_r$. Some examples include:
    \begin{itemize}
        \item {\it Sparsity}: It is reasonable to assume $\tT_r$ is sparse in space as a single material does not appear in all pixels \cite{Qian2011HUNMF}. 
        \item {\it Nonlocal self-similarity}: One can impose the nonlocal self-similarity regularization on each material $\underline{\bm{T}}_r$, which is motivated by the fact that a spectral image often contains many repetitive local patterns, thus a local patch often has many similar patches across the spatial domain \cite{Jin2017Nonlocal,Zhuang2021Hyperspectral,Xu2019NonlocalCP,Guo2022Gaussian}.
        \item {\it Nonnegativity}: Per the physical meaning of $\tT_r(i,j,:)$, namely, the spectral signature of material $r$ at pixel $(i,j)$, adding nonnegativity on $\tT_r$ makes intuitive sense.
    \end{itemize}
     Nonetheless, in this work, our focus is not exhausting potential regularization/constraints. Instead, we use the spatial/spectral smoothness regularization in \eqref{eq:TV} to showcase the effectiveness of imposing structural constraints on the latent factors of the LMN model for CTD-based HSR.  
\end{remark}

\begin{remark}
One may argue that the CPD and Tucker models could also incorporate similar ideas by imposing constraints or regularization on their latent components. However, there are subtle yet important distinctions. In CPD and Tucker, constraints on latent factors can only capture aggregated characteristics arising from multiple endmembers at the spectral-pixel level. By contrast, in the LMN model, constraints are imposed directly on each \textit{individual} endmember, whose physical meaning and characteristics are arguably more interpretable.
To illustrate this phenomenon, we examine the spatial sparsity, spatial smoothness, and spectral smoothness of $\tY_{\rm S}$ and $\tT_r$ using the Jasper Ridge and Washington DC datasets (Figs.~\ref{fig:Ridge_smoothness} and \ref{fig:WDC_smoothness}). The $\tT_r$'s are obtained using the nonnegative BTD implementation in Tensorlab \cite{Vervliet2016Tensorlab}. Spatial sparsity is quantified by the number of zero entries. For a tensor $\tX$, mode-1 and mode-2 spatial smoothness are visualized through the absolute values of the elements of $\tX \times_1 {\bm{H}_1}$ and $\tX \times_2 {\bm{H}_2}$, respectively, while spectral smoothness is shown via $\tX \times_3 {\bm{H}_3}$.
The results (with all the values normalized by the largest) show that both sparsity and smoothness are more pronounced in the $\tT_r$ components than in the aggregated $\tY_{\rm S}$. This is likely because the spatial structure of $\tY_{\rm S}$ is influenced by mixtures of multiple materials (e.g., the Urban image in Fig.~\ref{fig:variability1} does not exhibit strong spatial sparsity or small total variation).
\end{remark}

\subsection{Accelerated Alternating Gradient Descent}
Both the formulations in \eqref{eq:model-non-blind} and \eqref{eq:model-blind} admit continuously differentiable objective functions. Hence, we employ the alternating accelerated gradient approach from \cite{Xu2013BCD,Ding2021HSR,Ding2023Fast,xu2017globally} to tackle the problem. The same design principle was used for the LL1-based HSR approach in \cite{Ding2021HSR}.

For \eqref{eq:model-non-blind},
the algorithm updates the block variables $\{\bm{A}_r
\}_{r=1}^R$, $\{\bm{B}_r\}_{r=1}^R$, $\{\bm{C}_r\}_{r=1}^R$, and $\{\underline{\bm{D}}_r\}_{r=1}^R$ in an alternating manner.
For each variable, we take a one-step gradient with an extrapolation based acceleration, e.g.,
\begin{align}
    \bm{A}_r^{(t+1)}&\leftarrow \check{\bm{A}}_r^{(t)}-\alpha^{(t)}\bm{G}_{\check{\bm{A}}_r}^{(t)},
\end{align}
where $\alpha^{(t)}$ is the step size and $\bm{G}_{\check{\bm{A}}_r}^{(t)}$ is the gradient of the objective taken at $\check{\bm{A}}_r^{(t)}$, in which
\begin{align}\label{eq:Aextra}
\check{\bm{A}}_r^{(t)}\leftarrow \bm{A}_r^{(t)}+\mu^{(t)}(\bm{A}_r^{(t)}-\bm{A}_r^{(t-1)}),
\end{align}
using a pre-defined combination parameter $\mu^{(t)}$.
After updating $\A_r$ for $r=1,\ldots,R$ in parallel, the algorithm moves to the next block (i.e., $\bm B_r$ for $r=1,\ldots,R$) and updates in the same manner.

The same algorithm is applied to handling \eqref{eq:model-blind}, with the blocks iteratively changed to $\{\widetilde{\bm{A}}_r
\}_{r=1}^R$, $\{\widetilde{ \bm{B}}_r\}_{r=1}^R, \{\bm{A}_r
\}_{r=1}^R$, $\{\bm{B}_r\}_{r=1}^R$, $\{\bm{C}_r\}_{r=1}^R$, and $\{\underline{\bm{D}}_r\}_{r=1}^R$.

Convergence of the algorithm has also been well supported in the literature \cite{Xu2013BCD,Shao2019OneBit,Ding2023Fast,xu2017globally}. The early work \cite{Xu2013BCD,xu2017globally} showed asymptotic convergence of such algorithms, with careful selection of $\alpha$ and $\mu$. The work of \cite{Shao2019OneBit} established finite-step convergence (or, iteration complexity) of this algorithm, showing that $O(1/T)$-stationary point (see definitions in \cite{Shao2019OneBit,Ding2023Fast}) is attained after $T$ iterations. The work of \cite{Ding2023Fast} extended the results of \cite{Shao2019OneBit} by considering nonconvex constraints. Invoking these results, we have the following convergence result:
\begin{proposition}
  Using the alternating accelerated gradient algorithm to tackle \eqref{eq:model-non-blind} and \eqref{eq:model-blind} produces solution sequences that converge to their respective stationary points under proper choices of $\alpha^{(t)}$ and $\mu^{(t)}$ for each block. In particular, there exist $\alpha^{(t)}$ and $\mu^{(t)}$ for each block such that the iteration complexity is $O(1/T)$; i.e., the solution reaches an $O(1/T)$-stationary points of \eqref{eq:model-non-blind} and \eqref{eq:model-blind}.
\end{proposition}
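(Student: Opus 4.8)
The plan is to cast the proposition as a verification that the smooth, regularized formulations in \eqref{eq:model-non-blind} and \eqref{eq:model-blind} fall within the scope of the block-coordinate accelerated-gradient frameworks of \cite{Xu2013BCD,xu2017globally,Shao2019OneBit,Ding2023Fast}, and then invoke their convergence and complexity guarantees directly. The work thus reduces to checking the structural hypotheses those frameworks require: (i) a continuously differentiable (possibly nonconvex) objective, (ii) block-wise Lipschitz smoothness of the gradient, and (iii) uniform boundedness of the per-block Lipschitz constants along the iterate trajectory, so that the step sizes $\alpha^{(t)}$ and extrapolation weights $\mu^{(t)}$ can be chosen admissibly for each block.

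First I would establish smoothness. The loss ${\cal L}$ (resp.\ $\widetilde{\cal L}$) is a sum of squared Frobenius norms of expressions that are multilinear in the blocks $\{\tD_r,\A_r,\B_r,\C_r\}$, hence a polynomial and therefore $C^\infty$. Each regularizer is likewise smooth: $\phi_{p,\varepsilon}(\bm X)=\sum_{i,j}(x_{i,j}^2+\varepsilon)^{p/2}$ is differentiable because $\varepsilon>0$ removes the nonsmooth point of the $\ell_p$-type penalty (its derivative $p\,x_{i,j}(x_{i,j}^2+\varepsilon)^{p/2-1}$ is continuous everywhere), the Tikhonov term $\phi_{\rm Tik}$ is quadratic, and $\varphi_r(\tD_r)=\tfrac12\|\tD_r\|_F^2$ is quadratic. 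Summing, the full objectives are continuously differentiable, which is exactly the regularity demanded by the cited algorithms; note that $0<p\le 1$ makes the objective nonconvex, so the nonconvex BCD theory of \cite{Xu2013BCD,xu2017globally,Shao2019OneBit,Ding2023Fast} is the appropriate tool.

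Second I would verify the block structure. With all blocks but one held fixed, the loss restricted to a single block of variables (e.g., the super-block $\{\A_r\}_{r=1}^R$) is quadratic in that block, so its gradient is affine; adding the smooth regularizer keeps the per-block gradient Lipschitz continuous, with a block Lipschitz constant $L^{(t)}$ determined by the current values of the remaining blocks through Gram-type matrices. The accelerated update then admits the standard admissible rule $\alpha^{(t)}\in(0,1/L^{(t)}]$ together with an extrapolation weight $\mu^{(t)}$ bounded relative to $\sqrt{L^{(t-1)}/L^{(t)}}$, which is precisely what guarantees the sufficient-decrease inequality underlying both the asymptotic convergence result of \cite{Xu2013BCD,xu2017globally} and the $O(1/T)$ iteration-complexity result of \cite{Shao2019OneBit,Ding2023Fast}.

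The main obstacle is item (iii): because the Lipschitz constant of one block depends on the (changing) values of the others, one must ensure these constants remain uniformly bounded along the trajectory, lest the step-size rule degenerate. Here the regularization is decisive---the quadratic penalty $\varphi_r$ renders the objective coercive in $\tD_r$, and together with the monotone descent produced by the admissible accelerated updates this confines the iterates to a compact sublevel set. On that set the relevant Gram matrices, and hence all block Lipschitz constants, are uniformly bounded, legitimizing the uniform choice of $\alpha^{(t)}$ and $\mu^{(t)}$. With this boundedness in hand, the two conclusions follow by specializing the convergence and complexity theorems of \cite{Shao2019OneBit,Ding2023Fast} to our differentiable objectives: the iterate sequence converges to a stationary point, and an $O(1/T)$-stationary point is reached after $T$ iterations for both \eqref{eq:model-non-blind} and \eqref{eq:model-blind}.
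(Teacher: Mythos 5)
Your proposal takes essentially the same route as the paper: the paper's proof consists precisely of invoking the convergence and $O(1/T)$ iteration-complexity results of \cite{Xu2013BCD,xu2017globally,Shao2019OneBit,Ding2023Fast} after noting that the objectives are continuously differentiable, with the block Lipschitz constants and admissible step sizes $\alpha^{(t)}=1/L^{(t)}$ worked out in the supplementary material exactly as you describe. The only place you go beyond the paper is the coercivity argument for uniform boundedness of the block Lipschitz constants; be aware that coercivity in $\tD_r$ alone does not yield a compact sublevel set jointly in all blocks (e.g., scaling $\A_r$ up along the null space of $\bm H_1$ while shrinking $\tD_r$ keeps the objective bounded), so that step is not airtight---but the paper leaves this point unaddressed as well, and the cited frameworks typically handle it via a bounded-iterates assumption.
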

The proof is by invoking the results in \cite{Shao2019OneBit,Ding2023Fast} and thus is omitted.
Note that although the proof is conceptually the same as previous works, underpinning the exact choice of the parameters (e.g., step size), involves (somewhat tedious) problem-specific derivations.
To facilitate reproducible research,
the detailed algorithm description, the ways of obtaining the step size $\alpha^{(t)}$ and the extrapolation parameter $\mu^{(t)}$, and a per-iteration complexity analysis, are presented in the supplementary materials. We refer to the proposed approach as \textit{\textbf{C}TD with \textbf{L}MN for \textbf{I}mage fusion and \textbf{M}ultispectral-hyperspectral \textbf{B}oosting} (\texttt{CLIMB}).

\section{Experiments}
\label{sec:Experiments}
In this section, we conduct various experiments to demonstrate the effectiveness of the proposed algorithm.

\subsection{Experiment Settings}
\subsubsection{Baselines}
We compare our \texttt{CLIMB} algorithm with a number of baselines, including \texttt{CNMF} \cite{Yokoya2012HSR},  
\texttt{FUSE} \cite{Wei2015HSRSylvester}, 
\texttt{SCOTT} \cite{Prevost2020HSR}, \texttt{STEREO} \cite{Kanatsoulis2018HSR}, \texttt{SCLL1} \cite{Ding2021HSR}, \texttt{CBSTAR}\cite{Borsoi2021Coupled}, \texttt{BTDvar} \cite{Prevost2022Coupled}, \texttt{NPTSR} \cite{Xu2019Nonlocal}, and \texttt{NLSTF} \cite{Wan2020Nonnegative}.
Note that the first two methods are coupled low-rank matrix factorization-based HSR approaches; \texttt{SCOTT}, \texttt{STEREO}, and \texttt{SCLL1} are CTD methods based on CP, Tucker, and LL1 models, respectively; \texttt{CBSTAR} is Tucker-based method accounting for endmember variability; \texttt{BTDvar} is LL1-based tensor approach accounting for endmember variability;
\texttt{NPTSR} and \texttt{NLSTF} consider the nonlocal self-similarity prior and combine the tensor sparse representation.
All experiments are performed using MATLAB 2023b on a desktop with 3.4 GHz i7 CPU and 64GB RAM.

\begin{figure*}[!t]
\scriptsize\setlength{\tabcolsep}{0.3pt}
\begin{center}
\begin{tabular}{cccc}
\includegraphics[width=0.3\textwidth]{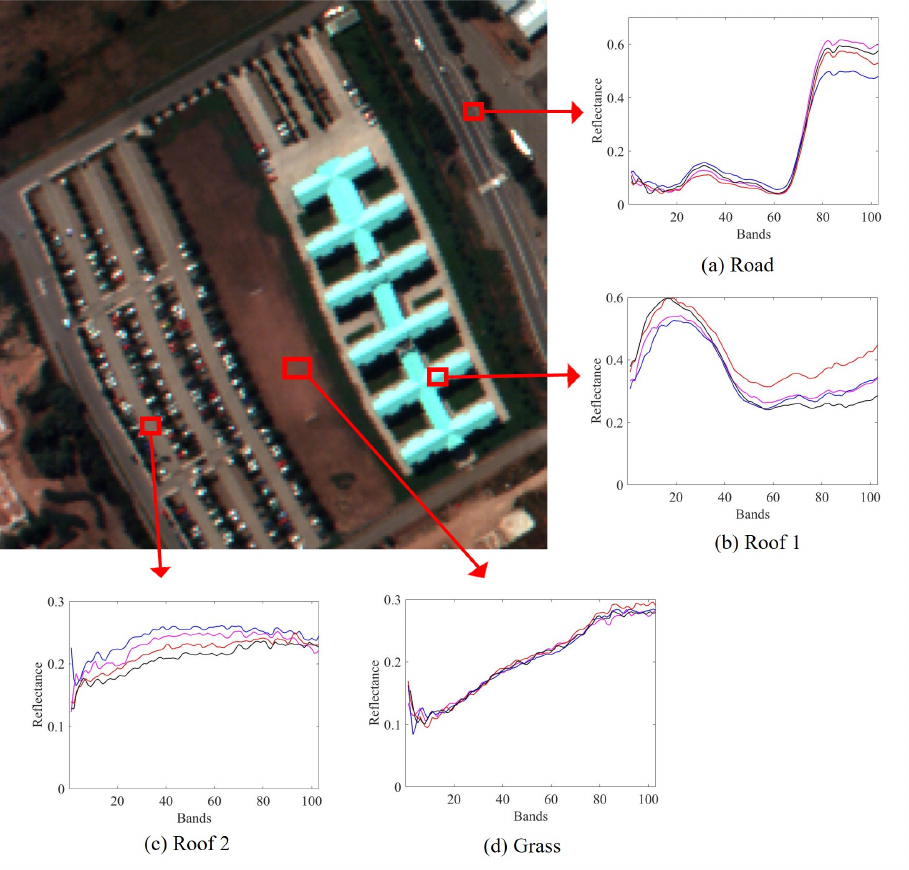}&
\hspace{3mm}
\includegraphics[width=0.3\textwidth]{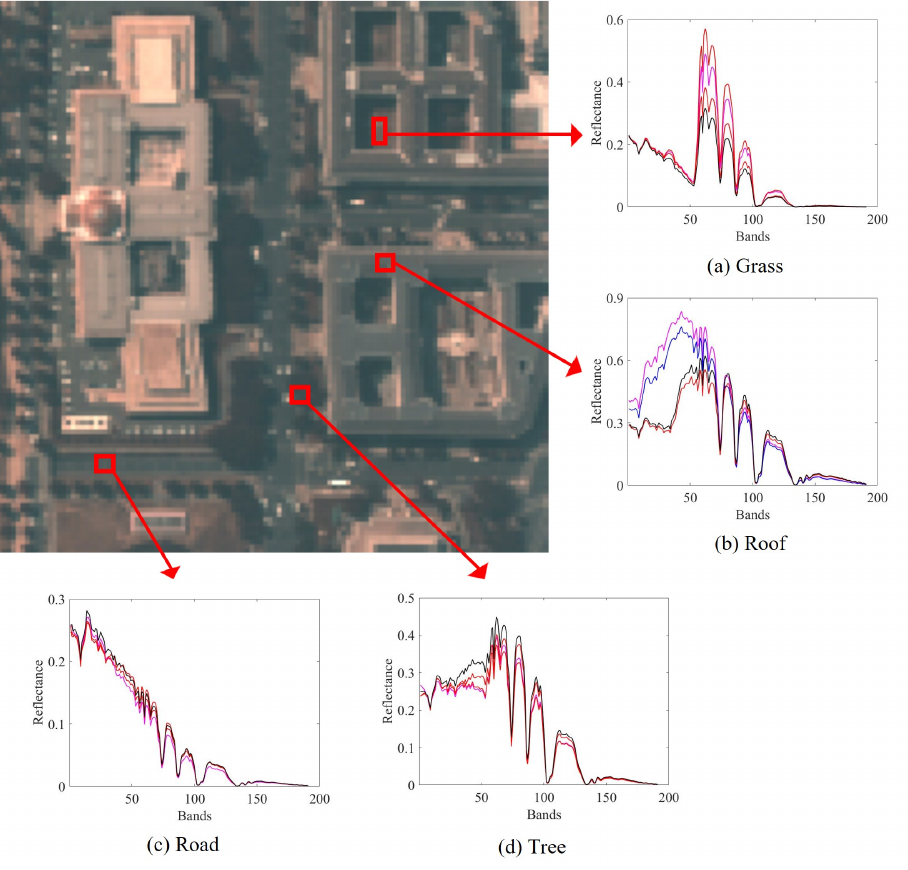}&
\hspace{3mm}
\includegraphics[width=0.3\textwidth]{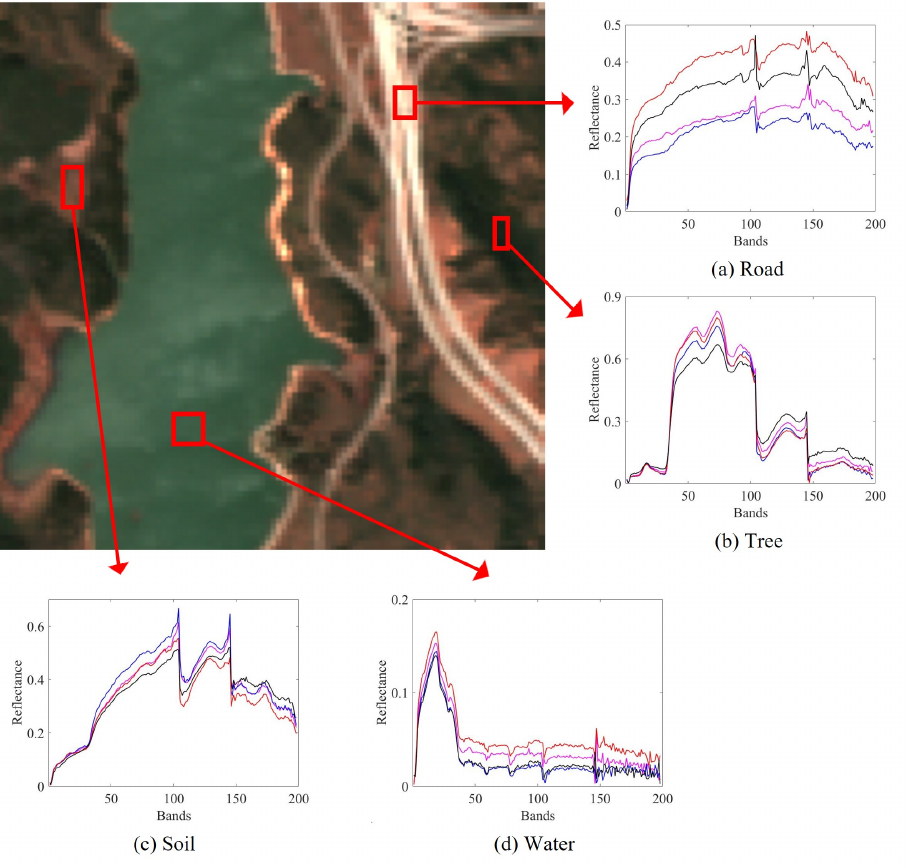}\\
 (a) Pavia University & (b) Washington DC & (c) Jasper Ridge\\
\end{tabular}
\caption{Illustration of endmember variability of the subimages of Pavia University, Washington DC, and Jasper Ridge datasets. (The spectral signatures of materials of each dataset are manually selected pure pixels.)}
  \label{fig:variability2}
  %\vspace{-2mm}
\end{center}
\end{figure*}

\subsubsection{Degradation Model}
In the semi-real data experiments, we adopt the simulation strategy from \cite{Kanatsoulis2018HSR,Ding2021HSR}. Specifically, a clean hyperspectral image serves as the ``ground-truth'' SRI, from which the simulated HSI and MSI are generated following Wald's protocol \cite{Wald1997Fusion}. The availability of ground-truth allows for quantitative evaluation of the recovered SRI. To generate the HSI, the SRI is first blurred band-by-band using a $9\times 9$ Gaussian kernel and then downsampled by a factor of 8 along each spatial dimension. To generate the MSI, the spectral degradation matrix $\bm{P}_{M}$ performs band selection and aggregation based on the characteristics of the multispectral sensor, as detailed in \cite{Kanatsoulis2018HSR}. In this work, we consider two sensors: LANDSAT\footnote{https://landsat.gsfc.nasa.gov/} and QuickBird\footnote{https://www.satimagingcorp.com/satellite-sensors/quickbird/}. Unless stated otherwise, zero-mean white Gaussian noise is added to both HSI and MSI with an SNR of 35dB. All results are averaged over 10 random trials with varying noise realizations.

Note that as we use real hyperspectral images as SRIs, due to variable illumination and atmospheric conditions, EV naturally occurs in the obtained spectral images (see Figs. \ref{fig:variability1} and \ref{fig:variability2}).

\subsubsection{Quantitative Metrics}
For the semi-real data experiments, we again follow the established convention \cite{Kanatsoulis2018HSR,Ding2021HSR} to evaluate the quality of the recovered SRIs using six standard metrics \cite{Loncan2015HSRoverview,Yokoya2017HSRoverview}. Specifically, we employ reconstruction signal-to-noise ratio (RSNR) ($\uparrow \hspace{-0.5mm}\infty$), structural similarity index (SSIM) ($\uparrow \hspace{-0.5mm}1$), cross-correlation (CC) ($\uparrow \hspace{-0.5mm}1$), relative global dimensional error (ERGAS) ($\downarrow \hspace{-0.5mm}0$), root mean square error (RMSE) ($\downarrow \hspace{-0.5mm}0$), and spectral angle mapper (SAM) ($\downarrow \hspace{-0.5mm}0$). Definitions of these metrics can be found in \cite{Loncan2015HSRoverview,Wei2016Fusion}. In general, higher values of SSIM, RSNR, and CC, and lower values of ERGAS, RMSE, and SAM indicate better reconstruction performance.

\begin{table}[!t]
\renewcommand\arraystretch{1}
\setlength{\tabcolsep}{5pt}
\renewcommand\arraystretch{1.1}
  \centering
  \caption{Performance for Urban data with the degradation known. (The highest and second-highest values are highlighted in bold and underlined, respectively.)}
  %\vspace{-0.2cm}
  \resizebox{0.6\linewidth}{!}{
  \begin{tabular}{c|c|c|c|c|c|c}\hline

    \hline
    Methods & \multicolumn{1}{c|}{RSNR} & \multicolumn{1}{c|}{SSIM} & \multicolumn{1}{c|}{CC} 
    & \multicolumn{1}{c|}{ERGAS} & \multicolumn{1}{c|}{RMSE} 
    & \multicolumn{1}{c}{SAM}\\ \hline
    \texttt{CNMF}   & 22.02 & 0.9616 & 0.9864 & 0.4085 & 0.0207 & 0.0738\\
    \texttt{FUSE}   & 25.13 & 0.9778 & 0.9930 & 0.2813 & 0.0144 & 0.0504\\
    \texttt{SCOTT}  & 23.74 & 0.9612 & 0.9900 & 0.3356 & 0.0169 & 0.0696\\
    \texttt{STEREO} & 24.52 & 0.9604 & 0.9915 & 0.3038 & 0.0154 & 0.0663\\
    \texttt{SCLL1}  & 26.66 & \textbf{0.9837} & 0.9948 & 0.2521 & 0.0121 & 0.0504\\
    \texttt{CBSTAR} & 25.19 & 0.9798 & 0.9929 & 0.2862 & 0.0143 & 0.0612\\
    \texttt{BTDvar} & 23.62 & 0.9519 & 0.9893 & 0.3532 & 0.0171 & 0.0723\\
    \texttt{NPTSR}  & \underline{27.98} & \underline{0.9815} & \underline{0.9962} & \underline{0.2058} & \underline{0.0104} & \underline{0.0436}\\
    \texttt{NLSTF}  & 25.93 & 0.9778 & 0.9944 & 0.2444 & 0.0131 & 0.0529\\
    \texttt{CLIMB}  & \textbf{28.28} & 0.9809 & \textbf{0.9965} & \textbf{0.1970} & \textbf{0.0100} & \textbf{0.0434}\\
    \hline
    \end{tabular}}%
  \label{table:Urban}%
\end{table}%

\subsubsection{Parameter Settings}
In our algorithm, the ranks are set as $L_{r} = M_{r} = L$, and $N_{r} = N$ for all $r = 1, \ldots, R$. The parameters $\lambda$, $\eta$, $L$, and $N$ are selected following the heuristic strategy proposed in \cite{Ding2021HSR}, which involves tuning based on metrics such as the SAM computed between the observed HSI and the reconstructed HSI (after applying spatial degradation to the estimated SRI). This approach has been found effective in practice. The influence of these parameters on performance is further discussed in Sec.~\ref{sec:para}.
For the nonconvex total variation regularization, we set $p = 0.5$ and $\varepsilon = 0.01$. 
The initializations of ${\underline{\bm{D}}_r^{(0)}, \bm{A}_{r}^{(0)}, \bm{B}_{r}^{(0)}, \bm{C}_r^{(0)}}$ are as follows: $\underline{\bm{D}}_r^{(0)}$ is generated following the method in \cite{Li2018CSTF}; $\bm{A}_{r}^{(0)}$ and $\bm{B}_{r}^{(0)}$ are obtained from the right singular matrices of the skinny SVD of the mode-$1$ and mode-$2$ unfoldings $\bm{Y}_{\rm M}^{(1)}$ and $\bm{Y}_{\rm M}^{(2)}$ of the observed MSI $\underline{\bm{Y}}_{\rm M}$, where each unfolding $\bm{Y}_{\rm M}^{(n)}$ contains the mode-$n$ fibers as columns; $\bm{C}_r^{(0)}$ is initialized by applying the vertex component analysis \cite{Nascimento2005VCA} to the mode-$3$ unfolding $\bm{Y}_{\rm H}^{(3)}$ of the observed HSI $\underline{\bm{Y}}_{\rm H}$.
The algorithm terminates when the relative change in the objective function falls below $10^{-8}$ or the number of iterations exceeds 1000. For baselines, parameters are carefully tuned following the recommendations in corresponding papers.

\subsection{Semi-real Experiments with Known Degradation Operators}
In this subsection, we test the performance of our method using semi-real data under the
assumption that all $\bm{P}_1$, $\bm{P}_2$, and $\bm{P}_M$ are known.

\begin{figure}[!t]
\scriptsize\setlength{\tabcolsep}{0.3pt}
\begin{center}
\begin{tabular}{cccccccccccc}
\includegraphics[width=0.085\textwidth]{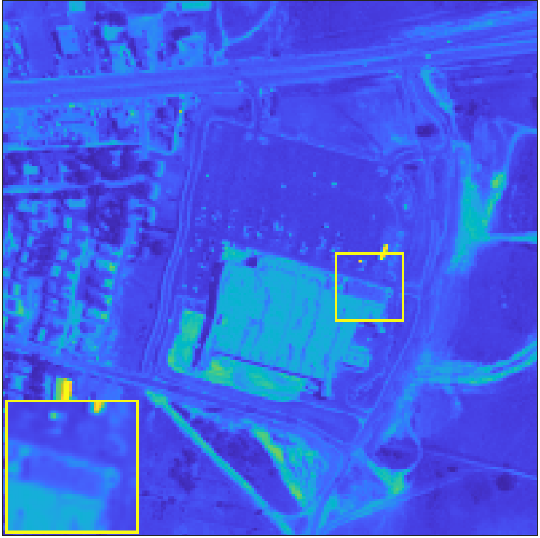}&
\includegraphics[width=0.085\textwidth]{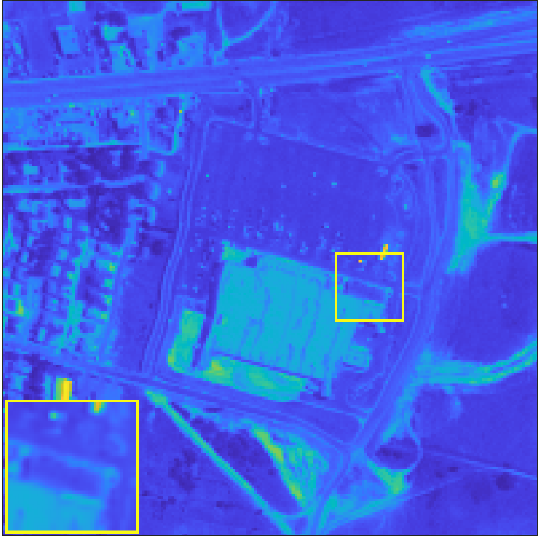}&
\includegraphics[width=0.085\textwidth]{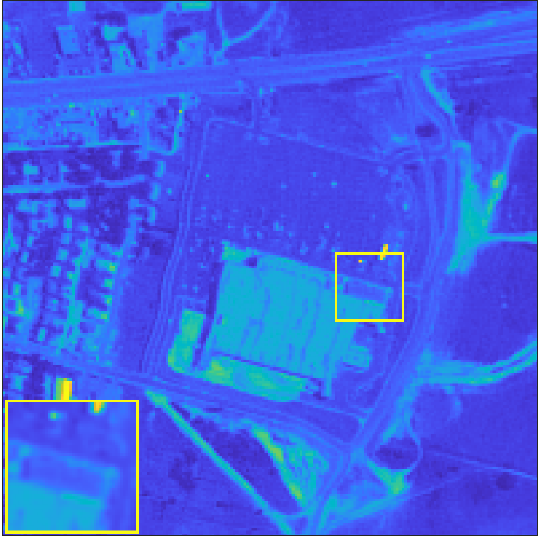}&
\includegraphics[width=0.085\textwidth]{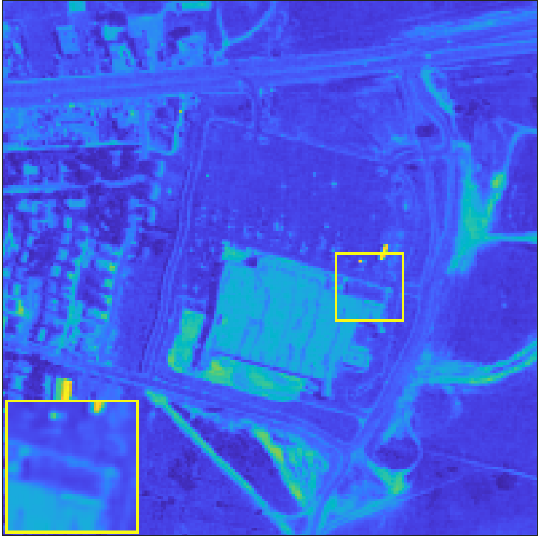}&
\includegraphics[width=0.085\textwidth]{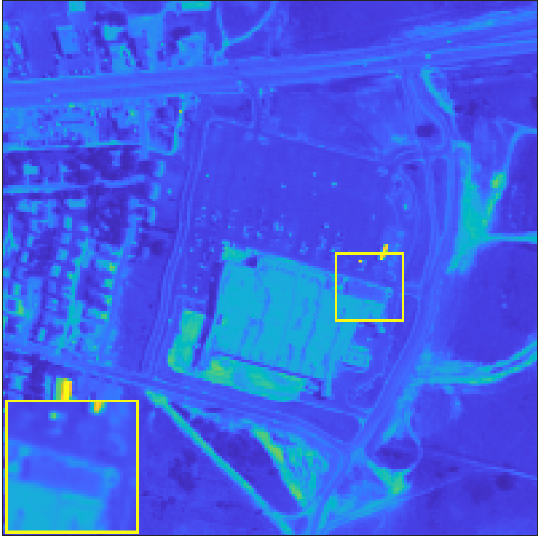}&
\includegraphics[width=0.085\textwidth]{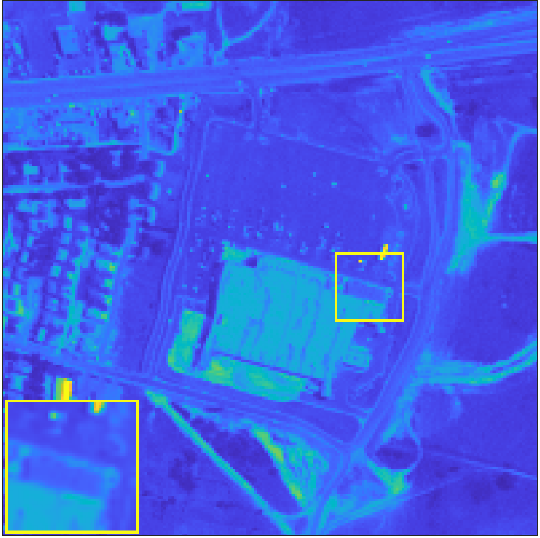}&
\includegraphics[width=0.085\textwidth]{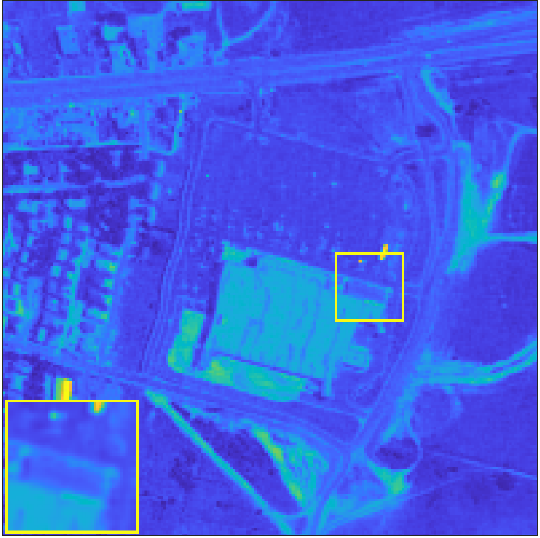}&
\includegraphics[width=0.085\textwidth]{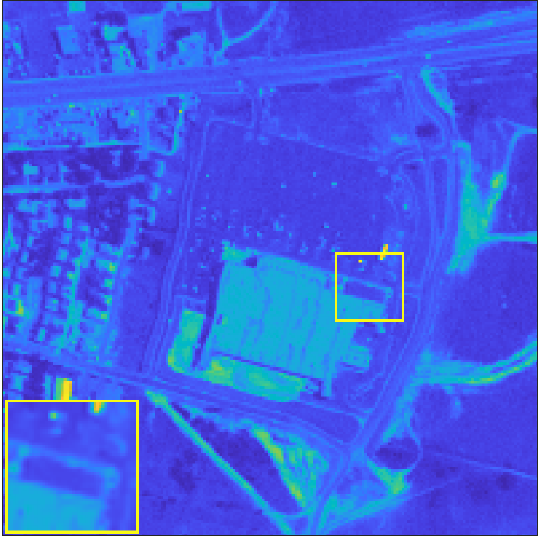}&
\includegraphics[width=0.085\textwidth]{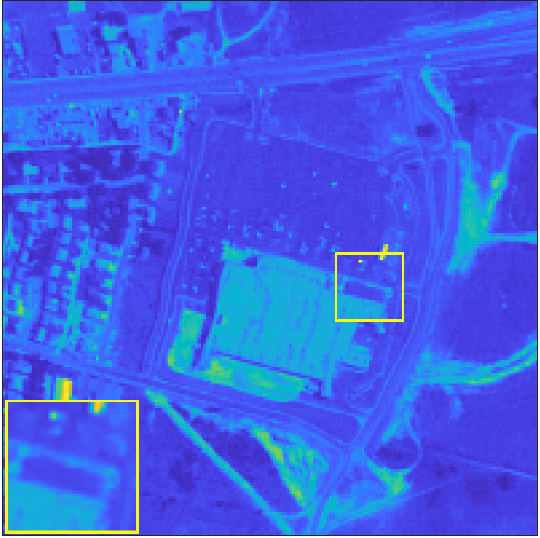}&
\includegraphics[width=0.085\textwidth]{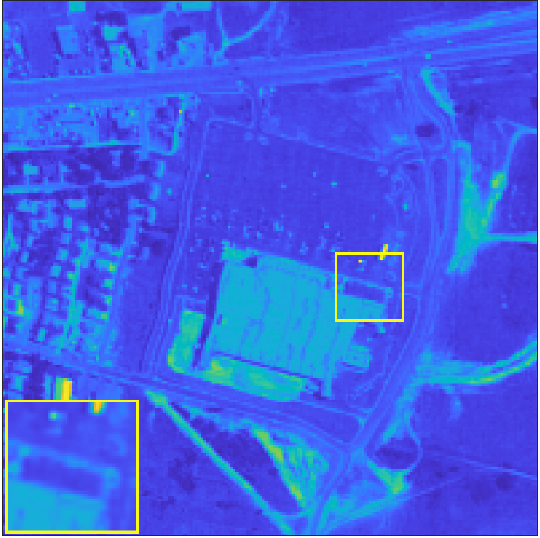}&
\includegraphics[width=0.101\textwidth]{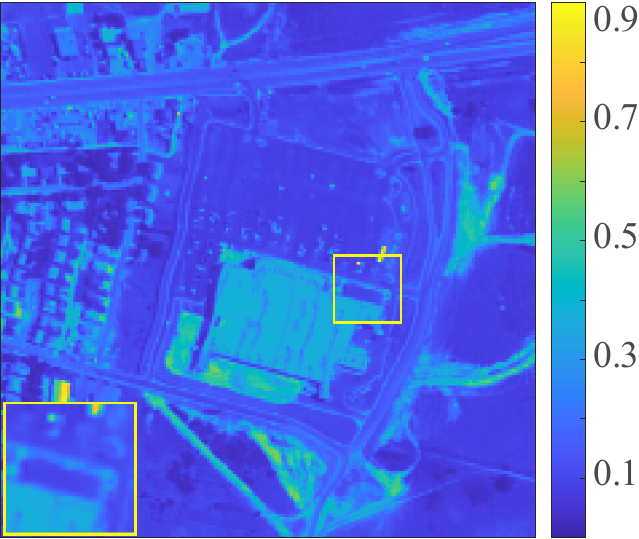}\\
\includegraphics[width=0.085\textwidth]{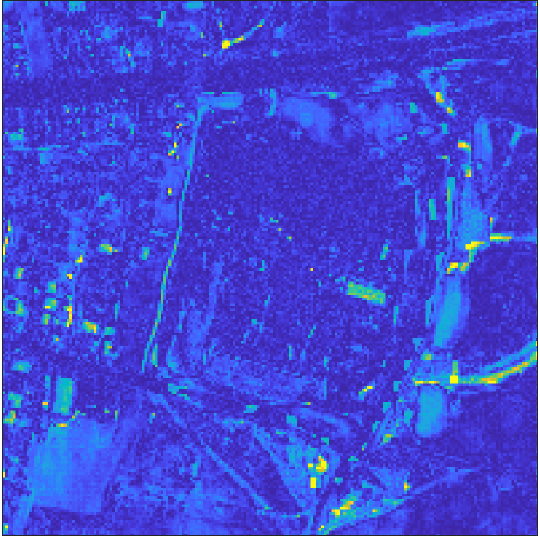}&
\includegraphics[width=0.085\textwidth]{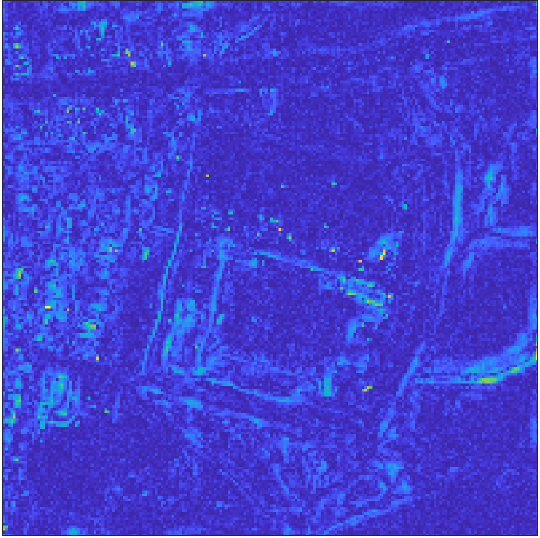}&
\includegraphics[width=0.085\textwidth]{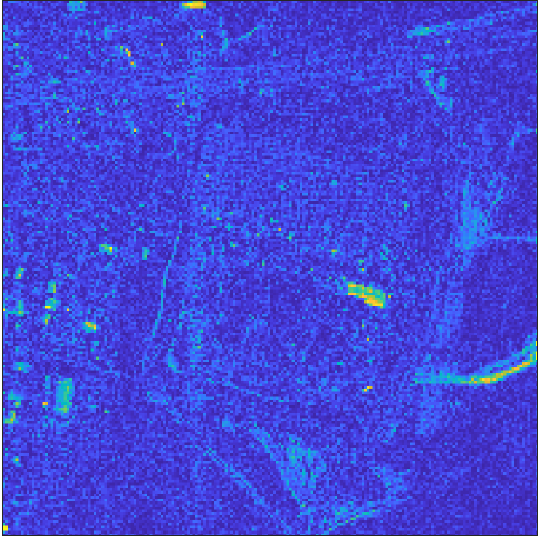}&
\includegraphics[width=0.085\textwidth]{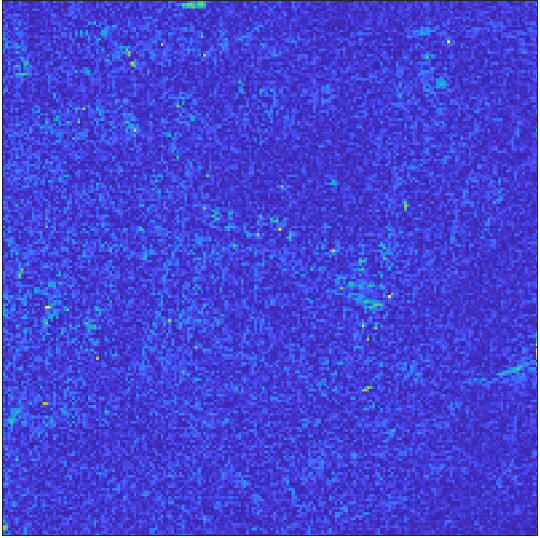}&
\includegraphics[width=0.085\textwidth]{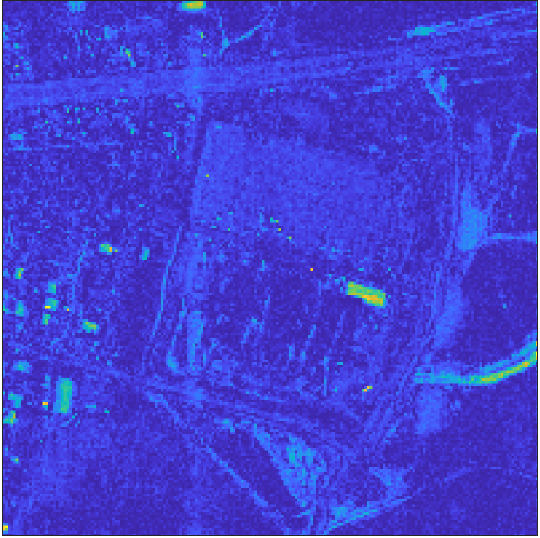}&
\includegraphics[width=0.085\textwidth]{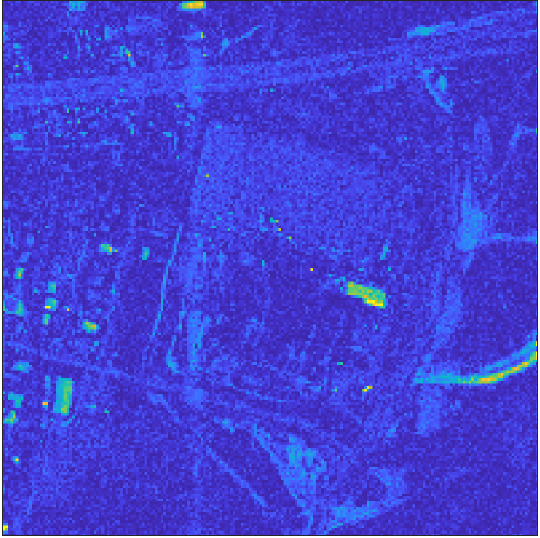}&
\includegraphics[width=0.085\textwidth]{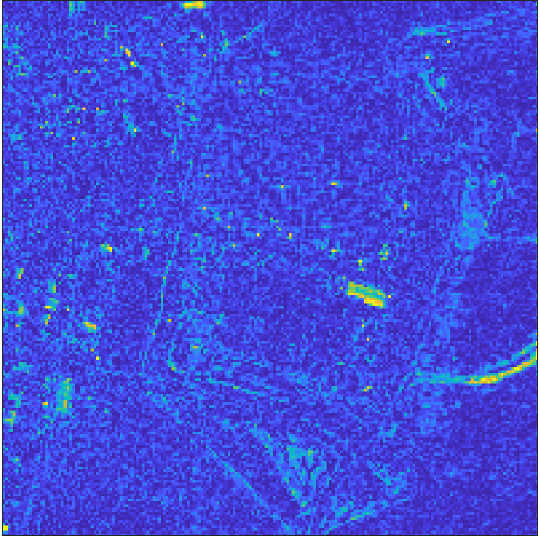}&
\includegraphics[width=0.085\textwidth]{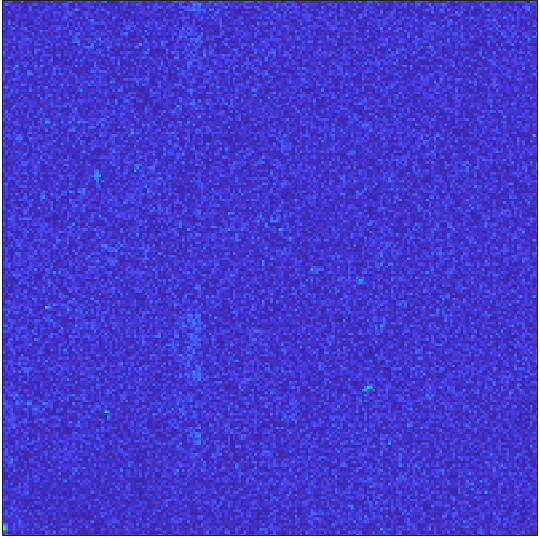}&
\includegraphics[width=0.085\textwidth]{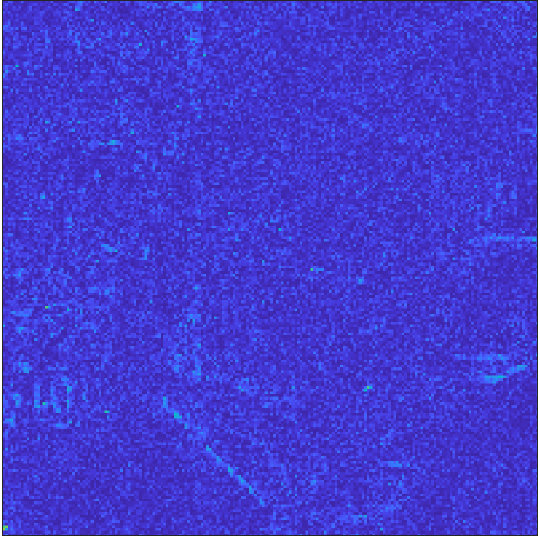}&
\includegraphics[width=0.085\textwidth]{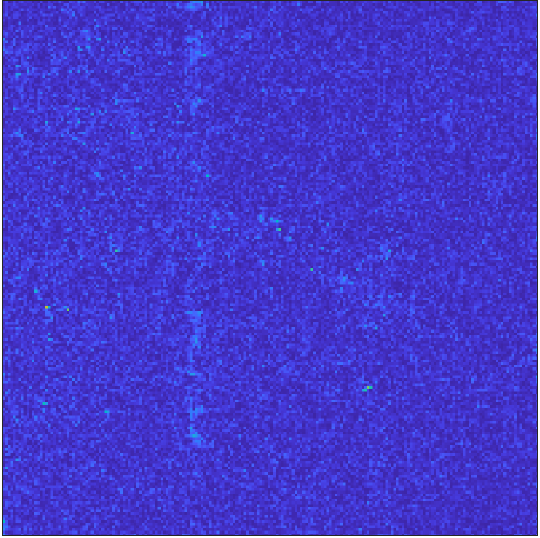}&
\includegraphics[width=0.102\textwidth]{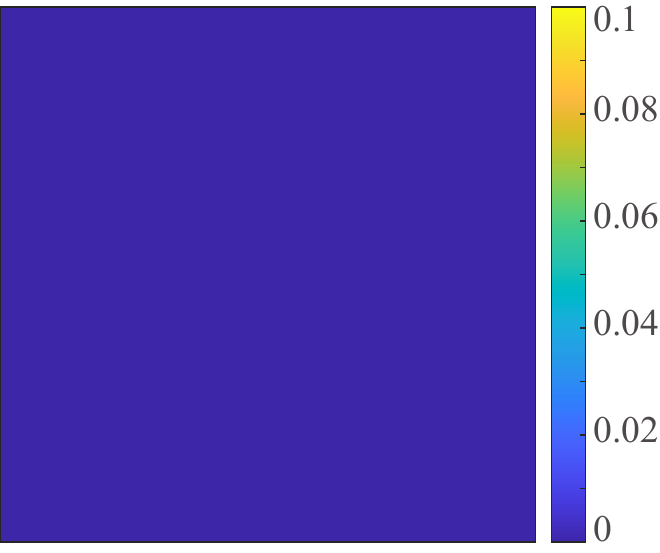}\\
\includegraphics[width=0.085\textwidth]{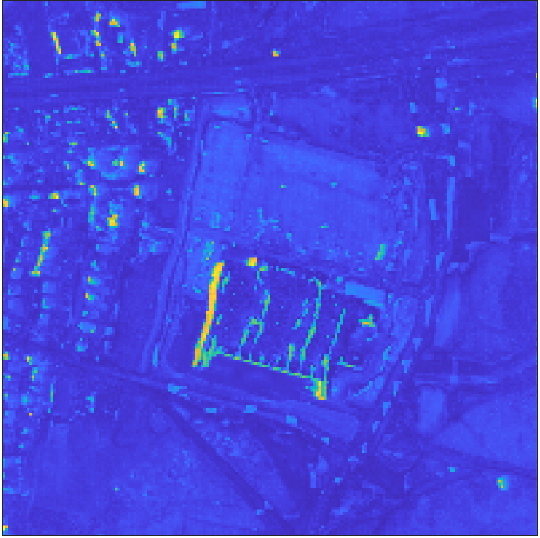}&
\includegraphics[width=0.085\textwidth]{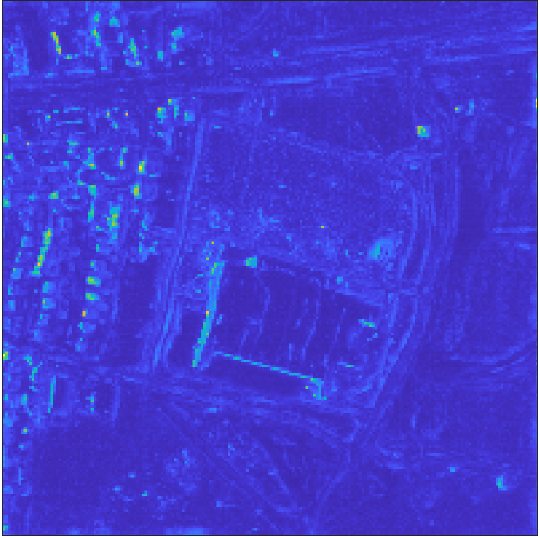}&
\includegraphics[width=0.085\textwidth]{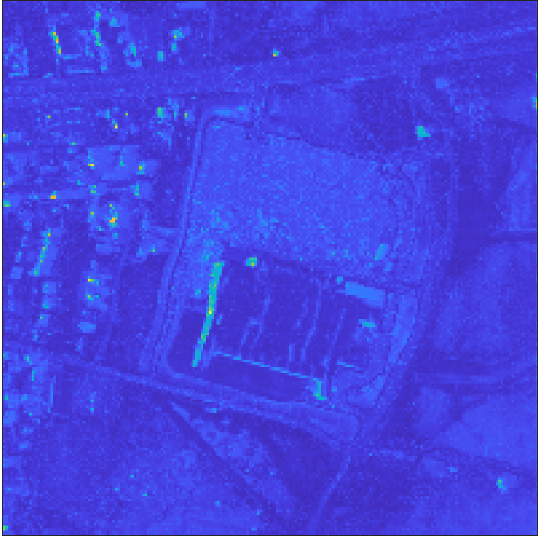}&
\includegraphics[width=0.085\textwidth]{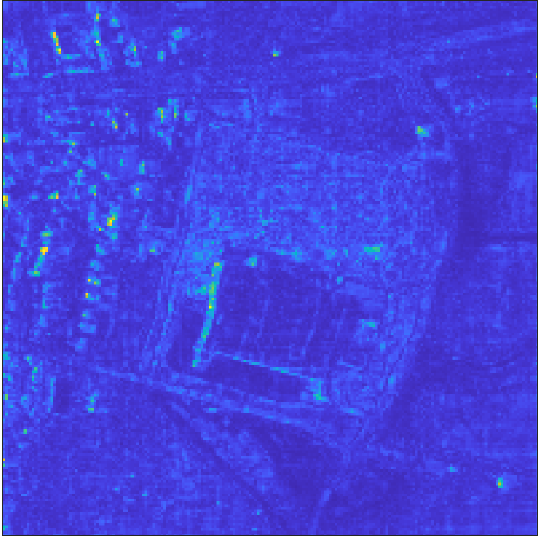}&
\includegraphics[width=0.085\textwidth]{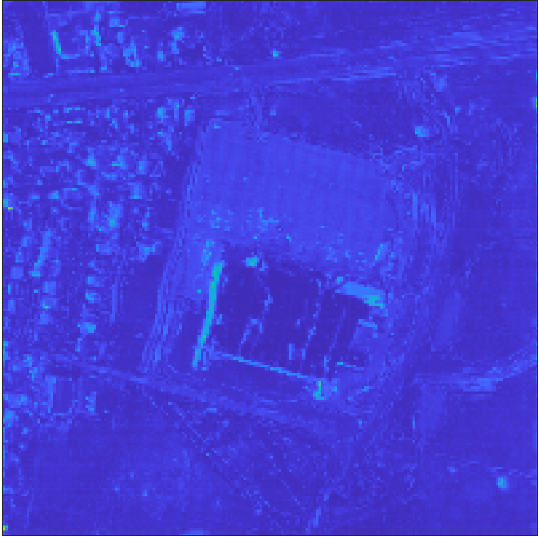}&
\includegraphics[width=0.085\textwidth]{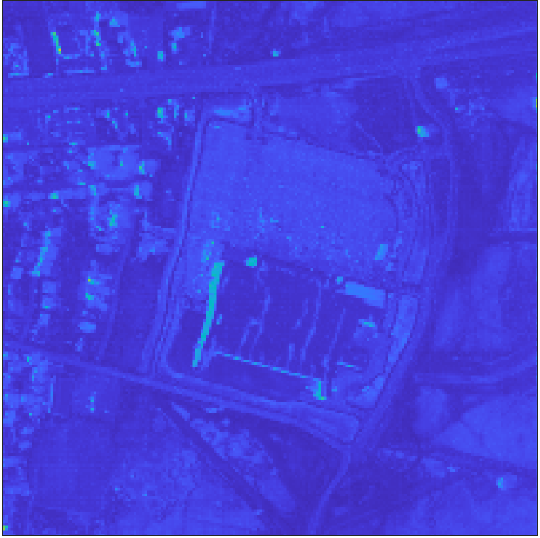}&
\includegraphics[width=0.085\textwidth]{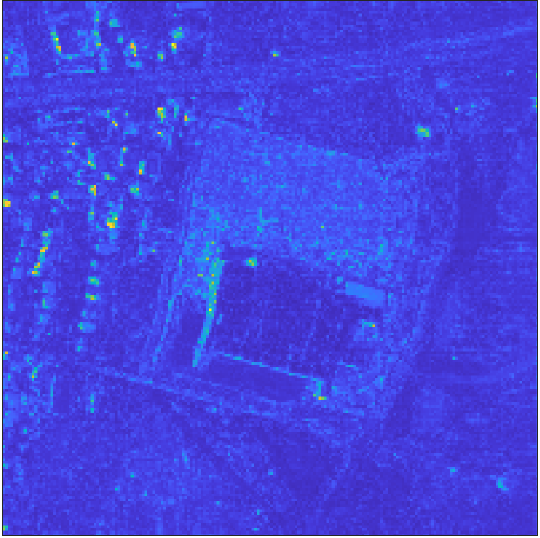}&
\includegraphics[width=0.085\textwidth]{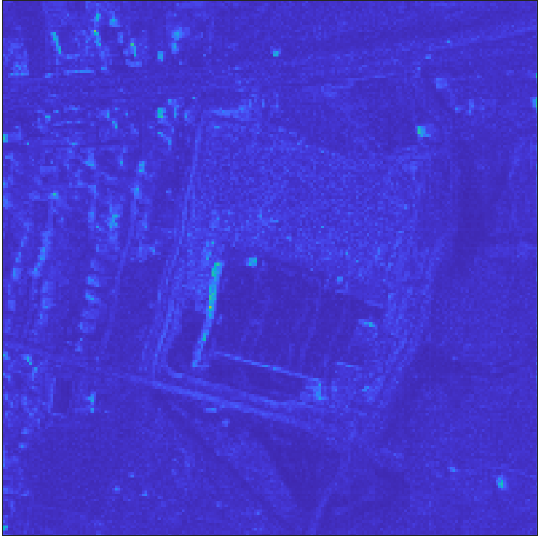}&
\includegraphics[width=0.085\textwidth]{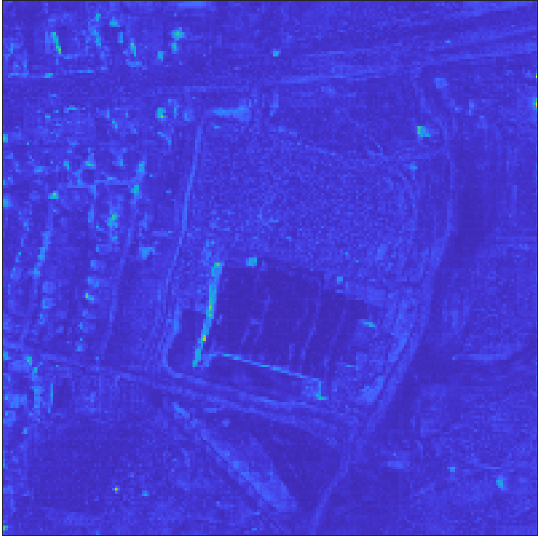}&
\includegraphics[width=0.085\textwidth]{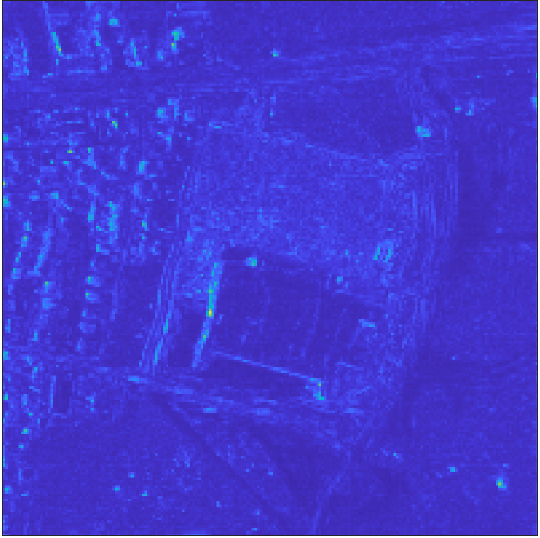}&
\includegraphics[width=0.101\textwidth]{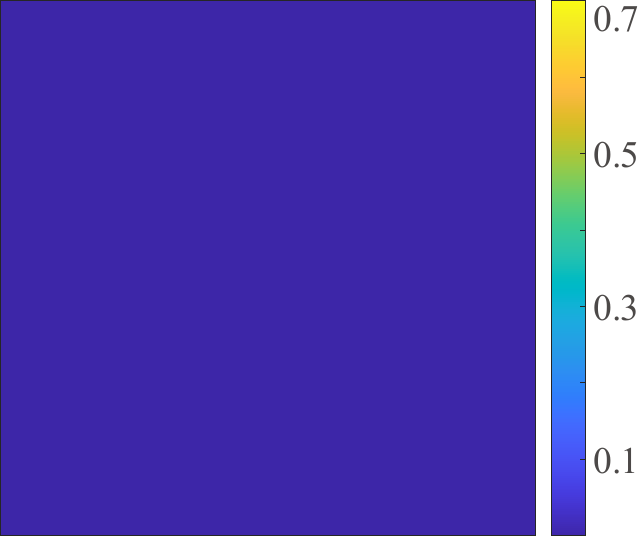}\\
(a) \texttt{CNMF} & (b) \texttt{FUSE} & (c) \texttt{SCOTT} & (d) \texttt{STEREO} & (e) \texttt{SCLL1} & (f) \texttt{CBSTAR}& (g) \texttt{BTDvar} & (h) \texttt{NPTSR} & (i) \texttt{NLSTF} & (j) \texttt{CLIMB} & (k) SRI\\
\end{tabular}
\caption{The recovered results of Urban with the degradation known.
First row: the recovered SRIs of the 43rd band; 
Second row: the corresponding residual images of the 43rd band; 
Third row: the SAM maps.}
  \label{fig:Urban}
\end{center}%\vspace{-0.3cm}
\end{figure}

\begin{figure}[!t]
\scriptsize\setlength{\tabcolsep}{0.3pt}
\begin{center}
\begin{tabular}{cccccc}
\includegraphics[width=0.3\textwidth]{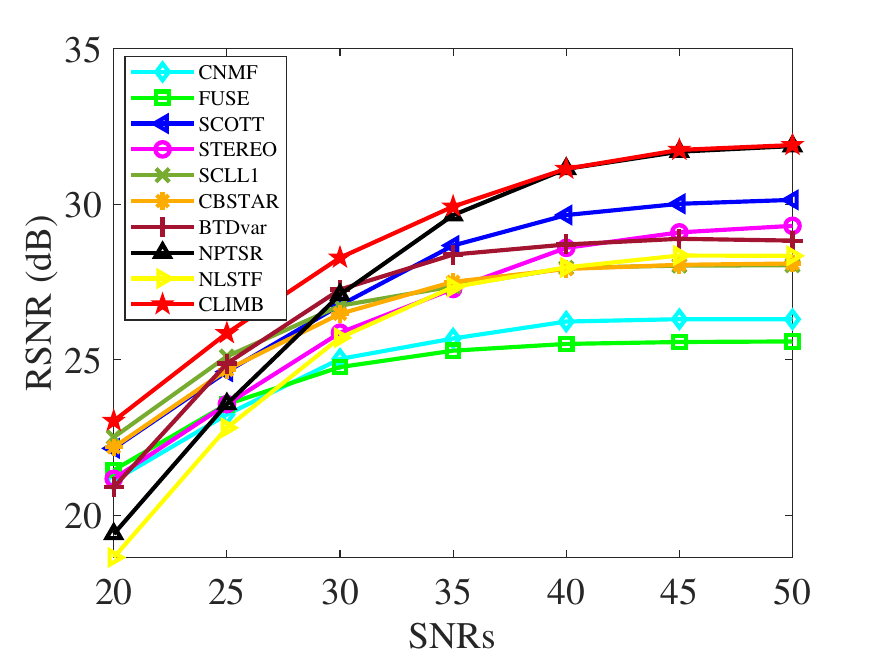}&
\includegraphics[width=0.3\textwidth]{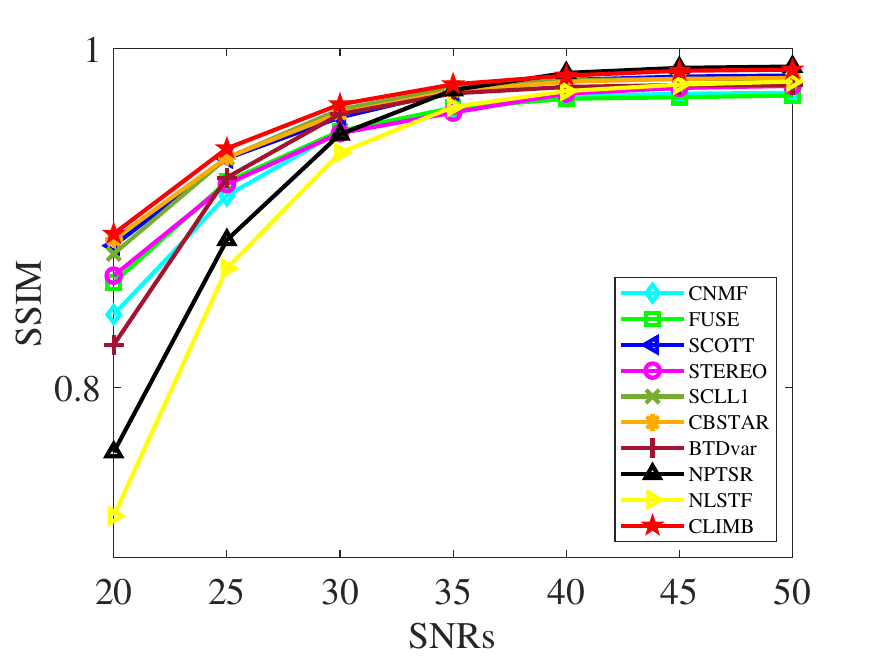}&
\includegraphics[width=0.3\textwidth]{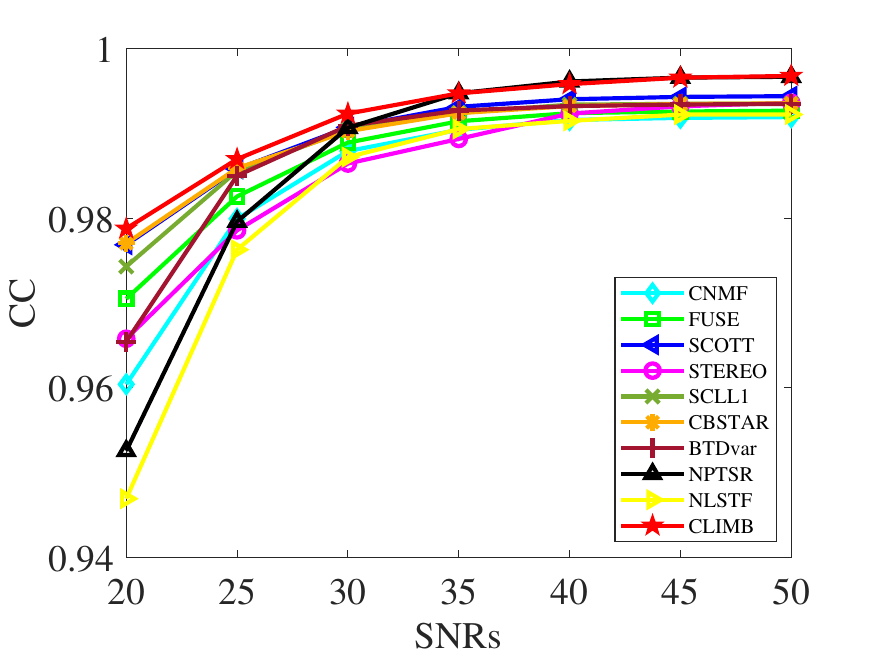}\\
 (a) RSNR & (b) SSIM & (c) CC \\
\includegraphics[width=0.3\textwidth]{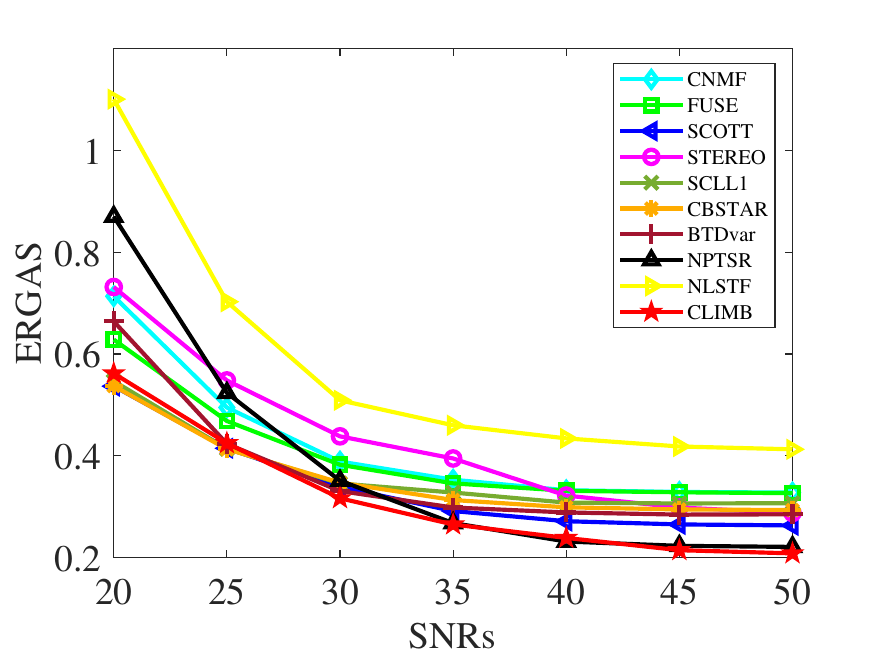}&
\includegraphics[width=0.3\textwidth]{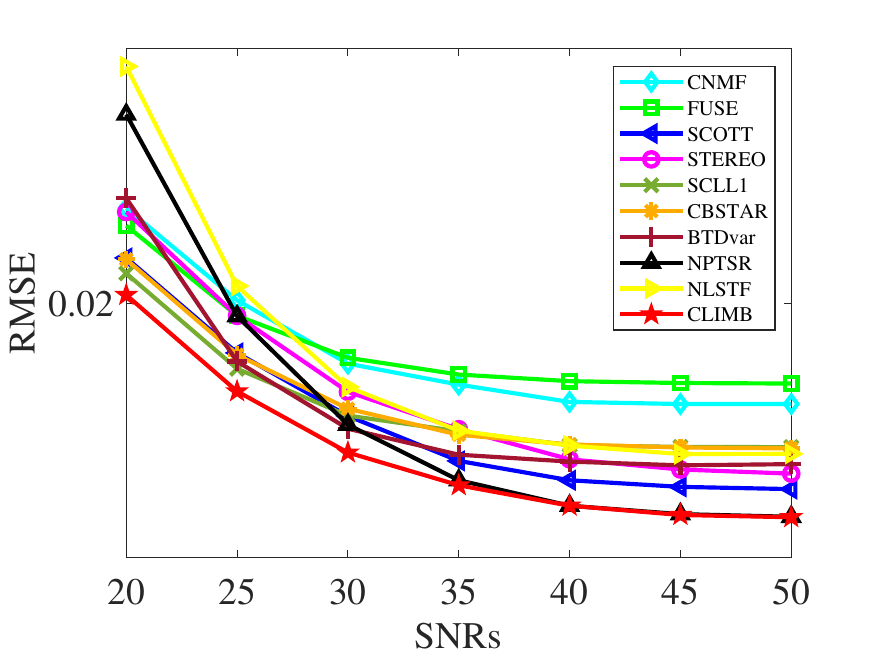}&
\includegraphics[width=0.3\textwidth]{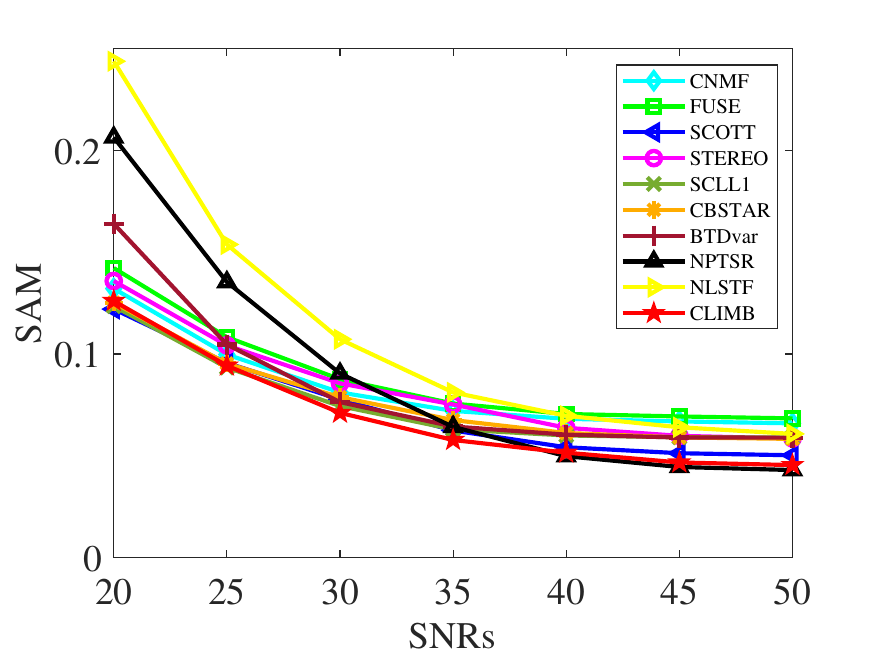}\\
 (d) ERGAS & (e) RMSE & (f) SAM\\
\end{tabular}
%\vspace{-2mm}
\caption{Reconstruction metrics for Jasper Ridge under different noises.}
  \label{fig:Ridge_noise}
\end{center}
\end{figure}

\subsubsection{Urban Dataset}
In the first experiment, we use the Urban dataset with $307 \times 307$ pixels and 210 wavelengths, captured by the HYDICE sensor over Copperas Cove, TX, U.S. After removing bands severely affected by dense water vapor and atmospheric distortions, the used subimage has a size of $200\times 200\times 162$. By applying the aforementioned spatial degradation and LANDSAT-based spectral degradation, we generate the HSI $\underline{\bm{Y}}_{\rm H} \in \mathbb{R}^{50\times 50\times 162}$ and the MSI $\underline{\bm{Y}}_{\rm M} \in \mathbb{R}^{200\times 200\times 6}$. The tensor rank is set to $R=4$ following \cite{Jia2007Spectral}. 

Table~\ref{table:Urban} summarizes the reconstruction performance under $\mathrm{SNR} = 35\,\mathrm{dB}$. As shown, the proposed method achieves the best performance across most evaluation metrics.

Fig.~\ref{fig:Urban} shows the 43rd spectral bands of the recovered SRIs obtained by different methods, the corresponding absolute residual images (i.e., $\textrm{abs}(\underline{\bm{Y}}_{\rm S}(:,:,k) - \underline{\widehat{\bm{Y}}}_{\rm S}(:,:,k))$ for $k=43$), and the SAM maps. From Fig.~\ref{fig:Urban}, we observe the following: (1) the proposed \texttt{CLIMB} method achieves the best visual quality compared to the baselines, as illustrated in the enlarged region; (2) our method yields smaller residual values across most pixels than those of the baselines; and (3) the SAM map produced by our method is more consistent with the ground truth shown in the rightmost column.

\subsubsection{Jasper Ridge Dataset} 
For the second experiment, we use the Jasper Ridge dataset, which has the full size of  $512\times 614\times 224$. This dataset was captured by the AVIRIS sensor \cite{Green1998AVIRIS} over Jasper Ridge, California \cite{Sabol1993Mapping}.
After removing the water absorption bands, we use a subscene of size $100\times 100\times 198$.
The HSI is generated with a spatial resolution of $25\times 25$ and 198 spectral bands, while the MSI is of size $100\times 100\times 6$, obtained via LANDSAT-based spectral degradation. Following the setting in \cite{Zhu2014Spectra}, the tensor rank is set to $R=4$.

To evaluate the performance of all methods under varying noise levels, Fig.~\ref{fig:Ridge_noise} presents the reconstruction metrics across different SNR values ranging from 20~dB to 50~dB. As shown, the proposed \texttt{CLIMB} consistently achieves the best performance across most metrics. A more detailed numerical comparison at $\mathrm{SNR} = 35$~dB is provided in Table~\ref{table:Ridge}.

\begin{table}[!t]
\renewcommand\arraystretch{1}
\setlength{\tabcolsep}{5pt}%
\renewcommand\arraystretch{1.1}%
  \centering
  \caption{Performance for Jasper Ridge data with the degradation known. (The highest and second-highest values are highlighted in bold and underlined, respectively.)}
  %\vspace{-0.2cm}
  \resizebox{0.6\linewidth}{!}{
  \begin{tabular}{c|c|c|c|c|c|c}\hline

    \hline
    Methods & \multicolumn{1}{c|}{RSNR} & \multicolumn{1}{c|}{SSIM} & \multicolumn{1}{c|}{CC} 
    & \multicolumn{1}{c|}{ERGAS} & \multicolumn{1}{c|}{RMSE} 
    & \multicolumn{1}{c}{SAM}\\ \hline
    \texttt{CNMF}   & 25.66    & 0.9649    & 0.9904    & 0.3538    & 0.0152    & 0.0720 \\
    \texttt{FUSE}   & 25.29    & 0.9652    & 0.9915    & 0.3455    & 0.0158    & 0.0756\\
    \texttt{SCOTT}  & 28.67    & 0.9753    & \underline{0.9932}    & 0.2912    & 0.0107    & \underline{0.0624}\\
    \texttt{STEREO} & 27.28    & 0.9624    & 0.9893    & 0.3950    & 0.0126    & 0.0752\\
    \texttt{SCLL1}  & 27.36    & \underline{0.9769}    & 0.9924   & 0.3276    & 0.0124    & 0.0630\\
    \texttt{CBSTAR} & 27.50    & 0.9751    & 0.9924    & 0.3129    & 0.0122    & 0.0675\\
    \texttt{BTDvar} & 28.35    & 0.9738    & 0.9927    & 0.2983    & 0.0111    & 0.0644\\
    \texttt{NPTSR}  & \underline{29.65}    & 0.9757    & \textbf{0.9948}    & \underline{0.2676}    & \underline{0.0096}   & 0.0642\\
    \texttt{NLSTF}  & 27.34    & 0.9657    & 0.9905    & 0.4599    & 0.0125    & 0.0811\\
    \texttt{CLIMB}  & \textbf{29.95}    & \textbf{0.9791}    & \textbf{0.9948}    & \textbf{0.2645}    & \textbf{0.0092}    & \textbf{0.0577}\\
    \hline
    \end{tabular}}%
  \label{table:Ridge}%
\end{table}%

\begin{table}[!ht]
\renewcommand\arraystretch{1}
\setlength{\tabcolsep}{5pt}
\renewcommand\arraystretch{1.1}
  \centering
  \caption{Performance for Washington DC data with spatial degradation known. (The highest and second-highest values are highlighted in bold and underlined, respectively.)}
  %\vspace{-0.2cm}
  \resizebox{0.6\linewidth}{!}{
  \begin{tabular}{c|c|c|c|c|c|c}\hline

    \hline
    Methods & \multicolumn{1}{c|}{RSNR} & \multicolumn{1}{c|}{SSIM} & \multicolumn{1}{c|}{CC} 
    & \multicolumn{1}{c|}{ERGAS} & \multicolumn{1}{c|}{RMSE} 
    & \multicolumn{1}{c}{SAM}\\ \hline
    \texttt{CNMF}   & 22.54    & 0.9760    & 0.9581    & 0.9821    & 0.0152    & 0.0533\\
    \texttt{FUSE}   & 18.55    & 0.9402    & 0.9453    & 0.8340    & 0.0240    & 0.1098\\
    \texttt{SCOTT}  & 25.08    & 0.9827    & 0.9511    & 0.8691    & 0.0113    & 0.0477\\
    \texttt{STEREO} & 26.61    & 0.9802    & 0.9222    & 1.7985    & 0.0095    & 0.0504\\
    \texttt{SCLL1}  & 27.51    & 0.9892    & \textbf{0.9759}    & \textbf{0.6336}    & 0.0086    & \underline{0.0376}\\
    \texttt{CBSTAR} & 25.46    & 0.9805    & 0.9555    & 0.9540    & 0.0108    & 0.0490\\
    \texttt{BTDvar} & 25.14    & 0.9797    & 0.9649    & 0.7284    & 0.0113    & 0.0554\\
    \texttt{NPTSR}  & \underline{29.80}    & \underline{0.9918}    & \underline{0.9704}    & 1.0179    & \underline{0.0066}    & \textbf{0.0326}\\
    \texttt{NLSTF}  & 25.21    & 0.9795    & 0.9646    & 2.4879    & 0.0112    & 0.0555\\
    \texttt{CLIMB}  & \textbf{30.37} & \textbf{0.9921} & 0.9680 & \underline{0.7011} & \textbf{0.0062} & \textbf{0.0326}\\
    \hline
    \end{tabular}}%
  \label{table:WDC}%
\end{table}%

\subsubsection{Washington DC Dataset}
For this experiment, we use a subimage from the Washington DC dataset collected by the HYDICE sensor\footnote{\url{https://engineering.purdue.edu/~biehl/MultiSpec/hyperspectral.html}},
containing 210 bands in the spectral range of 400-2500\,nm, with the spatial size of $1208\times 307$. 
After removing bands corrupted by atmospheric effects, the resulting SRI $\underline{\bm{Y}}_{\rm S}$ has a size of $120\times 120\times 191$. The HSI $\underline{\bm{Y}}_{\rm H} \in \mathbb{R}^{30\times 30\times 191}$ is generated via spatial degradation, and the MSI $\underline{\bm{Y}}_{\rm M} \in \mathbb{R}^{120\times 120\times 6}$ is obtained using LANDSAT spectral degradation. We set the tensor rank to $R=4$.

Table~\ref{table:WDC} summarizes the performance of all methods. Fig.~\ref{fig:WDC} displays the recovered SRIs, absolute residual images, and SAM maps on the Indian Pines scene. The proposed method better preserves spatial smoothness and sharp edges compared to the baselines. Both quantitative and qualitative results highlight its superior performance.

\begin{figure*}[!t]
\scriptsize\setlength{\tabcolsep}{0.3pt}
\begin{center}
\begin{tabular}{cccccccccccc}
\includegraphics[width=0.085\textwidth]{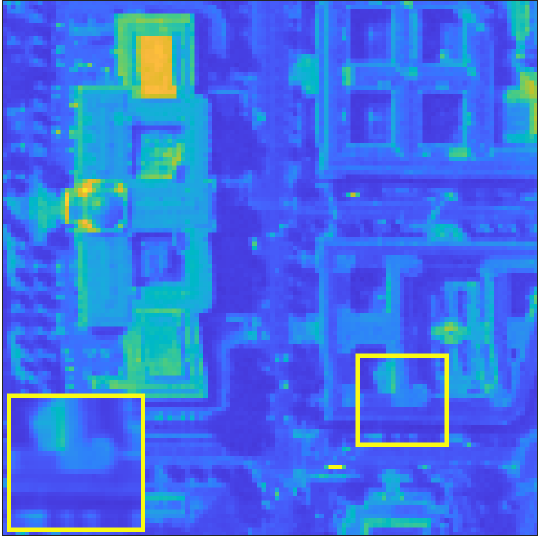}&
\includegraphics[width=0.085\textwidth]{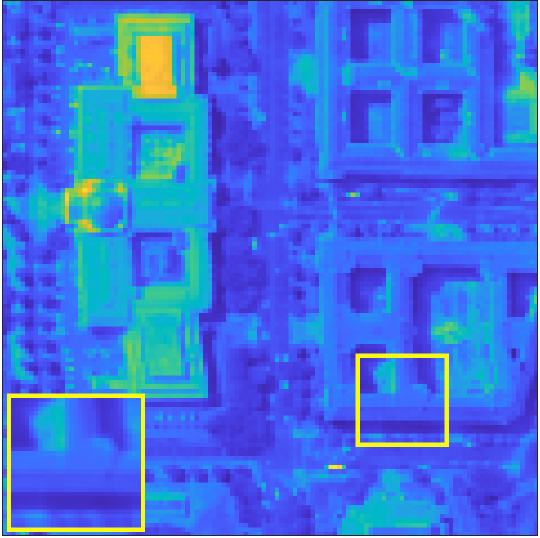}&
\includegraphics[width=0.085\textwidth]{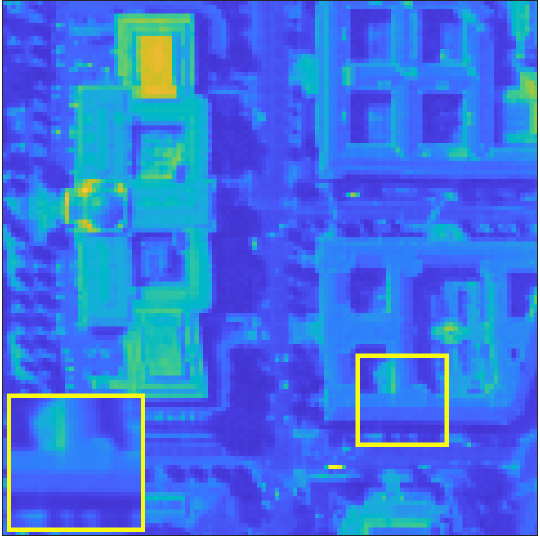}&
\includegraphics[width=0.085\textwidth]{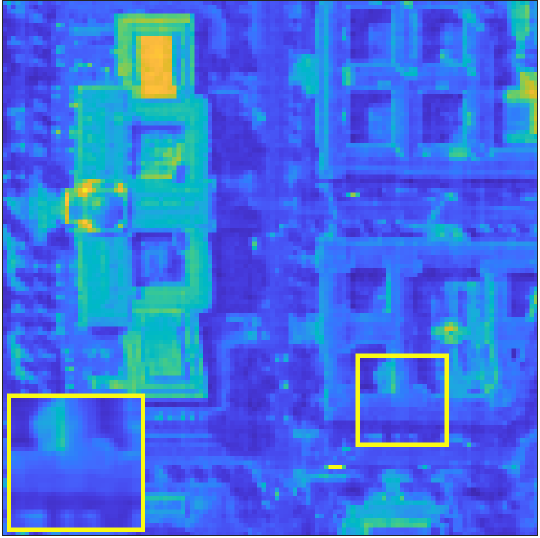}&
\includegraphics[width=0.085\textwidth]{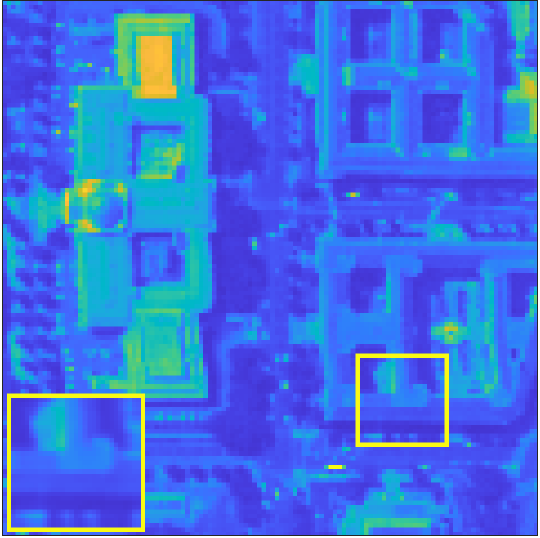}&
\includegraphics[width=0.085\textwidth]{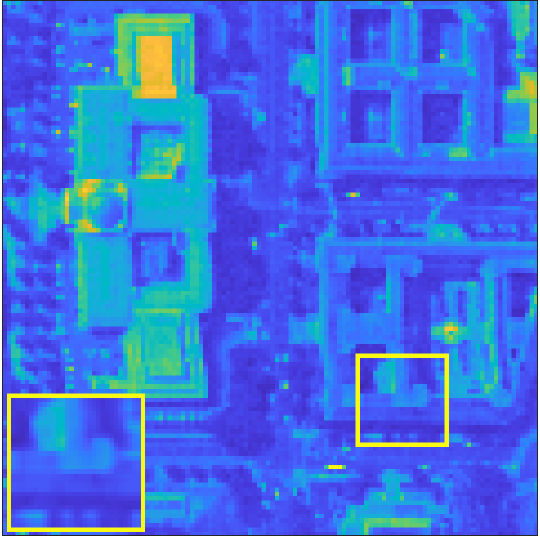}&
\includegraphics[width=0.085\textwidth]{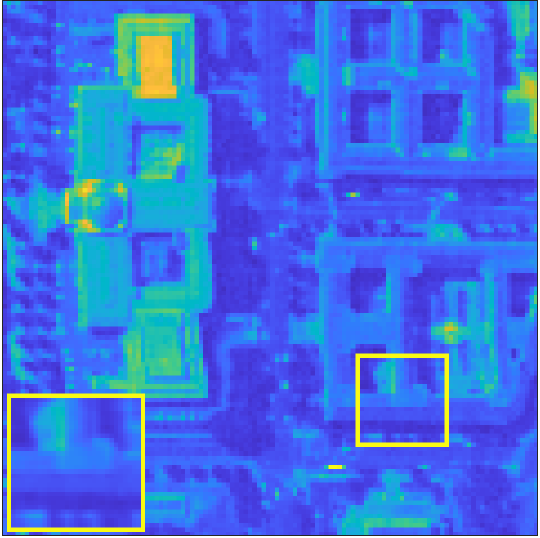}&
\includegraphics[width=0.085\textwidth]{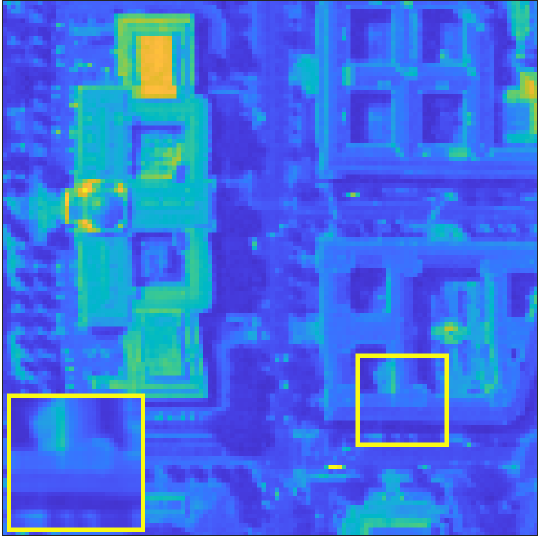}&
\includegraphics[width=0.085\textwidth]{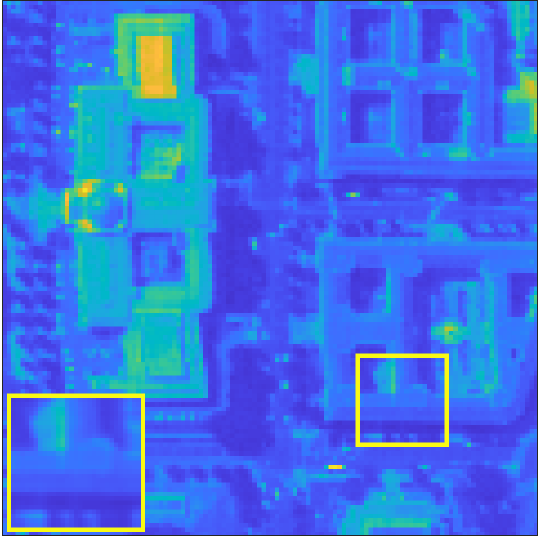}&
\includegraphics[width=0.085\textwidth]{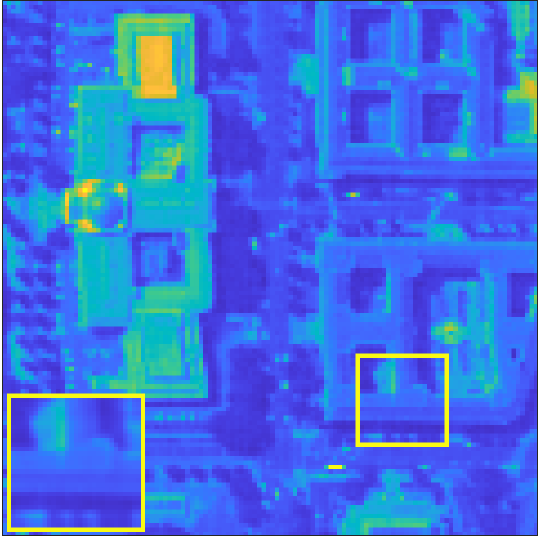}&
\includegraphics[width=0.099\textwidth]{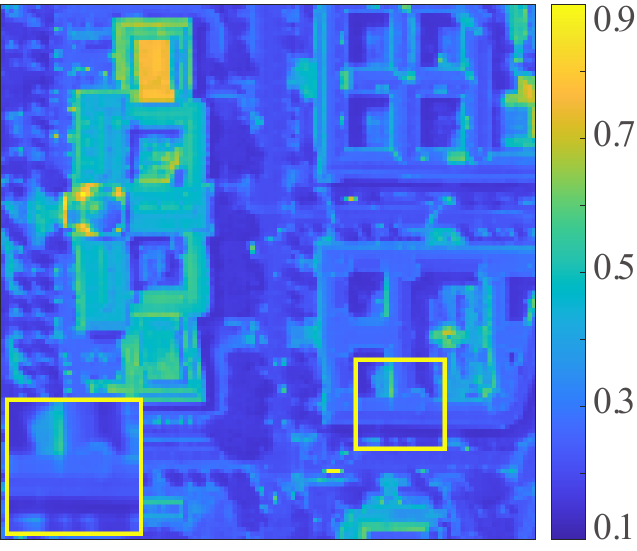}\\
\includegraphics[width=0.085\textwidth]{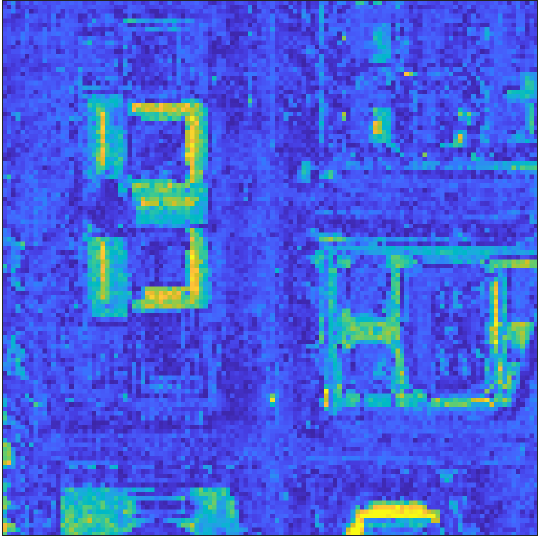}&
\includegraphics[width=0.085\textwidth]{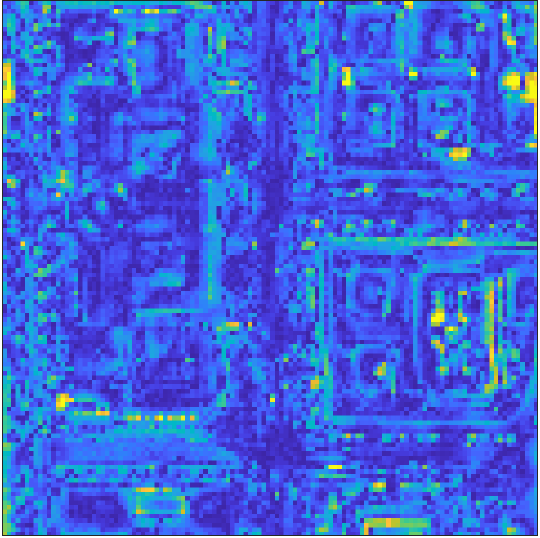}&
\includegraphics[width=0.085\textwidth]{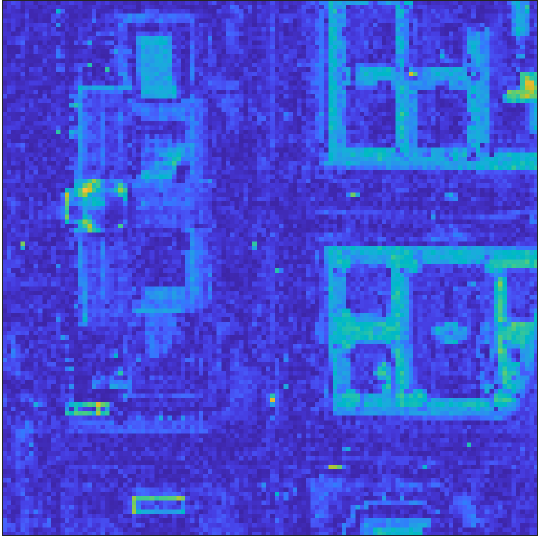}&
\includegraphics[width=0.085\textwidth]{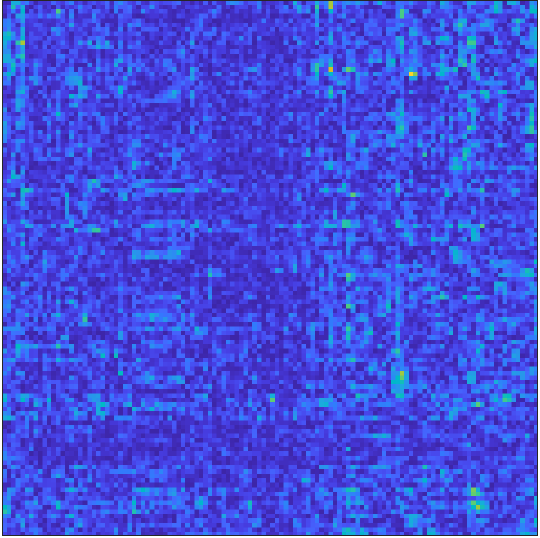}&
\includegraphics[width=0.085\textwidth]{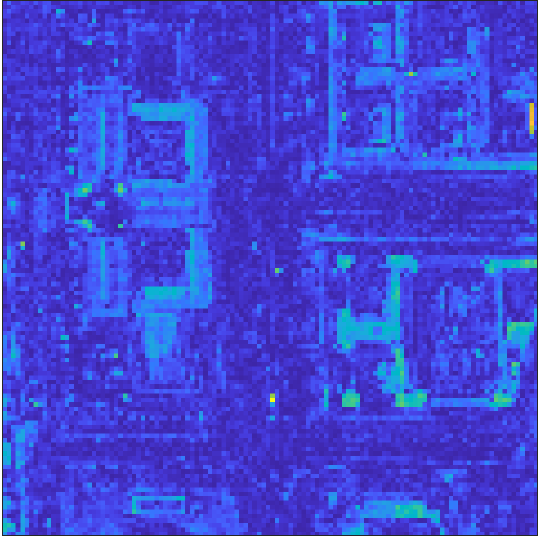}&
\includegraphics[width=0.085\textwidth]{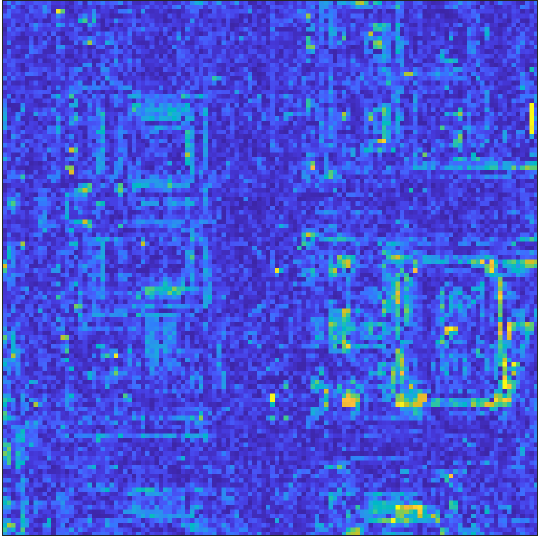}&
\includegraphics[width=0.085\textwidth]{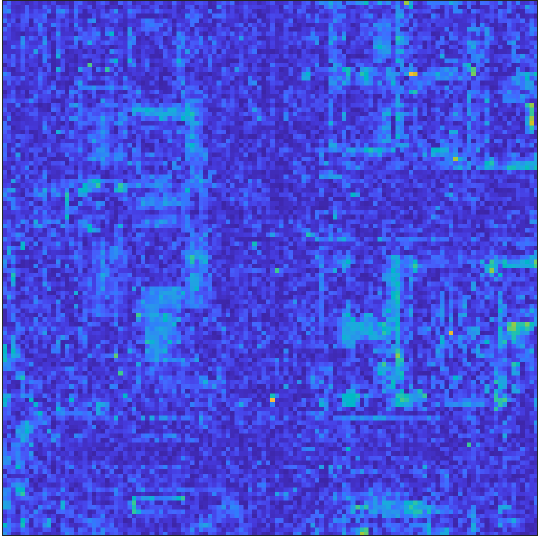}&
\includegraphics[width=0.085\textwidth]{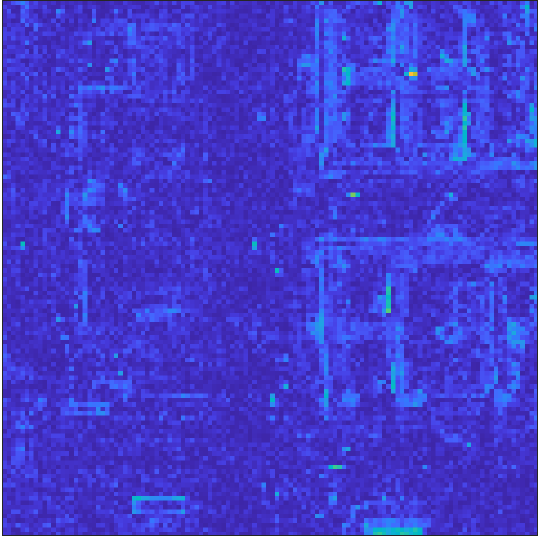}&
\includegraphics[width=0.085\textwidth]{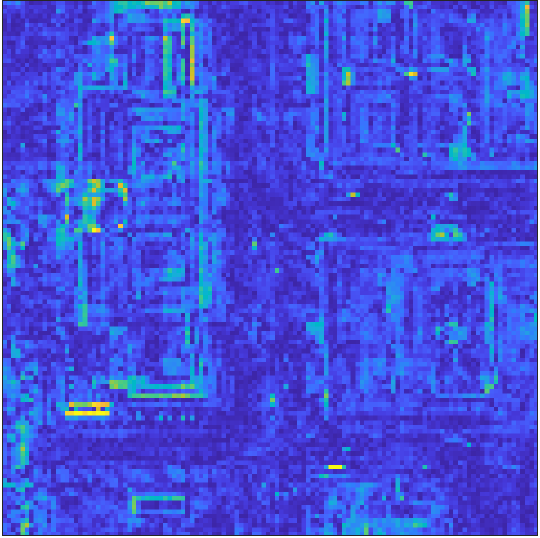}&
\includegraphics[width=0.085\textwidth]{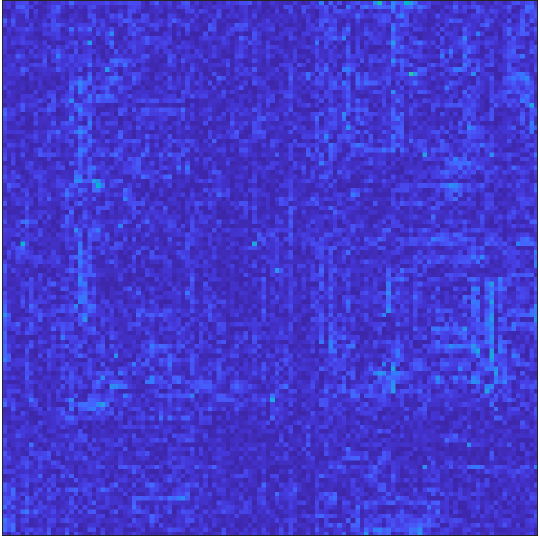}&
\includegraphics[width=0.102\textwidth]{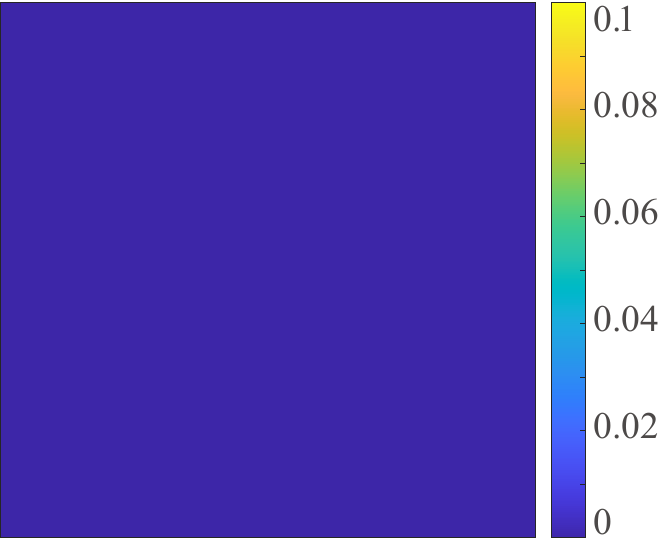}\\
\includegraphics[width=0.085\textwidth]{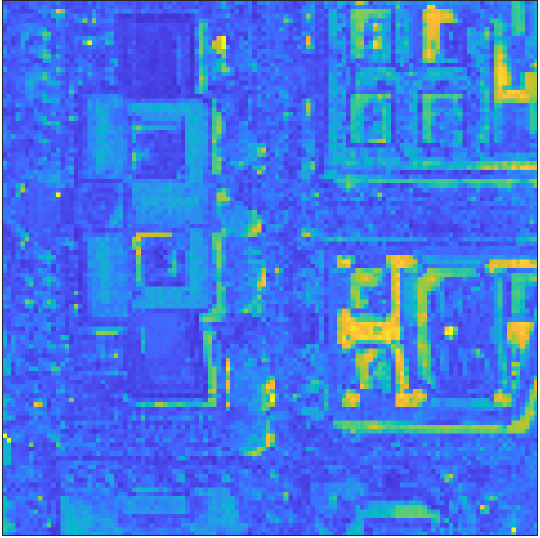}&
\includegraphics[width=0.085\textwidth]{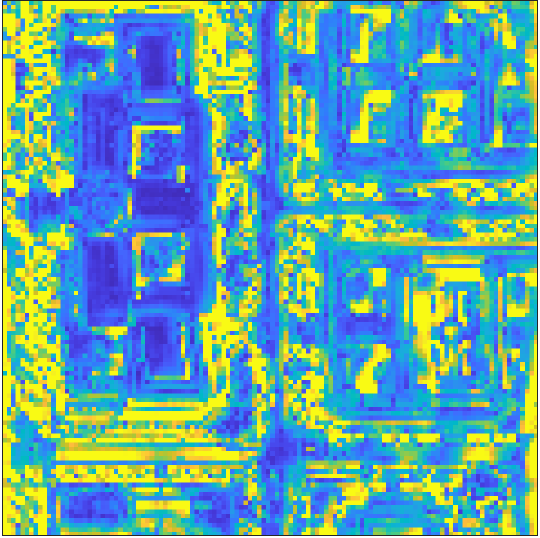}&
\includegraphics[width=0.085\textwidth]{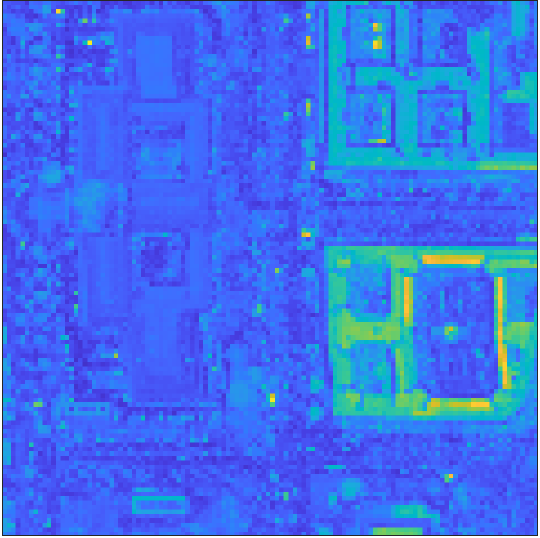}&
\includegraphics[width=0.085\textwidth]{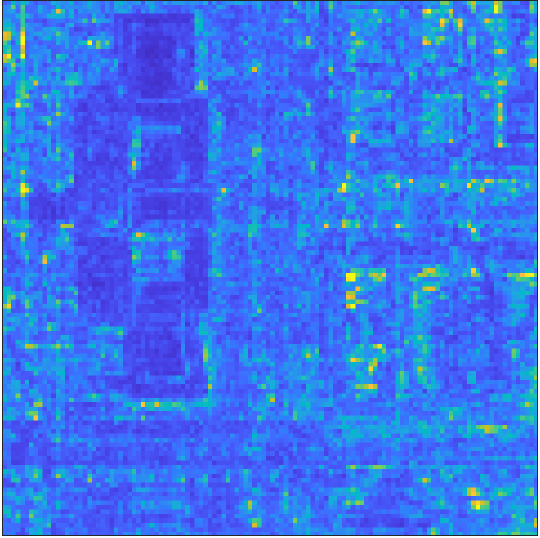}&
\includegraphics[width=0.085\textwidth]{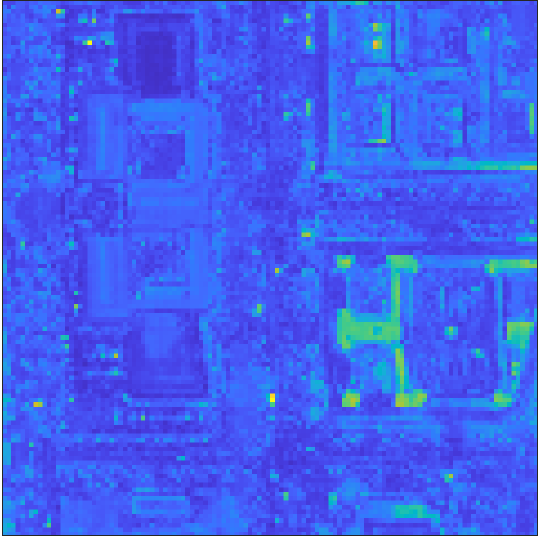}&
\includegraphics[width=0.085\textwidth]{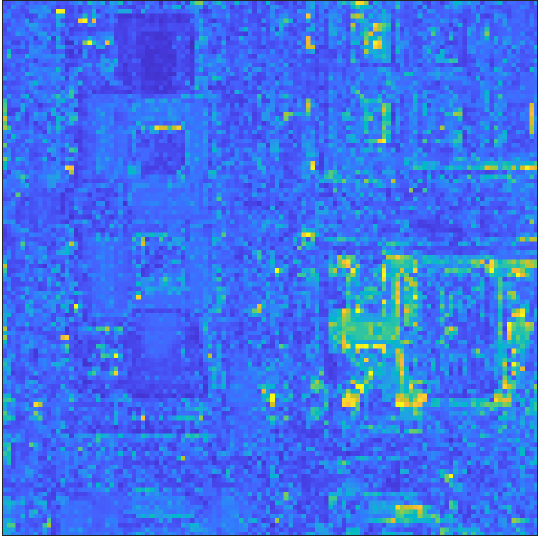}&
\includegraphics[width=0.085\textwidth]{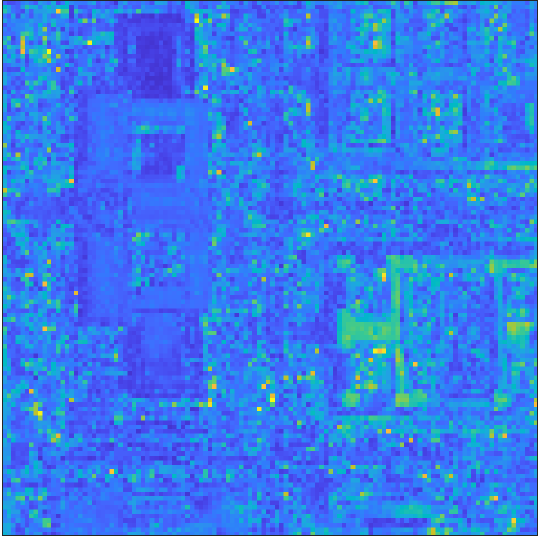}&
\includegraphics[width=0.085\textwidth]{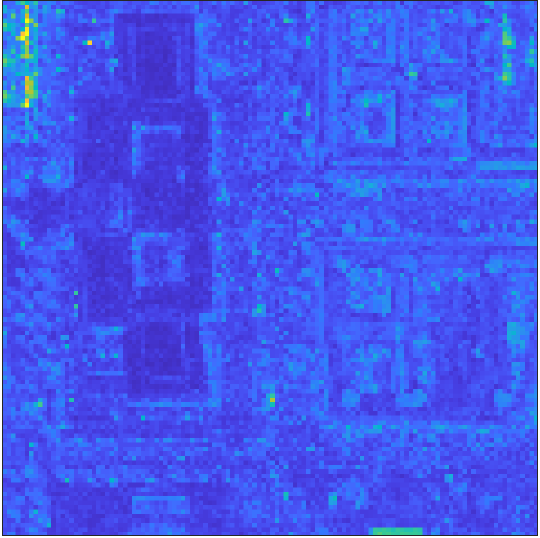}&
\includegraphics[width=0.085\textwidth]{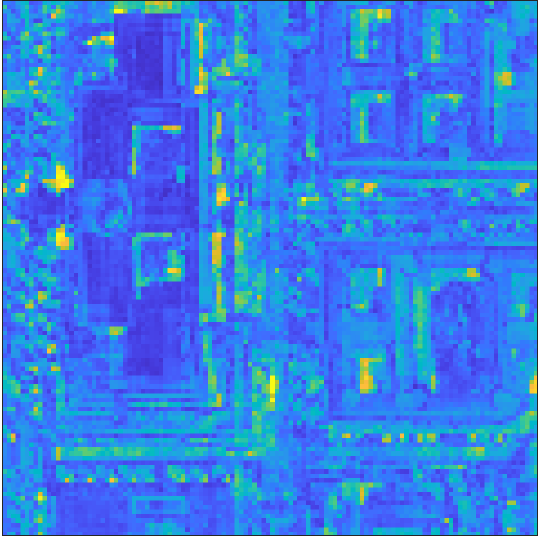}&
\includegraphics[width=0.085\textwidth]{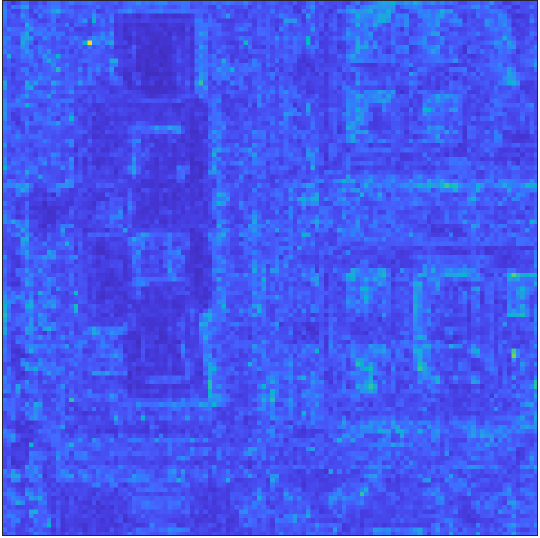}&
\includegraphics[width=0.099\textwidth]{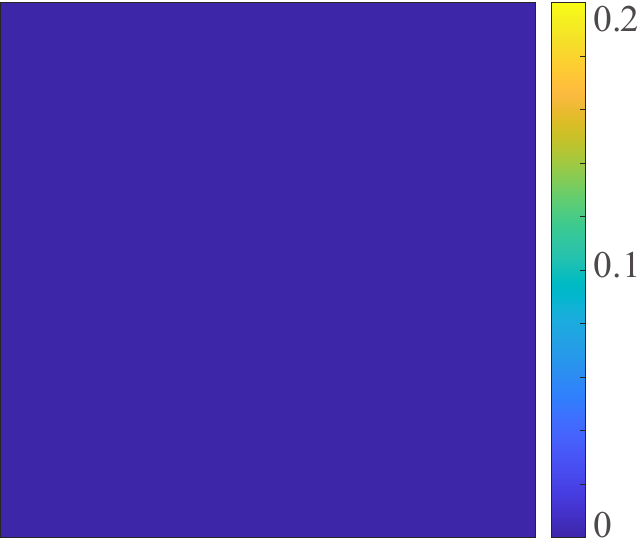}\\
(a) \texttt{CNMF} & (b) \texttt{FUSE} & (c) \texttt{SCOTT} & (d) \texttt{STEREO} & (e) \texttt{SCLL1} & (f) \texttt{CBSTAR}& (g) \texttt{BTDvar} & (h) \texttt{NPTSR} & (i) \texttt{NLSTF} & (j) \texttt{CLIMB} & (k) SRI\\
\end{tabular}
%\vspace{-2mm}
\caption{The recovered results of Washington DC with the degradation known.
First row: the recovered SRIs of the 28th band; 
Second row: the corresponding residual images of the 28th band; 
Third row: the SAM maps.}
  \label{fig:WDC}
\end{center}
\end{figure*}

\subsubsection{Pavia University Dataset}
Here, we use a subimage from the Pavia University dataset, captured by the ROSIS sensor \cite{Kunkel1988ROSIS}. 
The original dataset contains $610 \times 340$ pixels and 115 bands.
After removing bands affected by water absorption, the used SRI and HSI have sizes of $200\times 200\times 103$ and $50\times 50\times 103$, respectively. The MSI $\underline{\bm{Y}}_{\rm M} \in \mathbb{R}^{200\times 200\times 4}$ is generated using QuickBird spectral degradation. We set $R=4$.

\begin{table}[!t]
\renewcommand\arraystretch{1}
\setlength{\tabcolsep}{5pt}
\renewcommand\arraystretch{1.1}
  \centering
  \caption{Performance for Pavia University data with spatial degradation known. (The highest and second-highest values are highlighted in bold and underlined, respectively.)}
  %\vspace{-0.2cm}
  \resizebox{0.6\linewidth}{!}{
  \begin{tabular}{c|c|c|c|c|c|c}\hline

    \hline
    Methods & \multicolumn{1}{c|}{RSNR} & \multicolumn{1}{c|}{SSIM} & \multicolumn{1}{c|}{CC} 
    & \multicolumn{1}{c|}{ERGAS} & \multicolumn{1}{c|}{RMSE} 
    & \multicolumn{1}{c}{SAM}\\ \hline
    \texttt{CNMF}   & 20.40    & 0.9489    & 0.9799    & 0.5438    & 0.0216    & 0.0785\\
    \texttt{FUSE}   & 20.32    & 0.9389    & 0.9777    & 0.5321    & 0.0217    & 0.0811\\
    \texttt{SCOTT}  & 24.60    & 0.9575    & 0.9906    & 0.3160    & 0.0132    & 0.0612\\
    \texttt{STEREO} & 25.40    & 0.9586    & 0.9919    & 0.2919    & 0.0121    & 0.0623\\
    \texttt{SCLL1}  & 26.97    & \underline{0.9776}    & 0.9942    & 0.2456    & 0.0101   & 0.0502\\
    \texttt{CBSTAR} & 18.37    & 0.8499    & 0.9639    & 0.6096    & 0.0272    & 0.0941\\
    \texttt{BTDvar} & 25.20    & 0.9623    & 0.9917    & 0.3046    & 0.0124    & 0.0611\\
    \texttt{NPTSR}  & \underline{28.10}    & \textbf{0.9788}    & \underline{0.9956}    & \underline{0.2171}    & \underline{0.0089}    & \textbf{0.0448}\\
    \texttt{NLSTF}  & 26.85    & 0.9766    & 0.9942    & 0.2500    & 0.0102    & \underline{0.0491}\\
    \texttt{CLIMB}  & \textbf{28.19}    & 0.9773    & \textbf{0.9957}    & \textbf{0.2152}    & \textbf{0.0088}    & \textbf{0.0448}\\
    \hline
    \end{tabular}}%
  \label{table:Pavia}%
\end{table}%

\begin{figure}[!t]
\scriptsize\setlength{\tabcolsep}{0.3pt}
\begin{center}
\begin{tabular}{cccc}
\includegraphics[width=0.248\textwidth]{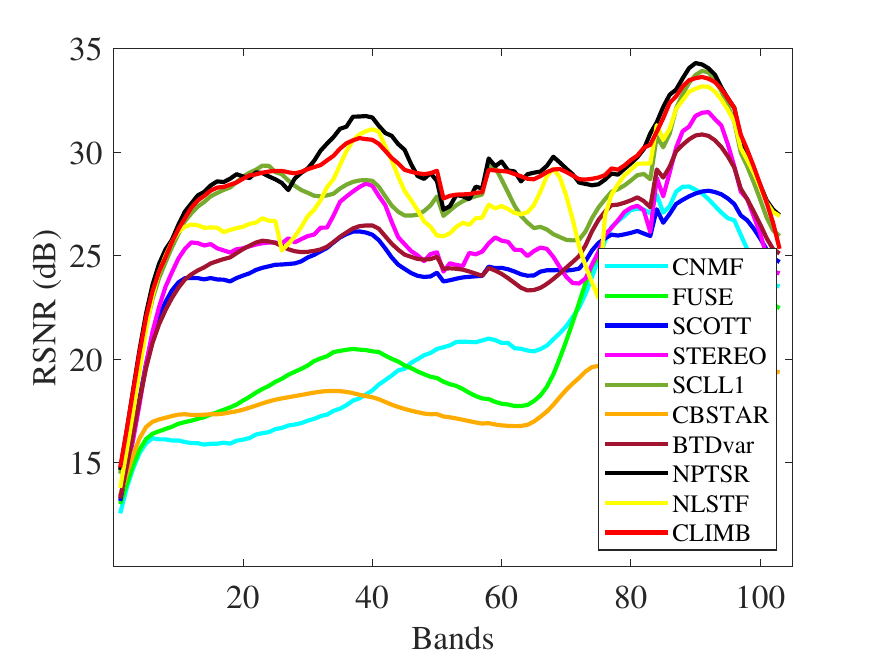}&
\includegraphics[width=0.248\textwidth]{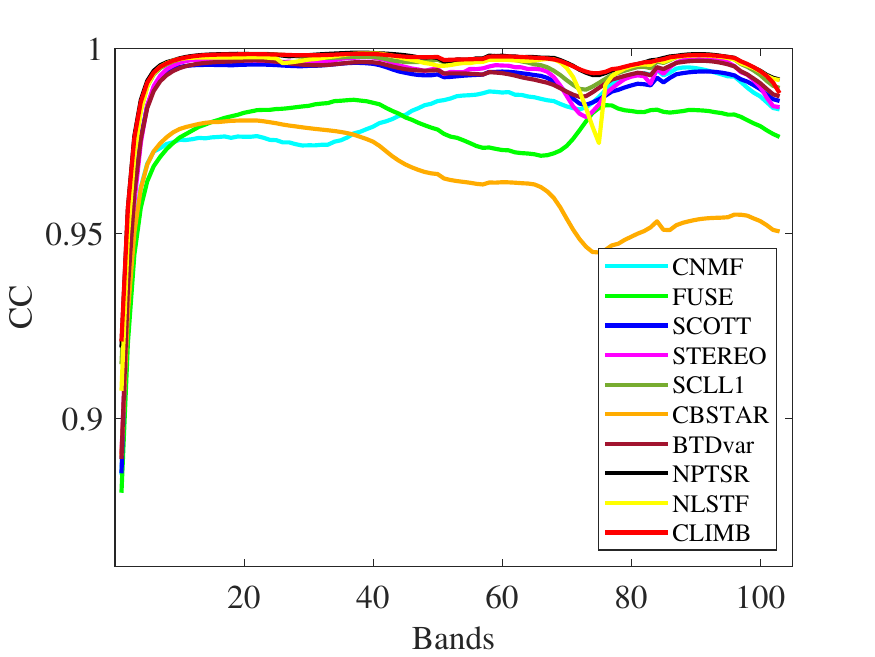}&
\includegraphics[width=0.248\textwidth]{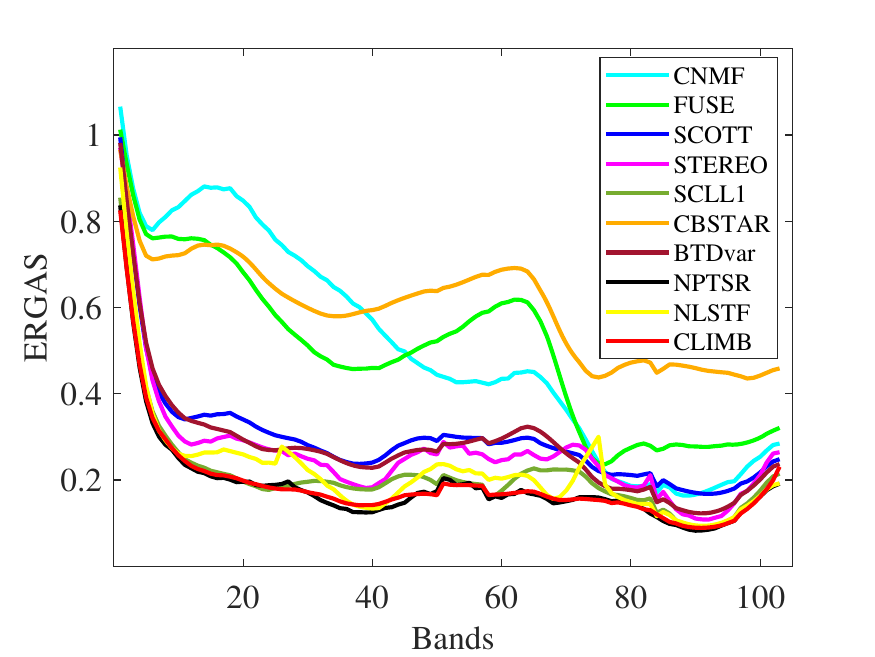}&
\includegraphics[width=0.248\textwidth]{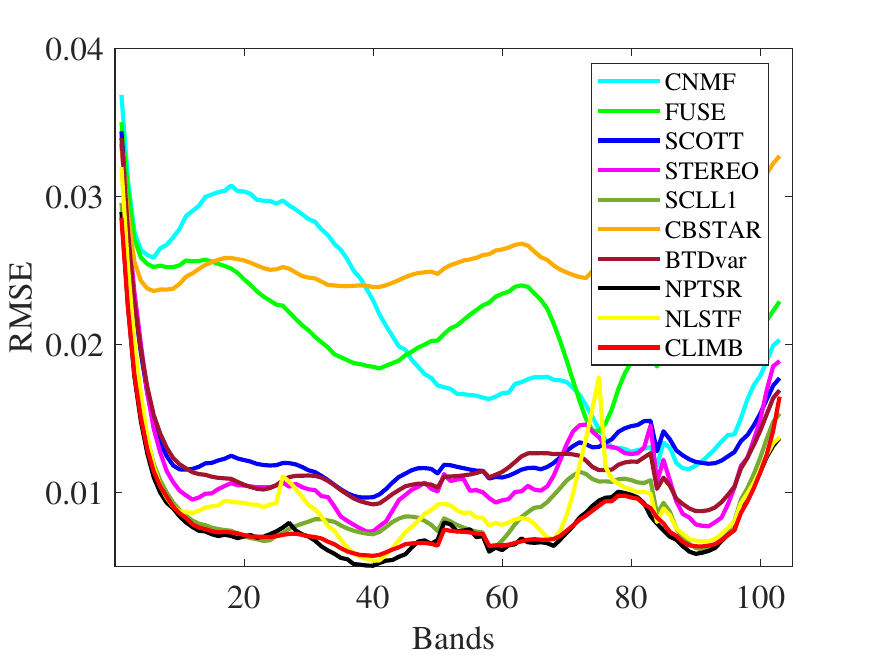}\\
 (a) RSNR & (b) CC & (c) ERGAS & (d) RMSE\\
\end{tabular}
\caption{RSNR, CC, ERGAS, and RMSE values of each band of Pavia University image.}
\vspace{-3mm}
  \label{fig:Pavia_band}
\end{center}
\end{figure}

Table~\ref{table:Pavia} reports the reconstruction performance under $\mathrm{SNR}=35\,\mathrm{dB}$, where the proposed \texttt{CLIMB} again outperforms baselines across most metrics. Fig.~\ref{fig:Pavia_band} plots RSNR, CC, ERGAS, and RMSE across spectral bands.

We note that the nonlocal prior-based method NPTSR and our method achieve comparable performance under the setting where spatial degradation operators are known. Nonetheless, our performance margin becomes more articulated in more challenging scenarios, as will be shown shortly in the next subsection.

\begin{figure*}[!t]
\scriptsize\setlength{\tabcolsep}{0.3pt}
\begin{center}
\begin{tabular}{cccccccccccc}
\includegraphics[width=0.084\textwidth]{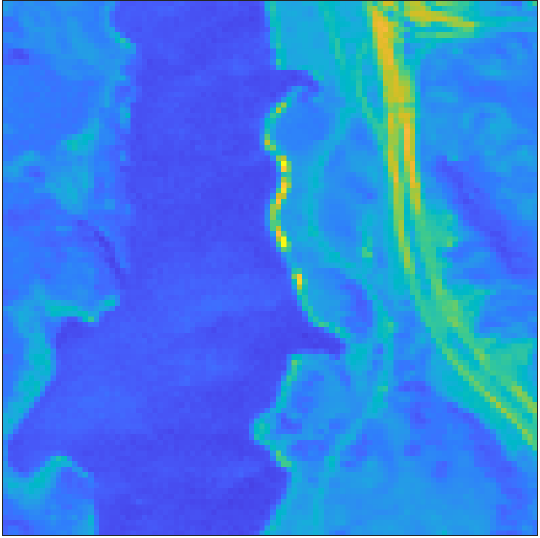}&
\includegraphics[width=0.084\textwidth]{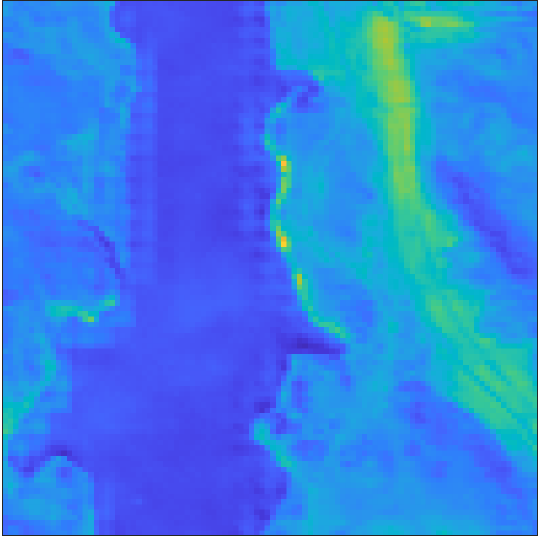}&
\includegraphics[width=0.084\textwidth]{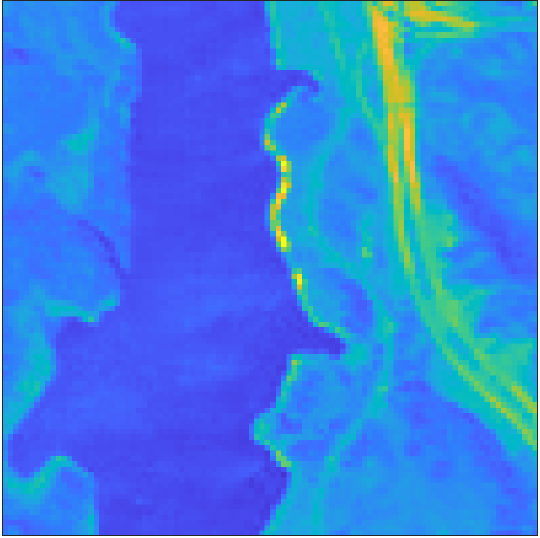}&
\includegraphics[width=0.084\textwidth]{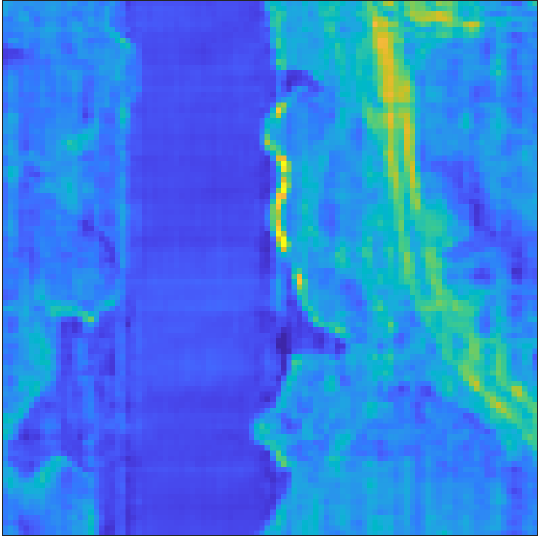}&
\includegraphics[width=0.084\textwidth]{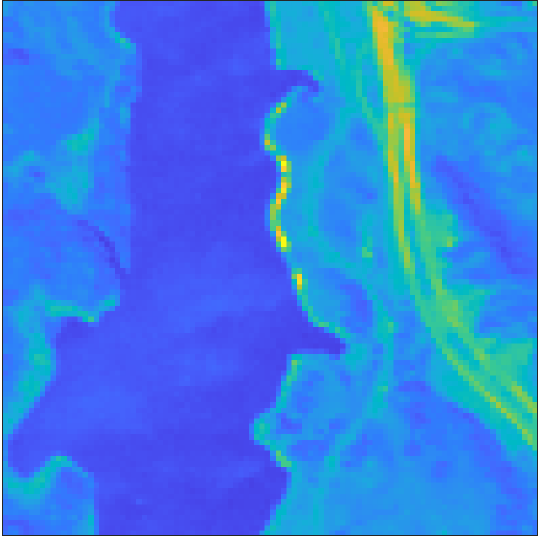}&
\includegraphics[width=0.084\textwidth]{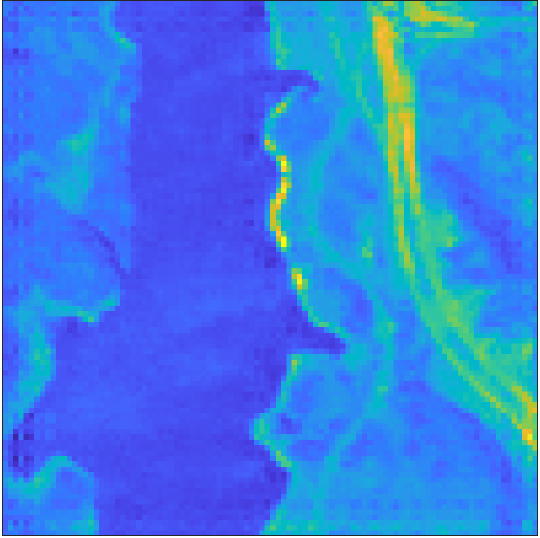}&
\includegraphics[width=0.084\textwidth]{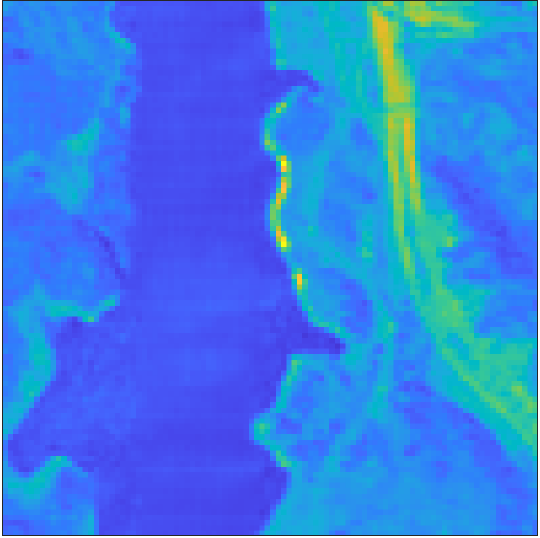}&
\includegraphics[width=0.084\textwidth]{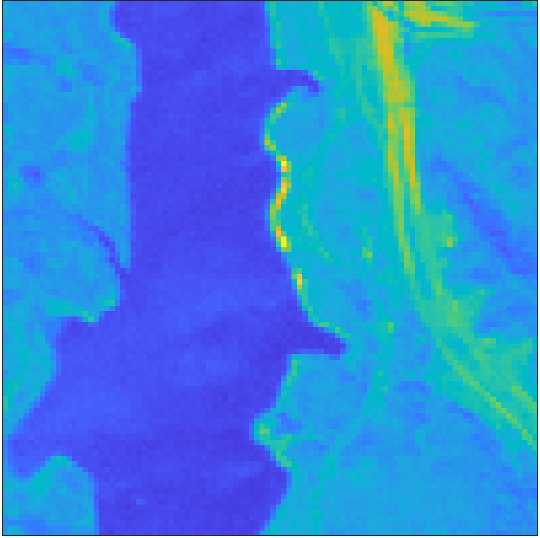}&
\includegraphics[width=0.084\textwidth]{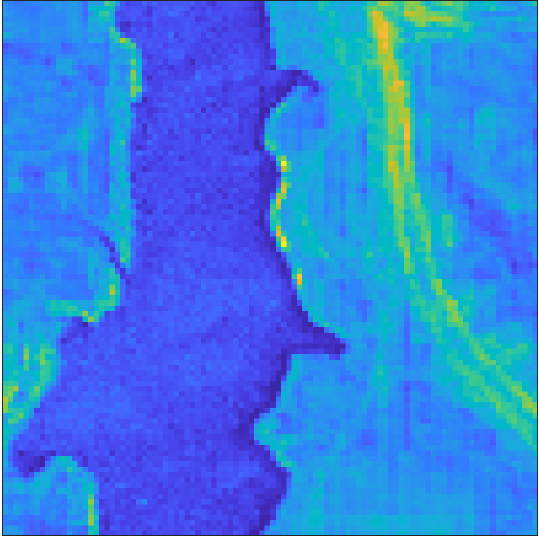}&
\includegraphics[width=0.084\textwidth]{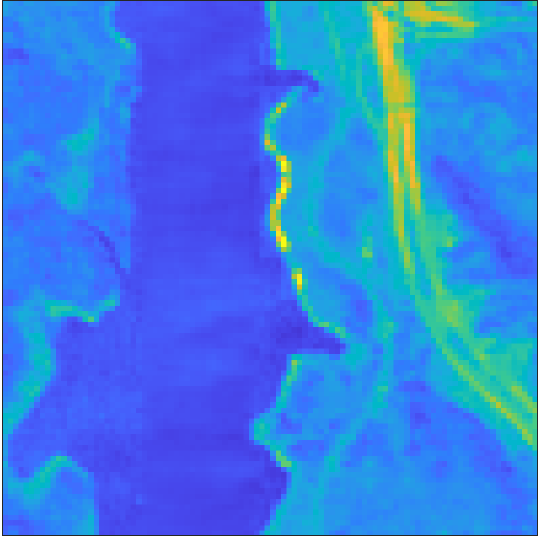}&
\includegraphics[width=0.101\textwidth]{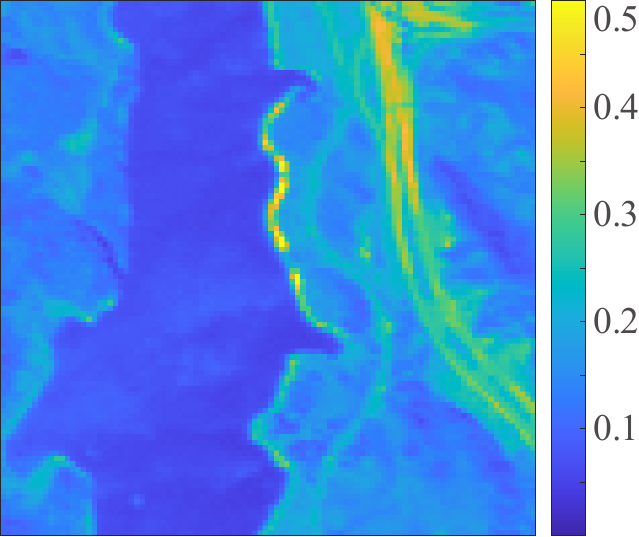}\\
\includegraphics[width=0.084\textwidth]{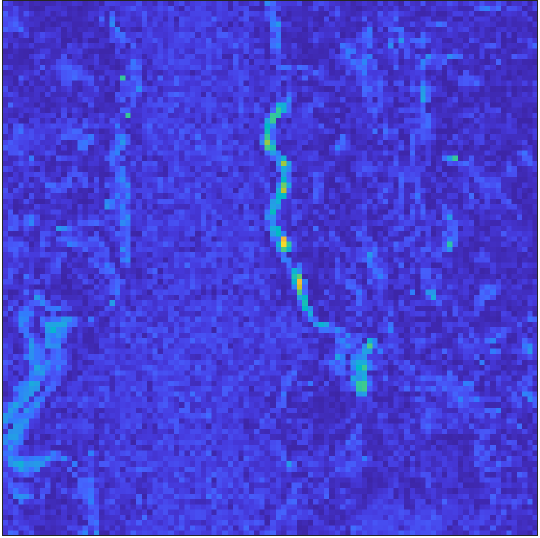}&
\includegraphics[width=0.084\textwidth]{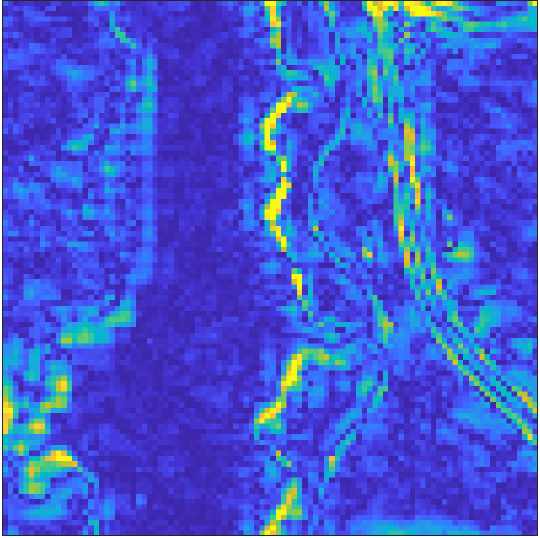}&
\includegraphics[width=0.084\textwidth]{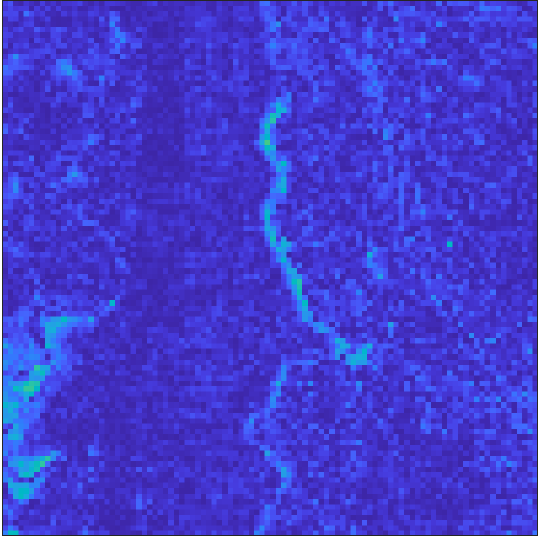}&
\includegraphics[width=0.084\textwidth]{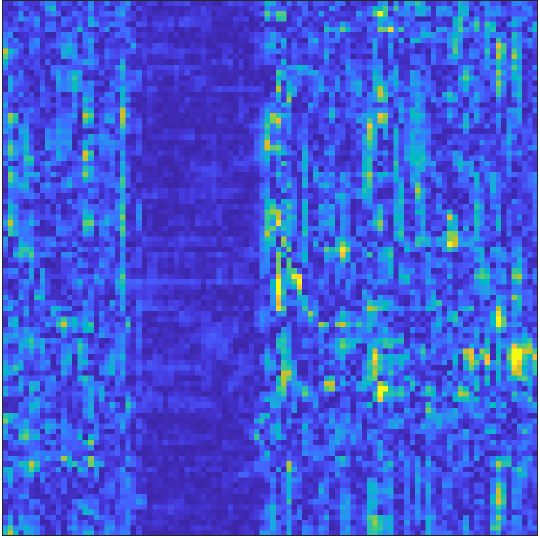}&
\includegraphics[width=0.084\textwidth]{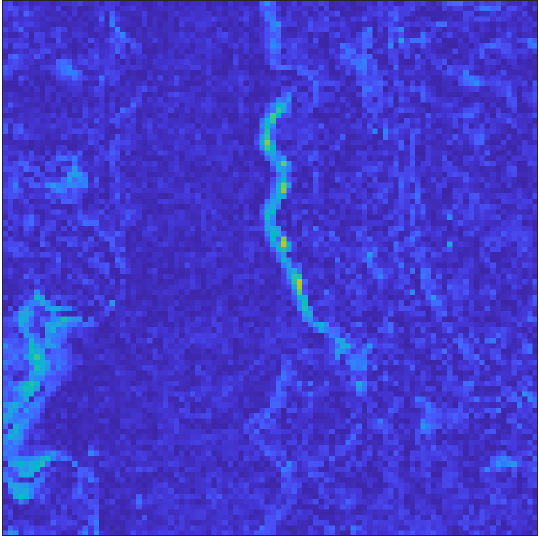}&
\includegraphics[width=0.084\textwidth]{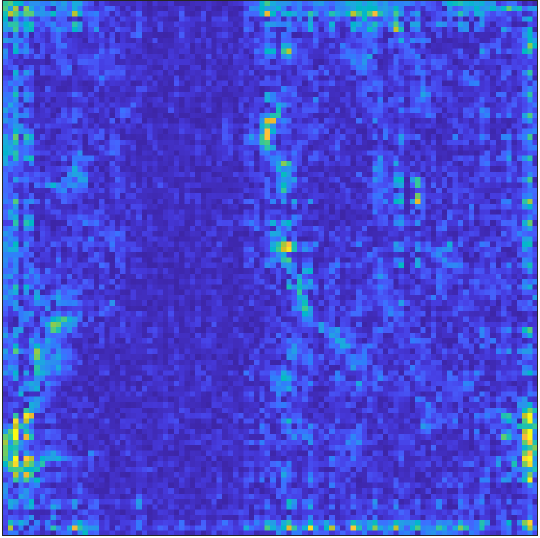}&
\includegraphics[width=0.084\textwidth]{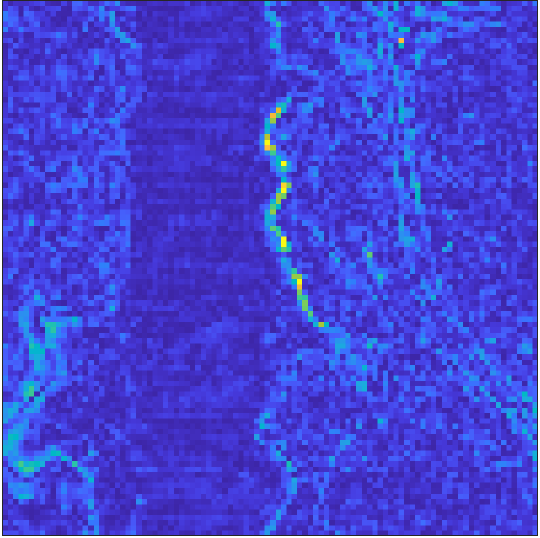}&
\includegraphics[width=0.084\textwidth]{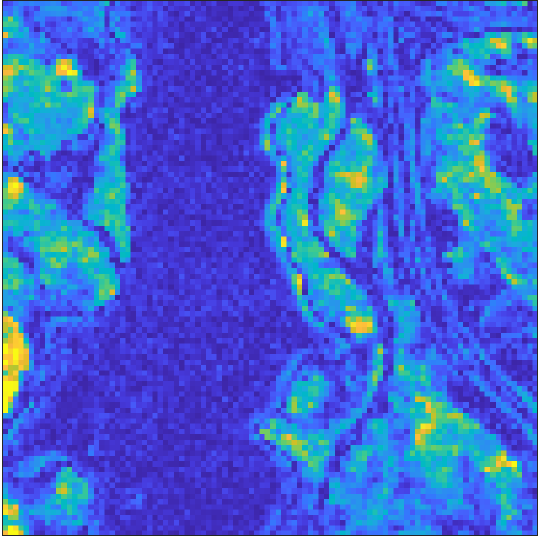}&
\includegraphics[width=0.084\textwidth]{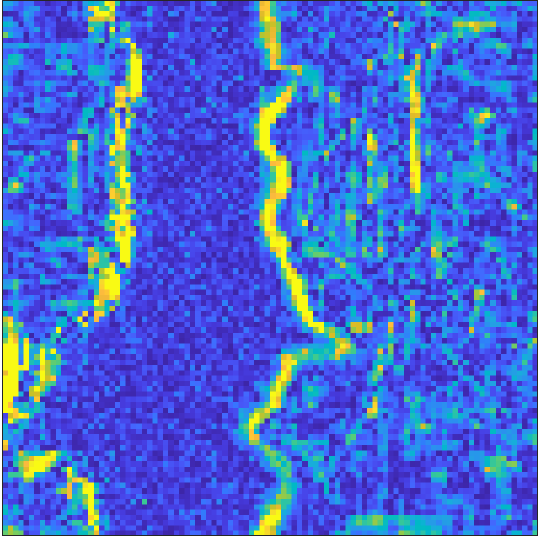}&
\includegraphics[width=0.084\textwidth]{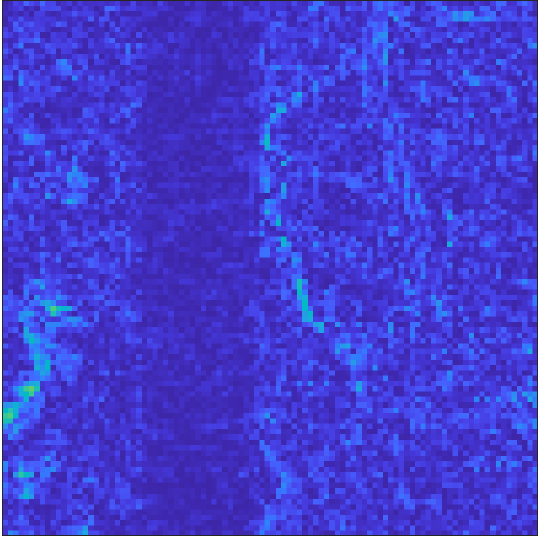}&
\includegraphics[width=0.103\textwidth]{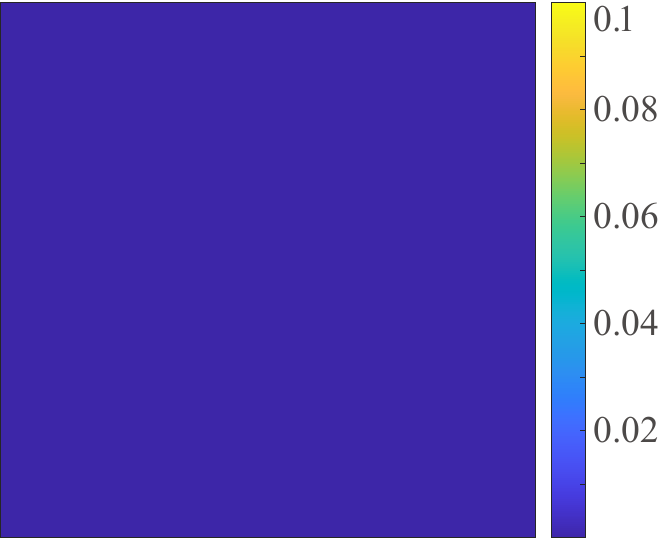}\\
\includegraphics[width=0.084\textwidth]{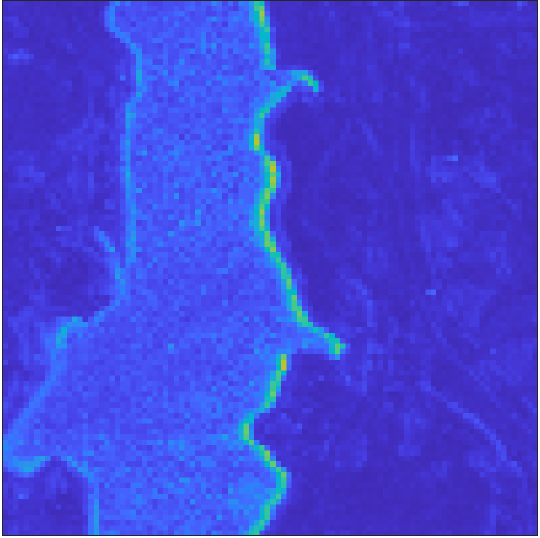}&
\includegraphics[width=0.084\textwidth]{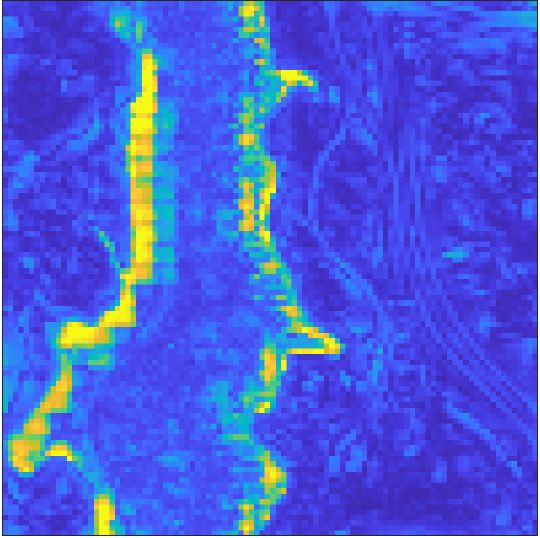}&
\includegraphics[width=0.084\textwidth]{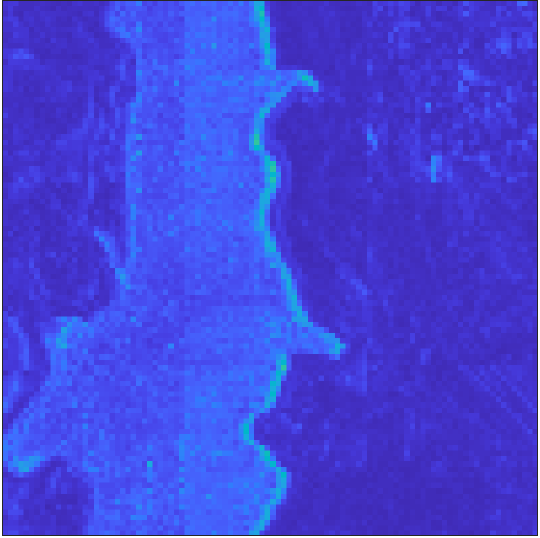}&
\includegraphics[width=0.084\textwidth]{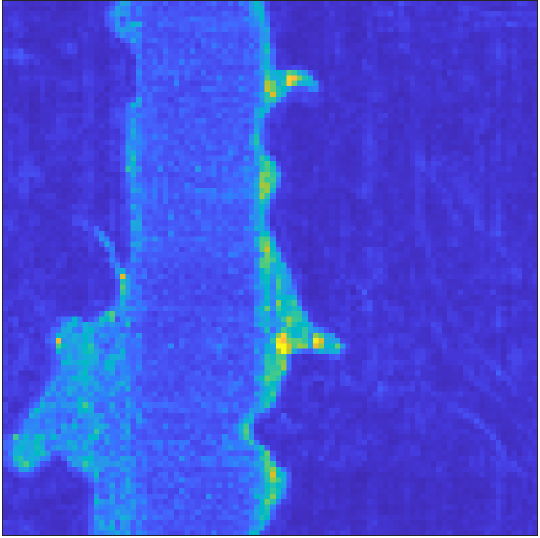}&
\includegraphics[width=0.084\textwidth]{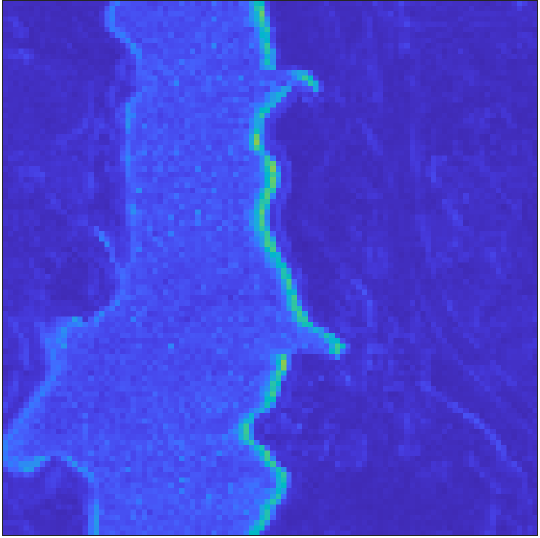}&
\includegraphics[width=0.084\textwidth]{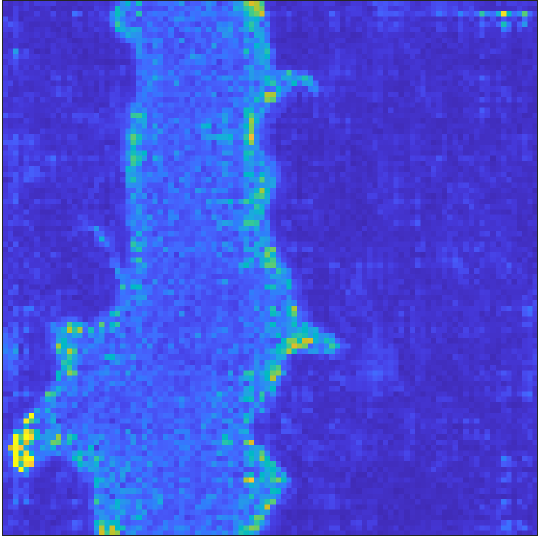}&
\includegraphics[width=0.084\textwidth]{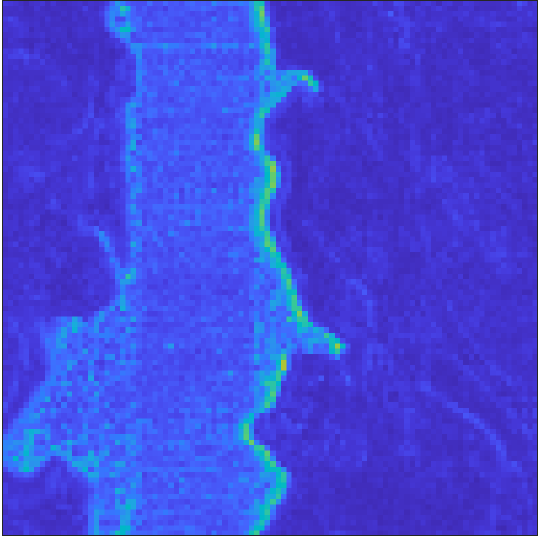}&
\includegraphics[width=0.084\textwidth]{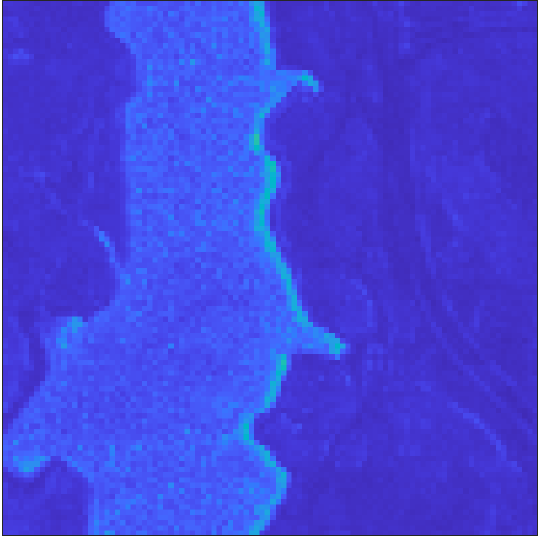}&
\includegraphics[width=0.084\textwidth]{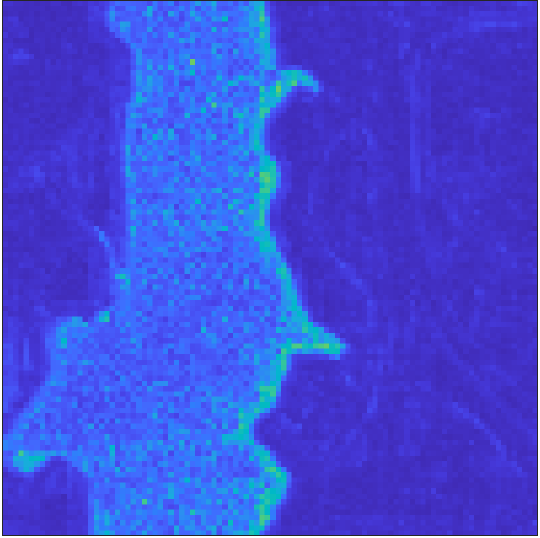}&
\includegraphics[width=0.084\textwidth]{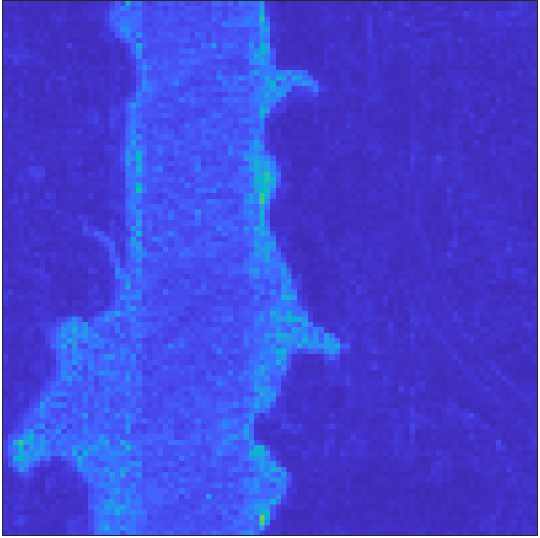}&
\includegraphics[width=0.101\textwidth]{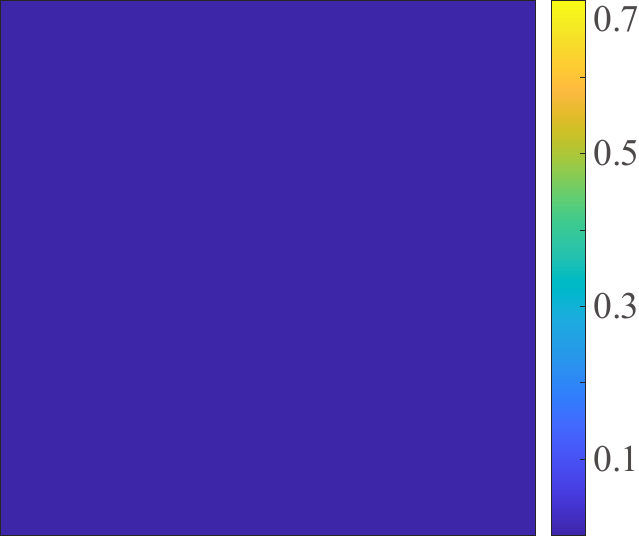}\\
  (a) \texttt{CNMF} & (b) \texttt{FUSE} & (c) \texttt{BSCOTT} & (d) \texttt{BSTEREO} & (e) \texttt{BSCLL1} & (f) \texttt{CBSTAR} & (g) \texttt{BTDvar} & (h) \texttt{NPTSR} & (i) \texttt{NLSTF} & (j) \texttt{BCLIMB} & (k) HSI\\
\end{tabular}
%\vspace{-2mm}
\caption{The recovered results of Jasper Ridge with the spatial degradation unknown.
First row: the recovered SRIs of the 34th band; 
Second row: the corresponding residual images of the 34th band; 
Third row: the SAM maps.}
  \label{fig:Ridge_blind}
\end{center}
\end{figure*}

\begin{table}[!htp]
\renewcommand\arraystretch{1}
\setlength{\tabcolsep}{5pt}%
\renewcommand\arraystretch{1.1}%
  \centering
  \caption{Performance for Jasper Ridge data with the spatial degradation unknown. (The highest and second-highest values are highlighted in bold and underlined, respectively.)}
  %\vspace{-0.2cm}
  \resizebox{0.6\linewidth}{!}{
  \begin{tabular}{c|c|c|c|c|c|c}\hline

    \hline
    Methods & \multicolumn{1}{c|}{RSNR} & \multicolumn{1}{c|}{SSIM} & \multicolumn{1}{c|}{CC} 
    & \multicolumn{1}{c|}{ERGAS} & \multicolumn{1}{c|}{RMSE} 
    & \multicolumn{1}{c}{SAM}\\ \hline
    \texttt{CNMF}   & 25.52    & 0.9639    & 0.9903   & 0.3589    & 0.0155    & 0.0738 \\
    \texttt{FUSE}   & 16.91    & 0.8203    & 0.9520    & 0.8230    & 0.0414    & 0.1312\\
    \texttt{BSCOTT}  & 27.17    & 0.9724    & 0.9926    & \underline{0.3160}    & 0.0127    & 0.0682\\
    \texttt{BSTEREO} & 24.33    & 0.9459    & 0.9855    & 0.4754    & 0.0176    & 0.0883\\
    \texttt{BSCLL1}  & \underline{27.62}    & \textbf{0.9785}    & \underline{0.9928}    & 0.3168    & \underline{0.0121}    & \textbf{0.0638}\\
    \texttt{CBSTAR} & 19.68    & 0.9393    & 0.9794    & 0.5524    & 0.0301    & 0.0899\\
    \texttt{BTDvar} & 24.95    & 0.9602    & 0.9890    & 0.3812    & 0.0164    & 0.0800\\
    \texttt{NPTSR}  & 27.02    & \underline{0.9744}    & 0.9917   & 0.5748    & 0.0129    & 0.0692\\
    \texttt{NLSTF}  & 26.92    & 0.9647    & 0.9890   & 0.5587    & 0.0131    & 0.0809\\
    \texttt{BCLIMB}  & \textbf{28.28}    & 0.9716    & \textbf{0.9929}    & \textbf{0.2973}    & \textbf{0.0112}    & \underline{0.0668}\\
    \hline
    \end{tabular}}%
  \label{table:Ridge_blind}%
\end{table}%

\subsection{Semi-real Experiments with Unknown Spatial Degradation Operators}
In this subsection, we evaluate the proposed semi-blind \texttt{CLIMB} model (i.e., \eqref{eq:model-blind}, referred to as \texttt{BCLIMB}) under the setting where the spatial degradation operator is unknown. The experiments use the Jasper Ridge and Pavia University datasets with the same configurations as before. 

Since CP-, Tucker-, and LL1-based CTD methods can also operate in the semi-blind setting, we include their corresponding variants---\texttt{BSTEREO} \cite{Kanatsoulis2018HSR}, \texttt{BSCOTT} \cite{Prevost2020HSR}, and \texttt{BSCLL1} \cite{Ding2021HSR}---as baselines. Additionally, for \texttt{CBSTAR} and \texttt{BTDvar}, we estimate the spatial degradation matrices $\bm{P}_1$ and $\bm{P}_2$ using the approach in \cite{Dian2023Zero}.

Fig.~\ref{fig:Ridge_blind} presents the recovered SRIs, residual images, and SAM maps for the Jasper Ridge dataset. Compared to the baselines, \texttt{BCLIMB} better preserves edges and textures.

Tables~\ref{table:Ridge_blind} and~\ref{table:Pavia_blind} report the reconstruction metrics for the Jasper Ridge and Pavia University datasets, respectively. In both cases, \texttt{BCLIMB} outperforms the baselines across most evaluation metrics.

\begin{table}[!htp]
\renewcommand\arraystretch{1}
\setlength{\tabcolsep}{5pt}
\renewcommand\arraystretch{1.1}
  \centering
  \caption{Performance for Pavia University data with the spatial degradation unknown. (The highest and second-highest values are highlighted in bold and underlined, respectively.)}
  %\vspace{-0.2cm}
  \resizebox{0.6\linewidth}{!}{
  \begin{tabular}{c|c|c|c|c|c|c}\hline

    \hline
    Methods & \multicolumn{1}{c|}{RSNR} & \multicolumn{1}{c|}{SSIM} & \multicolumn{1}{c|}{CC} 
    & \multicolumn{1}{c|}{ERGAS} & \multicolumn{1}{c|}{RMSE} 
    & \multicolumn{1}{c}{SAM}\\ \hline
    \texttt{CNMF}   & 20.50    & 0.9512    & 0.9814    & 0.5387    & 0.0213    & 0.0771\\
    \texttt{FUSE}   & 18.04    & 0.9109    & 0.9621    & 0.6586    & 0.0282    & 0.0897\\
    \texttt{BSCOTT}  & 24.29    & 0.9588    & 0.9903    & 0.3286    & 0.0137    & 0.0640\\
    \texttt{BSTEREO} & 23.33    & 0.9470    & 0.9886    & 0.3690    & 0.0153    & 0.0680\\
    \texttt{BSCLL1}  & 26.06    & 0.9755    & 0.9936    & 0.2776    & 0.0116    & 0.0565\\
    \texttt{CBSTAR} & 18.23    & 0.9118    & 0.9645    & 0.6568    & 0.0276    & 0.0865\\
    \texttt{BTDvar} & 19.39    & 0.8995    & 0.9726    & 0.6066    & 0.0242    & 0.0932\\
    \texttt{NPTSR}  & 25.98    & \textbf{0.9770}    & \underline{0.9948}    & \underline{0.2626}    & 0.0113    & 0.0520\\
    \texttt{NLSTF}  & \underline{26.13}    & \underline{0.9763}    & 0.9936   & 0.2732    & \underline{0.0111}    & \underline{0.0508}\\
    \texttt{BCLIMB}  & \textbf{27.50}    & 0.9754    & \textbf{0.9949}    & \textbf{0.2333}    & \textbf{0.0095}    & \textbf{0.0474}\\
    \hline
    \end{tabular}}%
  \label{table:Pavia_blind}%
\end{table}%

\begin{figure*}[!t]
\scriptsize\setlength{\tabcolsep}{0.8pt}
\begin{minipage}[b]{1\linewidth}
\centering
\begin{tabular}{ccccccc}
\includegraphics[width=0.15\textwidth]{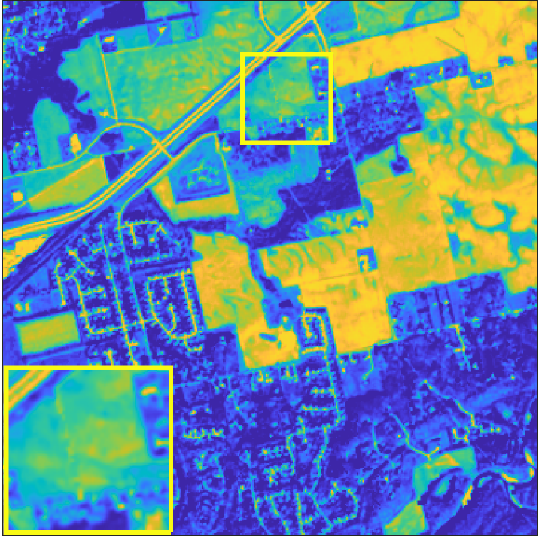}&
\includegraphics[width=0.15\textwidth]{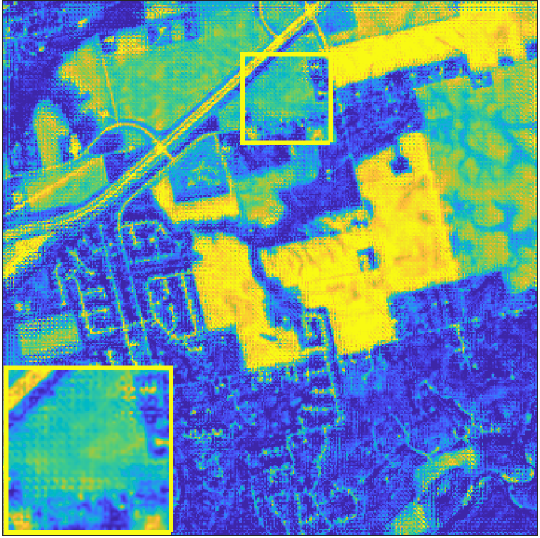}&
\includegraphics[width=0.15\textwidth]{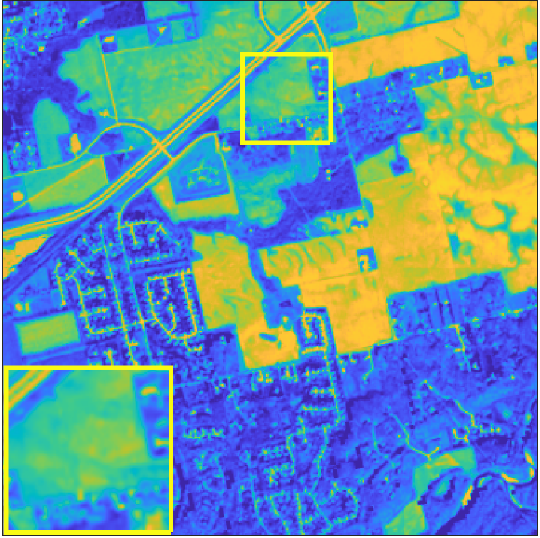}&
\includegraphics[width=0.15\textwidth]{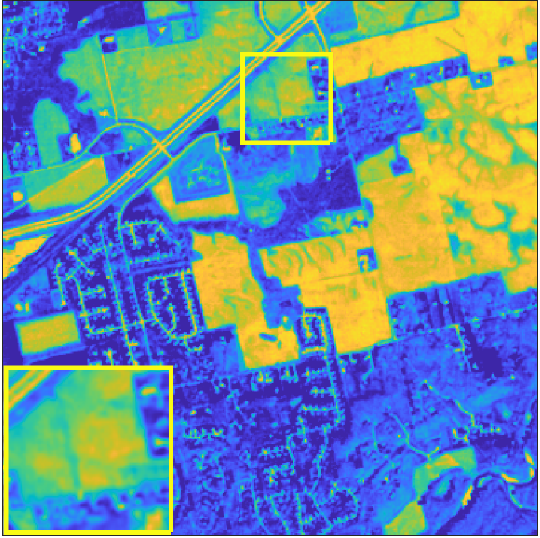}&
\includegraphics[width=0.15\textwidth]{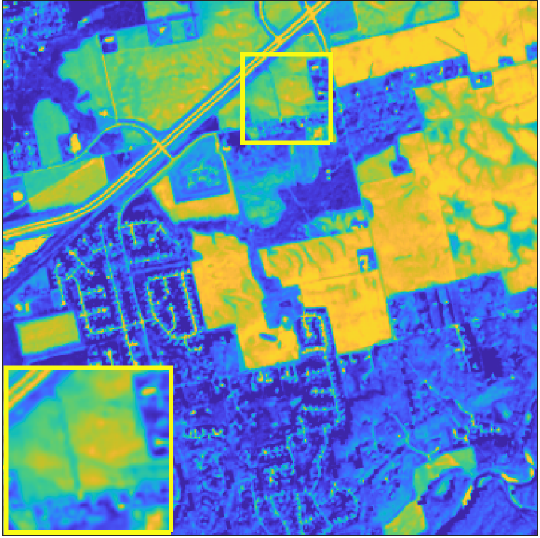}&
\includegraphics[width=0.15\textwidth]{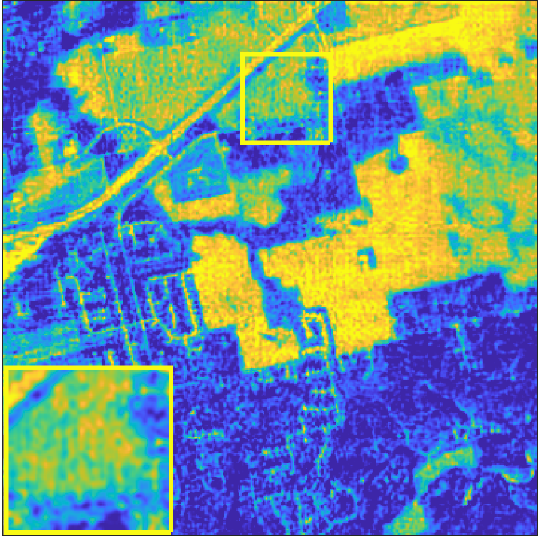}\\
(a) \texttt{CNMF} & (b) \texttt{FUSE} & (c) \texttt{BSCOTT} & (d) \texttt{BSTEREO} & (e) \texttt{BSCLL1} & (f) \texttt{CBSTAR} \\
\end{tabular}
\end{minipage}
\begin{minipage}[b]{1\linewidth}
\centering
\begin{tabular}{ccccccc}
\includegraphics[width=0.15\textwidth]{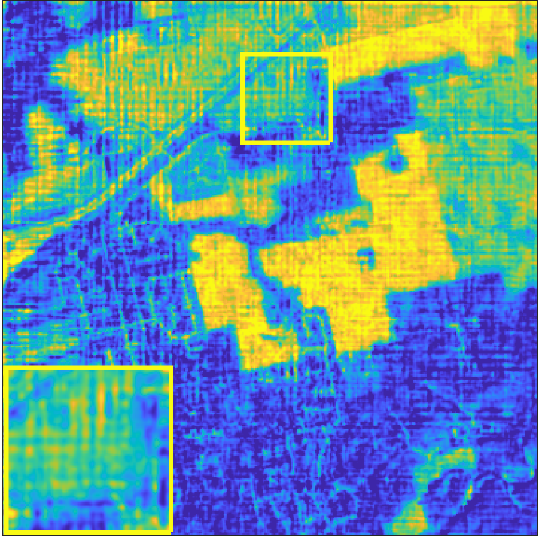}&
\includegraphics[width=0.15\textwidth]{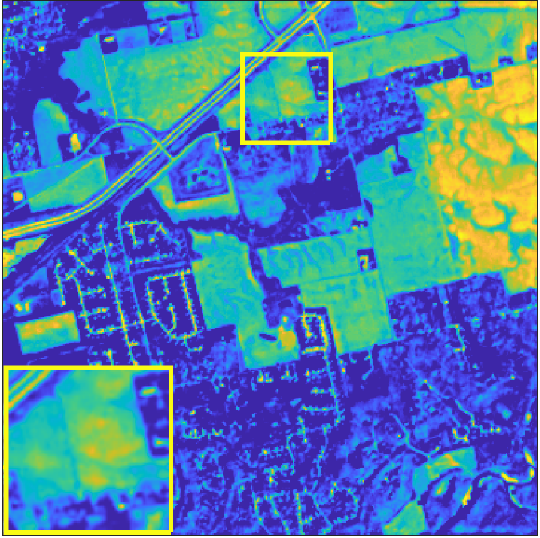}&
\includegraphics[width=0.15\textwidth]{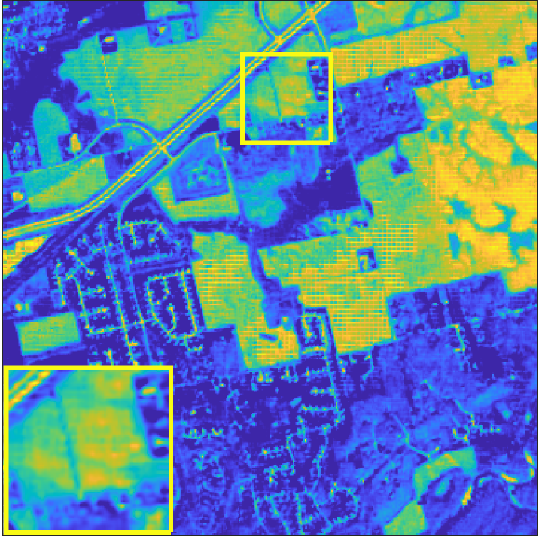}&
\includegraphics[width=0.15\textwidth]{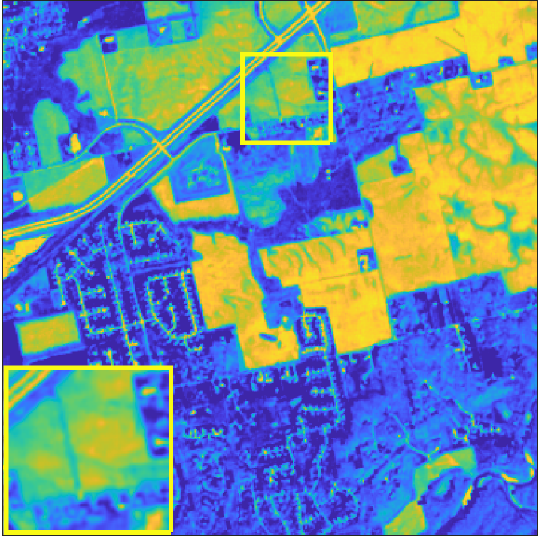}&
\includegraphics[width=0.178\textwidth]{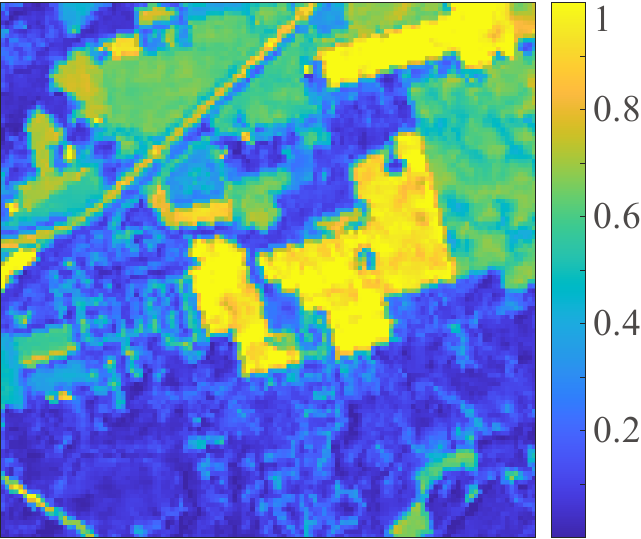}\\
(g) \texttt{BTDvar} & (h) \texttt{NPTSR} & (i) \texttt{NLSTF} & (j) \texttt{BCLIMB} & (k) HSI\\
\end{tabular}
\end{minipage}
\caption{The recovered SRIs of the real-data experiment.}
  \label{fig:Real}
  %\vspace{-4mm}
\end{figure*}

\subsection{Real Data Experiment}
To further demonstrate the effectiveness of the proposed method, we evaluate all algorithms on a real HSI-MSI fusion task \cite{Yang2018HSR}. The MSI data is collected by the Sentinel-2A satellite, which contains 13 spectral bands. Following the setup in \cite{Wu2019Hyperspectral}, we select 4 bands, corresponding to central wavelengths of 490\,nm, 560\,nm, 665\,nm, and 842\,nm, with $360\times 360$ spatial resolution.
The HSI data is acquired by the Hyperion sensor, originally with 220 bands. After removing severely corrupted bands, a subscene of size $120\times 120\times 89$ is used in this experiment.

Note that in the real scenario, both spatial and spectral degradation operators are unknown. For \texttt{CBSTAR}, \texttt{BTDvar}, \texttt{NPTSR}, and \texttt{NLSTF}, we estimate the spatial degradation matrices $\bm{P}_1$, $\bm{P}_2$ and the spectral degradation matrix $\bm{P}_{\rm M}$ from the observed HSI and MSI using the method in \cite{Dian2023Zero}. 
For the matrix-based baselines \texttt{CNMF} and \texttt{FUSE}, we apply their built-in procedures to estimate the degradations. For the tensor-based baselines \texttt{SCOTT}, \texttt{STEREO}, and \texttt{SCLL1}, we adopt their blind versions, i.e., \texttt{BSCOTT}, \texttt{BSTEREO}, and \texttt{BSCLL1}. The rank parameter is set to $R=4$ based on visual inspection.

\begin{table}[!t]
\renewcommand\arraystretch{1}
\setlength{\tabcolsep}{3pt}%
\renewcommand\arraystretch{1.1}%
  \centering
  \caption{Performance for Washington DC and Pavia University datasets under different regularization terms with SNR=35dB.}
  %\vspace{-0.2cm}
  \resizebox{0.65\linewidth}{!}{
  \begin{tabular}{c|c|c|c|c|c}\hline

    \hline
   Datasets & Metrics & \multicolumn{1}{c|}{no priors} 
   & \multicolumn{1}{c|}{ {\makecell[c]{only spectral \\ smoothness}} } 
   & \multicolumn{1}{c|}{ {\makecell[c]{only spatial \\ smoothness}} } 
   & \multicolumn{1}{c}{\texttt{CLIMB}} \\
   \hline
   \multirow{6}{*}{ Washington DC } 
   & RSNR  & 29.01  & 29.57   & 29.81  & 30.37 \\
   & SSIM  & 0.9904 & 0.9914  & 0.9917 & 0.9921 \\
   & CC    & 0.9714 & 0.9727  & 0.9678 & 0.9680 \\
   & ERGAS & 0.8377 & 0.7996  & 1.0120 & 0.7011 \\
   & RMSE  & 0.0072 & 0.0068  & 0.0066 & 0.0062 \\
   & SAM   & 0.0384 & 0.0358  & 0.0349 & 0.0326 \\
   \hline
   \multirow{6}{*}{ Pavia University } 
   & RSNR  & 24.81  & 25.21   & 27.78  & 28.19 \\
   & SSIM  & 0.9492 & 0.9537  & 0.9756 & 0.9773 \\
   & CC    & 0.9909 & 0.9916  & 0.9952 & 0.9957 \\
   & ERGAS & 0.3113 & 0.3000  & 0.2262 & 0.2152 \\
   & RMSE  & 0.0129 & 0.0124  & 0.0092 & 0.0088 \\
   & SAM   & 0.0690 & 0.0625  & 0.0465 & 0.0448 \\
    \hline
  \end{tabular}}%
  \label{table:regu}%
\end{table}%

\begin{figure}[!t]
\scriptsize\setlength{\tabcolsep}{0.3pt}
\begin{center}
\begin{tabular}{cccc}
\includegraphics[width=0.245\textwidth]{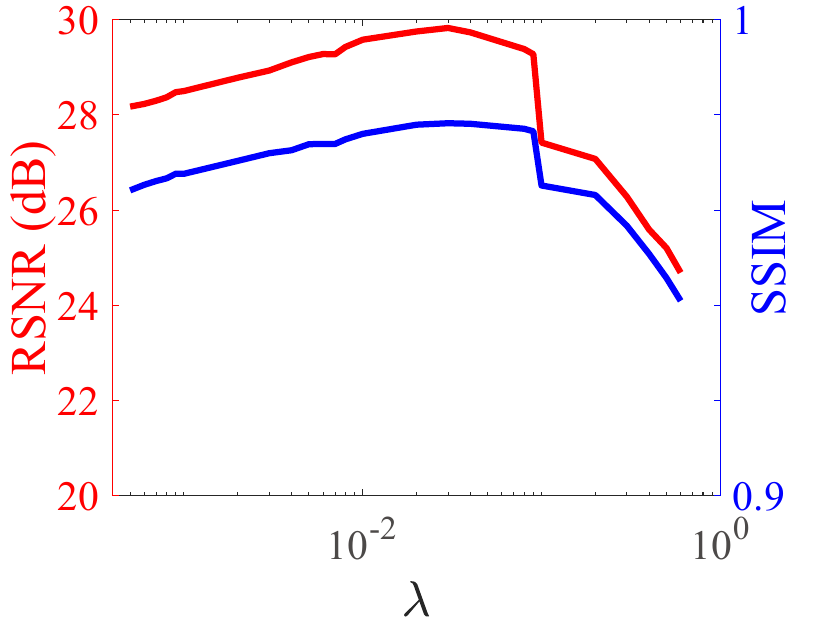}&
\includegraphics[width=0.245\textwidth]{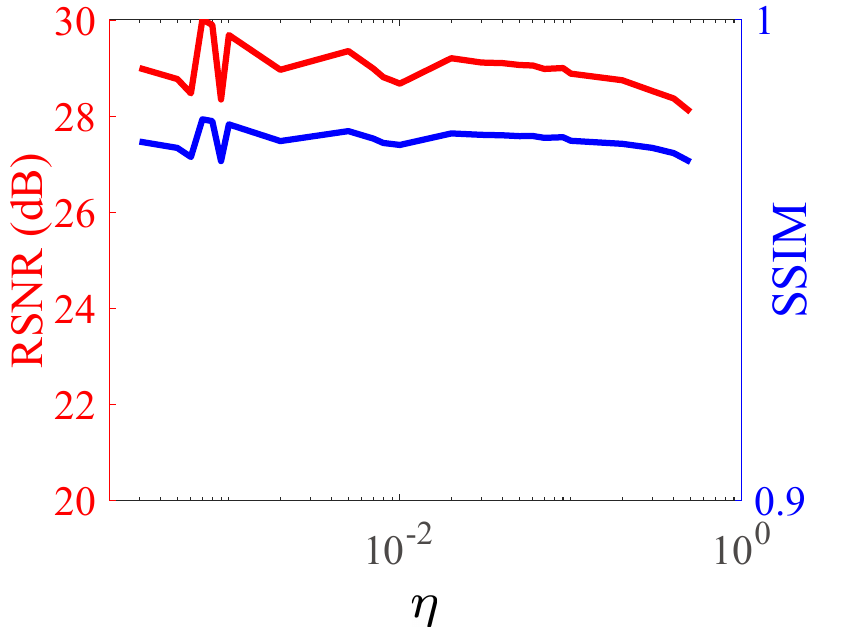}&
\includegraphics[width=0.245\textwidth]{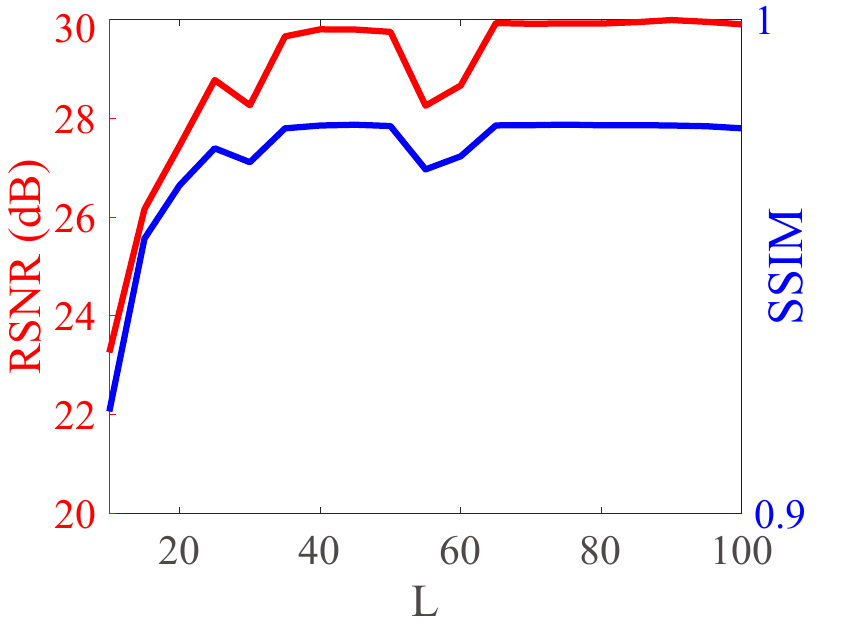}&
\includegraphics[width=0.245\textwidth]{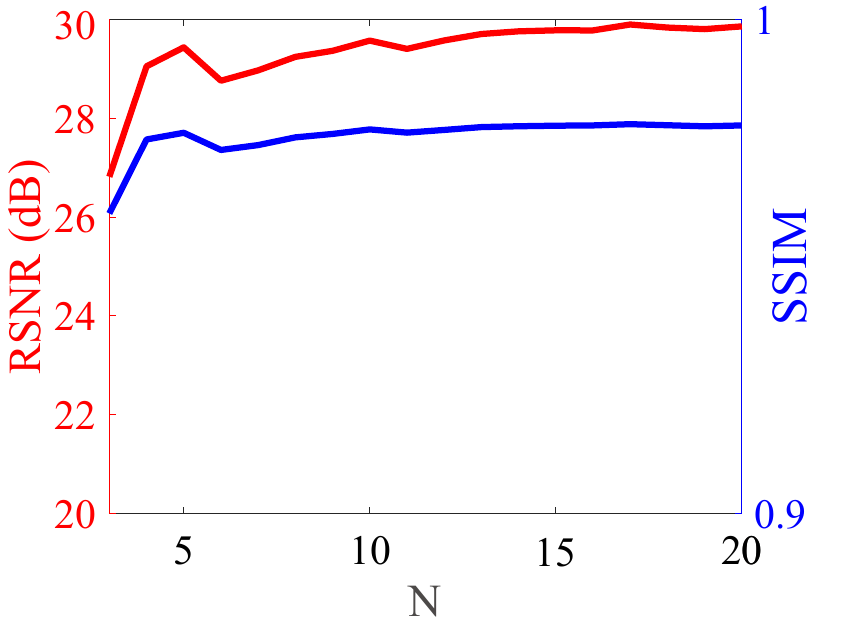}\\
 (a) $\lambda$ & (b) $\eta$ & (c) L & (d) N\\
\end{tabular}
\caption{RSNR (dB) and SSIM values under different parameters on Jasper Ridge dataset. When changing one parameter, the others are fixed to the ``optimal values'' as reported in the figures.}
\label{fig:para}
\end{center}
\end{figure}

Fig.~\ref{fig:Real} shows the reconstructed SRI images produced by all methods. All approaches enhance spatial resolution and yield improved visual quality compared to the input HSI (see subfigure~(k)). 
However, \texttt{FUSE}, \texttt{CBSTAR}, \texttt{BTDvar}, and \texttt{NLSTF} introduce noticeable artifacts such as glitches and striping. While \texttt{CNMF} preserves sharp edges, it appears to miss certain objects and fine details, particularly in the lower-right region. The baselines \texttt{BSCOTT}, \texttt{BSTEREO}, \texttt{BSCLL1}, and \texttt{NPTSR} perform reasonably well in recovering the SRI. 
In comparison, the proposed \texttt{BCLIMB} better preserves both texture details and spatial smoothness, as evident in the highlighted regions.

\subsection{Discussions}

\subsubsection{Effect of Regularization Terms}
We evaluate the contribution of each regularization term in the proposed model. Table~\ref{table:regu} compares the reconstruction performance using spectral smoothness only, spatial smoothness only, and both (i.e., the full \texttt{CLIMB}) on the semi-real Washington DC and Pavia University datasets under $\mathrm{SNR}=35$~dB.

For the Washington DC dataset, even without regularization, the proposed method outperforms the baselines, highlighting its strength in modeling endmember variability. Adding either spectral or spatial smoothness further improves performance, confirming the value of each prior. The best results are achieved when both priors are applied, supporting the design choice of integrating them into the LMN-based HSR framework.

\subsubsection{Effect of Parameter Selection}
\label{sec:para}

Fig.~\ref{fig:para} shows the sensitivity of the proposed algorithm to key parameters, including the regularization weights $\lambda$ and $\eta$, and the rank parameters $L$ (or $M$) and $N$. The figure reports RSNR and SSIM values on the Jasper Ridge dataset with $\mathrm{SNR}=35$~dB under different parameter settings. The results demonstrate that \texttt{CLIMB} maintains strong performance over a wide range of values, indicating its robustness to parameter selection.

\section{Conclusion and Future Works}
\label{sec:Conclusion}

In this work, we revisited CTD-based hyperspectral super-resolution with a particular emphasis on modeling endmember variability. We introduced a tensor formulation based on the LMN decomposition, which generalizes the CPD, Tucker, and LL1 models as special cases. This formulation accommodates EV while preserving physical interpretability.  
We established that, similar to existing CTD approaches, the super-resolution image under the LMN model can be exactly recovered under reasonable conditions. Moreover, we proposed a CTD algorithm based on our model by incorporating prior knowledge of the LMN latent factors---specifically, spatial and spectral smoothness---via nonconvex total variation and Tikhonov regularizations.  
Extensive experiments on semi-real and real datasets, both of which exhibit nontrivial EV, validated the proposed method and consistently demonstrated superior reconstruction quality compared to state-of-the-art baselines. These results confirm the LMN framework as a flexible and interpretable tool for advancing hyperspectral super-resolution under endmember variability. 

Some directions are worth exploring as future works.
First, the LMN model requires specification of more parameters compared to other models such as CPD and LL1 (e.g., the multi-linear rank of each $\underline{\bm T}_r$). Although the model is not very sensitive to these parameters, having a mechanism to automatically determine these parameters would be more ideal. This may be done with Bayesian tensor modeling techniques; see, e.g., \cite{Cheng2023Bayesian}.
Second, the basic assumption of our method (and most of the CTD methods) is that the HSI and MSI are spatially co-registered---i.e., the images cover exactly the same spatial region with the same underlying coordinate system.
In practice, this assumption is nontrivial to satisfy and may take a lot of visual/manual tuning, especially when the acquired raw HSI and MSI are spatially misaligned \cite{Zhou2020Integrated,Qu2022Unsupervised}.
Therefore, it would be meaningful to handle the unregistered HSR problem with recoverability and interpretability.

\section*{Acknowledgement}
The authors would like to thank Dr. Yang Xu and Dr. Wei Wan for providing their source code.

\bibliographystyle{siam}
\bibliography{ref}

\newpage

{\centering{Supplemental Materials for \\
    \emph{``Rethinking Coupled Tensor Analysis for Hyperspectral Super-Resolution: Recoverable Modeling Under Endmember Variability''}\\
    \emph{Meng Ding and Xiao Fu\\[10pt]}}}

\appendices

\section{Low Multilinear Rank of Materials of Spectral Images}
\label{app:low-rankness}
To verify the low multilinear rank of $\underline{\bm T}_r$, in Figs. \ref{fig:Ridge_LR} and \ref{fig:WDC_LR}, we show the singular values of the mode-$n$ (where $n=1,2,3$) unfoldings of $\underline{\bm T}_r$ extracted from Jasper Ridge and Washington DC. 
The terms $\underline{\bm{T}}_r$'s are obtained using the nonnegative BTD implementation in Tensorlab \cite{Vervliet2016Tensorlab}. 
Note that if $\underline{\bm{T}}_r$ admits low multi-linear rank, then its mode-$n$ unfoldings all should exhibit low matrix rank.

\begin{figure*}[!ht]
\scriptsize\setlength{\tabcolsep}{0.3pt}
\begin{center}
\begin{tabular}{ccccc}
\includegraphics[width=0.22\textwidth]{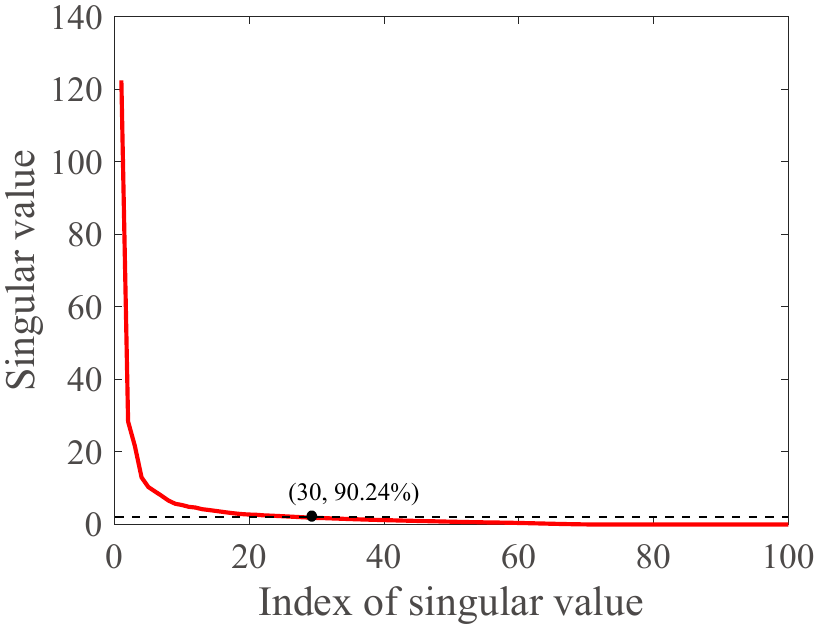}&
\includegraphics[width=0.22\textwidth]{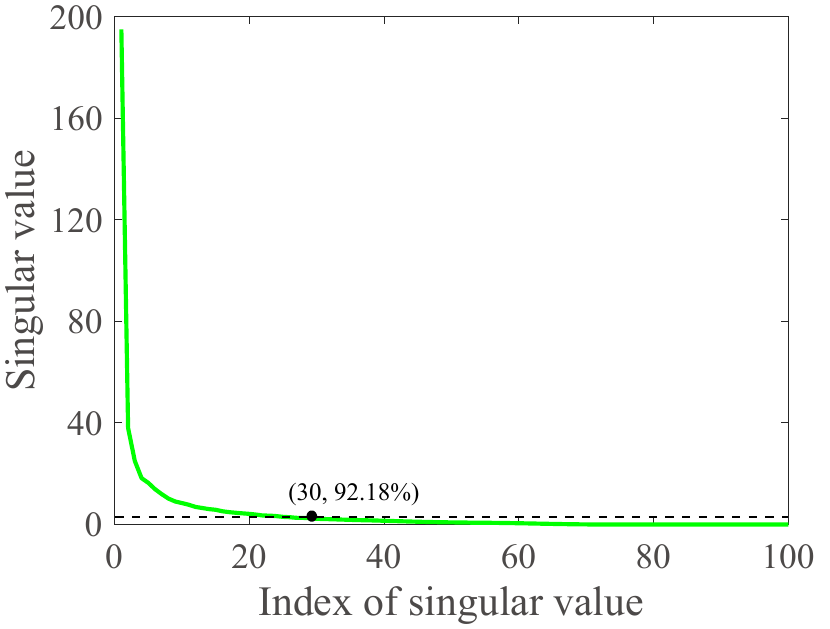}&
\includegraphics[width=0.22\textwidth]{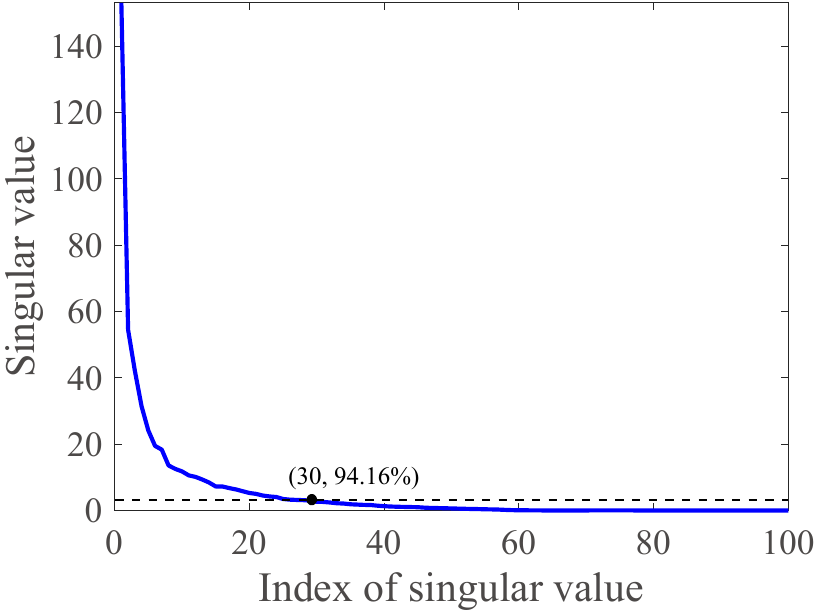}&
\includegraphics[width=0.22\textwidth]{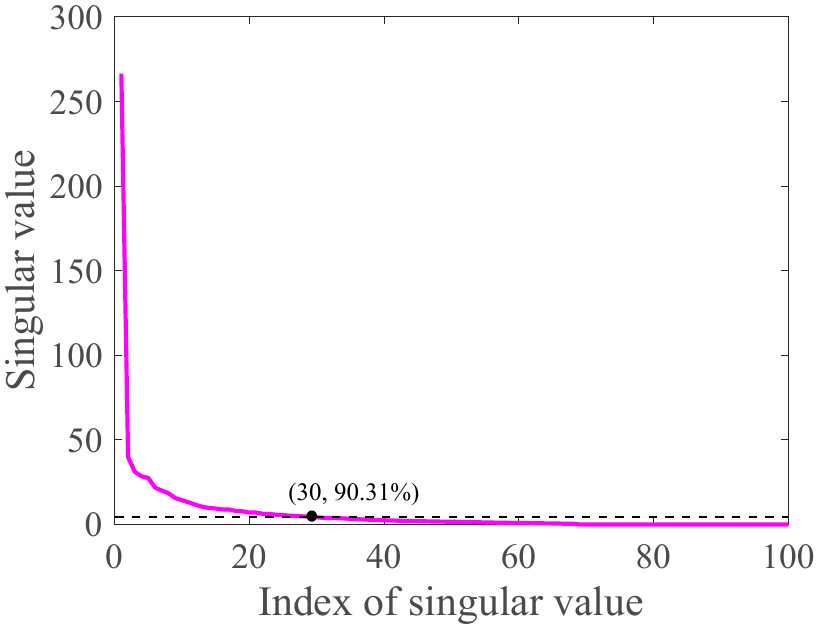}\\
\includegraphics[width=0.22\textwidth]{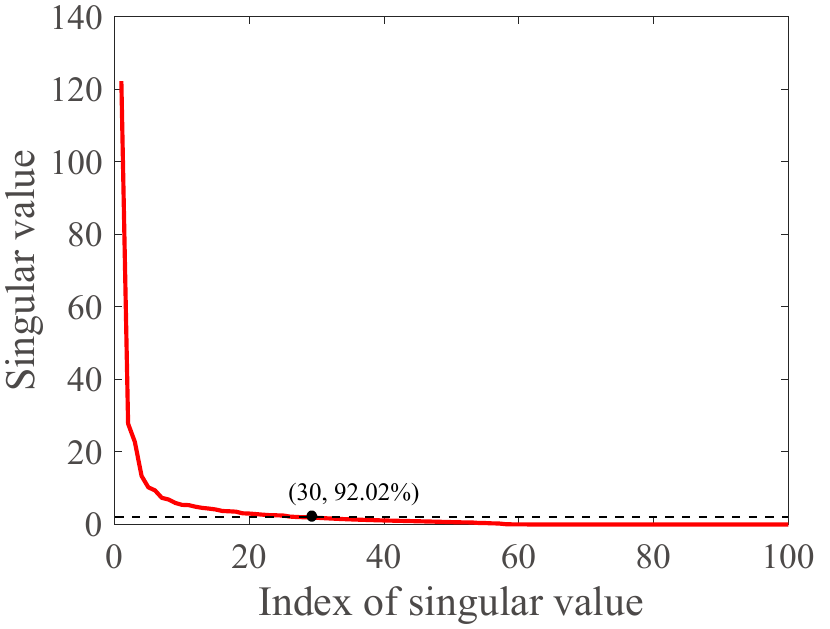}&
\includegraphics[width=0.22\textwidth]{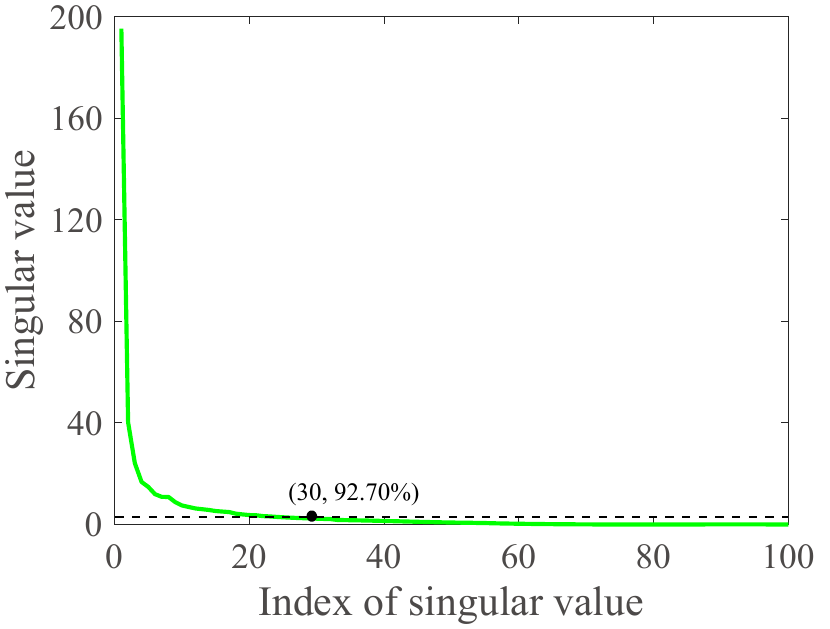}&
\includegraphics[width=0.22\textwidth]{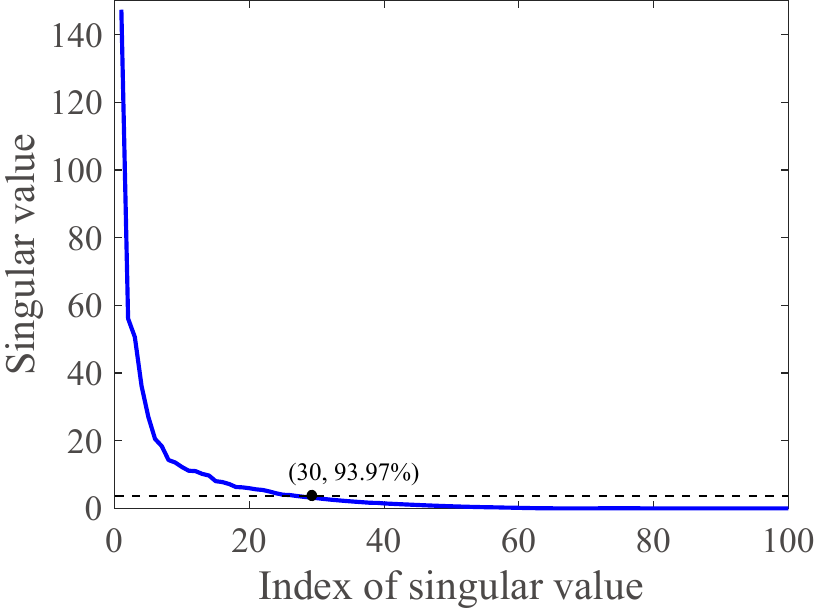}&
\includegraphics[width=0.22\textwidth]{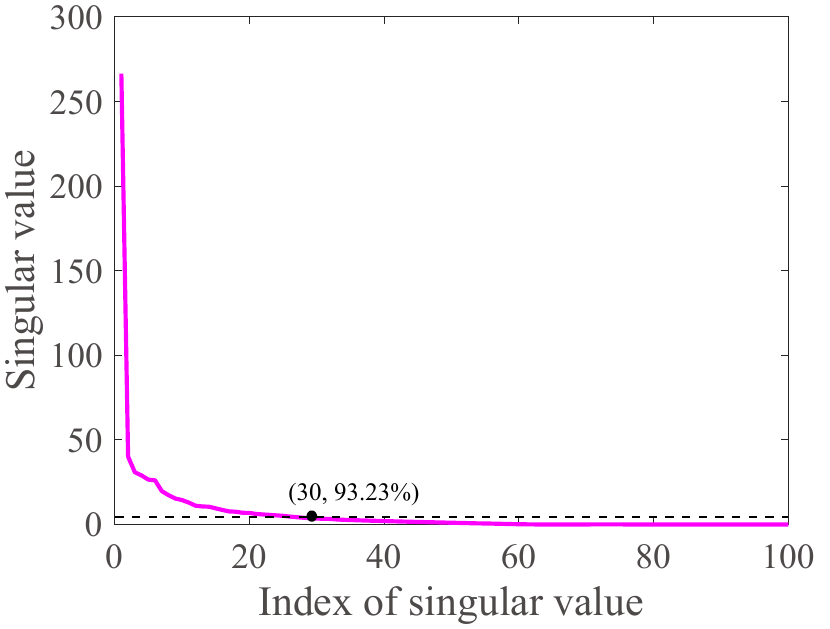}\\
\includegraphics[width=0.22\textwidth]{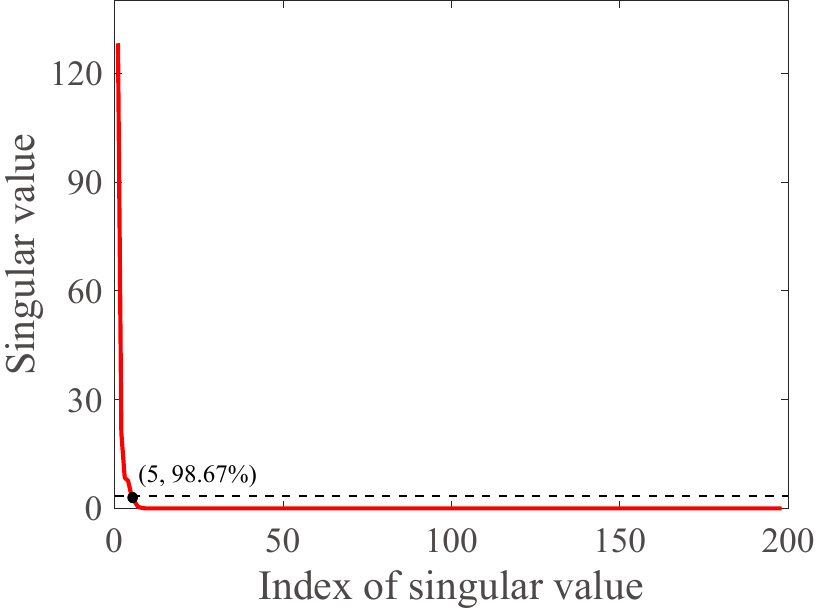}&
\includegraphics[width=0.22\textwidth]{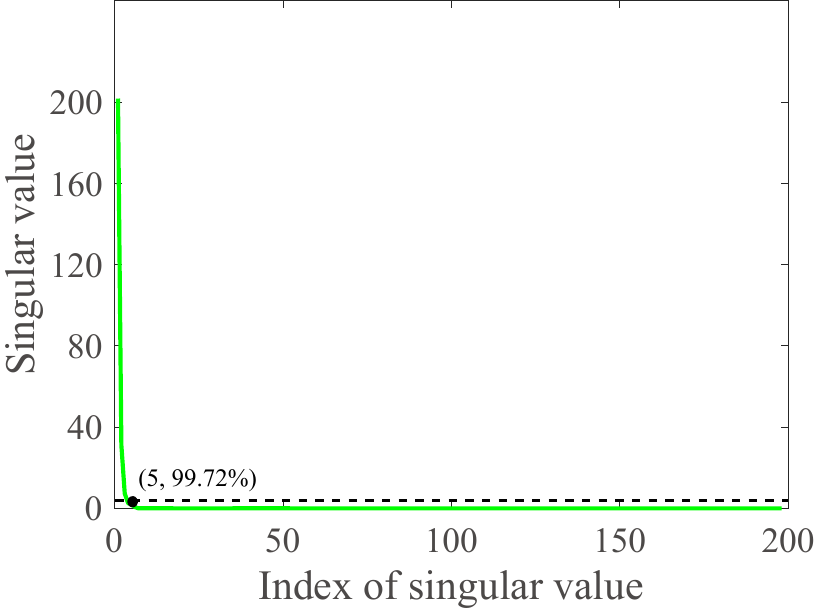}&
\includegraphics[width=0.22\textwidth]{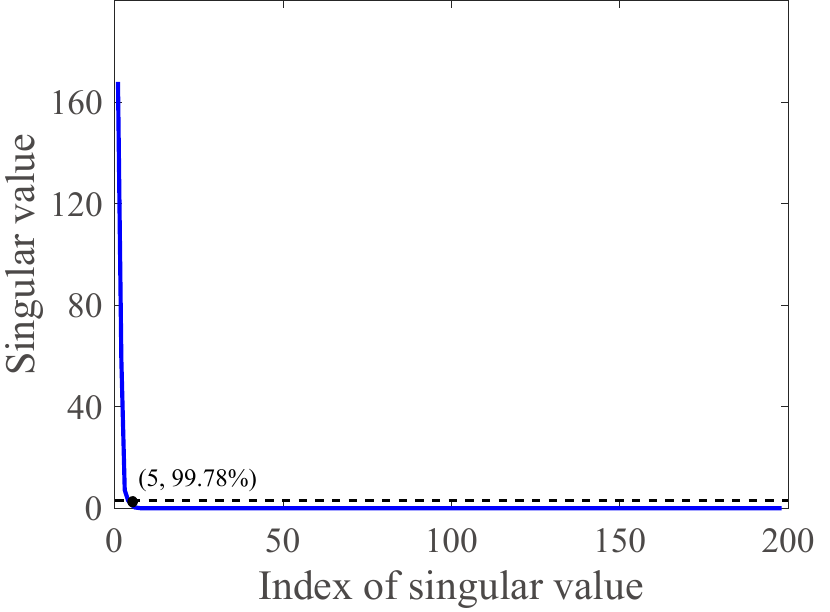}&
\includegraphics[width=0.22\textwidth]{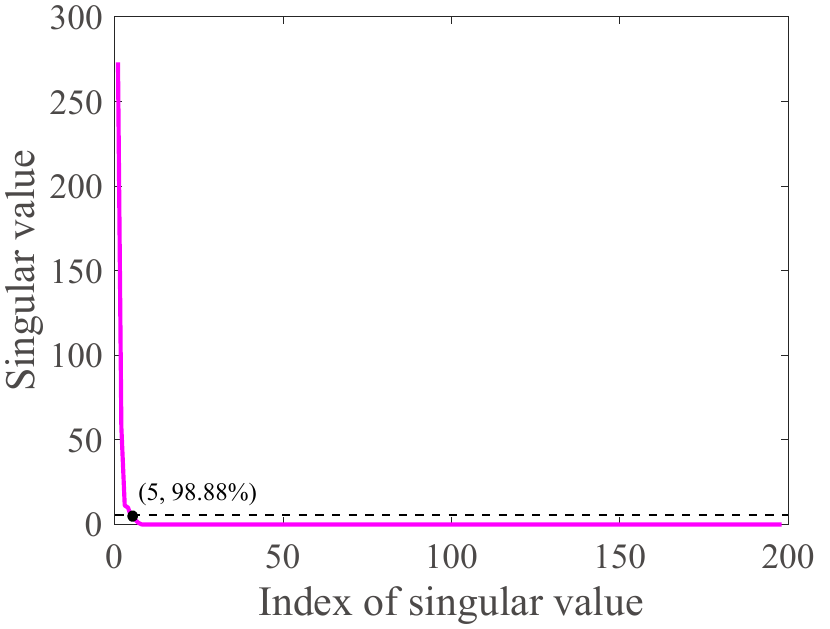}\\
(a)~~$\underline{\bm{T}}_1$ &  (b)~~$\underline{\bm{T}}_2$ & (c)~~$\underline{\bm{T}}_3$ &  (d)~~$\underline{\bm{T}}_4$\\
\end{tabular}
\caption{The singular values of unfolding matrices (top to bottom: mode-1, mode-2, and mode-3, respectively) of each $\underline{\bm{T}}_r$ for Jasper Ridge dataset. } 
  \label{fig:Ridge_LR}
  %\vspace{-3mm}
\end{center}
\end{figure*}

From Figs. \ref{fig:Ridge_LR} and \ref{fig:WDC_LR}, one can see that for Jasper Ridge, the first 30 principal components (the first 30 left, right singular vectors and singular values) of the mode-1 unfolding matrices of $\underline{\bm{T}}_r$s contain more than $90\%$ of its energy.
For mode-2 and 3 unfolding matrices of $\underline{\bm{T}}_r$s, the first 30 and 5 principal components contain more than $90\%$ and $98\%$ of the corresponding energy.
For Washington DC, the first 40 (40, 4) principal components of the mode-1 (2, 3) unfolding matrices of $\underline{\bm{T}}_r$s contain about $90\%$ ($90\%$, $98\%$) of its energy.

\begin{figure*}[!t]
\scriptsize\setlength{\tabcolsep}{0.3pt}
\begin{center}
\begin{tabular}{ccccc}
\includegraphics[width=0.22\textwidth]{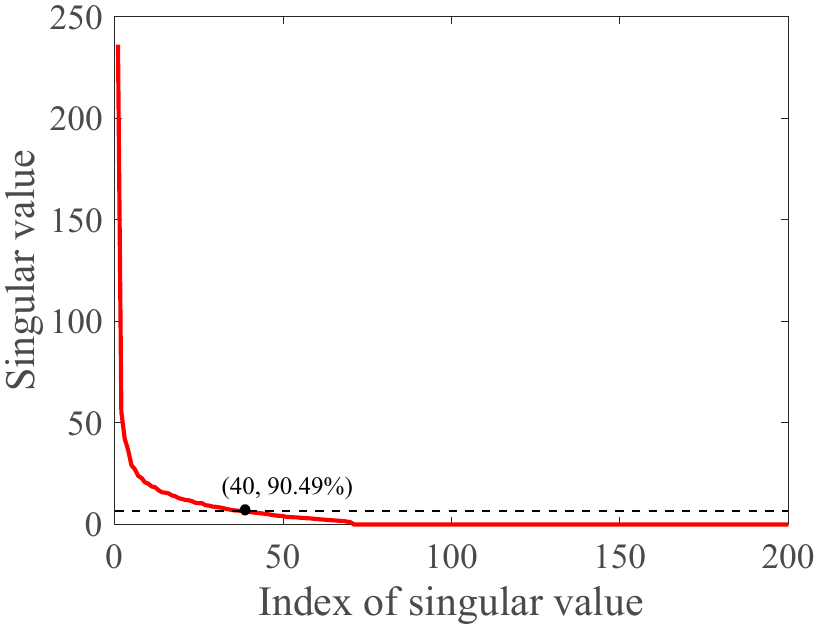}&
\includegraphics[width=0.22\textwidth]{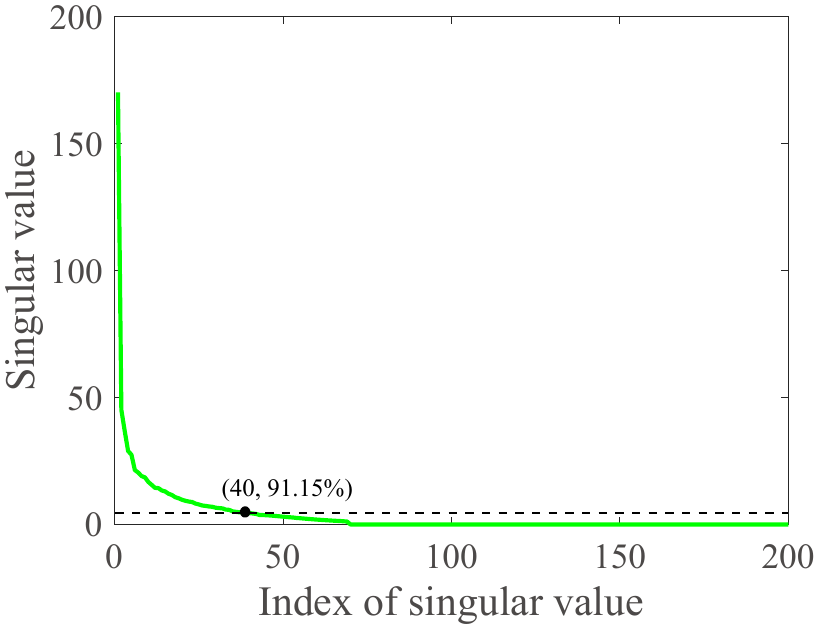}&
\includegraphics[width=0.22\textwidth]{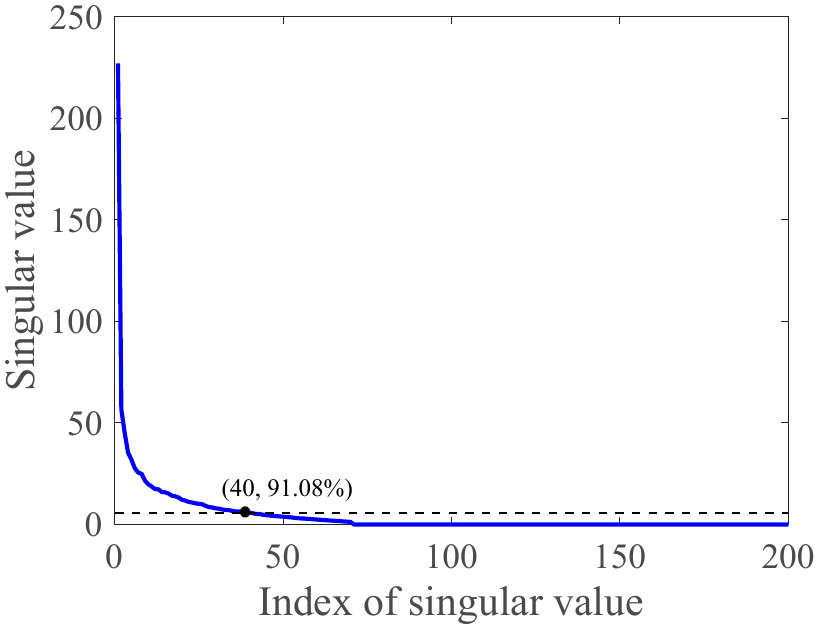}&
\includegraphics[width=0.22\textwidth]{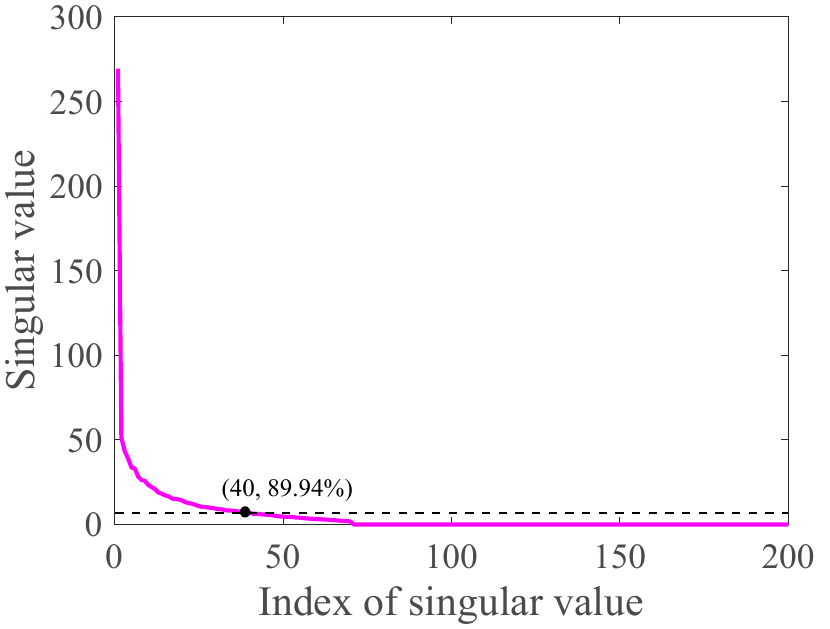}\\
\includegraphics[width=0.22\textwidth]{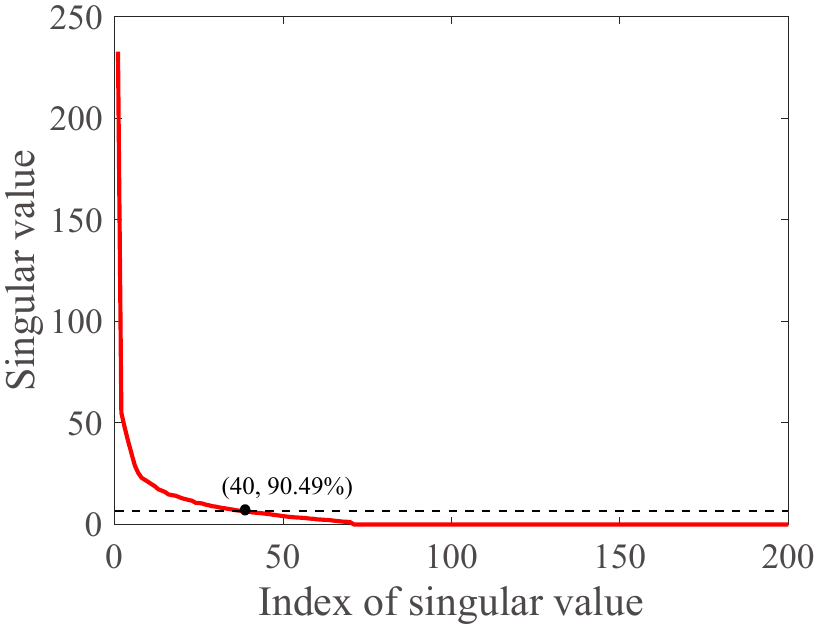}&
\includegraphics[width=0.22\textwidth]{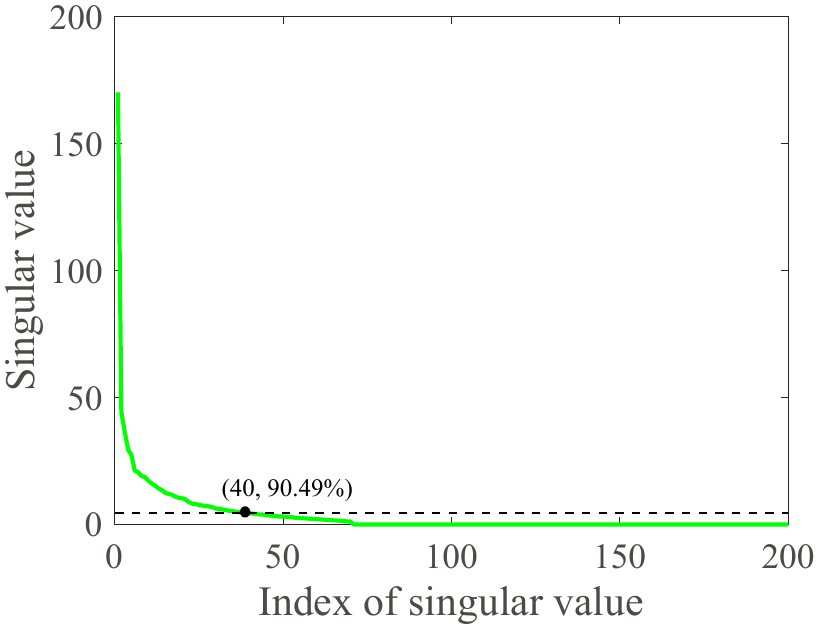}&
\includegraphics[width=0.22\textwidth]{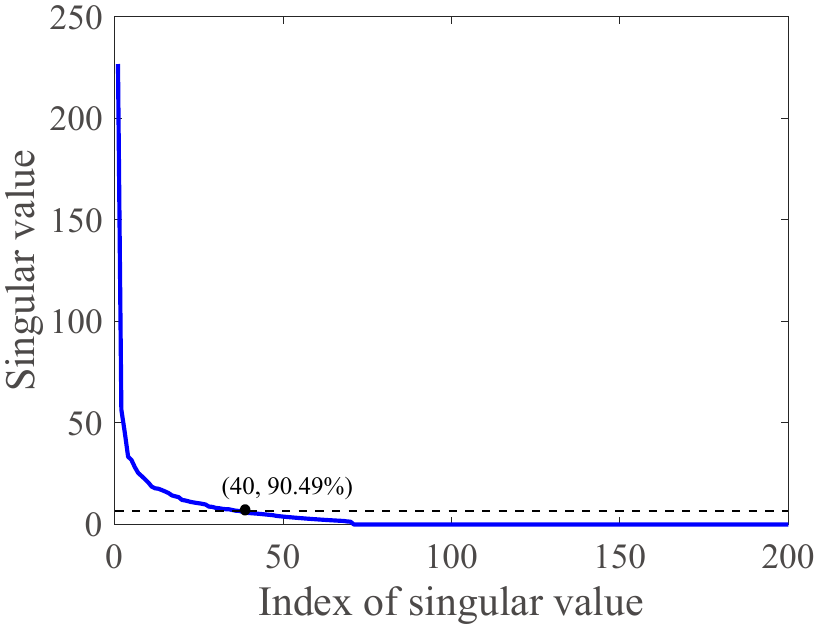}&
\includegraphics[width=0.22\textwidth]{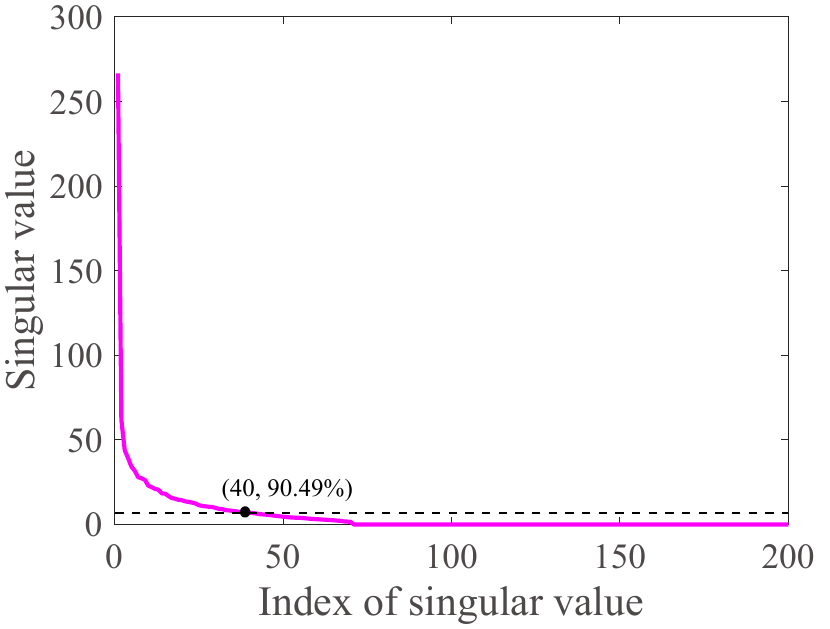}\\
\includegraphics[width=0.22\textwidth]{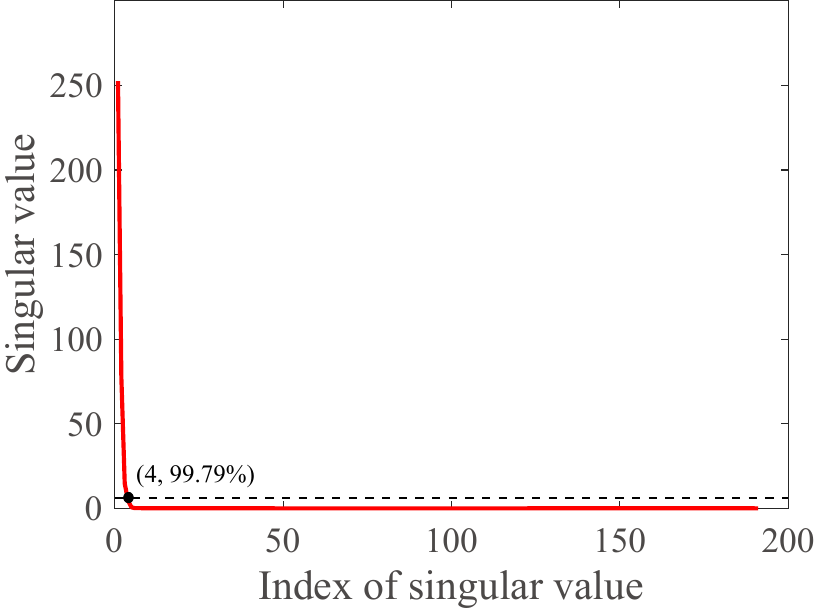}&
\includegraphics[width=0.22\textwidth]{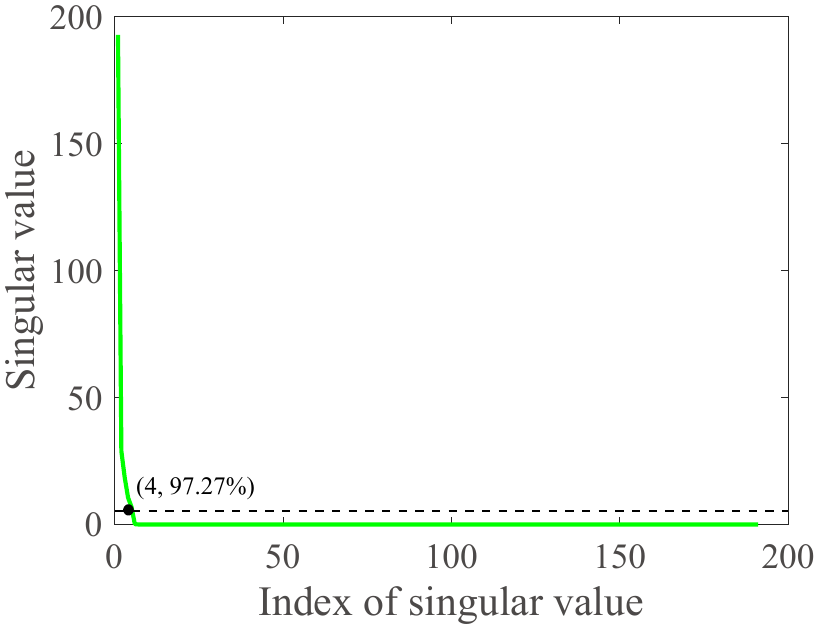}&
\includegraphics[width=0.22\textwidth]{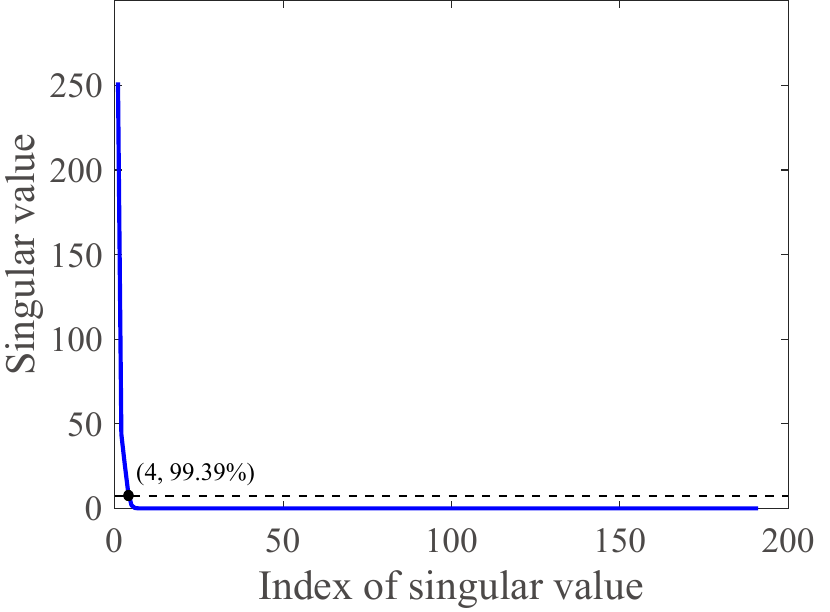}&
\includegraphics[width=0.22\textwidth]{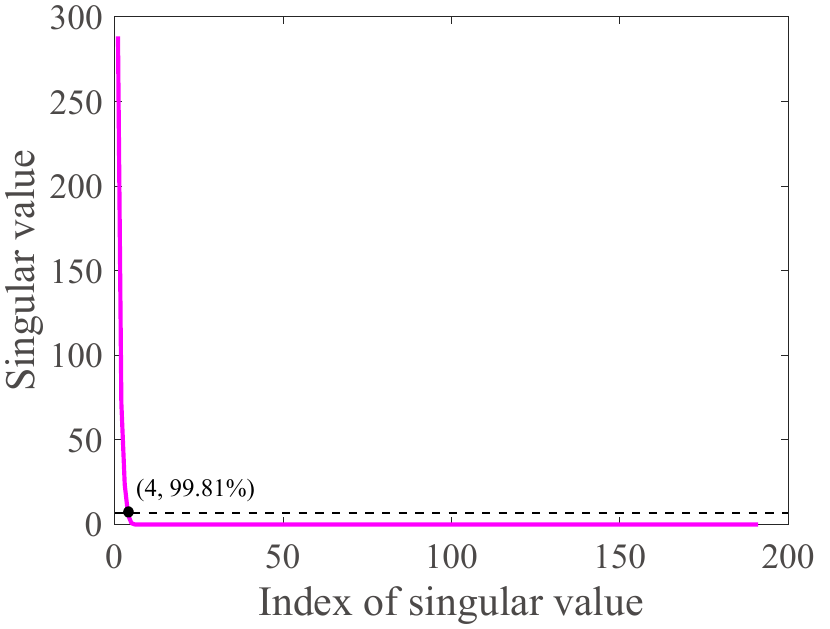}\\
(a)~~$\underline{\bm{T}}_1$ &  (b)~~$\underline{\bm{T}}_2$ & (c)~~$\underline{\bm{T}}_3$ &  (d)~~$\underline{\bm{T}}_4$\\
\end{tabular}
\caption{The singular values of unfolding matrices (top to bottom: mode-1, mode-2, and mode-3, respectively) of each $\underline{\bm{T}}_r$ for Washington DC dataset. } 
  \label{fig:WDC_LR}
  %\vspace{-3mm}
\end{center}
\end{figure*}

\section{Proof of Theorem \ref{the:identifiability_LMN}}
\label{app:identifiability_LMN}

To prove the theorem, we first invoke the following lemma:
\begin{lemma}[Corollary 2.6 \cite{Domanov2024BTD}]
\label{lemma:identifiability_LM}
Let $\underline{\bm{Y}}\in \mathbb{R}^{I\times J \times K}$ be a sum of $R$ rank-$(L,M,\cdot)$ terms, i.e.,
$\underline{\bm{Y}} = \sum_{r=1}^R\underline{\bm{D}}_r\times_1 \bm{A}_r\times_2 \bm{B}_r$ with 
$\bm{A}_{r}\in \mathbb{R}^{I \times L}, \bm{B}_{r}\in \mathbb{R}^{J \times M}, \underline{\bm{D}}_r\in \mathbb{R}^{L\times M \times K}, ~r=1,\ldots, R$. Assume that $\bm{A}_{r}$, $\bm{B}_{r}$, and $\underline{\bm{D}}_r$ are drawn from any absolutely continuous distributions. Then, the rank-$(L,M,\cdot)$ decomposition of $\underline{\bm Y}$ is essentially unique almost surely, if 
\begin{align}\label{eq:lmdot}
 I\geq LR, ~J\geq MR, ~K \geq \max\{\lceil L/M \rceil + \lceil M/L \rceil, 3\},~ 1\leq r\leq R.\nonumber
\end{align}
\end{lemma}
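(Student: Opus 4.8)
Because the statement is quoted verbatim as Corollary~2.6 of \cite{Domanov2024BTD}, the cleanest route at the level of the present paper is simply to invoke that corollary; no independent argument is needed. For the sake of a self-contained sketch, however, I describe the strategy one would follow to establish generic essential uniqueness of the partial rank-$(L,M,\cdot)$ decomposition. The plan is to first compress modes~1 and~2 to a square problem, and then reduce identifiability to the generic rigidity of a simultaneous block-diagonal structure.

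First I would stack the factors as $\bm{A}=[\bm{A}_1,\ldots,\bm{A}_R]\in\mathbb{R}^{I\times LR}$ and $\bm{B}=[\bm{B}_1,\ldots,\bm{B}_R]\in\mathbb{R}^{J\times MR}$. Under $I\geq LR$ and $J\geq MR$, genericity makes $\bm{A}$ and $\bm{B}$ full column rank, and the standard BTD compression step gives $\mathrm{range}(\bm{Y}_{(1)})=\mathrm{range}(\bm{A})$ and $\mathrm{range}(\bm{Y}_{(2)})=\mathrm{range}(\bm{B})$ (with $\bm{Y}_{(n)}$ the mode-$n$ unfolding of $\underline{\bm{Y}}$); note $JK\geq MR\lceil L/M\rceil\geq LR$, so the mode-1 unfolding does have rank exactly $LR$. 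Projecting onto orthonormal bases $\bm{U}$ and $\bm{V}$ of these ranges yields an equivalent decomposition of the compressed tensor $\hat{\underline{\bm{Y}}}=\underline{\bm{Y}}\times_1\bm{U}^{\top}\times_2\bm{V}^{\top}\in\mathbb{R}^{LR\times MR\times K}$ whose aggregated mode-1 and mode-2 factors $\tilde{\bm{A}},\tilde{\bm{B}}$ are square and nonsingular, and identifiability of the original decomposition is equivalent to that of the reduced one. Each frontal slice of $\hat{\underline{\bm{Y}}}$ then reads $\hat{\bm{Y}}^{(k)}=\tilde{\bm{A}}\,\mathrm{blkdiag}(\bm{D}_1^{(k)},\ldots,\bm{D}_R^{(k)})\,\tilde{\bm{B}}^{\top}$, so the task becomes: given the $K$ slices, recover the common block-diagonalizing pair $(\tilde{\bm{A}},\tilde{\bm{B}})$ up to block permutation and the per-block nonsingular ambiguities $\bm{\Theta}_{a,r},\bm{\Theta}_{b,r}$ that define essential uniqueness for this model.

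The decisive step---and the place where the bound on $K$ is consumed---is proving that this simultaneous block-diagonal structure is generically rigid. The intuition is classical: for $L=M$, a matrix-pencil/generalized-eigenvalue analysis of the slices already hints at how the blocks separate, and the ceiling terms $\lceil L/M\rceil+\lceil M/L\rceil$ quantify how many slices are needed to extend such separation to rectangular blocks, guaranteeing that the tuple $\{\bm{D}_r^{(k)}\}_{k=1}^K$ of generic $L\times M$ blocks is rich enough to pin down each block's row and column spaces. I expect this last step to be the main obstacle: it is a statement about the generic fiber of the BTD parameterization map, and a rigorous treatment of the exact, rather tight, bound on $K$ for $L\neq M$ requires the compound-matrix and generic-rank machinery developed in \cite{Domanov2024BTD} rather than an elementary pencil computation. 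Since the result is precisely Corollary~2.6 there, I would cite it directly rather than reproduce that analysis.
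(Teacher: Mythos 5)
Your proposal matches the paper exactly: the paper provides no internal proof of this lemma, importing it verbatim as Corollary~2.6 of \cite{Domanov2024BTD}, precisely as you do, and then uses it as the starting point for the LMN uniqueness argument. Your supplementary sketch (compression of modes 1 and 2 to a square problem followed by generic rigidity of the simultaneous block-diagonal structure, with the bound on $K$ absorbed by the compound-matrix machinery of the cited work) is a reasonable outline of the underlying argument, but citing the corollary is all the paper itself does or requires.
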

Note that any LMN representation has a corresponding $(L,M,\cdot)$ representation. 
Indeed, assume that $\tY = \sum_{r=1}^R \tD_r \times_1 \A_r \times_2 \B_r \times_3 \C_r$. 
One can let
\[          {\tD}_r' = \tD_r\times_3\C_r, \]
which leads to 
\begin{align}\label{eq:lmn_lmdot}
     \tY = \sum_{r=1}^R \tD_r \times_1 \A_r \times_2 \B_r \times_3 \C_r = \sum_{r=1}^R {\tD}_r' \times_1 \A_r \times_2 \B_r. 
\end{align}
Next, let us consider that the LMN tensor on the LHS of \eqref{eq:lmn_lmdot} has dimensions satisfying \eqref{eq:lmdot}.
Then, the corresponding $(L,M,\cdot)$ representation on the RHS in \eqref{eq:lmn_lmdot} is essentially unique. Next, we follow the idea in \cite[Theorem 5.1]{Lathauwer2008BTD2} to show that the LMN decomposition on the LHS is also essentially unique.

By the essential uniqueness in Lemma~\ref{lemma:identifiability_LM}, we have that, for any $\bar{\A}_r\in \mathbb{R}^{I\times L}$ and $\bar{\B}_r\in\mathbb{R}^{J\times M}$ that satisfy $\underline{\bm{Y}} = \sum_{r=1}^R\bar{\tD}_r'\times_1 
\bar{\bm{A}}_r\times_2 \bar{\bm{B}}_r$, the following holds:
\begin{align}\label{eq:YAB}
\bar{\bm{A}}_{\pi(r)}=\bm{A}_r\bm \Theta_{a,r}, ~\bar{\bm{B}}_{\pi(r)}=\bm{B}_r\bm \Theta_{b,r}, 
\end{align}
where $\bm \Theta_{a,r}$ and $\bm \Theta_{b,r}$ are nonsingular matrices. Let $\bar{\tD}_r' = \bar{\tD}_r\times_3\bar{\C}_r$ and ${\tD}_r' = \tD_r\times_3\C_r$, we have
\begin{align}
     \underline{\bar{\bm{D}}}'_{\pi(r)}\times_1 \bm \Theta_{a,r}\times_2 \bm \Theta_{b,r}=\underline{\bm{D}}_r', 
\end{align}
i.e., 
\begin{align}\label{eq:A_DC}
     \underline{\bar{\bm{D}}}_{\pi(r)}\times_1 \bm \Theta_{a,r}\times_2 \bm \Theta_{b,r}\times_3 \bar{\bm{C}}_{\pi(r)}=\underline{\bm{D}}_r\times_3 \bm{C}_r.     
\end{align}

Considering the mode-3 unfolding $\bm{D}_{r}^{(3)}\in \mathbb{R}^{N\times LM}$ and $\bar{\bm{D}}_{\pi(r)}^{(3)}\in \mathbb{R}^{N\times LM}$ of tensors $\underline{\bar{\bm{D}}}_r$ and $\underline{\bar{\bm{D}}}_{\pi(r)}$, we have
\begin{align}
\bar{\bm{C}}_{\pi(r)}  \bar{\bm{D}}_{\pi(r)}^{(3)}
=  \bm{C}_r \bm{D}_r^{(3)}(\bm\Theta_{b,r}^{-1} \otimes \bm\Theta_{a,r}^{-1})^\top,\nonumber
\end{align}
where $\otimes$ denotes the Kronecker product.
Since $\underline{\bm{D}}_{r}$ and $\bm{C}_r$ are drawn from any absolutely continuous distributions, and $N\leq LM$, then $\textrm{rank}(\bm{D}_r^{(3)})=N$, and $\textrm{rank}(\bm{C}_r)=N$ is full column rank almost surely.
We also get that
\begin{align}
N &= \textrm{rank}( \bm{C}_r \bm{D}_r^{(3)}(\bm\Theta_{b,r}^{-1} \otimes \bm\Theta_{a,r}^{-1})^\top)\leq \textrm{rank}(\bar{\bm{C}}_{\pi(r)})\leq N,\nonumber\\
N &= \textrm{rank}( \bm{C}_r \bm{D}_r^{(3)}(\bm\Theta_{b,r}^{-1} \otimes \bm\Theta_{a,r}^{-1})^\top)\leq \textrm{rank}(\bar{\bm{D}}_{\pi(r)}^{(3)})\leq N,\nonumber
\end{align}
i.e., $\textrm{rank}(\bar{\bm{C}}_{\pi(r)})= N$ and $\textrm{rank}(\bar{\bm{D}}_{\pi(r)}^{(3)})= N$ almost surely. 

Let $\bm\Theta_{c,r} = \bm{D}_r^{(3)}(\bm\Theta_{b,r}^{-1} \otimes \bm\Theta_{a,r}^{-1})^\top (\bar{\bm{D}}_{\pi(r)}^{(3)})^\dagger\in \mathbb{R}^{N\times N}$, where $(\bar{\bm{D}}_{\pi(r)}^{(3)})^\dagger$ is the pseudo-inverse of $\bar{\bm{D}}_{\pi(r)}^{(3)}$, then $\bar{\bm{C}}_{\pi(r)}=\bm{C}_r\bm \Theta_{c,r}$. 
Due to $N = \textrm{rank}(\bar{\bm{C}}_{\pi(r)})\leq \textrm{rank}(\bm\Theta_{c,r})\leq N$, we have $\textrm{rank}(\bm\Theta_{c,r})= N$, i.e., $\bm\Theta_{c,r}$ is a nonsingular matrix.
We rewrite \eqref{eq:A_DC} as 
\begin{align}\label{eq:A_DC2}
     \underline{\bar{\bm{D}}}_{\pi(r)}\times_1 \bm \Theta_{a,r}\times_2 \bm \Theta_{b,r}\times_3 (\bm{C}_r\bm \Theta_{c,r})=\underline{\bm{D}}_r\times_3 \bm{C}_r.     
\end{align}
By applying mode-3 multiplication by $(\bm{C}_r^\top \bm{C}_r)^{-1}\bm{C}_r^\top$ to both sides of \eqref{eq:A_DC2}, we have 
\begin{align}\label{eq:YCD}
  \underline{\bar{\bm{D}}}_{\pi(r)}\times_1 \bm \Theta_{a,r}\times_2 \bm \Theta_{b,r}\times_3 \bm \Theta_{c,r}=\underline{\bm{D}}_r.
\end{align}

Therefore, the LMN decomposition of $\underline{\bm{Y}}$ is essentially unique almost surely. This completes the proof.

\section{Proof of Theorem \ref{the:recoverability_nonblind}}
\label{app:recoverability_nonblind}
The proof steps resemble those in \cite{Kanatsoulis2018HSR,Ding2021HSR}. Nonetheless, as the LMN structure and its essential uniqueness definition are different from those of the CPD and LL1 models in \cite{Kanatsoulis2018HSR,Ding2021HSR}, careful modifications are made to accommodate the changes in tensor models.

To proceed, we present two important lemmas.

\begin{lemma} \label{lemma:joint} \cite{Kanatsoulis2018HSR}
Let $\widetilde{\bm{A}}=\bm{P}\bm{A}$, where the elements of $\bm{A}$ are drawn from any absolutely continuous joint distribution in $\mathbb{R}^{I\times L}$, and $\bm{P}\in \mathbb{R}^{ I'\times I}$ is full row rank. Then, the elements of $\widetilde{\bm{A}}$ follow an absolutely continuous joint distribution in $\mathbb{R}^{I'\times L}$.
\end{lemma}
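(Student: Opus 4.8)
The plan is to reduce the claim to the standard fact that a \emph{surjective} linear image of an absolutely continuous random vector is again absolutely continuous, and then to supply that fact with a short, self-contained argument. First I would pass to vectorized form: stacking the columns, the relation $\widetilde{\bm{A}}=\bm{P}\bm{A}$ becomes
\[
\tvec(\widetilde{\bm{A}}) = (\bm{I}_{L}\otimes\bm{P})\,\tvec(\bm{A}),
\]
so that $\tvec(\widetilde{\bm{A}})=\bm{M}\,\tvec(\bm{A})$ with $\bm{M}:=\bm{I}_{L}\otimes\bm{P}\in\mathbb{R}^{I'L\times IL}$. Since $\bm{P}$ has full row rank $I'$ and $\mathrm{rank}(\bm{I}_{L}\otimes\bm{P})=L\cdot\mathrm{rank}(\bm{P})=I'L$, the map $\bm{M}$ has full row rank $m:=I'L\le n:=IL$; i.e., it is a surjective linear map of $\mathbb{R}^{n}$ onto $\mathbb{R}^{m}$. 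By hypothesis $\tvec(\bm{A})$ is absolutely continuous on $\mathbb{R}^{n}$, so the lemma reduces to showing that $\bm{M}\,\tvec(\bm{A})$ is absolutely continuous on $\mathbb{R}^{m}$.

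To prove this, I would complete $\bm{M}$ to an invertible matrix: because $\bm{M}$ has full row rank, one can append $n-m$ rows $\bm{M}'\in\mathbb{R}^{(n-m)\times n}$ to form an invertible $\bm{N}\in\mathbb{R}^{n\times n}$ whose first $m$ rows are $\bm{M}$. Then $\bm{N}\,\tvec(\bm{A})$ is the image of an absolutely continuous vector under an invertible linear map, hence absolutely continuous on $\mathbb{R}^{n}$ with density $f(\bm{N}^{-1}\,\cdot)/|\det\bm{N}|$, where $f$ is the density of $\tvec(\bm{A})$. Since $\tvec(\widetilde{\bm{A}})$ consists precisely of the first $m$ coordinates of $\bm{N}\,\tvec(\bm{A})$, its law is the corresponding marginal, obtained by integrating this joint density over the remaining $n-m$ coordinates; a marginal of a density is again a density (finite for almost every value by Tonelli's theorem). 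Therefore $\tvec(\widetilde{\bm{A}})$ is absolutely continuous on $\mathbb{R}^{m}$, which is the claim.

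The argument is essentially routine measure theory, so there is no deep obstacle; the only points requiring care are (i) verifying $\mathrm{rank}(\bm{I}_{L}\otimes\bm{P})=I'L$, which uses that Kronecker products multiply ranks together with the full-row-rank assumption on $\bm{P}$, and (ii) the completion-to-invertibility and marginalization steps, where one must invoke the linear change-of-variables formula and Tonelli's theorem. Equivalently, and perhaps more transparently, I could argue at the level of null sets: for any Lebesgue-null $B\subseteq\mathbb{R}^{m}$, choosing an invertible $\bm{Q}$ with $\bm{M}\bm{Q}=[\bm{I}_{m}\ \bm{0}]$ gives $\bm{M}^{-1}(B)=\bm{Q}(B\times\mathbb{R}^{n-m})$, which is Lebesgue-null in $\mathbb{R}^{n}$ (a null set crossed with $\mathbb{R}^{n-m}$ is null, and invertible linear maps preserve null sets); hence it receives zero mass under the absolutely continuous law of $\tvec(\bm{A})$, so $\bm{M}\,\tvec(\bm{A})$ assigns measure zero to every Lebesgue-null set and is therefore absolutely continuous. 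Either route establishes the lemma.
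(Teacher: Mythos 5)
Your proof is correct. Note, however, that the paper does not actually prove this lemma: it is imported verbatim from \cite{Kanatsoulis2018HSR} and invoked as a known fact, so there is no in-paper argument to compare against. What you have supplied is the standard (and essentially the only natural) derivation: vectorize to reduce the matrix statement to the vector statement $\tvec(\widetilde{\bm A})=(\bm I_L\otimes\bm P)\tvec(\bm A)$, observe that $\bm I_L\otimes\bm P$ has full row rank $I'L$ because Kronecker products multiply ranks, and then show that a surjective linear image of an absolutely continuous random vector is absolutely continuous---either by completing to an invertible map, changing variables, and marginalizing, or by the cleaner null-set argument that the preimage of a Lebesgue-null set under a full-row-rank map is null. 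Both of your routes are sound; the null-set version is preferable since it avoids any discussion of densities and Tonelli. The only point worth stating explicitly if this were to be included is which vectorization convention is used (column stacking), so that the identity $\tvec(\bm P\bm A)=(\bm I_L\otimes\bm P)\tvec(\bm A)$ is unambiguous; this is cosmetic and does not affect correctness.
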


\begin{lemma} \label{lemma:rank} \cite{Lathauwer2008BTD1}
 Assume $\{\bm{A}_r\in \mathbb{R}^{I\times L_r}\}_{r=1}^R$ and $\{\bm{B}_r\in \mathbb{R}^{J\times M_r}\}_{r=1}^R$ are drawn from any absolutely continuous joint distributions, then 
$\textrm{rank}([\bm{B}_1\otimes \bm{A}_1,\ldots,\bm{B}_R\otimes \bm{A}_R])=\min(IJ, \sum_{r=1}^R L_r M_r)$.
\end{lemma}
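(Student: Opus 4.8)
The plan is to reduce the statement to a single generic-rank computation and then exhibit one explicit choice of factors attaining the claimed value. Set $\bm{M}=[\bm{B}_1\otimes\bm{A}_1,\ldots,\bm{B}_R\otimes\bm{A}_R]\in\mathbb{R}^{IJ\times(\sum_r L_rM_r)}$. Each entry of $\bm{M}$ is a polynomial in the entries of the $\bm{A}_r$'s and $\bm{B}_r$'s, so for every integer $\rho$ the event $\{\textrm{rank}(\bm{M})\ge\rho\}$ is the locus on which some $\rho\times\rho$ minor is nonzero, i.e.\ a Zariski-open set. Since $\textrm{rank}(\bm{M})\le\min(IJ,\sum_r L_rM_r)=:\rho^\star$ trivially bounds both rows and columns, it suffices to produce \emph{one} value of the factors with $\textrm{rank}(\bm{M})=\rho^\star$: then $\{\textrm{rank}(\bm{M})\ge\rho^\star\}$ is a nonempty Zariski-open set whose complement is a proper algebraic variety, hence of measure zero under any absolutely continuous joint distribution. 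This reduces everything to a witness construction.

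For the witness I would pass to the subspace picture. Under the identification of $\mathbb{R}^{IJ}$ with $I\times J$ matrices, the columns of $\bm{B}_r\otimes\bm{A}_r$ are the outer products $\bm{a}_{r,\ell}\bm{b}_{r,m}^{\T}$, so $\textrm{col}(\bm{B}_r\otimes\bm{A}_r)=\{\bm{A}_r\bm{X}\bm{B}_r^{\T}:\bm{X}\in\mathbb{R}^{L_r\times M_r}\}=:U_r$ and $\textrm{rank}(\bm{M})=\dim(\sum_{r=1}^R U_r)$. Two elementary facts drive the count: $\dim U_r=\textrm{rank}(\bm{A}_r)\,\textrm{rank}(\bm{B}_r)$, which equals $L_rM_r$ once $L_r\le I$ and $M_r\le J$ (the standard block-term regime, guaranteed in the application by $I_{\rm M}\ge LR$ and $J_{\rm M}\ge MR$); and the intersection identity $(\textrm{col}(\bm{A})\cap\textrm{col}(\bm{A}'))\otimes(\textrm{col}(\bm{B})\cap\textrm{col}(\bm{B}'))$ for two such decomposable subspaces.

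I would then split on which term realizes the minimum. When $\sum_r L_rM_r\le IJ$ (the column-rank regime, which is the one used for the HSR recovery), it suffices to make $U_1,\ldots,U_R$ independent, for then $\dim(\sum_r U_r)=\sum_r L_rM_r=\rho^\star$. This is achieved by choosing the column spaces $\textrm{col}(\bm{A}_r)$ to be coordinate subspaces spanned by \emph{disjoint} index sets $S_r\subseteq\{1,\ldots,I\}$ (feasible precisely because the relevant budget $\sum_r L_r\le I$ holds, e.g.\ $\sum_r L_r=LR\le I_{\rm M}$): each matrix in $U_r$ is then supported on the rows $S_r$, so the $U_r$ occupy disjoint coordinate blocks and are automatically independent. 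When $\sum_r L_rM_r\ge IJ$ (the row-rank regime), it suffices to make $\sum_r U_r=\mathbb{R}^{IJ}$; I would build this by an inductive choice of coordinate subspaces, perturbed if necessary, adding one decomposable block at a time so that each new $U_r$ meets the current partial sum in the smallest possible dimension, until the whole space is filled. In either regime the resulting witness attains $\rho^\star$, and the genericity step above upgrades this to the almost-sure claim.

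The main obstacle is the witness in the spanning (row-rank) regime, together with the fact that it genuinely requires a one-sided budget such as $\sum_r L_r\le I$ or $\sum_r M_r\le J$. The tempting shortcut---treating the $U_r$ as generic subspaces of the given dimensions and invoking transversality to get $\dim(\sum_r U_r)=\rho^\star$---is \emph{invalid} for decomposable subspaces: by the intersection identity above, two generic rank-$(2,2)$ blocks in the space of $3\times 3$ matrices always overlap in dimension $1\cdot1=1$ rather than the transversal $0$, so a naive count returns the wrong rank. Consequently the real content lies in decoupling the two modes, which is exactly what the disjoint-coordinate construction accomplishes in the column-rank regime; carrying out the analogous but more delicate decoupled covering in the row-rank regime---and verifying that each constructed block intersects the running sum minimally---is the step I expect to demand the most care.
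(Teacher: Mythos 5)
First, a structural point: the paper contains no proof of this lemma at all --- it is imported verbatim, with a citation, from \cite{Lathauwer2008BTD1} --- so your attempt can only be judged on its own terms. Your framework is the standard and correct one (the locus $\{\mathrm{rank}\ge\rho\}$ is Zariski-open, so a single witness attaining $\rho^\star$ upgrades to an almost-sure statement), and your subspace picture $\mathrm{rank}(\bm{M})=\dim\bigl(\sum_r U_r\bigr)$ with $U_r=\{\bm{A}_r\bm{X}\bm{B}_r^\top\}$ is exactly right. The genuine gap is twofold. First, your witness exists only under the side condition you quietly introduce ($\sum_r L_r\le I$, or symmetrically $\sum_r M_r\le J$), which is not among the lemma's hypotheses, and in the spanning regime you offer only ``an inductive choice \ldots perturbed if necessary,'' which is not a proof. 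Second, and more fundamentally, this is not repairable: your own intersection identity (only the easy containment is needed) shows the statement as reproduced here is \emph{false} without such side conditions, so in general there is no witness to be found. Concretely, take $I=J=3$, $R=2$, $L_1=L_2=M_1=M_2=2$, so $\min(IJ,\sum_r L_rM_r)=8$; any two $2$-dimensional column spaces in $\mathbb{R}^3$ intersect in at least a line, so $U_1\cap U_2\supseteq\bigl(\mathrm{col}(\bm{B}_1)\cap\mathrm{col}(\bm{B}_2)\bigr)\otimes\bigl(\mathrm{col}(\bm{A}_1)\cap\mathrm{col}(\bm{A}_2)\bigr)\neq\{0\}$ for \emph{every} choice of factors, whence $\mathrm{rank}(\bm{M})\le 4+4-1=7<8$ deterministically. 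So the correct conclusion is that the transcribed statement must carry additional hypotheses (as the cited source's actual lemmas do) ruling out exactly this obstruction; they cannot be bypassed by a cleverer witness. In the regime your construction does cover, there is also a cleaner route than the disjoint-coordinate combinatorics: if $[\bm{A}_1,\ldots,\bm{A}_R]$ has full column rank (generic once $\sum_r L_r\le I$) and each $M_r\le J$, then $\sum_r\bm{A}_r\bm{X}_r\bm{B}_r^\top=0$ forces each $\bm{X}_r\bm{B}_r^\top=0$ and hence $\bm{X}_r=0$, giving full column rank $\sum_r L_rM_r$ directly.

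One further point, since you lean on it: you justify the budget via ``$I_{\rm M}\ge LR$ and $J_{\rm M}\ge MR$,'' but in the paper's actual use of this lemma (Appendix C, proof of Theorem 2) it is invoked at the \emph{HSI} resolution, on $\bm{P}_1\bm{A}_r\in\mathbb{R}^{I_{\rm H}\times L}$ and $\bm{P}_2\bm{B}_r\in\mathbb{R}^{J_{\rm H}\times M}$, where only $I_{\rm H}J_{\rm H}\ge LMR$ is assumed; neither $I_{\rm H}\ge LR$ nor $J_{\rm H}\ge MR$ is available there (those appear only in Theorem 3). So your witness does not cover the instance the paper needs, and the counterexample above ($I_{\rm H}=J_{\rm H}=3$, $L=M=R=2$, $I_{\rm H}J_{\rm H}=9\ge 8=LMR$) even sits inside Theorem 2's stated conditions --- a point worth raising with the authors. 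In short: right framework and a correct construction in the budgeted column-rank regime, but the unconditional claim you set out to prove is untrue, the spanning regime is unproven, and the missing hypotheses must be restored rather than worked around.
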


Assume 
$\{\underline{\bm{D}}_r^{\star},\bm{A}_{r}^{\star},\bm{B}_{r}^{\star},\bm{C}_r^{\star}\}_{r=1}^{R}$ to be an optimal solution to \eqref{Non_blind_BTD_model},
and let $\{\underline{\bm{D}}_r,\bm{A}_{r},\bm{B}_{r},\bm{C}_r\}_{r=1}^{R}$ represent the ground-truth factors of the SRI. Then, we have the following expressions:
\begin{align}\label{eq:MSIBTD}
\underline{\bm{Y}}_{\rm M} 
&=\sum_{r=1}^R\underline{\bm{D}}_r^{\star}\times_1 \bm{A}_r^{\star}\times_2 \bm{B}_r^{\star}\times_3  \bm{P}_{\rm M}\bm{C}_r^{\star},\nonumber\\
&=\sum_{r=1}^R\underline{\bm{D}}_r\times_1 \bm{A}_r\times_2 \bm{B}_r\times_3  \bm{P}_{\rm M}\bm{C}_r,
\end{align}
\begin{align}\label{eq:HSIBTD}
\underline{\bm{Y}}_{\rm H} 
&=\sum_{r=1}^R\underline{\bm{D}}_r^{\star}\times_1 (\bm{P}_1\bm{A}_r^{\star})\times_2 (\bm{P}_2\bm{B}_r^{\star})\times_3  \bm{C}_r^{\star},\nonumber\\
&=\sum_{r=1}^R\underline{\bm{D}}_r\times_1 (\bm{P}_1\bm{A}_r)\times_2 (\bm{P}_2\bm{B}_r)\times_3  \bm{C}_r.
\end{align}

Since that $\bm{P}_1$, $\bm{P}_2$, and $\bm{P}_{\rm M}$ are full row rank,
using Lemma \ref{lemma:joint}, $\bm{P}_1\bm{A}_{r}$, $\bm{P}_2\bm{B}_{r}$, and $\bm{P}_{\rm M}\bm{C}_{r}$ are drawn from certain absolutely continuous distributions.
Under the conditions stated in Theorem \ref{the:recoverability_nonblind}, the LMN representation of MSI $\underline{\bm{Y}}_{\rm M}$ is essentially unique, which means that from \eqref{eq:MSIBTD}, the following expressions hold:
\begin{align}\label{eq:Astar}
\bm{A}^{\star}_{\pi(r)}=\bm{A}_r\bm \Theta_{a,r},~
\bm{B}^{\star}_{\pi(r)}=\bm{B}_r\bm \Theta_{b,r},~
\bm{P}_{\rm M}\bm{C}^{\star}_{\pi(r)} = \bm{P}_{\rm M}\bm{C}_r \bm \Theta_{c,r},
\end{align}
where $\pi$ is permutation of: $\{1,\ldots,R\}$, $\bm \Theta_{a,r}$, $\bm \Theta_{b,r}$, and $\bm \Theta_{c,r}$ are nonsingular matrices.

To proceed, we consider the mode-3 unfolding of $\underline{\bm{Y}}_{\rm H}$, i.e., rearranging the pixels of $\underline{\bm{Y}}_{\rm H}$ as following:
\begin{align*}
\bm{Y}_{H}^{(3)}=\left[\underline{\bm{Y}}_{H}(1,1,:),\underline{\bm{Y}}_{H}(2,1,:),\ldots,\underline{\bm{Y}}_{H}(I_{\rm H},J_{\rm H},:)\right].
\end{align*}
From \eqref{eq:HSIBTD} and \eqref{eq:Astar}, we have
\begin{align}\label{eq:YH3}
&\bm{Y}_{H}^{(3)} \\
=&[\bm{C}_1,\ldots, \bm{C}_R]\textrm{blockdiag}(\bm{D}_1^{(3)},\ldots,\bm{D}_R^{(3)})[\bm{P}_2\bm{B}_1\otimes \bm{P}_1\bm{A}_1,\ldots,\bm{P}_2\bm{B}_R\otimes \bm{P}_1\bm{A}_R]^\top  \nonumber\\
=&[\bm{C}_{\pi(1)}^\star,\ldots, \bm{C}_{\pi(R)}^\star]\textrm{blockdiag}(\bm{D}_{\pi(1)}^{\star(3)},\ldots,\bm{D}_{\pi(R)}^{\star(3)})[\bm{P}_2\bm{B}_{\pi(1)}^\star\otimes \bm{P}_1\bm{A}_{\pi(1)}^\star,\ldots,\bm{P}_2\bm{B}_{\pi(R)}^\star\otimes \bm{P}_1\bm{A}_{\pi(R)}^\star]^\top  \nonumber\\
=&[\bm{C}_{\pi(1)}^\star,\ldots, \bm{C}_{\pi(R)}^\star]
\textrm{blockdiag}\left( \bm{D}_{\pi(1)}^{\star(3)}(\bm \Theta_{b,1} \otimes \bm \Theta_{a,1}),\ldots, \bm{D}_{\pi(R)}^{\star(3)}(\bm \Theta_{b,R} \otimes \bm \Theta_{a,R}) \right)  \nonumber\\
&\phantom{=\;\;} \hspace{8cm} [\bm{P}_2\bm{B}_1\otimes \bm{P}_1\bm{A}_1,\ldots,\bm{P}_2\bm{B}_R\otimes \bm{P}_1\bm{A}_R]^\top, \nonumber
\end{align}
where $\otimes$ denotes the Kronecker product. 
Note that $\bm{P}_1\bm{A}_r\in \mathbb{R}^{I_H\times L}$ and $\bm{P}_2\bm{B}_r\in \mathbb{R}^{J_H\times M}$ are drawn from any absolutely continuous distributions, then $\textrm{rank}([\bm{P}_2\bm{B}_1\otimes \bm{P}_1\bm{A}_1,\ldots,\bm{P}_2\bm{B}_R\otimes \bm{P}_1\bm{A}_R])=\min(I_H J_H, LMR)$ using Lemma \ref{lemma:rank}. Due to $I_H J_H\geq LMR$, $[\bm{P}_2\bm{B}_1\otimes \bm{P}_1\bm{A}_1,\ldots,\bm{P}_2\bm{B}_R\otimes \bm{P}_1\bm{A}_R]$ has full column rank.
Multiplying $\left([\bm{P}_2\bm{B}_1\otimes \bm{P}_1\bm{A}_1,\ldots,\bm{P}_2\bm{B}_R\otimes \bm{P}_1\bm{A}_R]^\top\right)^\dag$ to the right of \eqref{eq:YH3}, we obtain that
\begin{align*}
\bm{C}_{\pi(r)}^\star \bm{D}_{\pi(r)}^{\star(3)} (\bm \Theta_{b,r} \otimes \bm \Theta_{a,r}) 
= \bm{C}_r \bm{D}_r^{(3)} ,~1\leq r\leq R,
\end{align*}
i.e.,
\begin{align}\label{eq:DC}
\underline{\bm{D}}_{\pi(r)}^{\star}\times_1 \bm \Theta_{a,r}\times_2 \bm \Theta_{b,r}\times_3 \bm{C}_{\pi(r)}^{\star}=\underline{\bm{D}}_r\times_3 \bm{C}_r.
\end{align}

Combining \eqref{eq:Astar} and \eqref{eq:DC}, one can get
\begin{align}
    \underline{\bm{D}}_{\pi(r)}^{\star}
    \times_1 \bm{A}_{\pi(r)}^{\star}
    \times_2 \bm{B}_{\pi(r)}^{\star}
    \times_3 \bm{C}_{\pi(r)}^{\star}
    =\underline{\bm{D}}_r\times_1 \bm{A}_r
    \times_2 \bm{B}_r\times_3 \bm{C}_r.\nonumber
\end{align}
Hence, the ground-truth can be recovered by
$\underline{\bm{Y}}_{\rm S} = \sum_{r=1}^R\underline{\bm{D}}_r^{\star}\times_1 \bm{A}_r^{\star}\times_2 \bm{B}_r^{\star}\times_3 \bm{C}_r^{\star}$.
This completes the proof.

\section{Proof of Theorem \ref{the:recoverability_blind}}
\label{app:recoverability_blind}
Assume that $(\{\widetilde{\A}_{r}^{\star},\widetilde{\B}_{r}^{\star}\}_{r=1}^{R},\{\underline{\bm{D}}_r^\star,\bm{A}_{r}^{\star},\bm{B}_{r}^{\star},\bm{C}_{r}^{\star}\}_{r=1}^{R})$ is an optimal solution of \eqref{blind_BTD_model}, then
\begin{align}
    &\tY_{\rm H} = \sum_{r=1}^R\underline{\bm{D}}_r^{\star}\times_1 \widetilde{\bm{A}}_r^{\star}\times_2 \widetilde{\bm{B}}_r^{\star}\times_3  \bm{C}_r^{\star}, \label{eq:blind_hsi_constraint} \\
    & \tY_{\rm M} = \sum_{r=1}^R\underline{\bm{D}}_r^{\star}\times_1 \bm{A}_r^{\star}\times_2 \bm{B}_r^{\star}\times_3  (\bm{P}_{\rm M}\bm{C}_r^{\star}).\label{eq:blind_msi_constraint}
\end{align}
Based on the assumptions and Theorem~\ref{the:identifiability_LMN}, the MSI decomposition is essentially unique almost surely. Thus,
we have the following with probability one:
\begin{align}\label{eq:Astar_MSI}
&  \bm{A}^{\star}_{\pi_1(r)}=\bm{A}_r\bm \Theta_{a,r},~
\bm{B}^{\star}_{\pi_1(r)}=\bm{B}_r\bm \Theta_{b,r},~
\bm{P}_{\rm M}\bm{C}^{\star}_{\pi_1(r)} = \bm{P}_{\rm M}\bm{C}_r \bm \Theta_{c,r},\nonumber \\
&  \underline{\bm{D}}^{\star}_{\pi_1(r)}\times_1 \bm \Theta_{a,r}\times_2 \bm \Theta_{b,r}\times_3 \bm \Theta_{c,r}=\underline{\bm{D}}_r, 
\end{align}
where $\pi_1$ is permutation of $\{1,\ldots,R\}$, and $\bm \Theta_{a,r}$, $\bm \Theta_{b,r}$, and $\bm \Theta_{c,r}$ are nonsingular matrices.
The $\pi_1$, $\bm \Theta_{a,r}$, $\bm \Theta_{b,r}$, and $\bm \Theta_{c,r}$ are associated with the MSI decomposition, i.e., the decomposition model in \eqref{eq:blind_msi_constraint}.

Moreover, applying Theorem \ref{the:identifiability_LMN} to the HSI and the equality constraint in \eqref{eq:blind_hsi_constraint}, the following holds almost surely:
\begin{align}\label{eq:Astar_HSI}
&   \widetilde{\bm{A}}^{\star}_{\pi_2(r)}=\bm{A}_r\widetilde{\bm{\Theta}}_{a,r},~
\widetilde{\bm{B}}^{\star}_{\pi_2(r)}=\bm{B}_r\widetilde{\bm{\Theta}}_{b,r},~
\bm{C}^{\star}_{\pi_2(r)} = \bm{C}_r \widetilde{\bm{\Theta}}_{c,r}, \nonumber \\
&  \underline{\bm{D}}^{\star}_{\pi_2(r)}\times_1 \widetilde{\bm{\Theta}}_{a,r}\times_2 \widetilde{\bm{\Theta}}_{b,r}\times_3 \widetilde{\bm{\Theta}}_{c,r}=\underline{\bm{D}}_r, 
\end{align}
where $\pi_2$ is permutation of $\{1,\ldots,R\}$, and $\widetilde{\bm{\Theta}}_{a,r}$, $\widetilde{\bm{\Theta}}_{b,r}$, and $\widetilde{\bm{\Theta}}_{c,r}$ are nonsingular matrices associated with the decomposition model in \eqref{eq:blind_hsi_constraint}.

Next, we hope to show that $\pi_1=\pi_2$ and $\bm \Theta_{c,r}=\widetilde{\bm \Theta}_{c,r}$. 
From \eqref{eq:Astar_MSI} and \eqref{eq:Astar_HSI}, it is seen that
\begin{align}
    &\bm{P}_{\rm M}\bm{C}^{\star}_{r} = \bm{P}_{\rm M}\bm{C}_{\pi_1(r)} \bm \Theta_{c,\pi_1(r)},\\
    &\bm{C}^{\star}_{r} = \bm{C}_{\pi_2({r})} \widetilde{\bm{\Theta}}_{c,\pi_2({r})}.
\end{align}
Combining the above, we have
\begin{align}\label{eq:keyequal}
    \bm P_{\rm M}[\bm C_1,\ldots,\bm C_R     ] \bm \Pi_1  {\bm \Lambda}_c = \bm P_{\rm M}[\bm C_1,\ldots,\bm C_R     ] \bm \Pi_2 \widetilde{\bm \Lambda}_c,
\end{align}
where $\bm \Pi_1\in \mathbb{R}^{NR\times NR}$ and $\bm \Pi_2\in \mathbb{R}^{NR\times NR}$ are block permutation matrices that permutes the blocks $\bm C_1,\ldots,\bm C_R$. 
These block permutation matrices are entirely determined by $\pi_1$ and $\pi_2$.
For example, if $R=2$ and $\pi_1=\{2,1\}$, we have
\begin{align}
    \bm \Pi_1 = \begin{bmatrix}
        \bm 0_{N\times N},~ \bm I_{N\times N},\\
         \bm I_{N\times N},~\bm 0_{N\times N}.
    \end{bmatrix}.
\end{align}
Hence, our task now boils down to proving $\bm \Pi_1=\bm \Pi_2$.
In addition, we have 
\begin{align}
   {\bm \Lambda}_c &={\rm blockdiag}(\bm \Theta_{\c,\pi_1(1)},\ldots,\bm \Theta_{c,\pi_1(R)}),\\
   \widetilde{\bm \Lambda}_c & = {\rm blockdiag}(\widetilde{\bm \Theta}_{\c,\pi_2(1)},\ldots, \widetilde{\bm \Theta}_{c,\pi_2(R)}).
\end{align}
Next, we look at the $q$-th columns on both sides of \eqref{eq:keyequal}:
\begin{align}\label{eq:Z}
    \bm P_{\rm M} \bm C \bm Z_1(:,q) =   \bm P_{\rm M} \bm C \bm Z_2(:,q),~ q=1,\ldots,NR,
\end{align}
where $\bm Z_1 = \bm \Pi_1 \bm \Lambda_c$ and $ \bm Z_2 = \bm \Pi_2 \widetilde{\bm \Lambda}_c$.

By Lemma \ref{lemma:joint}, $\bm{P}_{M}\bm{C}$ is drawn from an absolutely continuous joint distribution. 
By assuming that $K_M\geq 2N$, any $2N$ columns of $\bm{P}_{M}\bm{C}$ are linearly independent.
Re-writing \eqref{eq:Z} as follows:
\begin{align}\label{eq:Z_rewrite}
\bm{P}_{M}\C (\bm Z_1(:,q) -\bm Z_2(:,q) )=\bm 0.     
\end{align}
Note that the matrix $\bm Z_1(:,q) -\bm Z_2(:,q)$ has at most $2N$ nonzero  elements.
However, any $2N$ columns of $\bm{P}_{M}\C$ are linearly independent.
This means that \eqref{eq:Z_rewrite} holds if and only if $\bm Z_1(:,q) = \bm Z_2(:,q)$ for $q=1,\ldots, NR$, which leads to $\bm \Pi_1 =\bm \Pi_2$ and $\bm \Lambda_c =\widetilde{\bm \Lambda}_c$.
Then $\bm{C}^{\star}_{\pi_1(r)} = \bm{C}_r \bm{\Theta}_{c,r}$.
Combining \eqref{eq:Astar_MSI}, we have
\begin{align}
    \underline{\bm{D}}_{\pi_1(r)}^{\star}
    \times_1 \bm{A}_{\pi_1(r)}^{\star}
    \times_2 \bm{B}_{\pi_1(r)}^{\star}
    \times_3 \bm{C}_{\pi_1(r)}^{\star}
    =\underline{\bm{D}}_r\times_1 \bm{A}_r
    \times_2 \bm{B}_r\times_3 \bm{C}_r.\nonumber
\end{align}
Hence, we can recover $\underline{\bm{Y}}_{\rm S} =\sum_{r=1}^R \underline{\bm{D}}_{\pi_1(r)}^{\star} \times_1 \bm{A}_{\pi_1(r)}^{\star} \times_2 \bm{B}_{\pi_1(r)}^{\star} \times_3 \bm{C}_{\pi_1(r)}^{\star}.$
This completes the proof.

\section{Algorithm Details for Model \eqref{eq:model-non-blind}}
\label{app:algorithm-non-blind}
In this section, we present the details for solving Problem \eqref{eq:model-non-blind}, including the algorithm design, parameter setting, and computation analysis.
The algorithm follows the paradigm of alternating accelerated proximal gradient framework in \cite{xu2017globally} and the design of HSR/HU algorithms in \cite{Ding2021HSR,Ding2023Fast}.
The proposed \texttt{CLIMB} for handling \eqref{eq:model-non-blind} is summarized in Algorithm \ref{algorithm-non-blind}.

\begin{algorithm}
\caption{\texttt{CLIMB} for solving \eqref{eq:model-non-blind}.}\label{algorithm-non-blind}
%\begin{algorithmic}[1]
% \scriptsize
\small
      \textbf{Input:} 
      HSI $\underline{\bm{Y}}_{\rH}$, MSI $\underline{\bm{Y}}_{\rM}$;
      pre-defined sequences of $\mu_{r,i}^{(t)}$ and $\alpha_{r,i}^{(t)}$, $r=1,\ldots,R,~ i=1,\ldots,4$;
      starting points $\left\{\underline{\bm{D}}_r^{(0)}, \bm{A}_{r}^{(0)},\bm{B}_{r}^{(0)}, \bm{C}_r^{(0)} \right\}_{r=1}^{R}$.%\vspace{0.1cm}

      ~~~~\textbf{Parameters:} $\lambda$, $\eta$, $L$, and $N$.%\vspace{0.1cm}

      ~~~~$\check{\bm{A}}_r^{(0)}=\bm{A}_r^{(0)}$, $\check{\bm{B}}_r^{(0)}=\bm{B}_r^{(0)}$, $\check{\bm{C}}_r^{(0)}=\bm{C}_r^{(0)}$, $\check{\underline{\bm{D}}}_r^{(0)}=\underline{\bm{D}}_r^{(0)}$, $t=0$.

      ~~~~\textbf{repeat} %\vspace{0.1cm}

      ~~~~\textbf{for} $r=1:R$ %\vspace{0.1cm}

      ~~~~$\%\%$ update $\bm{A}_r$ $\%\%$ %\vspace{0.1cm}

      ~~~~$\bm{A}_r^{(t+1)}\leftarrow \check{\bm{A}}_r^{(t)}-\alpha_{r,1}^{(t)}\bm{G}_{\check{\bm{A}}_r}^{(t)}$;
      ~~$\check{\bm{A}}_r^{(t+1)}\leftarrow \bm{A}_r^{(t+1)}+\mu_{r,1}^{(t)}(\bm{A}_r^{(t+1)}-\bm{A}_r^{(t)})$; \vspace{0.1cm}

      ~~~~$\%\%$ update $\bm{B}_r$ $\%\%$ %\vspace{0.1cm}

      ~~~~$\bm{B}_r^{(t+1)}\leftarrow \check{\bm{B}}_r^{(t)}-\alpha_{r,2}^{(t)}\bm{G}_{\check{\bm{B}}_r}^{(t)}$;
      ~~$\check{\bm{B}}_r^{(t+1)}\leftarrow \bm{B}_r^{(t+1)}+\mu_{r,2}^{(t)}(\bm{B}_r^{(t+1)}-\bm{B}_r^{(t)})$; \vspace{0.1cm}

      ~~~~$\%\%$ update $\bm{C}_r$ $\%\%$ %\vspace{0.1cm}

      ~~~~$\bm{C}_r^{(t+1)}\leftarrow \check{\bm{C}}_r^{(t)}-\alpha_{r,3}^{(t)}\bm{G}_{\check{\bm{C}}_r}^{(t)}$;
      ~~$\check{\bm{C}}_r^{(t+1)}\leftarrow \bm{C}_r^{(t+1)}+\mu_{r,3}^{(t)}(\bm{C}_r^{(t+1)}-\bm{C}_r^{(t)})$; \vspace{0.1cm}

      ~~~~$\%\%$ update $\underline{\bm{D}}_r$ $\%\%$  %\vspace{0.1cm}

      ~~~~$\underline{\bm{D}}_r^{(t+1)}\leftarrow \check{\underline{\bm{D}}}_r^{(t)}-\alpha_{r,4}^{(t)}\bm{G}_{\check{\underline{\bm{D}}}_r}^{(t)}$;
      ~~$\check{\underline{\bm{D}}}_r^{(t+1)}\leftarrow \underline{\bm{D}}_r^{(t+1)}+\mu_{r,4}^{(t)}(\underline{\bm{D}}_r^{(t+1)}-\underline{\bm{D}}_r^{(t)})$; \vspace{0.1cm}

      ~~~~\textbf{end}

      ~~~~$t=t+1$;

      ~~~~\textbf{until} satisfying the stopping criterion. %\vspace{0.1cm}
    
    \textbf{Output:} $\widehat{\bm{A}}_r=\bm{A}_r^{(t)}$, $\widehat{\bm{B}}_r=\bm{B}_r^{(t)}$, $\widehat{\bm{C}}_r=\bm{C}_r^{(t)}$, and $\widehat{\underline{\bm{D}}}_r=\underline{\bm{D}}_r^{(t)}$. Recover $\underline{\widehat{\bm{Y}}}_{\rm S}=\sum_{r=1}^R\underline{\widehat{\bm{D}}}_r\times_1 \widehat{\bm{A}}_r\times_2 \widehat{\bm{B}}_r\times_3  \widehat{\bm{C}}_r$.
%\end{algorithmic}
\end{algorithm}

To ensure convergence, it is crucial to select $\{\mu_{r,i}^{(t)}\}$.
As in previous works \cite{Ding2021HSR,Ding2023Fast}, we find that selecting the sequences $\{\mu_{r,i}^{(t)}\}$ following the Nesterov's extrapolation strategy \cite{Nesterov1983Extrapolation} as a heuristic works fairly well.
Therefore, we simply set
\begin{align}
   \mu_{r,i}^{(t)}=\frac{\gamma_{r,i}^{(t)}-1}{\gamma_{r,i}^{(t+1)}}, ~~\gamma_{r,i}^{(t+1)}=\frac{1+\sqrt{1+4\left(\gamma_{r,i}^{(t)}\right)^{2}}}{2}, \nonumber
\end{align}
with $\gamma_{r,i}^{(0)}=1$ for $r=1,\ldots,R,~ i=1,\ldots,4$. In addition, we choose the step sizes $\alpha_{r,i}^{(t)}$ as $\alpha_{r,1}^{(t)} = 1/L_{\bm A_r}^{(t)}, ~\alpha_{r,2}^{(t)} = 1/L_{\bm B_r}^{(t)}, ~\alpha_{r,3}^{(t)} = 1/L_{\bm C_r}^{(t)}, ~\alpha_{r,4}^{(t)} = 1/L_{\underline{\bm D}_r}^{(t)}$, in which $L_{\bm A_r}^{(t)}$, $L_{\bm B_r}^{(t)}$, $L_{\bm C_r}^{(t)}$, and $L_{\underline{\bm D}_r}^{(t)}$ are given in follows.

Next, we give the expressions of gradients $\bm{G}_{\bm{A}_r}$, $\bm{G}_{\bm{B}_r}$, $\bm{G}_{\bm{C}_r}$, and $\bm{G}_{\underline{\bm D}_r}$, and the parameters $L_{\bm A_r}^{(t)}$, $L_{\bm B_r}^{(t)}$, $L_{\bm C_r}^{(t)}$, and $L_{\underline{\bm D}_r}^{(t)}$.
When we update the $r$-th block $\bm{A}_r$, $\bm{B}_r$, $\bm{C}_r$, and $\underline{\bm{D}}_r$ by fixing other variables, the optimization problem is
\begin{align}
\min_{\underline{\bm{D}}_r, \bm{A}_{r},\bm{B}_{r}, \bm{C}_r}
&\frac{1}{2}\left\|\underline{\widetilde{\bm{Y}}}_{H}-\underline{\bm{D}}_r \times_1 (\bm{P}_1\bm{A}_r)\times_2 (\bm{P}_2\bm{B}_r)\times_3  \bm{C}_r\right\|_{F}^2\nonumber\\
+&\frac{1}{2}\left\|\underline{\widetilde{\bm{Y}}}_{M}-\underline{\bm{D}}_r\times_1 \bm{A}_r\times_2 \bm{B}_r\times_3  (\bm{P}_{\rm M}\bm{C}_r)\right\|_{F}^2
+ \lambda  \phi_r(\bm{A}_r,\bm{B}_r,\bm{C}_r) + \frac{\eta}{2}\|\tD_r\|_{F}^2, \nonumber
\end{align}
where $\underline{\widetilde{\bm{Y}}}_{H} = \underline{\bm{Y}}_{H}-\sum_{\tilde{r}\neq r}\underline{\bm{D}}_{\tilde{r}} \times_1 (\bm{P}_1\bm{A}_{\tilde{r}})\times_2 (\bm{P}_2\bm{B}_{\tilde{r}})\times_3  \bm{C}_{\tilde{r}}$ and $\underline{\widetilde{\bm{Y}}}_{M} = \underline{\bm{Y}}_{M}-\sum_{\tilde{r}\neq r}\underline{\bm{D}}_{\tilde{r}} \times_1 \bm{A}_{\tilde{r}}\times_2 \bm{B}_{\tilde{r}}\times_3  (\bm{P}_{M}\bm{C}_{\tilde{r}})$.

For simplicity, we rewrite the $\bm{A}_r$-subproblem as
\begin{align}\label{eq:A}
\mathcal{J}(\bm{A}_r)
=&{\frac{1}{2}\left\|\underline{\widetilde{\bm{Y}}}_{H}-\underline{\bm{D}}_r^{(t)} \times_1 (\bm{P}_1\bm{A}_r)\times_2 (\bm{P}_2\bm{B}_r^{(t)})\times_3  \bm{C}_r^{(t)}\right\|_{F}^2}\nonumber\\
+&{\frac{1}{2}\left\|\underline{\widetilde{\bm{Y}}}_{M}-\underline{\bm{D}}_r^{(t)}\times_1 \bm{A}_r\times_2 \bm{B}_r^{(t)}\times_3  (\bm{P}_{\rm M}\bm{C}_r^{(t)})\right\|_{F}^2}
+\lambda\phi_{p,\varepsilon}(\bm{H}_1\bm{A}_r).\nonumber
\end{align}
Since the design of $\phi_{p,\varepsilon}(\bm{H}_1\bm{A}_r)$, the gradient $\bm{G}_{\bm{A}_r}$ of $\mathcal{J}(\bm{A}_r)$ exists but is hard to compute. Therefore, we first construct a tight quadratic upper bounded function ${\cal L}(\bm{A}_r, \bm{B}_r^{(t)}, \bm{C}_r^{(t)}, \underline{\bm{D}}_r^{(t)};\bm{A}_r^{(t)})$ such that 
${\cal L}(\bm{A}_r^{(t)}, \bm{B}_r^{(t)}, \bm{C}_r^{(t)}, \underline{\bm{D}}_r^{(t)};\bm{A}_r^{(t)}) \geq {\cal J}(\bm{A}_r^{(t)})$
and
$\nabla_{\bm{A}_r}{\cal L}(\bm{A}_r^{(t)}, \bm{B}_r^{(t)}, \bm{C}_r^{(t)}, \underline{\bm{D}}_r^{(t)};\bm{A}_r^{(t)}) = \nabla_{\bm{A}_r}{\cal J}(\bm{A}_r^{(t)})$.

Then, we compute $\bm G_{\bm{A}_r}^{(t)}=\nabla_{\bm{A}_r}{\cal L}(\bm{A}_r^{(t)}, \bm{B}_r^{(t)}, \bm{C}_r^{(t)}, \underline{\bm{D}}_r^{(t)};\bm{A}_r^{(t)})$.
Note that according to the work in \cite{Fu2015Joint}, $\phi_{p,\varepsilon}(\bm{x})$ ($0<p\leq 1$) admits a majorizer $\widetilde{\phi}(\bm{x},\bm{x}^{(t)})$ as
\begin{align}
\widetilde{\phi}(\bm{x},\bm{x}^{(t)}) & = \sum_{i} [\bm{w}^{(t)}]_{i}[\bm{x}]_{i}^{2}+\frac{2-p}{2}\Big(\frac{2}{p}[\bm{w}^{(t)}]_{i}\Big)^{\frac{p}{p-2}}+\varepsilon [\bm{w}^{(t)}]_{i}\nonumber \\
&=\frac{p}{2}\bm{x}^{\top}\bm{W}^{(t)}\bm{x}+{\rm const},
\end{align}
where $[\bm{w}^{(t)}]_{i}=\frac{p}{2}\big(([\bm{x}^{(t)}]_{i})^{2}+\varepsilon\big)^{\frac{p-2}{2}}$,  $\bm{W}^{(t)}$ is a diagonal matrix with $[\bm{W}^{(t)}]_{i,i}=[\bm{w}^{(t)}]_{i}$ and ${\rm const}$ is a constant. Hence, we get the quadratic majorizer as
\begin{align}\label{eq:Aproblem}
&{\cal L}(\bm{A}_r, \bm{B}_r^{(t)}, \bm{C}_r^{(t)}, \underline{\bm{D}}_r^{(t)};\bm{A}_r^{(t)})\\
=&{\frac{1}{2}\left\|\underline{\widetilde{\bm{Y}}}_{H}-\underline{\bm{D}}_r^{(t)} \times_1 (\bm{P}_1\bm{A}_r)\times_2 (\bm{P}_2\bm{B}_r^{(t)})\times_3  \bm{C}_r^{(t)}\right\|_{F}^2}\nonumber\\
+&{\frac{1}{2}\left\|\underline{\widetilde{\bm{Y}}}_{M}-\underline{\bm{D}}_r^{(t)}\times_1 \bm{A}_r\times_2 \bm{B}_r^{(t)}\times_3  (\bm{P}_{\rm M}\bm{C}_r^{(t)})\right\|_{F}^2} 
+\lambda\widetilde{\phi}(\tilde{\bm{H}}_{1}{\bm a}_r,\tilde{\bm{H}}_{1}{\bm a}_r^{(t)}),\nonumber
\end{align}
where $\tilde{\bm{H}}_1 = \bm{H}_1\otimes \bm{I}_{L}$ ($\bm{I}_L$ is the identify matrix with size $L \times L$), and $\bm{a}_r^{(t)}$ is the vectorization of $\bm{A}_r^{(t)}$.

The gradient $\bm G_{\bm{A}_r}^{(t)}$ can be expressed as follows:
\begin{align}\label{eq:gradient_A}
\bm G_{\bm{A}_r}^{(t)}=&\bm{P}_1^{\top}(\bm{P}_1 \bm{A}_r^{(t)} \bm{V}_{r1}^{(t)} - \widetilde{\bm{Y}}_{H}^{(1)})(\bm{V}_{r1}^{(t)})^\top 
+(\bm{A}_r\bm{U}_{r1}^{(t)} - \widetilde{\bm{Y}}_{M}^{(1)})(\bm{U}_{r1}^{(t)})^\top
+\lambda p\tilde{\bm{H}}_{1}^{\top}\bm{W}_{r1}^{(t)}\tilde{\bm{H}}_{1}\bm{a}_{r}^{(t)}, \nonumber
\end{align}
where $\widetilde{\bm{Y}}_{H}^{(1)} = (\underline{\widetilde{\bm{Y}}}_{H})_{(1)}$, $\widetilde{\bm{Y}}_{M}^{(1)} = (\underline{\widetilde{\bm{Y}}}_{M})_{(1)}$, $\bm{V}_{r1}^{(t)}=(\underline{\bm{D}}_r^{(t)}\times_2(\bm{P}_2\bm{B}_r^{(t)})\times_3 \bm{C}_r^{(t)})_{(1)}$, $\bm{U}_{r1}^{(t)}=(\underline{\bm{D}}_r^{(t)}\times_2\bm{B}_r^{(t)}\times_3 (\bm{P}_{\rm M}\bm{C}_r^{(t)}))_{(1)}$, and
$\bm{W}_{r1}^{(t)}$ is diagonal with
$[\bm{W}_{r1}^{(t)}]_{i,i}=([\tilde{\bm{H}}_1{\bm a}_r^{(t)}]_{i}^{2}+\varepsilon)^{\frac{p-2}{2}}$.

The parameter $L_{\bm A_r}^{(t)}$ is determined by the the Lipschitz constant of the $\bm{A}_r$-subproblem \eqref{eq:Aproblem}. One can see that \eqref{eq:Aproblem} is Lipschitz continuous with the following Lipschitz constant:
\begin{align}
L_{\bm A_r}^{(t)}&
=\sigma_{\rm max}\left(\bm{P}_1^\top\bm{P}_1\right)\sigma_{\rm max}\left(\bm{V}_{r1}^{(t)}(\bm{V}_{r1}^{(t)})^\top\right)
+ \sigma_{\rm max}\left(\bm{U}_{r1}^{(t)}(\bm{U}_{r1}^{(t)})^\top\right)
+ \lambda p\sigma_{\textrm{max}}\left(\tilde{\bm{H}}_1^{\top}\bm{W}_{r1}^{(t)}\tilde{\bm{H}}_1 \right).\nonumber
\end{align}

Similarly, we can obtain the gradients $\bm G_{\bm{B}_r}^{(t)}$, $\bm G_{\bm{C}_r}^{(t)}$, and parameters $L_{\bm B_r}^{(t)}$, $L_{\bm C_r}^{(t)}$ as follows.

\begin{align}
\bm G_{\bm{B}_r}^{(t)}&=\bm{P}_2^{\top}(\bm{P}_2 \bm{B}_r^{(t)} \bm{V}_{r2}^{(t)} - \widetilde{\bm{Y}}_{H}^{(2)})(\bm{V}_{r2}^{(t)})^\top
+ (\bm{B}_r\bm{U}_{r2}^{(t)} - \widetilde{\bm{Y}}_{M}^{(2)})(\bm{U}_{r2}^{(t)})^\top
+\lambda p\tilde{\bm{H}}_{2}^{\top}\bm{W}_{r2}^{(t)}\tilde{\bm{H}}_{2}\bm{b}_{r}^{(t)},\nonumber\\
L_{\bm B_r}^{(t)}&
=\sigma_{\rm max}\left(\bm{P}_2^\top\bm{P}_2\right)\sigma_{\rm max}\left(\bm{V}_{r2}^{(t)}(\bm{V}_{r2}^{(t)})^\top\right)
+ \sigma_{\rm max}\left(\bm{U}_{r2}^{(t)}(\bm{U}_{r2}^{(t)})^\top\right)
+ \lambda p\sigma_{\textrm{max}}\left(\tilde{\bm{H}}_2^{\top}\bm{W}_{r2}^{(t)}\tilde{\bm{H}}_2 \right),\nonumber
\end{align}
where $\bm{V}_{r2}^{(t)}=(\underline{\bm{D}}_r^{(t)}\times_1(\bm{P}_1\bm{A}_r^{(t+1)})\times_3 \bm{C}_r^{(t)})_{(2)}$, $\bm{U}_{r2}^{(t)}=(\underline{\bm{D}}_r^{(t)}\times_1\bm{A}_r^{(t+1)}\times_3 (\bm{P}_{M}\bm{C}_r^{(t)}))_{(2)}$, $\tilde{\bm{H}}_2 = \bm{H}_2\otimes \bm{I}_{M}$ ($\bm{I}_{\rm M}$ is the identify matrix with size $M \times M$), and
$\bm{W}_{r2}^{(t)}$ is a diagonal matrix with
$[\bm{W}_{r2}^{(t)}]_{i,i}=([\tilde{\bm{H}}_2{\bm b}_r^{(t)}]_{i}^{2}+\varepsilon)^{\frac{p-2}{2}}$, where $\bm{b}_r^{(t)}$ is the vectorization of $\bm{B}_r^{(t)}$.

\begin{table}[!t]
\renewcommand\arraystretch{1}
\setlength{\tabcolsep}{2pt}
\renewcommand\arraystretch{1.5}
\centering
\caption{Complexity of each term of \texttt{CLIMB}. }
\resizebox{\linewidth}{!}{
    \begin{tabular}{c|c}\hline

    \hline
    Terms  & \multicolumn{1}{c}{Complexity}  \\ \hline
    $\bm{P}_1^{\top}(\bm{P}_1 \bm{A}_r^{(t)} \bm{V}_{r1}^{(t)} - \widetilde{\bm{Y}}_{H}^{(1)})(\bm{V}_{r1}^{(t)})^\top$   & $\mathcal{O}(J_{\rm H} J_{\rm M} M_r + I_{\rm H} N_r(L_r M_r + M_r J_{\rm H} + J_{\rm H} K_{\rm H})+ J_{\rm H} L_r(N_r M_r + N_r K_{\rm H} + I_{\rm H} K_{\rm H}) + I_{\rm M} J_{\rm H}K_{\rm H}(I_{\rm H}+L_r))$   \\ \hline
    $(\bm{A}_r\bm{U}_{r1}^{(t)} - \widetilde{\bm{Y}}_{M}^{(1)})(\bm{U}_{r1}^{(t)})^\top$    & $\mathcal{O}(K_{\rm H} K_{\rm M} N_r + L_rN_r (J_{\rm M} M_r + J_{\rm M} K_{\rm M} + I_{\rm M} L_r) + I_{\rm M}J_{\rm M} (N_r M_r + N_r K_{\rm M} +  K_{\rm M} L_r))$ \\ \hline
    $\bm{P}_2^{\top}(\bm{P}_2 \bm{B}_r^{(t)} \bm{V}_{r2}^{(t)} - \widetilde{\bm{Y}}_{H}^{(2)})(\bm{V}_{r2}^{(t)})^\top$   & $\mathcal{O}(I_{\rm H} I_{\rm M} L_r + I_{\rm H} N_r(L_r M_r + M_r J_{\rm H} + J_{\rm H} K_{\rm H}) + I_{\rm H} K_{\rm H}(I_{\rm M} M_r + M_r J_{\rm H} + I_{\rm M} K_{\rm H}) + I_{\rm H} M_rN_r(K_{\rm H}+L_r))$\\ \hline
    $(\bm{B}_r\bm{U}_{r2}^{(t)} - \widetilde{\bm{Y}}_{M}^{(2)})(\bm{U}_{r2}^{(t)})^\top$
    & $\mathcal{O}(K_{\rm H} K_{\rm M} N_r + L_rN_r (I_{\rm M} L_r + I_{\rm M} K_{\rm M} + I_{\rm M} M_r) + I_{\rm M}J_{\rm M} (N_r M_r + N_r K_{\rm M} +  K_{\rm M} M_r))$\\ \hline
    $\bm{P}_{M}^{\top}(\bm{P}_{M} \bm{C}_r^{(t)} \bm{V}_{r3}^{(t)} - \widetilde{\bm{Y}}_{M}^{(3)})(\bm{V}_{r3}^{(t)})^\top$   
    & $\mathcal{O}(I_{\rm M} N_r (M_r L_r + J_{\rm M} K_{\rm M} + J_{\rm M} M_r) + I_{\rm M}J_{\rm M} (K_{\rm H}N_r + K_{\rm M}K_{\rm H} +  K_{\rm M} M_r)+ I_{\rm M} M_r N_r(J_{\rm M}+L_r))$\\ \hline
    $(\bm{C}_r\bm{U}_{r3}^{(t)} - \widetilde{\bm{Y}}_{H}^{(3)})(\bm{U}_{r3}^{(t)})^\top$
    & $\mathcal{O}(I_{\rm H} I_{\rm M} L_r + J_{\rm H} J_{\rm M} M_r + I_{\rm H} N_r(M_rJ_{\rm H}+M_rL_r + J_{\rm H} K_{\rm H}))$\\ \hline
    $(\bm{U}^{(t)}_{\bm{D}_r})^{\top}
(\bm{U}^{(t)}_{\bm{D}_r}\bm{d}_r^{(t)} - \bm{y}_{m})$   &
    $\mathcal{O}(K_{\rm M} L_r(I_{\rm M}J_{\rm M}+J_{\rm M} M_r+M_rN_r))$\\ \hline
    $(\bm{V}^{(t)}_{\bm{D}_r})^{\top}
(\bm{V}^{(t)}_{\bm{D}_r}\bm{d}_r^{(t)} - \bm{y}_{h})$
    & $\mathcal{O}(K_{\rm H} L_r(I_{\rm H} J_{\rm H} + J_{\rm H} M_r+M_rN_r))$\\\hline
    $\tilde{\bm{H}}_{1}^{\top}\bm{W}_{r1}^{(t)}\tilde{\bm{H}}_{1}\bm{a}_{r}^{(t)}$
    & $\mathcal{O}(I_{\rm M} L_r)$\\\hline
    $\tilde{\bm{H}}_{2}^{\top}\bm{W}_{r2}^{(t)}\tilde{\bm{H}}_{2}\bm{b}_{r}^{(t)}$
    & $\mathcal{O}(J_{\rm M} M_r)$\\\hline
    $\bm{H}_{3}^{\top}\bm{H}_{3}\bm{C}_{r}^{(t)}$
    & $\mathcal{O}(K_{\rm H}^2 N_r)$\\\hline
    $\alpha_{r,i}$
    & $\mathcal{O}(L_r^3 + M_r^3 + N_r^3)$\\

    \hline
    \end{tabular}}%
  \label{table:complexity}%
\end{table}%

\begin{align}
\bm G_{\bm{C}_r}^{(t)}&=\bm{P}_{M}^{\top}(\bm{P}_{M} \bm{C}_r^{(t)} \bm{V}_{r3}^{(t)} - \widetilde{\bm{Y}}_{M}^{(3)})(\bm{V}_{r3}^{(t)})^\top 
+(\bm{C}_r\bm{U}_{r3}^{(t)} - \widetilde{\bm{Y}}_{H}^{(3)})(\bm{U}_{r3}^{(t)})^\top
+2 \lambda \bm{H}_{3}^{\top}\bm{H}_{3}\bm{C}_{r}^{(t)}, \nonumber\\
L_{\bm C_r}^{(t)}&
=\sigma_{\rm max}\left(\bm{P}_{M}^\top\bm{P}_{M}\right)\sigma_{\rm max}\left(\bm{V}_{r3}^{(t)}(\bm{V}_{r3}^{(t)})^\top\right)+ \sigma_{\rm max}\left(\bm{U}_{r3}^{(t)}(\bm{U}_{r3}^{(t)})^\top\right)
+2 \lambda \sigma_{\textrm{max}}\left( \bm{H}_{3}^{\top}\bm{H}_{3} \right),\nonumber
\end{align}
where $\bm{V}_{r3}^{(t)}=(\underline{\bm{D}}_r^{(t)}\times_1\bm{A}_r^{(t+1)}\times_2 \bm{B}_r^{(t+1)})_{(3)}$ and $\bm{U}_{r3}^{(t)}=(\underline{\bm{D}}_r^{(t)}\times_1(\bm{P}_1\bm{A}_r^{(t+1)})\times_2 (\bm{P}_2\bm{B}_r^{(t+1)})_{(3)}$.

The $\underline{\bm{D}}_r$-subproblem can be rewritten as
\begin{align}\label{eq:D}
\mathcal{J}(\underline{\bm{D}}_r)
=&{\frac{1}{2}\left\|\underline{\widetilde{\bm{Y}}}_{H}-\underline{\bm{D}}_r \times_1 (\bm{P}_1\bm{A}_r^{(t+1)})\times_2 (\bm{P}_2\bm{B}_r^{(t+1)})\times_3  \bm{C}_r^{(t+1)}\right\|_{F}^2}\nonumber\\
+&{\frac{1}{2}\left\|\underline{\widetilde{\bm{Y}}}_{M}-\underline{\bm{D}}_r\times_1 \bm{A}_r^{(t+1)}\times_2 \bm{B}_r^{(t+1)}\times_3  (\bm{P}_{\rm M}\bm{C}_r^{(t+1)})\right\|_{F}^2 }
+\frac{\eta}{2}\|\underline{\bm{D}}_r\|_{F}^2.\nonumber
\end{align}

Then one can easily obtain the gradient $\bm G_{\underline{\bm{D}}_r}^{(t)}$ and parameter $L_{\underline{\bm D}_r}^{(t)}$.
\begin{align}
\bm G_{\underline{\bm{D}}_r}^{(t)}&=(\bm{U}^{(t)}_{\underline{\bm{D}}_r})^{\top}
(\bm{U}^{(t)}_{\underline{\bm{D}}_r}\bm{d}_r^{(t)} - \bm{y}_{m})
+(\bm{V}^{(t)}_{\underline{\bm{D}}_r})^{\top}
(\bm{V}^{(t)}_{\underline{\bm{D}}_r}\bm{d}_r^{(t)} - \bm{y}_{h})+\eta \bm{d}_{r}^{(t)},\nonumber \\
L_{\underline{\bm D}_r}^{(t)}
&=\sigma^2_{\rm max}\left(\bm{P}_{M}\bm{C}^{(t+1)}\right)\sigma^2_{\rm max}\left(\bm{A}_r^{(t+1)}\right)\sigma^2_{\rm max}\left(\bm{B}_r^{(t+1)}\right)\nonumber \\
&+ \sigma^2_{\rm max}\left(\bm{C}_r^{(t+1)}\right)\sigma^2_{\rm max}\left(\bm{P}_1\bm{A}_r^{(t+1)}\right)\sigma^2_{\rm max}\left(\bm{P}_2\bm{B}_r^{(t+1)}\right)+\eta,\nonumber
\end{align}
where $\bm{U}^{(t)}_{\underline{\bm{D}}_r}=(\bm{P}_{M}\bm{C}_r^{(t+1)})\otimes\bm{B}_r^{(t+1)}\otimes\bm{A}_r^{(t+1)}$, $\bm{V}^{(t)}_{\underline{\bm{D}}_r}=\bm{C}_r^{(t+1)}\otimes(\bm{P}_1\bm{A}_r^{(t+1)})\otimes(\bm{P}_2\bm{B}_r^{(t+1)})$, $\bm{d}_r^{(t)}$, $\bm{y}_m$, and $\bm{y}_h$ denote the vectorization of $\underline{\bm{D}}_r^{(t)}$, $\underline{\bm{Y}}_{\rm M}$, and $\underline{\bm{Y}}_{\rm H}$, respectively.

The detailed complexity cost of the proposed algorithm is listed in Table \ref{table:complexity}. Assuming $I_{\rm M} \approx J_{\rm M} \approx K_{\rm H}, ~I_{\rm H} \approx J_{\rm H} \geq K_{\rm M},~ L_r \approx M_r \geq N_r$, the computation of the gradients $\bm{G}^{(t)}_{\bm{A}_r}$, $\bm{G}^{(t)}_{\bm{B}_r}$, $\bm{G}^{(t)}_{\bm{C}_r}$, and $\bm{G}^{(t)}_{\underline{\bm{D}}_r}$ takes
$\mathcal{O}( I_{\rm M}^2(I_{\rm H}(I_{\rm H}+L_r)+L_r(K_{\rm M}+N_r)) )$,
$\mathcal{O}( I_{\rm M}^2(I_{\rm H}(I_{\rm H}+L_r)+L_r(K_{\rm M}+N_r)) )$,
$\mathcal{O}( I_{\rm M}^2 K_{\rm H}(K_{\rm M}+N_r) )$, and
$\mathcal{O}( I_{\rm M}^2 L_r K_{\rm M}+ K_{\rm H}L_r(I_{\rm H}^2 + I_{\rm H} L_r + L_rN_r) )$, respectively.

Another part costs many computation flops is to compute the step sizes $\alpha_{r,i}^{(t)}$. To reduce the computation, instead of computing the exact values, we compute their upper bounds as in \cite{Ding2021HSR,Ding2023Fast}. For example, we have
\begin{equation}
    \sigma_{\textrm{max}}(\tilde{\bm{H}}_1^{\top}\bm{W}_{r1}^{(t)}\tilde{\bm{H}}_1)\leq \sigma_{\textrm{max}}(\tilde{\bm{H}}_1^{\top})\sigma_{\max}(\bm{W}_{r1}^{(t)})\sigma_{\rm max}(\tilde{\bm{H}}_1),\nonumber
\end{equation}
where the symbol $\sigma_{\max}(\bm{X})$ denotes the largest singular value of the matrix $\bm{X}$. Here, $\sigma_{\max}(\bm{W}_{r1}^{(t)})$ is simply chosen to be the largest value of the diagonal matrix $\bm{W}_{r1}^{(t)}$ and the other two terms are pre-computed. Therefore, computing $\alpha_{r,i}^{(t)}$ takes $\mathcal{O}(L_r^3+M_r^3+N_r^3)$ flops. In summary, the per-iteration complexity is 
$\mathcal{O}(I_{\rm M}^3 (I_{\rm H} + L_r))$.

\begin{algorithm}
\caption{\texttt{BCLIMB} for solving \eqref{eq:model-blind}.}\label{algorithm-blind}
%\begin{algorithmic}[1]
% \scriptsize
\small
      \textbf{Input:} 
      HSI $\underline{\bm{Y}}_{\rH}$, MSI $\underline{\bm{Y}}_{\rM}$;
      pre-defined sequences of $\mu_{r,i}^{(t)}$, $\alpha_{r,i}^{(t)}$, $\widetilde{\mu}_{r,j}^{(t)}$, and $\widetilde{\alpha}_{r,j}^{(t)}$ ($r=1,\ldots,R,~ i=1,\ldots,4,~j=1,2$);
      starting points $\left\{\underline{\bm{D}}_r^{(0)}, \widetilde{\bm{A}}_{r}^{(0)},\widetilde{\bm{B}}_{r}^{(0)}, \bm{A}_{r}^{(0)},\bm{B}_{r}^{(0)}, \bm{C}_r^{(0)} \right\}_{r=1}^{R}$. %\vspace{0.1cm}
      
      ~~~~\textbf{Parameters:} $\lambda$, $\eta$, $p$, $\varepsilon$, $L$, and $N$. %\vspace{0.1cm}
      
      ~~~~$\check{\underline{\bm{D}}}_r^{(0)}=\underline{\bm{D}}_r^{(0)}$, $\bar{\bm{A}}_r^{(0)}=\widetilde{\bm{A}}_r^{(0)}$, $\bar{\bm{B}}_r^{(0)}=\widetilde{\bm{B}}_r^{(0)}$, $\check{\bm{A}}_r^{(0)}=\bm{A}_r^{(0)}$, $\check{\bm{B}}_r^{(0)}=\bm{B}_r^{(0)}$, $\check{\bm{C}}_r^{(0)}=\bm{C}_r^{(0)}$, $t=0$.
      
      ~~~~\textbf{repeat} %\vspace{0.1cm}
      
      ~~~~\textbf{for} $r=1:R$  %\vspace{0.1cm}
      
      ~~~~$\%\%$ update $\widetilde{\bm{A}}_r$ $\%\%$ %\vspace{0.1cm}
     
      ~~~~$\widetilde{\bm{A}}_r^{(t+1)}\leftarrow \bar{\bm{A}}_r^{(t)}-\widetilde{\alpha}_{r,1}^{(t)}\bm{G}_{\bar{\bm{A}}_r}^{(t)}$;
      ~~$\bar{\bm{A}}_r^{(t+1)}\leftarrow \widetilde{\bm{A}}_r^{(t+1)}+\widetilde{\mu}_{r,1}^{(t)}(\widetilde{\bm{A}}_r^{(t+1)}-\widetilde{\bm{A}}_r^{(t)})$;\vspace{0.1cm}
      
      ~~~~$\%\%$ update $\bm{A}_r$ $\%\%$ \vspace{0.1cm}
      
      ~~~~$\bm{A}_r^{(t+1)}\leftarrow \check{\bm{A}}_r^{(t)}-\alpha_{r,1}^{(t)}\bm{G}_{\check{\bm{A}}_r}^{(t)}$;
      ~~$\check{\bm{A}}_r^{(t+1)}\leftarrow \bm{A}_r^{(t+1)}+\mu_{r,1}^{(t)}(\bm{A}_r^{(t+1)}-\bm{A}_r^{(t)})$; %\vspace{0.1cm}
      
      ~~~~$\%\%$ update $\widetilde{\bm{B}}_r$ $\%\%$ \vspace{0.1cm}
     
      ~~~~$\widetilde{\bm{B}}_r^{(t+1)}\leftarrow \bar{\bm{B}}_r^{(t)}-\widetilde{\alpha}_{r,2}^{(t)}\bm{G}_{\bar{\bm{B}}_r}^{(t)}$;
      ~~$\bar{\bm{B}}_r^{(t+1)}\leftarrow \widetilde{\bm{B}}_r^{(t+1)}+\widetilde{\mu}_{r,2}^{(t)}(\widetilde{\bm{B}}_r^{(t+1)}-\widetilde{\bm{B}}_r^{(t)})$; %\vspace{0.1cm}
      
      ~~~~$\%\%$ update $\bm{B}_r$ $\%\%$ \vspace{0.1cm}
      
      ~~~~$\bm{B}_r^{(t+1)}\leftarrow \check{\bm{B}}_r^{(t)}-\alpha_{r,2}^{(t)}\bm{G}_{\check{\bm{B}}_r}^{(t)}$;
      ~~$\check{\bm{B}}_r^{(t+1)}\leftarrow \bm{B}_r^{(t+1)}+\mu_{r,2}^{(t)}(\bm{B}_r^{(t+1)}-\bm{B}_r^{(t)})$; %\vspace{0.1cm}
      
      ~~~~$\%\%$ update $\bm{C}_r$ $\%\%$ \vspace{0.1cm}
      
      ~~~~$\bm{C}_r^{(t+1)}\leftarrow \check{\bm{C}}_r^{(t)}-\alpha_{r,3}^{(t)}\bm{G}_{\check{\bm{C}}_r}^{(t)}$;
      ~~$\check{\bm{C}}_r^{(t+1)}\leftarrow \bm{C}_r^{(t+1)}+\mu_{r,3}^{(t)}(\bm{C}_r^{(t+1)}-\bm{C}_r^{(t)})$; %\vspace{0.1cm}
      
      ~~~~$\%\%$ update $\underline{\bm{D}}_r$ $\%\%$ \vspace{0.1cm}
      
      ~~~~$\underline{\bm{D}}_r^{(t+1)}\leftarrow \check{\underline{\bm{D}}}_r^{(t)}-\alpha_{r,4}^{(t)}\bm{G}_{\check{\underline{\bm{D}}}_r}^{(t)}$;
      ~~$\check{\underline{\bm{D}}}_r^{(t+1)}\leftarrow \underline{\bm{D}}_r^{(t+1)}+\mu_{r,4}^{(t)}(\underline{\bm{D}}_r^{(t+1)}-\underline{\bm{D}}_r^{(t)})$; %\vspace{0.1cm}
       
      ~~~~\textbf{end}
      
      ~~~~$t=t+1$;
	  
	 ~~~~ \textbf{until} satisfying the stopping criterion.%\vspace{0.1cm}
     
      \textbf{Output:}  $\widehat{\bm{A}}_r=\bm{A}_r^{(t)}$, $\widehat{\bm{B}}_r=\bm{B}_r^{(t)}$, $\widehat{\bm{C}}_r=\bm{C}_r^{(t)}$, and $\widehat{\underline{\bm{D}}}_r=\underline{\bm{D}}_r^{(t)}$. Recover $\underline{\widehat{\bm{Y}}}_{\rm S}=\sum_{r=1}^R\underline{\widehat{\bm{D}}}_r\times_1 \widehat{\bm{A}}_r\times_2 \widehat{\bm{B}}_r\times_3  \widehat{\bm{C}}_r$.
%\end{algorithmic}
\end{algorithm}

\section{Algorithm Details for Model \eqref{eq:model-blind}}
\label{app:algorithm-blind}
In this section, we present the details for our semi-blind \texttt{CLIMB} (\texttt{BCLIMB}) for handling \eqref{eq:model-blind}, which is 
a simple extension of \texttt{CLIMB} and 
is summarized in Algorithm \ref{algorithm-blind}.
When the spatial degradation operators are unknown, one can apply the similar calculations of gradients and parameters as in Appendix \ref{app:algorithm-non-blind}. 
Then we can obtain the following expressions.

\begin{align}
\bm G_{\widetilde{\bm{A}}_r}^{(t)}&=(\widetilde{\bm{A}}_r^{(t)} \bar{\bm{V}}_{r1}^{(t)} - \bm{Y}_{H}^{(1)})(\bar{\bm{V}}_{r1}^{(t)})^\top,\nonumber\\
L_{\widetilde{\bm{A}}_r}^{(t)}&
=\sigma_{\rm max}\left(\bar{\bm{V}}_{r1}^{(t)}(\bar{\bm{V}}_{r1}^{(t)})^\top\right),\nonumber
\end{align}
where $\bar{\bm{V}}_{r1}^{(t)}=(\underline{\bm{D}}^{(t)}\times_2\widetilde{\bm{B}}_r^{(t)}\times_3 \bm{C}_r^{(t)})_{(1)}$ and $\bm{Y}_{H}^{(1)} = (\underline{\bm{Y}}_{H}-\sum_{\bar{r}\neq r}\underline{\bm{D}}^{(t)}_{\bar{r}} \times_1 \widetilde{\bm{A}}^{(t)}_{\bar{r}}\times_2 \widetilde{\bm{B}}^{(t)}_{\bar{r}}\times_3  \bm{C}^{(t)}_{\bar{r}})_{(1)}$.

\begin{table}[!t]
\renewcommand\arraystretch{1}
\setlength{\tabcolsep}{2pt}
\renewcommand\arraystretch{1.5}
\centering
\caption{Complexity of each term of \texttt{BCLIMB}. }
\resizebox{\linewidth}{!}{
    \begin{tabular}{c|c}\hline

    \hline
    Terms  & \multicolumn{1}{c}{Complexity}  \\ \hline
    $(\widetilde{\bm{A}}_r^{(t)} \bar{\bm{V}}_{r1}^{(t)} - \bm{Y}_{H}^{(1)})(\bar{\bm{V}}_{r1}^{(t)})^\top$   & $\mathcal{O}(I_{\rm H} N_r(L_r M_r + M_r J_{\rm H} + J_{\rm H} K_{\rm H})+ J_{\rm H} L_r(N_r M_r + N_r K_{\rm H} + I_{\rm H} K_{\rm H}))$   \\ \hline
    $(\bm{A}_r\bar{\bm{U}}_{r1}^{(t)} - \bm{Y}_{M}^{(1)})(\bar{\bm{U}}_{r1}^{(t)})^\top$    & $\mathcal{O}(K_{\rm H} K_{\rm M} N_r + L_rN_r (J_{\rm M} M_r + J_{\rm M} K_{\rm M} + I_{\rm M} L_r) + I_{\rm M}J_{\rm M} (N_r M_r + N_r K_{\rm M} +  K_{\rm M} L_r))$ \\ \hline
    $(\widetilde{\bm{B}}_r^{(t)} \bar{\bm{V}}_{r2}^{(t)} - \bm{Y}_{H}^{(2)})(\bar{\bm{V}}_{r2}^{(t)})^\top$   & $\mathcal{O}(I_{\rm H} N_r(L_r M_r + M_r J_{\rm H} + J_{\rm H} K_{\rm H}) + I_{\rm H} K_{\rm H}(I_{\rm M} M_r + M_r J_{\rm H} + I_{\rm M} K_{\rm H}))$\\ \hline
    $(\bm{B}_r\bar{\bm{U}}_{r2}^{(t)} - \bm{Y}_{M}^{(2)})(\bar{\bm{U}}_{r2}^{(t)})^\top$
    & $\mathcal{O}(K_{\rm H} K_{\rm M} N_r + L_rN_r (I_{\rm M} L_r + I_{\rm M} K_{\rm M} + I_{\rm M} M_r) + I_{\rm M}J_{\rm M} (N_r M_r + N_r K_{\rm M} +  K_{\rm M} M_r))$\\ \hline
    $\bm{P}_{M}^{\top}(\bm{P}_{M} \bm{C}_r^{(t)} \bar{\bm{V}}_{r3}^{(t)} - \bm{Y}_{M}^{(3)})(\bar{\bm{V}}_{r3}^{(t)})^\top$   
    & $\mathcal{O}(I_{\rm M} N_r (M_r L_r + J_{\rm M} K_{\rm M} + J_{\rm M} M_r) + I_{\rm M}J_{\rm M} (K_{\rm H}N_r + K_{\rm M}K_{\rm H} +  K_{\rm M} M_r)+ I_{\rm M} M_r N_r(J_{\rm M}+L_r))$\\ \hline
    $(\bm{C}_r\bar{\bm{U}}_{r3}^{(t)} - \bm{Y}_{H}^{(3)})(\bar{\bm{U}}_{r3}^{(t)})^\top$
    & $\mathcal{O}(I_{\rm H} I_{\rm M} L_r + J_{\rm H} J_{\rm M} M_r + I_{\rm H} N_r(M_rJ_{\rm H}+M_rL_r + J_{\rm H} K_{\rm H}))$\\ \hline
    $(\bar{\bm{U}}^{(t)}_{\underline{\bm{D}}_r})^{\top}
(\bar{\bm{U}}^{(t)}_{\underline{\bm{D}}_r}\bm{d}_r^{(t)} - \bm{y}_{m})$   &
    $\mathcal{O}(K_{\rm M} L_r(I_{\rm M}J_{\rm M}+J_{\rm M} M_r+M_rN_r))$\\ \hline
    $(\bar{\bm{V}}^{(t)}_{\underline{\bm{D}}_r})^{\top}
(\bar{\bm{V}}^{(t)}_{\underline{\bm{D}}_r}\bm{d}_r^{(t)} - \bm{y}_{h})$
    & $\mathcal{O}(K_{\rm H} L_r(I_{\rm H} J_{\rm H} + J_{\rm H} M_r+M_rN_r))$\\\hline
    $\tilde{\bm{H}}_{1}^{\top}\bm{W}_{r1}^{(t)}\tilde{\bm{H}}_{1}\bm{a}_{r}^{(t)}$
    & $\mathcal{O}(I_{\rm M} L_r)$\\\hline
    $\tilde{\bm{H}}_{2}^{\top}\bm{W}_{r2}^{(t)}\tilde{\bm{H}}_{2}\bm{b}_{r}^{(t)}$
    & $\mathcal{O}(J_{\rm M} M_r)$\\\hline
    $\bm{H}_{3}^{\top}\bm{H}_{3}\bm{C}_{r}^{(t)}$
    & $\mathcal{O}(K_{\rm H}^2 N_r)$\\\hline
    %$\underline{\bm{W}}_{r4}^{(t)}$ & $\mathcal{O}(L_rM_rN_r)$\\\hline
    $\alpha_{r,i}$, $\widetilde{\alpha}_{r,j}$
    & $\mathcal{O}(L_r^3 + M_r^3 + N_r^3)$\\

    \hline
    \end{tabular}}%
  \label{table:complexity_blind}%
\end{table}%

\begin{align}
\bm G_{\bm{A}_r}^{(t)}&=(\bm{A}_r\bar{\bm{U}}_{r1}^{(t)} - \bm{Y}_{M}^{(1)})(\bar{\bm{U}}_{r1}^{(t)})^\top
+\lambda p\tilde{\bm{H}}_{1}^{\top}\bm{W}_{r1}^{(t)}\tilde{\bm{H}}_{1}\bm{a}_{r}^{(t)},\nonumber\\
L_{\bm A_r}^{(t)}&
=\sigma_{\rm max}\left(\bar{\bm{U}}_{r1}^{(t)}(\bar{\bm{U}}_{r1}^{(t)})^\top\right)
+\lambda p\sigma_{\textrm{max}}\left(\tilde{\bm{H}}_1^{\top}\bm{W}_{r1}^{(t)}\tilde{\bm{H}}_1 \right),\nonumber
\end{align}
where $\bar{\bm{U}}_{r1}^{(t)}=(\underline{\bm{D}}_r^{(t)}\times_2\bm{B}_r^{(t)}\times_3 (\bm{P}_{\rm M}\bm{C}_r^{(t)}))_{(1)}$ and $\bm{Y}_{M}^{(1)} = (\underline{\bm{Y}}_{M}-\sum_{\bar{r}\neq r}\underline{\bm{D}}^{(t)}_{\bar{r}} \times_1 \bm{A}^{(t)}_{\bar{r}}\times_2 \bm{B}^{(t)}_{\bar{r}}\times_3  (\bm{P}_{M}\bm{C}^{(t)}_{\bar{r}}))_{(1)}$.

\begin{align}
\bm G_{\widetilde{\bm{B}}_r}^{(t)}&=(\widetilde{\bm{B}}_r^{(t)} \bar{\bm{V}}_{r2}^{(t)} - \bm{Y}_{H}^{(2)})(\bar{\bm{V}}_{r2}^{(t)})^\top,\nonumber\\
L_{\widetilde{\bm{B}}_r}^{(t)}&
=\sigma_{\rm max}\left(\bar{\bm{V}}_{r2}^{(t)}(\bar{\bm{V}}_{r2}^{(t)})^\top\right),\nonumber
\end{align}
where $\bar{\bm{V}}_{r2}^{(t)}=(\underline{\bm{D}}_r^{(t)}\times_1 \widetilde{\bm{A}}_r^{(t+1)} \times_3 \bm{C}_r^{(t)})_{(2)}$ and $\bm{Y}_{H}^{(2)} = (\underline{\bm{Y}}_{H}-\sum_{\bar{r}\neq r}\underline{\bm{D}}^{(t)}_{\bar{r}} \times_1 \widetilde{\bm{A}}^{(t)}_{\bar{r}}\times_2 \widetilde{\bm{B}}^{(t)}_{\bar{r}}\times_3  \bm{C}^{(t)}_{\bar{r}})_{(2)}$.

\begin{align}
\bm G_{\bm{B}_r}^{(t)}&=(\bm{B}_r\bar{\bm{U}}_{r2}^{(t)} - \bm{Y}_{M}^{(2)})(\bar{\bm{U}}_{r2}^{(t)})^\top
+\lambda p\tilde{\bm{H}}_{2}^{\top}\bm{W}_{r2}^{(t)}\tilde{\bm{H}}_{2}\bm{b}_{r}^{(t)},\nonumber\\
L_{\bm B_r}^{(t)}&
= \sigma_{\rm max}\left(\bar{\bm{U}}_{r2}^{(t)}(\bar{\bm{U}}_{r2}^{(t)})^\top\right)
+\lambda p\sigma_{\textrm{max}}\left(\tilde{\bm{H}}_2^{\top}\bm{W}_{r2}^{(t)}\tilde{\bm{H}}_2 \right),\nonumber
\end{align}
where $\bar{\bm{U}}_{r2}^{(t)}=(\underline{\bm{D}}_r^{(t)}\times_2\bm{B}_r^{(t)}\times_3 (\bm{P}_{\rm M}\bm{C}_r^{(t)}))_{(2)}$ and $\bm{Y}_{M}^{(2)} = (\underline{\bm{Y}}_{M}-\sum_{\bar{r}\neq r}\underline{\bm{D}}^{(t)}_{\bar{r}} \times_1 \bm{A}^{(t)}_{\bar{r}}\times_2 \bm{B}^{(t)}_{\bar{r}}\times_3  (\bm{P}_{M}\bm{C}^{(t)}_{\bar{r}}))_{(2)}$.

\begin{align}
\bm G_{\bm{C}_r}^{(t)}&=\bm{P}_{M}^{\top}(\bm{P}_{M} \bm{C}_r^{(t)} \bar{\bm{V}}_{r3}^{(t)} - \bm{Y}_{M}^{(3)})(\bar{\bm{V}}_{r3}^{(t)})^\top +(\bm{C}_r\bar{\bm{U}}_{r3}^{(t)} - \bm{Y}_{H}^{(3)})(\bar{\bm{U}}_{r3}^{(t)})^\top
+2 \lambda \bm{H}_{3}^{\top}\bm{H}_{3}\bm{C}_{r}^{(t)}, \nonumber\\
L_{\bm C_r}^{(t)}&
=\sigma_{\rm max}\left(\bm{P}_{M}^\top\bm{P}_{M}\right)\sigma_{\rm max}\left(\bar{\bm{V}}_{r3}^{(t)}(\bar{\bm{V}}_{r3}^{(t)})^\top\right)
+ \sigma_{\rm max}\left(\bar{\bm{U}}_{r3}^{(t)}(\bar{\bm{U}}_{r3}^{(t)})^\top\right)
+ 2\lambda \sigma_{\textrm{max}}\left( \bm{H}_{3}^{\top}\bm{H}_{3} \right),\nonumber
\end{align}
where 
$\bar{\bm{V}}_{r3}^{(t)}=(\underline{\bm{D}}_r^{(t)}\times_1\bm{A}_r^{(t+1)}\times_2 \bm{B}_r^{(t+1)})_{(3)}$, $\bar{\bm{U}}_{r3}^{(t)}=(\underline{\bm{D}}_r^{(t)}\times_1\widetilde{\bm{A}}_r^{(t+1)}\times_2 \widetilde{\bm{B}}_r^{(t+1)})_{(3)}$, 
$\bm{Y}_{M}^{(3)} = (\underline{\bm{Y}}_{M}-\sum_{\bar{r}\neq r}\underline{\bm{D}}^{(t)}_{\bar{r}} \times_1 \bm{A}^{(t)}_{\bar{r}}\times_2 \bm{B}^{(t)}_{\bar{r}}\times_3  (\bm{P}_{M}\bm{C}^{(t)}_{\bar{r}}))_{(3)}$, $\bm{Y}_{H}^{(3)} = (\underline{\bm{Y}}_{H}-\sum_{\bar{r}\neq r}\underline{\bm{D}}^{(t)}_{\bar{r}} \times_1 \widetilde{\bm{A}}^{(t)}_{\bar{r}}\times_2 \widetilde{\bm{B}}^{(t)}_{\bar{r}}\times_3  \bm{C}^{(t)}_{\bar{r}})_{(3)}$.

\begin{align}
\bm G_{\underline{\bm{D}}_r}^{(t)}&=(\bar{\bm{U}}^{(t)}_{\underline{\bm{D}}_r})^{\top}
(\bar{\bm{U}}^{(t)}_{\underline{\bm{D}}_r}\bm{d}_r^{(t)} - \bm{y}_{m})
+(\bar{\bm{V}}^{(t)}_{\underline{\bm{D}}_r})^{\top}
(\bar{\bm{V}}^{(t)}_{\underline{\bm{D}}_r}\bm{d}_r^{(t)} - \bm{y}_{h})+\eta \bm{d}_{r}^{(t)},\nonumber \\
L_{\underline{\bm D}_r}^{(t)}
&=\sigma^2_{\rm max}\left(\bm{P}_{M}\bm{C}^{(t+1)}\right)\sigma^2_{\rm max}\left(\bm{A}_r^{(t+1)}\right)\sigma^2_{\rm max}\left(\bm{B}_r^{(t+1)}\right)\nonumber \\
&+ \sigma^2_{\rm max}\left(\bm{C}_r^{(t+1)}\right)\sigma^2_{\rm max}\left(\bm{P}_1\bm{A}_r^{(t+1)}\right)\sigma^2_{\rm max}\left(\bm{P}_2\bm{B}_r^{(t+1)}\right)+\eta,\nonumber
\end{align}
where $\bar{\bm{U}}^{(t)}_{\underline{\bm{D}}_r}=(\bm{P}_{M}\bm{C}_r^{(t+1)})\otimes\bm{B}_r^{(t+1)}\otimes\bm{A}_r^{(t+1)}$, $\bar{\bm{V}}^{(t)}_{\underline{\bm{D}}_r}=\bm{C}_r^{(t+1)}\otimes\widetilde{\bm{A}}_r^{(t+1)}\otimes\widetilde{\bm{B}}_r^{(t+1)}$.

We list the detailed complexity cost of the proposed \texttt{BCLIMB} in Table \ref{table:complexity_blind}. 
Assuming $I_{\rm M} \approx J_{\rm M} \approx K_{\rm H}, ~I_{\rm H} \approx J_{\rm H} \geq K_{\rm M},~ L_r \approx M_r \geq N_r$, the computation of the gradients $\bm{G}^{(t)}_{\widetilde{\bm{A}}_r}$, $\bm{G}^{(t)}_{\bm{A}_r}$, $\bm{G}^{(t)}_{\widetilde{\bm{B}}_r}$, $\bm{G}^{(t)}_{\bm{B}_r}$, $\bm{G}^{(t)}_{\bm{C}_r}$, and $\bm{G}^{(t)}_{\underline{\bm{D}}_r}$ takes
$\mathcal{O}( I_{\rm M}^2I_{\rm H}(I_{\rm H}+L_r) )$,
$\mathcal{O}( I_{\rm M}^2L_r(K_{\rm M}+N_r) )$,
$\mathcal{O}( I_{\rm M}^2I_{\rm H}(I_{\rm H}+L_r) )$,
$\mathcal{O}( I_{\rm M}^2L_r(K_{\rm M}+N_r) )$,
$\mathcal{O}( I_{\rm M}^2 K_{\rm H}(K_{\rm M}+N_r) + I_{\rm H}N_r(I_{\rm M} + L_r^2 + I_{\rm H}L_r) )$, and
$\mathcal{O}( I_{\rm M}^2 L_r K_{\rm M}+ K_{\rm H}L_r(I_{\rm H}^2 + I_{\rm H} L_r + L_rN_r) )$, respectively.
Computing $\alpha_{r,i}^{(t)}, \widetilde{\alpha}_{r,i}^{(t)}$ takes $\mathcal{O}(L_r^3+M_r^3+N_r^3)$ flops. In summary, the per-iteration complexity is 
$\mathcal{O}(I_{\rm M}^3 (I_{\rm H} + L_r))$.

\end{document}